\pgfplotsset{compat=1.9}
\newtheorem{theorem}{Theorem}[section]
\newtheorem{proposition}[theorem]{Proposition}
\newtheorem{lemma}[theorem]{Lemma}
\newtheorem{corollary}[theorem]{Corollary}
\theoremstyle{definition}
\newtheorem{definition}[theorem]{Definition}
\newtheorem{example}[theorem]{Example}
\theoremstyle{remark}
\newtheorem{remark}[theorem]{Remark}
\numberwithin{equation}{section}
	\algnewcommand{\LineComment}[1]{\Statex\hskip\ALG@thistlm\hskip\ALG@tlm #1}
	\def\ALG@doentity
		\edef\ALG@thisblock{\csname ALG@currentblock@\theALG@nested\endcsname}%
			\def\ALG@thisblock{0}%
		\def\ALG@entitiecommand{\ALG@displayentity}%
\newcommand{\dg}{d}
\newcommand{\fint}{g}
\newcommand{\fout}{h}
\newcommand{\ftay}{c}
\newcommand{\floor}[1]{\left\lfloor #1\right\rfloor}
\newcommand{\ceil}[1]{\left\lceil #1\right\rceil}
\newcommand{\N}{\mathbb{N}}
\newcommand{\F}{\mathbb{F}}
\newcommand{\flag}{b}
\newcommand{\set}{\leftarrow}
\newcommand{\bigO}{\mathcal{O}}
\newcommand{\Gray}{\Delta}
\newcommand{\leaves}{L}
\newcommand{\PushDown}{\varphi}
\newcommand{\basedeg}{t} 
\DeclareMathOperator{\image}{Im} 
\DeclareMathOperator{\trace}{Tr} 
\DeclareMathOperator*{\argmin}{arg\,min}
\let\originalleft\left
\let\originalright\right
\renewcommand{\left}{\mathopen{}\mathclose\bgroup\originalleft}
\renewcommand{\right}{\aftergroup\egroup\originalright}
\begin{document}
%
%
\title{Fast transforms over finite fields of characteristic two}
%
\author[Nicholas Coxon]{Nicholas Coxon}
\address{INRIA Saclay--\^{I}le-de-France \& Laboratoire d'Informatique, \'{E}cole polytechnique, 91128 Palaiseau Cedex, France}
\email{nicholas.coxon@inria.fr}
\date{\today}
\thanks{This work was supported by Nokia in the framework of the common laboratory between Nokia Bell Labs and INRIA}
\subjclass[2010]{Primary 68W30, 68W40, 12Y05}
%
\begin{abstract}
An additive fast Fourier transform over a finite field of characteristic two efficiently evaluates polynomials at every element of an $\mathbb{F}_2$-linear subspace of the field. We view these transforms as performing a change of basis from the monomial basis to the associated Lagrange basis, and consider the problem of performing the various conversions between these two bases, the associated Newton basis, and the ``novel'' basis of Lin, Chung and Han~(FOCS~2014). Existing algorithms are divided between two families, those designed for arbitrary subspaces and more efficient algorithms designed for specially constructed subspaces of fields with degree equal to a power of two. We generalise techniques from both families to provide new conversion algorithms that may be applied to arbitrary subspaces, but which benefit equally from the specially constructed subspaces. We then construct subspaces of fields with smooth degree for which our algorithms provide better performance than existing algorithms.
\end{abstract}
\maketitle
%
\section{Introduction}

Let $\F$ be a finite field of characteristic two, and $W=\{\omega_0,\dotsc,\omega_{2^n-1}\}$ be an $n$-dimensional $\F_2$-linear subspace of $\F$. Define polynomials
\begin{equation*}
	L_i
	=\prod^{2^n-1}_{\substack{j=0\\j\neq i}}
	\frac{x-\omega_j}
	{\omega_i-\omega_j},
	\quad
	N_i
	=\prod^{i-1}_{j=0}
	\frac{x-\omega_j}
	{\omega_i-\omega_j}
	\quad\text{and}\quad
	X_i
	=\prod^{n-1}_{k=0}
	\prod^{2^k[i]_k-1}_{j=0}
	\frac{x-\omega_j}
	{\omega_{2^k[i]_k}-\omega_j}
\end{equation*}
for $i\in\{0,\dotsc,2^n-1\}$, where $[{}\cdot{}]_k:\N\rightarrow\{0,1\}$ for $k\in\N$ such that
\begin{equation*}
	i=\sum_{k\in\N}2^k[i]_k
	\quad\text{for $i\in\N$}.
\end{equation*}
Let $\F[x]_\ell$ denote the space of polynomials with coefficients in $\F$ and degree less than $\ell$. Then $\{L_0,\dotsc,L_{2^n-1}\}$ is the Lagrange basis of $\F[x]_{2^n}$ associated with the enumeration of $W$. Similarly, $\{N_0,\dotsc,N_{2^n-1}\}$ is the associated Newton basis, normalised so that $N_i(\omega_i)=1$. The definition of the functions $[{}\cdot{}]_k$ implies that each of the polynomials $X_i$ has degree equal to $i$. Thus, $\{X_0,\dotsc,X_{2^n-1}\}$ is also a basis of $\F[x]_{2^n}$. This unusual basis was introduced by Lin, Chung and Han in 2014~\cite{lin2014}. Consequently, we refer to it as the Lin--Chung--Han basis, or simply the LCH basis. In this paper, we describe new fast algorithms for converting between the Lagrange, (normalised) Newton, Lin--Chung--Han and monomial bases for specially constructed subspaces.

Converting to the Lagrange basis from one of the three remaining bases corresponds to evaluating a polynomial at each element in $W$. An algorithm that efficiently performs this evaluation for polynomials written on the monomial basis is referred to as an additive fast Fourier transform (FFT). The designation as ``additive'' reflects the fact that a fast Fourier transform traditionally evaluates polynomials at each element of a cyclic group. To avoid confusion, we refer to such algorithms as multiplicative FFTs hereafter. Additive FFTs have been investigated as an alternative to multiplicative FFTs for use in fast multiplication algorithms for binary polynomials~\cite{vonzurGathen1996,brent2008,mateer2008,chen2017,chen2018,li2018}, and have also found applications in coding theory and cryptography~\cite{bernstein2013,bernstein2014,chou2017,bensasson2017}.

Additive FFTs first appeared in the 1980s with the of algorithm of Wang and Zhu~\cite{wang1988}, which was subsequently rediscovered by Cantor~\cite{cantor1989}. Applied to characteristic two finite fields, the Wang--Zhu--Cantor algorithm is restricted to extensions of degree equal to a power of two. This restriction is removed by the algorithm of von zur Gathen and Gerhard~\cite{vonzurGathen1996}, but at the cost of a higher algebraic complexity. For these and subsequent algorithms~\cite{gao2010,bernstein2013,bernstein2014}, the subspace $W$ is described by an ordered basis $\beta=(\beta_0,\dotsc,\beta_{n-1})$, which also defines the enumeration of the~space~by
\begin{equation}\label{eqn:binary-enumeration}
	\omega_i=\sum^{n-1}_{k=0}[i]_k\beta_k
	\quad\text{for $i\in\{0,\dotsc,2^n-1\}$}.
\end{equation}
For an arbitrary choice of basis, the algorithm of von zur Gathen and Gerhard performs $\bigO(\ell\log^2\ell)$ additions and multiplications, where $\ell=\left|W\right|=2^n$. The subsequent algorithm of Gao and Mateer~\cite{gao2010} achieves the same additive complexity while performing only $\bigO(\ell\log\ell)$ multiplications.

For extensions with degree equal to a power of two, faster algorithms are obtained through a special choice of subspace and its basis. The defining property of these spaces is the existence of a Cantor basis, i.e., a basis $\beta=(\beta_0,\dotsc,\beta_{n-1})$ such that
\begin{equation}\label{eqn:cantor-basis}
	\beta_0=1
	\quad\text{and}\quad
	\beta_i=\beta^2_{i+1}-\beta_{i+1}
	\quad\text{for $i\in\{0,\dotsc,n-2\}$}.
\end{equation}
For subspaces represented by a Cantor basis, the Wang--Zhu--Cantor algorithm performs $\bigO(\ell\log^{\log_23}\ell)$ additions and $\bigO(\ell\log\ell)$ multiplications. Gao and Mateer~\cite{gao2010} also contribute to this special case by providing an algorithm that achieves the same multiplicative complexity while performing only $\bigO(\ell(\log\ell)\log\log\ell)$ additions.

For the same subspace enumeration used in the additive FFTs, Lin, Chung and Han~\cite{lin2014} provide algorithms for converting between the Lagrange and LCH bases that perform $\bigO(\ell\log\ell)$ additions and multiplications. Lin, Chung and Han use their basis and conversion algorithms to provide fast encoding and decoding algorithms for Reed--Solomon codes. This application is further explored in the subsequent work of Lin, Al-Naffouri and Han~\cite{lin2016b} and Lin, Al-Naffouri, Han and Chung~\cite{lin2016a}, while Ben-Sasson et al.~\cite{bensasson2018} utilise the conversion algorithms within their zero-knowledge proof system. Lin, Al-Naffouri, Han and Chung~\cite{lin2016a} additionally consider the problem of converting between the LCH and monomial bases. For subspaces represented by an arbitrary choice of basis, they provide algorithms for converting between the two bases that perform $\bigO(\ell\log^2\ell)$ additions and $\bigO(\ell\log\ell)$ multiplications. For subspaces represented by a Cantor basis they provide algorithms that require only $\bigO(\ell(\log\ell)\log\log\ell)$ additions and perform no multiplications. The techniques used in both cases, as well as for the algorithms of Lin, Chung and Han, originate in the work of Gao and Mateer. This relationship becomes apparent when combining the algorithms to obtain additive FFTs, since one essentially obtains the algorithms of Gao and Mateer.

The techniques developed for additive FFTs have yet to be applied to conversions involving the Newton basis. In the realm of multiplicative FFTs, one has the algorithms of van der Hoeven and Schost~\cite{hoeven2013}, which convert between the monomial basis and the Newton basis associated with the radix-2 truncated Fourier transform points~\cite{hoeven2004,hoeven2005}. Fast conversion between the two basis is a necessary requirement of multivariate evaluation and interpolation algorithms~\cite{hoeven2013,coxon2017} and their application to systematic encoding of Reed--Muller and multiplicity codes~\cite{coxon2017}. For applications in coding theory, characteristic two finite fields are particularly interesting due to their fast arithmetic. However, the algorithms of van der Hoeven and Schost are not suited to such fields as they require the existence of roots of unity with order equal to a power of two. It is likely that this problem may be partially overcome by generalising their algorithm in a manner analogous to the generalisation of the radix-2 truncated Fourier transform~\cite{hoeven2004,hoeven2005} to mixed radices by Larrieu~\cite{larrieu2017}. Developing an algorithm based on the ideas of additive FFTs provides a second and more widely applicable solution to the problem.

In this paper, we describe new fast conversion algorithms between the Lin--Chung--Han basis and each of the Newton, Lagrange and monomial bases. These algorithms may in-turn be combined to obtain fast conversions algorithms between any two of the four bases. We once again represent subspaces by ordered bases, and use~\eqref{eqn:binary-enumeration} for their enumeration.

In Section~\ref{sec:preliminaries}, we show that if the defining basis $\beta=(\beta_0,\dotsc,\beta_{n-1})$ has dimension greater than one and satisfies
\begin{equation}\label{eqn:subfield-condition}
	1,
	\frac{\beta_1}{\beta_0},
	\dotsc,
	\frac{\beta_{\dg-1}}{\beta_0}
	\in\F_{2^\dg}
\end{equation}
for some $\dg\in\{1,\dotsc,n-1\}$, then each of the three conversions problems over the subspace generated by $\beta$ may be efficiently reduced to instances of the problem over the subspaces generated by $\alpha=(\beta_0,\dotsc,\beta_{\dg-1})$ and some vector $\delta\in\F^{n-\dg}$. One may always take $\dg$ equal to one, allowing the reductions to be applied regardless of the choice of~$\beta$. Consequently, fast conversions algorithms are obtained by recursively solving the smaller problems admitted by the reduction, and directly solving the problems for the base case of $n=1$.

Our basis conversion algorithms are described in Section~\ref{sec:algorithms}. Over the subspace generated by an $n$-dimensional basis $\beta$, the algorithms take as input the first $\ell$ coefficients on the input basis of a polynomial in $\F[x]_\ell\subseteq\F[x]_{2^n}$. The algorithms then return the first $\ell$ coefficients of the polynomial on the desired output basis. For conversion between the Lagrange and LCH bases, the first $\ell$ Lagrange basis polynomials do not form a basis of $\F[x]_\ell$ for $\ell<2^n$. Consequently, we embed these cases in the larger case of $\ell=2^n$, after-which we disregard unnecessary parts of the resulting computations so as not to incur a large penalty in complexity. This approach results in what is known as ``pruned'' or ``truncated'' algorithms in the literature on fast Fourier transforms. While truncated additive FFTs have been previously investigated~\cite{vonzurGathen1996,mateer2008,brent2008,bernstein2013,bernstein2014,chen2017,chen2018}, our algorithms for converting between the Lagrange and LCH bases are obtained as analogues of Harvey's ``cache-friendly'' truncated multiplicative FFTs~\cite{harvey2009}. As a consequence of this approach, the algorithms in fact solve slightly more general problems than those just described, allowing us to in-turn provide the slightly generalised additive FFTs required by the fast Hermite interpolation and evaluation algorithms of Coxon~\cite{coxon2018}.

Table~\ref{tab:arbitrary-complexity} provides bounds on the number of additions and multiplications performed by our algorithms for conversion in either direction between the LCH basis and each of the three remaining bases. These bounds omit the cost of a small precomputation requiring $\bigO(n^3)$ field operations. The table also provides bounds on the number of field elements that are required to be stored in auxiliary space by the algorithms. The bound for conversion with the Newton basis is new. For conversion with the Lagrange basis, we only have the algorithm of Lin, Chung and Han~\cite{lin2014} to compare with, and only for the case $\ell=2^n$. Their algorithm performs fewer additions in this case, but only after a much larger precomputation, of unanalysed complexity, that stores $2^n-1$ field elements in auxiliary space. For conversion with the monomial basis, we have the algorithms of Lin et al.~\cite{lin2016a} to compare with, but once again only for the case $\ell=2^n$. Our algorithms perform the same number of additions as their algorithms in this case, while performing fewer multiplications.

\begin{table}[h]
	\setlength{\belowcaptionskip}{0pt}
	\begin{tabular}{llll}
		\toprule
		Basis    & Additions                                         & Multiplications                                  & Auxiliary space  \\
		\midrule
		Newton   & $\ell\left(\ceil{\log_2\ell}-1\right)+1 $         & $\floor{\ell/2}\ceil{\log_2\ell}$                & $\bigO(n^2)$ \\[0.8ex]
		Lagrange & $\floor{\ell/2}\left(3\ceil{\log_2\ell}+1\right)$ & $\floor{\ell/2}\left(\ceil{\log_2\ell}+1\right)$ & $2^n-\ell+\bigO(n^2)$ \\[0.8ex]
		Monomial & $\floor{\ell/2}\binom{\ceil{\log_2\ell}}{2}$      & $3\floor{\ell/2}\ceil{\log_2\ell}+1$             & $\bigO(n)$   \\
		\bottomrule
	\end{tabular}
	\vspace{\abovecaptionskip}
	\caption{Algebraic and space complexities.}\label{tab:arbitrary-complexity}
\end{table}

The cost of the reductions used in our algorithms reduce with the size of the value~$\dg$ for which they are applied. For an arbitrary choice of the basis $\beta$, the condition~\eqref{eqn:subfield-condition} may only ever be satisfied by $\dg$ equal to one. This will also be the case if $\F_2$ is the only subfield of $\F$ with degree less than~$n$. The bounds in Table~\ref{tab:arbitrary-complexity} describe the complexity of our algorithms for this case, and thus represent their worst-case complexities. When the field has degree equal to a power of two and $\beta$ is a Cantor basis, the condition~\eqref{eqn:subfield-condition} is satisfied by $\dg$ equal to any power of two less than~$n$. Moreover, $\delta=(\beta_0,\dotsc,\beta_{n-\dg-1})$ for all such values of $\dg$, so that the recursive cases are themselves represented by Cantor bases. Consequently, it is possible to always take $\dg$ to be the largest power of two less than $n$. With this strategy, the algorithms for converting between the LCH and Newton bases perform only $(3\ell-2)\ceil{\log_2\ell}/4$ additions. The algorithms for converting between the LCH and Lagrange bases enjoy a similar reduction in the number of additions they perform, while the algorithms for converting between the LCH and monomials bases perform only  $\floor{\ell/2}\ceil{\log_2\ell}\ceil{\log_2\log_2\max(\ell,2)}$ additions. Moreover, for conversions between the LCH and monomial bases, all multiplications in the algorithms become multiplications by one, allowing them to be eliminated altogether. In this case, the algorithms reduce to those of Lin et al.~\cite{lin2016a}.

While the benefits of using a Cantor basis are clear, $\F$ will only admit a Cantor basis of a given dimension if its degree is divisible by a sufficiently large power of two. In Section~\ref{sec:construction}, we propose new basis constructions that provide benefits similar to those afforded by Cantor bases when the degree of the field contains a sufficiently large smooth factor, i.e., one that factors into a product of small primes. Such a factor ensures the presence of a tower of subfields, which we use in Section~\ref{sec:tower} to construct bases that reduce the number of field operations performed by the algorithms of Section~\ref{sec:algorithms} by allowing their reductions to be applied more frequently with values of $\dg$ greater than one. For towers containing quadratic extensions, we additionally show in Section~\ref{sec:quadratic} how to leverage freedom in the construction in order to eliminate some multiplications in the algorithms for conversion between the monomial and LCH bases, echoing the reduction in multiplications obtained for Cantor bases. Finally, in Section~\ref{sec:cantor}, we show how to take advantage of quadratic extensions in a different manner by generalising the construction of Cantor bases.

\section{Preliminaries}\label{sec:preliminaries}

For $\beta=(\beta_0,\dotsc,\beta_{n-1})\in\F^n$, define
\begin{equation*}
	\omega_{\beta,i}
	=\sum^{n-1}_{k=0}
	[i]_k\beta_k
	\quad\text{for $i\in\{0,\dotsc,2^n-1\}$}.
\end{equation*}
If the entries of $\beta$ are linearly independent over $\F_2$, then let $L_{\beta,i}$, $N_{\beta,i}$ and $X_{\beta,i}$ respectively denote the $i$th Lagrange, Newton and LCH basis polynomials associated with the enumeration $\{\omega_{\beta,0},\dotsc,\omega_{\beta,2^n-1}\}$ of the subspace it generates. That is, define
\begin{equation*}
	L_{\beta,i}
	=\prod^{2^n-1}_{\substack{j=0\\j\neq i}}
	\frac{x-\omega_{\beta,j}}{\omega_{\beta,i}-\omega_{\beta,j}},
	\qquad
	N_{\beta,i}
	=\prod^{i-1}_{j=0}
	\frac{x-\omega_{\beta,j}}{\omega_{\beta,i}-\omega_{\beta,j}}
\end{equation*}
and
\begin{equation*}
	X_{\beta,i}
	=\prod^{n-1}_{k=0}
	\prod^{2^k[i]_k-1}_{j=0}
	\frac{x-\omega_{\beta,j}}{\omega_{\beta,2^k[i]_k}-\omega_{\beta,j}}
\end{equation*}
for $i\in\{0,\dotsc,2^n-1\}$.

\subsection{Factorisations of basis polynomials}\label{sec:reduction}

The following lemma provides factorisations of the basis polynomials associated with a vector $\beta\in\F^n$. These factorisations in-turn provide the reductions employed in our basis conversion algorithms.

\begin{lemma}\label{lem:reduction} Let $n\geq 2$ and $\beta=(\beta_0,\dotsc,\beta_{n-1})\in\F^n$ have entries that are linearly independent over $\F_2$. For some $\dg\in\{1,\dotsc,n-1\}$ such that $\beta_i/\beta_0\in\F_{2^\dg}$ for $i\in\{0,\dotsc,\dg-1\}$, set $\alpha=(\beta_0,\dotsc,\beta_{\dg-1})$, $\gamma=(\beta_\dg,\dotsc,\beta_{n-1})$ and
\begin{equation*}
	\delta=\left(
		\left(\frac{\beta_\dg}{\beta_0}\right)^{2^\dg}-\frac{\beta_\dg}{\beta_0}
		,\dotsc,
		\left(\frac{\beta_{n-1}}{\beta_0}\right)^{2^\dg}-\frac{\beta_{n-1}}{\beta_0}
	\right).
\end{equation*}
Then $\delta$ has entries that are linearly independent over $\F_2$, and
\begin{equation*}
	\begin{aligned}
		L_{\beta,2^\dg i+j}
		&=
		L_{\delta,i}
		\left(
			\left(\frac{x}{\beta_0}\right)^{2^\dg}-\frac{x}{\beta_0}
		\right)
		L_{\alpha,j}\left(x-\omega_{\gamma,i}\right),\\
		N_{\beta,2^\dg i+j}
		&=
		N_{\delta,i}
		\left(
			\left(\frac{x}{\beta_0}\right)^{2^\dg}-\frac{x}{\beta_0}
		\right)
		N_{\alpha,j}\left(x-\omega_{\gamma,i}\right),\\
		X_{\beta,2^\dg i+j}
		&=
		X_{\delta,i}
		\left(
			\left(\frac{x}{\beta_0}\right)^{2^\dg}-\frac{x}{\beta_0}
		\right)
		X_{\alpha,j}\left(x\right)
	\end{aligned}
\end{equation*}
for $i\in\{0,\dotsc,2^{n-\dg}-1\}$ and $j\in\{0,\dotsc,2^\dg-1\}$.
\end{lemma}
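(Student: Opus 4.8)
The plan is to deduce all three factorisations --- and the linear independence of $\delta$ --- from a single structural fact: the polynomial $s(x) = (x/\beta_0)^{2^\dg} - x/\beta_0$ induces the $\F_2$-linear map that collapses each coset of the subspace generated by $\alpha$ to a point.

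\textbf{Setting up the collapsing map.} Because $1 = \beta_0/\beta_0, \beta_1/\beta_0, \dotsc, \beta_{\dg-1}/\beta_0$ are $\dg$ elements of $\F_{2^\dg}$ that remain linearly independent over $\F_2$ (dividing the independent family $\beta_0,\dotsc,\beta_{\dg-1}$ by the nonzero scalar $\beta_0$), they form an $\F_2$-basis of $\F_{2^\dg}$; hence the subspace generated by $\alpha$ equals $\beta_0\F_{2^\dg}$. From the identity $\prod_{c\in\F_{2^\dg}}(y-c) = y^{2^\dg}-y$ I then get $\prod_{j=0}^{2^\dg-1}(z - \omega_{\alpha,j}) = \beta_0^{2^\dg}s(z)$, so $s$ vanishes exactly on the subspace of $\alpha$; and $s$ is additive since $(a+b)^{2^\dg} = a^{2^\dg}+b^{2^\dg}$ in characteristic two. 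Writing $\delta = (\delta_0,\dotsc,\delta_{n-\dg-1})$, one has $\delta_m = s(\beta_{\dg+m})$, and additivity gives $s(\omega_{\gamma,i}) = \omega_{\delta,i}$ for all $i\in\{0,\dotsc,2^{n-\dg}-1\}$.

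\textbf{Linear independence of $\delta$.} If $\sum_m c_m\delta_m = 0$ with $c_m\in\F_2$, then $s\left(\sum_m c_m\beta_{\dg+m}\right)=0$, so $\sum_m c_m\beta_{\dg+m}$ lies in the kernel of $s$, which is the $\F_2$-span of $\beta_0,\dotsc,\beta_{\dg-1}$; since $\beta_0,\dotsc,\beta_{n-1}$ are linearly independent over $\F_2$, every $c_m$ is zero.

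\textbf{The three factorisations.} Since $[2^\dg i+j]_k = [j]_k$ for $k<\dg$ and $[2^\dg i+j]_k = [i]_{k-\dg}$ for $k\geq\dg$, we have $\omega_{\beta,2^\dg i+j} = \omega_{\gamma,i}+\omega_{\alpha,j}$, so the enumeration of the subspace of $\beta$ breaks into the $2^{n-\dg}$ cosets $\omega_{\gamma,i}+\beta_0\F_{2^\dg}$. In each defining product I split off the factors lying inside block $i$ (for the LCH polynomial, those coming from bits $k<\dg$, which involve only points of index below $2^\dg$) from those running across the other blocks. The inside-block factors reassemble to $L_{\alpha,j}(x-\omega_{\gamma,i})$, $N_{\alpha,j}(x-\omega_{\gamma,i})$ and $X_{\alpha,j}(x)$ respectively: in the Newton and Lagrange cases the block sits at the coset $\omega_{\gamma,i}+\beta_0\F_{2^\dg}$, whence the shift, whereas the LCH inside-block points and their references $\omega_{\beta,2^k[j]_k}$ all have index below $2^\dg$ and so lie in the subspace of $\alpha$, whence no shift. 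For the across-block factors I group, for each block $i'$, its $2^\dg$ terms and use $\prod_{j=0}^{2^\dg-1}(z - \omega_{\gamma,i'} - \omega_{\alpha,j}) = \beta_0^{2^\dg}(s(z) - \omega_{\delta,i'})$ (and likewise for the denominators, using that $\{\omega_{\alpha,j} - \omega_{\alpha,j'} : j'\} = \beta_0\F_{2^\dg}$); the common factors $\beta_0^{2^\dg}$ cancel, leaving exactly $L_{\delta,i}$, $N_{\delta,i}$ or $X_{\delta,i}$ evaluated at $s(x)$. For the LCH polynomial this grouping is carried out bit by bit over $k\in\{\dg,\dotsc,n-1\}$: the factor for $k=\dg+m$ is $1$ when $[i]_m=0$ and, when $[i]_m=1$, runs over the blocks $i'\in\{0,\dotsc,2^m-1\}$ and collapses under $s$ to the $m$th factor of $X_{\delta,i}(s(x))$.

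\textbf{Main obstacle.} The bulk of the argument is routine bookkeeping; the delicate point is the LCH case, where one must track the bit-indexing to confirm that the $\alpha$-factor carries no argument shift while the $\delta$-factor is evaluated at $s(x)$ --- precisely the discrepancy that the block decomposition $\omega_{\beta,2^\dg i+j} = \omega_{\gamma,i}+\omega_{\alpha,j}$ and the vanishing of $s$ on $\beta_0\F_{2^\dg}$ exist to reconcile.
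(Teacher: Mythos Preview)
Your proof is correct and follows essentially the same route as the paper: the collapsing map $s(x)=(x/\beta_0)^{2^\dg}-x/\beta_0$, the coset decomposition $\omega_{\beta,2^\dg i+j}=\omega_{\gamma,i}+\omega_{\alpha,j}$, and the block-product identity you use are exactly the paper's equations~\eqref{eqn:omega-subspace-sum}--\eqref{eqn:product-indentity}. The only notable difference is the LCH case, where the paper derives the factorisation from the Newton one via the identity $X_{\beta,i}=\prod_{k=0}^{n-1} N_{\beta,2^k[i]_k}$ instead of re-collapsing each bit-factor directly as you do; both arguments are equally valid.
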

\begin{proof} Let $n\geq 2$ and $\beta=(\beta_0,\dotsc,\beta_{n-1})\in\F^n$ have entries that are linearly independent over $\F_2$. Define $\alpha$, $\gamma$ and $\delta$ as per the lemma for some $\dg\in\{1,\dotsc,n-1\}$ such that $\beta_i/\beta_0\in\F_{2^\dg}$ for $i\in\{0,\dotsc,\dg-1\}$. Then
\begin{equation}\label{eqn:omega-subspace-sum}
	\omega_{\beta,2^\dg i+j}
	=
	\sum^{\dg-1}_{k=0}
	[j]_k\beta_k
	+\sum^{n-\dg-1}_{k=0}
	[i]_k\beta_{\dg+k}
	=
	\omega_{\alpha,j}
	+\omega_{\gamma,i}
\end{equation}
for $i\in\{0,\dotsc,2^{n-\dg}-1\}$ and $j\in\{0,\dotsc,2^\dg-1\}$. The choice of $\dg$ implies that
\begin{equation*}
	\left(
		\frac{\beta_k}
		{\beta_0}
	\right)^{2^\dg}
	-
	\frac{\beta_k}
	{\beta_0}
	=\prod_{\omega\in\F_{2^\dg}}
	\frac{\beta_k}
	{\beta_0}
	-\omega
	=0
	\quad\text{for $k\in\{0,\dotsc,\dg-1\}$}.
\end{equation*}
Thus,
\begin{equation}\label{eqn:omega-subspace-image}
	\left(\frac{\omega_{\beta,2^\dg i+j}}{\beta_0}\right)^{2^\dg}
	-\frac{\omega_{\beta,2^\dg i+j}}{\beta_0}
	=
	\sum^{n-\dg-1}_{k=0}
	[i]_k
	\left(
		\left(\frac{\beta_{\dg+k}}{\beta_0}\right)^{2^\dg}
		-\frac{\beta_{\dg+k}}{\beta_0}
	\right)
	=\omega_{\delta,i}
\end{equation}
for $i\in\{0,\dotsc,2^{n-\dg}-1\}$ and $j\in\{0,\dotsc,2^\dg-1\}$. It follows that
\begin{equation}\label{eqn:product-indentity}
	\prod^{2^\dg-1}_{j=0}
	x-\omega_{\beta,2^\dg i+j}
	=
	\beta^{2^\dg}_0
	\left(
		\left(\frac{x}{\beta_0}\right)^{2^\dg}
		-\left(\frac{x}{\beta_0}\right)
		-\omega_{\delta,i}
	\right)
\end{equation}
for $i\in\{0,\dotsc,2^{n-\dg}-1\}$, since the polynomials on either side of the equation are monic and split over $\F$ with identical roots. As the entries of $\beta$ are linearly independent over $\F_2$, comparing roots shows that the polynomials on the left-hand side of~\eqref{eqn:product-indentity} are distinct for different values of $i\in\{0,\dotsc,2^{n-\dg}-1\}$. Consequently, comparing the polynomials on the right-hand side of the equation shows that $\omega_{\delta,i}\neq\omega_{\delta,j}$ for distinct $i,j\in\{0,\dotsc,2^{n-\dg}-1\}$. Thus, the entries of $\delta$ are linearly independent over $\F_2$.

Collecting terms and substituting in \eqref{eqn:omega-subspace-sum}, \eqref{eqn:omega-subspace-image} and \eqref{eqn:product-indentity} shows that
\begin{align*}
	L_{\beta,2^\dg i+j}
	&=
	\left(
		\prod^{2^{n-\dg}-1}_{\substack{s=0\\ s\neq i}}
		\prod^{2^\dg-1}_{t=0}
		\frac{x-\omega_{\beta,2^\dg s+t}}
		{\omega_{\beta,2^\dg i+j}-\omega_{\beta,2^\dg s+t}}
	\right)
	\left(
		\prod^{2^\dg-1}_{\substack{t=0\\ t\neq j}}
		\frac{x-\omega_{\beta,2^\dg i+t}}
		{\omega_{\beta,2^\dg i+j}-\omega_{\beta,2^\dg i+t}}
	\right)\\
	&=
	\left(
		\prod^{2^{n-\dg}-1}_{\substack{s=0\\ s\neq i}}
		\frac{
			\left(x/\beta_0\right)^{2^\dg}
			-\left(x/\beta_0\right)
			-\omega_{\delta,s}
		}{
			\omega_{\delta,i}
			-\omega_{\delta,s}
		}
	\right)
	\left(
		\prod^{2^\dg-1}_{\substack{t=0\\ t\neq j}}
		\frac{x-\omega_{\gamma,i}-\omega_{\alpha,t}}
		{\omega_{\gamma,i}+\omega_{\alpha,j}-\omega_{\gamma,i}-\omega_{\alpha,t}}
	\right)\\
	&=
	\left(
		\prod^{2^{n-\dg}-1}_{\substack{s=0\\ s\neq i}}
		\frac{
			\left(x/\beta_0\right)^{2^\dg}
			-\left(x/\beta_0\right)
			-\omega_{\delta,s}
		}{
			\omega_{\delta,i}
			-\omega_{\delta,s}
		}
	\right)
	\left(
		\prod^{2^\dg-1}_{\substack{t=0\\ t\neq j}}
		\frac{x-\omega_{\gamma,i}-\omega_{\alpha,t}}
		{\omega_{\alpha,j}-\omega_{\alpha,t}}
	\right)\\
	&=
	L_{\delta,i}
	\left(
		\left(\frac{x}{\beta_0}\right)^{2^\dg}-\frac{x}{\beta_0}
	\right)
	L_{\alpha,j}\left(x-\omega_{\gamma,i}\right)
\end{align*}
for $i\in\{0,\dotsc,2^{n-\dg}-1\}$ and $j\in\{0,\dotsc,2^\dg-1\}$.

\noindent Similarly,
\begin{align*}
	N_{\beta,2^\dg i+j}
	&=
	\left(
		\prod^{i-1}_{s=0}
		\prod^{2^\dg-1}_{t=0}
		\frac{x-\omega_{\beta,2^\dg s+t}}
		{\omega_{\beta,2^\dg i+j}-\omega_{\beta,2^\dg s+t}}
	\right)
	\left(
		\prod^{j-1}_{t=0}
		\frac{x-\omega_{\beta,2^\dg i+t}}
		{\omega_{\beta,2^\dg i+j}-\omega_{\beta,2^\dg i+t}}
	\right)\\
	&=
	\left(
		\prod^{i-1}_{s=0}
		\frac{
			\left(x/\beta_0\right)^{2^\dg}
			-\left(x/\beta_0\right)
			-\omega_{\delta,s}
		}{
			\omega_{\delta,i}
			-\omega_{\delta,s}
		}
	\right)
	\left(
		\prod^{j-1}_{t=0}
		\frac{x-\omega_{\gamma,i}-\omega_{\alpha,t}}
		{\omega_{\alpha,j}-\omega_{\alpha,t}}
	\right)\\
	&=
	N_{\delta,i}
	\left(
		\left(\frac{x}{\beta_0}\right)^{2^\dg}-\frac{x}{\beta_0}
	\right)
	N_{\alpha,j}\left(x-\omega_{\gamma,i}\right)
\end{align*}
for $i\in\{0,\dotsc,2^{n-\dg}-1\}$ and $j\in\{0,\dotsc,2^\dg-1\}$. It follows immediately from the definition of the Newton and LCH bases that
\begin{equation*}
	X_{\beta,i}
	=\prod^{n-1}_{k=0}
	N_{\beta,2^k[i]_k}
	\quad\text{for $i\in\{0,\dotsc,2^n-1\}$}.
\end{equation*}
Hence,
\begin{align*}
	X_{\beta,2^\dg i+j}
	&=\left(
		\prod^{n-\dg-1}_{k=0}
		N_{\beta,2^{k+\dg}[i]_k}
	\right)
	\left(
		\prod^{\dg-1}_{k=0}
		N_{\beta,2^k[j]_k}
	\right)\\
	&=
	\left(
		\prod^{n-\dg-1}_{k=0}
		N_{\delta,2^k[i]_k}
		\left(
			\left(\frac{x}{\beta_0}\right)^{2^\dg}
			-\frac{x}{\beta_0}
		\right)
	\right)
	\left(
		\prod^{\dg-1}_{k=0}
		N_{\alpha,2^k[j]_k}\left(x-\omega_{\gamma,0}\right)
	\right)\\
	&=
	X_{\delta,i}
	\left(
		\left(\frac{x}{\beta_0}\right)^{2^\dg}-\frac{x}{\beta_0}
	\right)
	X_{\alpha,j}(x)
\end{align*}
for $i\in\{0,\dotsc,2^{n-\dg}-1\}$ and $j\in\{0,\dotsc,2^\dg-1\}$.
\end{proof}

To illustrate how we can utilise Lemma~\ref{lem:reduction}, let us consider the problem of converting polynomials from the Lagrange basis to the LCH basis. The factorisation of the Lagrange basis polynomials provided by the lemma includes a shift of variables for one factor. Consequently, a recursive approach is facilitated by considering the more general problem of converting from a basis of shifted Lagrange polynomials to the LCH basis. An instance of this new problem is defined by a vector $\beta=(\beta_0,\dotsc,\beta_{n-1})\in\F^n$ with linearly independent entries over $\F_2$, a shift parameter $\lambda\in\F$, and coefficients $f_0,\dotsc,f_{2^n-1}\in\F$. The goal is then to compute $\fout_0,\dotsc,\fout_{2^n-1}\in\F$ such that
\begin{equation}\label{eqn:setup-LX}
	\sum^{2^n-1}_{i=0}\fout_iX_{\beta,i}=\sum^{2^n-1}_{i=0}f_iL_{\beta,i}\left(x-\lambda\right).
\end{equation}
If $n=1$, then
\begin{equation*}
	L_{\beta,0}=\frac{x}{\beta_0}+1,
	\quad
	L_{\beta,1}=\frac{x}{\beta_0},
	\quad
	X_{\beta,0}=1
	\quad\text{and}\quad
	X_{\beta,1}=\frac{x}{\beta_0}.
\end{equation*}
Thus, one can simply compute $\fout_0=f_0-(\lambda/\beta_0)(f_0+f_1)$ and $\fout_1=f_0+f_1$. If $n\geq 2$, then the following consequence of Lemma~\ref{lem:reduction} decomposes the length $2^n$ problem into $2^{n-\dg}$ problems of length $2^\dg$, and $2^\dg$ problems of length $2^{n-\dg}$, for $\dg\in\{1,\dotsc,n-1\}$ such that $\beta_i/\beta_0\in\F_{2^\dg}$ for $i\in\{0,\dotsc,\dg-1\}$. After choosing such a value of $\dg$, for which one always has the possibility of taking $\dg=1$, the smaller instances of the problem admitted by the decomposition can be solved recursively.

\begin{lemma}\label{lem:setup-LX-reduction} Suppose that $\beta=(\beta_0,\dotsc,\beta_{n-1})\in\F^n$ has $n\geq 2$ linearly independent entries over $\F_2$, and let $\alpha$, $\gamma$ and $\delta$ be defined as per Lemma~\ref{lem:reduction} for some $\dg\in\{1,\dotsc,n-1\}$ such that $\beta_i/\beta_0\in\F_{2^\dg}$ for $i\in\{0,\dotsc,\dg-1\}$. Suppose that $f_0,\dotsc,f_{2^n-1},\lambda,\fint_0,\dotsc,\fint_{2^n-1},\fout_0,\dotsc,\fout_{2^n-1}\in\F$ satisfy
\begin{equation}\label{eqn:setup-LX-rows}
	\sum^{2^\dg-1}_{j=0}
	\fint_{2^{\dg}i+j}
	X_{\alpha,j}
	=
	\sum^{2^{\dg}-1}_{j=0}
	f_{2^{\dg}i+j}
	L_{\alpha,j}\left(x-\lambda-\omega_{\gamma,i}\right)
	\quad\text{for $i\in\{0,\dotsc,2^{n-\dg}-1\}$},
\end{equation}
and
\begin{equation}\label{eqn:setup-LX-cols}
	\sum^{2^{n-\dg}-1}_{i=0}
	\fout_{2^{\dg}i+j}
	X_{\delta,i}
	=
	\sum^{2^{n-\dg}-1}_{i=0}
	\fint_{2^\dg i+j}
	L_{\delta,i}
	\left(x-\eta\right)
	\quad\text{for $j\in\{0,\dotsc,2^\dg-1\}$},
\end{equation}
where $\eta=(\lambda/\beta_0)^{2^\dg}-(\lambda/\beta_0)$. Then \eqref{eqn:setup-LX} holds.
\end{lemma}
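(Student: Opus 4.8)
The plan is to substitute the factorisations of the Lagrange and LCH basis polynomials supplied by Lemma~\ref{lem:reduction} into \eqref{eqn:setup-LX}, reindex the sum over $\{0,\dotsc,2^n-1\}$ by writing each index as $2^\dg i+j$ with $i\in\{0,\dotsc,2^{n-\dg}-1\}$ and $j\in\{0,\dotsc,2^\dg-1\}$, and then peel off the two inner sums one at a time, first using \eqref{eqn:setup-LX-rows} and then \eqref{eqn:setup-LX-cols}. The one preparatory observation needed is a Frobenius identity in characteristic two: since $(a+b)^{2^\dg}=a^{2^\dg}+b^{2^\dg}$, replacing $x$ by $x-\lambda$ in the additive map $\theta(x):=(x/\beta_0)^{2^\dg}-x/\beta_0$ gives $((x-\lambda)/\beta_0)^{2^\dg}-(x-\lambda)/\beta_0=\theta(x)-\eta$ with $\eta=(\lambda/\beta_0)^{2^\dg}-(\lambda/\beta_0)$ exactly as in the statement. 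Substituting $x-\lambda$ into the first identity of Lemma~\ref{lem:reduction} therefore yields $L_{\beta,2^\dg i+j}(x-\lambda)=L_{\delta,i}(\theta(x)-\eta)\,L_{\alpha,j}(x-\lambda-\omega_{\gamma,i})$ for all admissible $i,j$.

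With this in hand I would substitute into the right-hand side of \eqref{eqn:setup-LX} and regroup:
\[
	\sum^{2^n-1}_{k=0}f_kL_{\beta,k}(x-\lambda)
	=\sum^{2^{n-\dg}-1}_{i=0}L_{\delta,i}(\theta(x)-\eta)\left(\sum^{2^\dg-1}_{j=0}f_{2^\dg i+j}L_{\alpha,j}(x-\lambda-\omega_{\gamma,i})\right).
\]
For each fixed $i$, the bracketed inner sum is precisely the right-hand side of \eqref{eqn:setup-LX-rows}, so it equals $\sum_j\fint_{2^\dg i+j}X_{\alpha,j}$. Swapping the order of summation then produces $\sum_j\big(\sum_i\fint_{2^\dg i+j}L_{\delta,i}(\theta(x)-\eta)\big)X_{\alpha,j}$. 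For each fixed $j$, the inner sum is obtained from the right-hand side of \eqref{eqn:setup-LX-cols} by specialising its indeterminate to the polynomial $\theta(x)$; since \eqref{eqn:setup-LX-cols} is an identity in $\F[x]$, this substitution is legitimate and the inner sum equals $\sum_i\fout_{2^\dg i+j}X_{\delta,i}(\theta(x))$. Finally, the third identity of Lemma~\ref{lem:reduction} gives $X_{\delta,i}(\theta(x))X_{\alpha,j}(x)=X_{\beta,2^\dg i+j}$, so the whole expression collapses to $\sum_{i,j}\fout_{2^\dg i+j}X_{\beta,2^\dg i+j}=\sum_{k}\fout_kX_{\beta,k}$, which is the left-hand side of \eqref{eqn:setup-LX}.

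I do not anticipate a genuine obstacle: the proof is a bookkeeping argument built on Lemma~\ref{lem:reduction}. The only points that require care are checking that the shift $x-\lambda-\omega_{\gamma,i}$ arising from the substitution matches the shift in \eqref{eqn:setup-LX-rows} verbatim, verifying that the Frobenius computation correctly converts the shift $\lambda$ into the shift $\eta$ appearing in \eqref{eqn:setup-LX-cols}, and confirming that substituting $\theta(x)$ for the indeterminate in the polynomial identity \eqref{eqn:setup-LX-cols} is valid. All the sums involved are finite, so reordering them is unproblematic.
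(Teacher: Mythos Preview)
Your proposal is correct and follows essentially the same route as the paper: expand the right-hand side of \eqref{eqn:setup-LX} via the Lagrange factorisation of Lemma~\ref{lem:reduction}, apply \eqref{eqn:setup-LX-rows} to the inner $j$-sum, swap the order of summation, apply \eqref{eqn:setup-LX-cols} (specialised at $\theta(x)$) to the inner $i$-sum, and then recombine using the LCH factorisation of Lemma~\ref{lem:reduction}. Your explicit remark on the Frobenius shift and on the legitimacy of substituting $\theta(x)$ into the polynomial identity \eqref{eqn:setup-LX-cols} is a nice touch of care that the paper leaves implicit.
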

\begin{proof} Suppose that $\beta=(\beta_0,\dotsc,\beta_{n-1})\in\F^n$ has $n\geq 2$ linearly independent entries over $\F_2$, and let $\alpha$, $\gamma$ and $\delta$ be defined as per Lemma~\ref{lem:reduction} for some $\dg\in\{1,\dotsc,{n-1}\}$ such that $\beta_i/\beta_0\in\F_{2^\dg}$ for $i\in\{0,\dotsc,\dg-1\}$. Suppose that $f_0,\dotsc,f_{2^n-1},\lambda$, $\fint_0,\dotsc,\fint_{2^n-1},\fout_0,\dotsc,\fout_{2^n-1}\in\F$ satisfy~\eqref{eqn:setup-LX-rows} and~\eqref{eqn:setup-LX-cols}. Then Lemma~\ref{lem:reduction} implies that
\begin{multline*}
	\sum^{2^n-1}_{i=0}f_iL_{\beta,i}\left(x-\lambda\right)
	=
	\sum^{2^{n-\dg}-1}_{i=0}
	\left(
		\sum^{2^\dg-1}_{j=0}
		f_{2^\dg i+j}
		L_{\alpha,j}\left(x-\lambda-\omega_{\gamma,i}\right)
	\right)\\
	\times
	L_{\delta,i}
	\left(
		\left(\frac{x}{\beta_0}\right)^{2^\dg}
		-\frac{x}{\beta_0}
		-\eta
	\right),
\end{multline*}
where $\eta=(\lambda/\beta_0)^{2^\dg}-(\lambda/\beta_0)$. By substituting in~\eqref{eqn:setup-LX-rows} and~\eqref{eqn:setup-LX-cols}, it follows that
\begin{equation*}
	\begin{split}
	\sum^{2^n-1}_{i=0}f_iL_{\beta,i}\left(x-\lambda\right)
	&=
	\sum^{2^{n-\dg}-1}_{i=0}
	\left(
		\sum^{2^\dg-1}_{j=0}
		\fint_{2^\dg i+j}
		X_{\alpha,j}\left(x\right)
	\right)
	L_{\delta,i}
	\left(
		\left(\frac{x}{\beta_0}\right)^{2^\dg}
		-\frac{x}{\beta_0}
		-\eta
	\right)\\
	&=
	\sum^{2^\dg-1}_{j=0}
	\left(
		\sum^{2^{n-\dg}-1}_{i=0}
		\fint_{2^\dg i+j}
		L_{\delta,i}
		\left(
			\left(\frac{x}{\beta_0}\right)^{2^\dg}
			-\frac{x}{\beta_0}
			-\eta
		\right)
	\right)
	X_{\alpha,j}\left(x\right)\\
	&=
	\sum^{2^\dg-1}_{j=0}
	\left(
		\sum^{2^{n-\dg}-1}_{i=0}
		\fout_{2^\dg i+j}
		X_{\delta,i}
		\left(
			\left(\frac{x}{\beta_0}\right)^{2^\dg}
			-\frac{x}{\beta_0}
		\right)
	\right)
	X_{\alpha,j}\left(x\right).
	\end{split}
\end{equation*}
Hence, Lemma~\ref{lem:reduction} implies that~\eqref{eqn:setup-LX} holds.
\end{proof}

\subsection{Reduction trees}

We use full binary trees to encode the values of $\dg$ for which the reductions provided by Lemma~\ref{lem:reduction} are applied when converting between two of the bases.

\begin{definition} A full binary tree is a tree that contains a unique vertex of degree zero or two, while all other vertices have degree one or three. Given a full binary tree $(V,E)$ with unique vertex $r\in V$ of degree zero or two, we use the following nomenclature:
\begin{itemize}
	\item the vertex $r$ is called the root of the tree,
	\item a vertex of degree at most one is called a leaf,
	\item a vertex of degree greater than one is called an internal vertex,
	\item if $v\in V\setminus\{r\}$ and $v=v_0,v_1,\dotsc,v_n=r$ is a path, then $v$ is called a child of $v_1$, and a descendant of $v_i$ for $i\in\{1,\dotsc,n\}$,
	\item the set of leaves that are descended from or equal to $v\in V$ is denoted $\leaves_v$,
	\item the subtree of $(V,E)$ rooted on $v\in V$ is the full binary tree $(V',E')$ such that $V'$ consists of $v$ and all its descendants, and $E'$ consists of all edges $\{u,v\}\in E$ such that $u,v\in V'$.
\end{itemize}
\end{definition}

\begin{example} The graph shown in Figure~\ref{fig:binary-tree} is a full binary tree with root vertex~$v_0$, internal vertices $v_0$, $v_1$, $v_2$ and $v_6$, and leaves $v_3$, $v_4$, $v_5$, $v_7$ and $v_8$. The vertex $v_1$ has children $v_2$ and $v_5$, while its descendants are $v_2$, $v_3$, $v_4$ and~$v_5$. Consequently, the subtree rooted on $v_1$ consists of those vertices and edges contained in the dotted box. Finally, $\leaves_{v_0}=\{v_3,v_4,v_5,v_7,v_8\}$, $\leaves_{v_1}=\{v_3,v_4,v_5\}$, $\leaves_{v_2}=\{v_3,v_4\}$, $\leaves_{v_6}=\{v_7,v_8\}$ and $\leaves_{v_i}=\{v_i\}$ for $i\in\{3,4,5,7,8\}$.
\begin{figure}[h]
	\begin{tikzpicture}
		\tikzset{vertex/.style={circle,fill=black,inner sep=0pt,minimum size=3pt,label=left:{\small#1}}}
		\tikzset{rvertex/.style={circle,fill=black,inner sep=0pt,minimum size=3pt,label=right:{\small#1}}}
		
		\def\xr{1.5}
		\def\xa{1.25}
		\def\xb{1}
		\def\yr{1.25}
		\def\ya{1.25}
		\def\yb{1.25}
		
		\node [vertex = $v_0$] (v0) at (0,0) {};
		\node [vertex = $v_1$] (v1) at ($(v0)+(-\xr,-\yr)$) {};
		\node [vertex = $v_2$] (v2) at ($(v1)+(-\xa,-\ya)$) {};
		\node [vertex = $v_3$] (v3) at ($(v2)+(-\xb,-\yb)$) {};
		\node [vertex = $v_4$] (v4) at ($(v2)+( \xb,-\yb)$) {};
		\node [vertex = $v_5$] (v5) at ($(v1)+( \xa,-\ya)$) {};
		\node [rvertex = $v_6$] (v6) at ($(v0)+( \xr,-\yr)$) {};
		\node [rvertex = $v_7$] (v7) at ($(v6)+(-\xa,-\ya)$) {};
		\node [rvertex = $v_8$] (v8) at ($(v6)+( \xa,-\ya)$) {};
		
		\draw (v0) -- (v1);
		\draw (v1) -- (v2);
		\draw (v2) -- (v3);
		\draw (v2) -- (v4);
		\draw (v1) -- (v5);
		\draw (v0) -- (v6);
		\draw (v6) -- (v7);
		\draw (v6) -- (v8);
		
		\draw[dotted] ($(v1)+(\xa,0)+(0.2,0.2)$) rectangle ($(v3)+(-0.6,-0.2)$);
	\end{tikzpicture}
	\caption{A full binary tree.}
	\label{fig:binary-tree}
\end{figure}
\end{example}

Each internal vertex of a full binary tree has exactly two children. As it is necessary for us to distinguish between the children of internal vertices in our algorithms, we assume that each full binary tree $(V,E)$ comes equipped with a partition $\{E_\alpha,E_\delta\}$ of $E$ such that if $v\in V$ has children $v_0$ and $v_1$, then $\{v,v_i\}\in E_\alpha$ if and only if $\{v,v_{1-i}\}\in E_\delta$. Then for each internal vertex $v\in V$, we denote by $v_\alpha$ the child of $v$ such that $\{v,v_\alpha\}\in E_\alpha$, and by $v_\delta$ the child of $v$ such that $\{v,v_\delta\}\in E_\delta$. The partition additionally induces a vertex labelling $\dg:V\rightarrow\N$ given by
\begin{equation*}
	\dg(v)
	=\begin{cases}
		0                               & \text{if $v$ is a leaf vertex},     \\
		\left|\leaves_{v_\alpha}\right| & \text{if $v$ is an internal vertex}.
	\end{cases}
\end{equation*}

\begin{example} We obtain one such partition $\{E_\alpha,E_\delta\}$ for the tree shown in Figure~\ref{fig:binary-tree} by letting $E_\alpha$ contain the leftmost (as shown in the figure), and $E_\delta$ the rightmost, of the two edges that connect each internal vertex with its children. Then $\dg(v_0)=\left|\leaves_{v_1}\right|=3$, $\dg(v_1)=\left|\leaves_{v_2}\right|=2$, $\dg(v_2)=\left|\leaves_{v_3}\right|=1$, $\dg(v_6)=\left|\leaves_{v_7}\right|=1$ and $\dg(v_i)=0$ for $i\in\{3,4,5,7,8\}$.
\end{example}

We use the vertex labelling to encode the values of $\dg$ for which the reductions provided by Lemma~\ref{lem:reduction} are applied. A reduction tree is a full binary tree that fulfils this role for some subspace basis~$\beta\in\F^n$. To allow us to formally define this notion, we now introduce maps that send $\beta$ to each of the vectors $\alpha$ and $\delta$ defined in Lemma~\ref{lem:reduction}. For $\beta=(\beta_0,\dotsc,\beta_{n-1})\in\F^n$ with $n\geq 2$ linearly independent entries over $\F_2$ and $\dg\in\{1,\dotsc,n-1\}$, define $\alpha(\beta,\dg)=(\beta_0,\dotsc,\beta_{\dg-1})$ and
\begin{equation*}
	\delta\left(\beta,\dg\right)
	=\left(
		\left(
			\frac{\beta_\dg}{\beta_0}
		\right)^{2^\dg}
		-\frac{\beta_\dg}{\beta_0}
		,\dotsc,
		\left(
			\frac{\beta_{n-1}}{\beta_0}
		\right)^{2^\dg}
		-\frac{\beta_{n-1}}{\beta_0}
	\right).
\end{equation*}
When $\dg$ has the additional property that $\beta_i/\beta_0\in\F_{2^\dg}$ for $i\in\{0,\dotsc,\dg-1\}$, Lemma~\ref{lem:reduction} implies that the vector $\delta(\beta,\dg)$ has linearly independent entries over~$\F_2$.

\begin{definition}\label{def:reduction-tree} Let $\beta\in\F^n$ have entries that are linearly independent over $\F_2$, and $(V,E)$ be a full binary tree with root vertex $r\in V$. Then $(V,E)$ is a reduction tree for $\beta$ if it has $n$ leaves, and the following conditions hold if $n>1$:
\begin{enumerate}
	\item\label{item:reduction-tree-quotients} $\beta_i/\beta_0\in\F_{2^{\dg(r)}}$ for $i\in\{0,\dotsc,\dg(r)-1\}$,
	\item\label{item:reduction-tree-alpha} the subtree of $(V,E)$ rooted on $r_\alpha$ is a reduction tree for $\alpha(\beta,\dg(r))$, and
	\item\label{item:reduction-tree-delta} the subtree of $(V,E)$ rooted on $r_\delta$ is a reduction tree for $\delta(\beta,\dg(r))$.
\end{enumerate}
\end{definition}

If $v$ is an internal vertex of a full binary tree, then
\begin{equation*}
	\left|\leaves_{v_\alpha}\right|
	=\dg(v)
	\quad\text{and}\quad
	\left|\leaves_{v_\delta}\right|
	=\left|\leaves_v\right|-\left|\leaves_{v_\alpha}\right|
	=\left|\leaves_v\right|-\dg(v),
\end{equation*}
since $\{\leaves_{v_\alpha},\leaves_{v_\delta}\}$ is a partition of $\leaves_v$. Therefore, given a basis vector $\beta$ and one of its reduction trees $(V,E)$, there exists vectors $\beta_v=(\beta_{v,0},\dotsc,\beta_{v,\left|\leaves_v\right|-1})\in\F^{\left|\leaves_v\right|}$ for $v\in V$, each with linearly independent entries over $\F_2$, such that
\begin{equation*}
	\frac{\beta_{v,0}}{\beta_{v,0}},
	\dotsc,
	\frac{\beta_{v,\dg(v)-1}}{\beta_{v,0}}
	\in\F_{2^{\dg(v)}},\quad
	\alpha\left(\beta_v,\dg(v)\right)
	=\beta_{v_\alpha}
	\quad\text{and}\quad
	\delta\left(\beta_v,\dg(v)\right)
	=\beta_{v_\delta}
\end{equation*}
for all internal $v\in V$, and $\beta_v=\beta$ if $v$ is the root of the tree. These vectors define instances of each of the basis conversions problem over the subspaces they generate. If $v$ is an internal vertex, then Lemma~\ref{lem:reduction} allows instances of the conversions problems over the subspace generated by $\beta_v$ to be reduced to instances over the subspaces generated by $\beta_{v_\alpha}$ and $\beta_{v_\delta}$. If $v$ is in a leaf, then $\beta_v$ is $1$-dimensional and the corresponding conversion problems may be solved directly. Consequently, the existence of a reduction tree allows the basis conversion problems to be recursively solved with recursion depth equal to that of the tree, i.e., equal to the length of the longest path between its root vertex and one of its leaves.

As the first condition of Definition~\ref{def:reduction-tree} is trivially satisfied if $\dg(r)=1$, the existence of a reduction tree for an arbitrarily chosen subspace basis is guaranteed.

\begin{proposition}\label{prop:trivial-reduction-tree} Let $\beta\in\F^n$ have entries that are linearly independent over $\F_2$. Then every full binary tree with $n$ leaves and $\image(\dg)\subseteq\{0,1\}$ is a reduction tree~for~$\beta$.
\end{proposition}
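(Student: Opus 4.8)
The plan is to prove the statement by induction on $n$, the number of leaves of the tree, which equals the length of $\beta$.

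\medskip

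\noindent\textbf{Base case.} When $n=1$, Definition~\ref{def:reduction-tree} requires only that the tree have one leaf, with no further conditions imposed. A full binary tree with a single vertex (which is both root and leaf) is the unique such tree, and it trivially satisfies the definition. Hence every full binary tree with $1$ leaf is a reduction tree for any $\beta\in\F^1$ with a linearly independent (i.e.\ nonzero) entry.

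\medskip

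\noindent\textbf{Inductive step.} Suppose $n\geq 2$ and the claim holds for all smaller dimensions. Let $\beta\in\F^n$ have linearly independent entries over $\F_2$, and let $(V,E)$ be a full binary tree with $n$ leaves and $\image(\dg)\subseteq\{0,1\}$. Since $n\geq 2$, the root $r$ is an internal vertex, so $\dg(r)=\left|\leaves_{r_\alpha}\right|\geq 1$; combined with $\image(\dg)\subseteq\{0,1\}$ this forces $\dg(r)=1$, and therefore $\left|\leaves_{r_\alpha}\right|=1$ and $\left|\leaves_{r_\delta}\right|=n-1$. I now verify the three conditions of Definition~\ref{def:reduction-tree}. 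Condition~\eqref{item:reduction-tree-quotients} asks that $\beta_i/\beta_0\in\F_{2^{\dg(r)}}=\F_2$ for $i\in\{0,\dotsc,\dg(r)-1\}=\{0\}$; this is just $\beta_0/\beta_0=1\in\F_2$, which holds. For conditions~\eqref{item:reduction-tree-alpha} and~\eqref{item:reduction-tree-delta}, observe that the subtree rooted on $r_\alpha$ is a full binary tree with $\left|\leaves_{r_\alpha}\right|=1$ leaf, and the subtree rooted on $r_\delta$ is a full binary tree with $\left|\leaves_{r_\delta}\right|=n-1$ leaves; moreover each is again a full binary tree whose vertex labelling is the restriction of $\dg$, so both still satisfy $\image(\dg)\subseteq\{0,1\}$. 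Since $\dg(r)=1$ satisfies the hypothesis ``$\beta_i/\beta_0\in\F_{2^\dg}$ for $i\in\{0,\dotsc,\dg-1\}$'' of Lemma~\ref{lem:reduction}, that lemma guarantees $\delta(\beta,1)$ has linearly independent entries over $\F_2$; and $\alpha(\beta,1)=(\beta_0)$ trivially has a linearly independent entry. Applying the inductive hypothesis to $\alpha(\beta,1)\in\F^1$ with the subtree rooted on $r_\alpha$, and to $\delta(\beta,1)\in\F^{n-1}$ with the subtree rooted on $r_\delta$, yields conditions~\eqref{item:reduction-tree-alpha} and~\eqref{item:reduction-tree-delta}. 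This completes the induction.

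\medskip

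\noindent The only mild subtlety, and the one point requiring care, is the bookkeeping that a subtree of a full binary tree with $\image(\dg)\subseteq\{0,1\}$ is again a full binary tree of the same kind, with the correct number of leaves and with its vertex labelling inherited by restriction from the parent tree; everything else is a direct unwinding of the definitions.
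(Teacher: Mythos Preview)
Your proof is correct and follows precisely the induction on $n$ that the paper itself indicates as one of two routes (the paper omits the proof, noting only that it is ``straightforward to prove Proposition~\ref{prop:trivial-reduction-tree} by induction on $n$, or to obtain the proposition as a consequence of Theorem~\ref{thm:sufficient}''). Your argument supplies exactly the details the paper leaves implicit.
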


It is straightforward to prove Proposition~\ref{prop:trivial-reduction-tree} by induction on $n$, or to obtain the proposition as a consequence of Theorem~\ref{thm:sufficient} in Section~\ref{sec:tower}. Consequently, we omit its proof. The choice of reduction trees provided by the proposition captures the strategy used in recent algorithms~\cite{gao2010,lin2014,lin2016a} for an arbitrary choice of subspace basis. Accordingly, we use such trees as our baseline for comparison. The reduction strategy use by recent algorithms specific to Cantor bases~\cite{gao2010,lin2016a} is captured by reduction trees such that $\dg(v)=2^{\ceil{\log_2\left|\leaves_v\right|}-1}$ for all internal vertices. We characterise the reduction trees of Cantor bases in the following proposition.

\begin{proposition}\label{prop:cantor-trees} Suppose that $\beta\in\F^n$ is a Cantor basis. Then a full binary tree is a reduction tree for $\beta$ if and only if it has $n$ leaves and $\image(\dg)\subseteq\{0,2^0,2^1,2^2,\dotsc\}$.
\end{proposition}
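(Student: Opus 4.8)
The plan is to prove both directions by induction on $n$, the key point being that the defining recursion~\eqref{eqn:cantor-basis} is preserved by both of the maps $\alpha(\beta,\dg)$ and $\delta(\beta,\dg)$ whenever $\dg$ is a power of two, and is \emph{only} preserved by $\delta(\beta,\dg)$ when $\dg$ is a power of two. The base case $n=1$ is immediate since any single-vertex tree is a reduction tree for any $1$-dimensional~$\beta$, and $\image(\dg)=\{0\}$ for such a tree. So assume $n\geq 2$ and that the claim holds for all smaller dimensions.

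First I would establish the two structural facts about Cantor bases. If $\beta=(\beta_0,\dotsc,\beta_{n-1})$ is a Cantor basis and $\dg=2^e$ for some $e$ with $1\le\dg\le n-1$, then $\alpha(\beta,\dg)=(\beta_0,\dotsc,\beta_{\dg-1})$ is again a Cantor basis (the recursion~\eqref{eqn:cantor-basis} for the truncation is just a sub-collection of the relations for $\beta$, and $\beta_0=1$ still holds). For $\delta$, one uses the identity $\beta_i=\beta_{i+1}^{2}-\beta_{i+1}$ together with $\beta_0=1$ to show by a short computation that the map $z\mapsto z^{2^\dg}-z$, applied iteratively, sends $\beta_{\dg+k}$ to $\beta_k$ when $\dg$ is a power of two; concretely, since squaring is additive in characteristic two, $(\beta_{i+1}^{2}-\beta_{i+1})^{2^{\dg}}-(\beta_{i+1}^{2}-\beta_{i+1}) = (\beta_{i+1}^{2^{\dg}}-\beta_{i+1})^2-(\beta_{i+1}^{2^{\dg}}-\beta_{i+1})$, so that $\delta(\beta,\dg)_k = \beta_{\dg+k}^{2^\dg}-\beta_{\dg+k}$ satisfies $\delta(\beta,\dg)_0=\beta_0=1$ (using that $\beta_\dg^{2^\dg}-\beta_\dg$ collapses through the Cantor recursion to $\beta_0$ when $\dg=2^e$) and $\delta(\beta,\dg)_i = \delta(\beta,\dg)_{i+1}^2 - \delta(\beta,\dg)_{i+1}$. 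Hence $\delta(\beta,\dg)$ is a Cantor basis of dimension $n-\dg$. I also need the condition~\ref{item:reduction-tree-quotients} of Definition~\ref{def:reduction-tree}: since $\beta_0=1$, we need $\beta_0,\dotsc,\beta_{\dg-1}\in\F_{2^\dg}$, which follows because each $\beta_i$ with $i<\dg=2^e$ satisfies $\beta_i^{2^\dg}=\beta_i$ — this is exactly the displayed computation $\beta_k^{2^\dg}-\beta_k=0$ obtained by unwinding the Cantor recursion $\dg-i$ times, or more directly from the fact that $\beta_0,\dotsc,\beta_{\dg-1}$ generate the subfield $\F_{2^\dg}$ when $\dg$ is a power of two (a standard property of Cantor bases).

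For the ``if'' direction, suppose $(V,E)$ has $n$ leaves and $\image(\dg)\subseteq\{0,2^0,2^1,\dotsc\}$. Then $\dg(r)$ is a power of two in $\{1,\dotsc,n-1\}$ (it lies in this range since $r$ is internal and $\dg(r)=|\leaves_{r_\alpha}|<|\leaves_r|=n$), so by condition~\ref{item:reduction-tree-quotients} we get~\ref{item:reduction-tree-quotients} from the subfield fact above; and $\alpha(\beta,\dg(r))$, $\delta(\beta,\dg(r))$ are Cantor bases of dimensions $\dg(r)$, $n-\dg(r)$ respectively, while the subtrees rooted on $r_\alpha$, $r_\delta$ have exactly those numbers of leaves and still have $\dg$-image in the powers of two. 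The induction hypothesis then gives conditions~\ref{item:reduction-tree-alpha} and~\ref{item:reduction-tree-delta}, so $(V,E)$ is a reduction tree for $\beta$. For the ``only if'' direction, suppose $(V,E)$ is a reduction tree for $\beta$; it has $n$ leaves by definition, and we must show every internal vertex has $\dg$-value a power of two. It suffices to show $\dg(r)$ is a power of two, since then $\delta(\beta,\dg(r))$ is a Cantor basis and by induction the subtree on $r_\delta$ has $\dg$-image in the powers of two, while $\alpha(\beta,\dg(r))$ is a Cantor basis (truncation of one — here I only need that a truncation of a Cantor basis to a \emph{power-of-two} length is Cantor, which we have) and induction handles the $r_\alpha$ subtree likewise. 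To see $\dg(r)$ is a power of two: condition~\ref{item:reduction-tree-quotients} forces $\beta_0,\dotsc,\beta_{\dg(r)-1}\in\F_{2^{\dg(r)}}$, i.e. $\beta_i^{2^{\dg(r)}}=\beta_i$ for $i<\dg(r)$; in particular $\beta_{\dg(r)-1}\in\F_{2^{\dg(r)}}$. The main obstacle — and the heart of the argument — is showing this fails unless $\dg(r)$ is a power of two. I would argue that for a Cantor basis, $\beta_m\in\F_{2^{m+1}}\setminus\F_{2^{m}}$ for the largest power of two $\le m+1$... more precisely, the degree of $\beta_m$ over $\F_2$ is the smallest power of two exceeding $m$; equivalently $\beta_m\in\F_{2^{2^e}}$ iff $m<2^e$. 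Granting this, $\beta_{\dg(r)-1}\in\F_{2^{\dg(r)}}$ forces $\dg(r)-1<2^e$ where $2^e$ is the largest power of two dividing the degree of $\F_{2^{\dg(r)}}$, i.e. $\F_{2^{\dg(r)}}$ must itself contain $\F_{2^{2^e}}$ with $2^e\ge\dg(r)$, which combined with $2^e\le\dg(r)$ (as $2^e\mid\dg(r)$... ) gives $\dg(r)=2^e$. I expect the cleanest route is to first prove as a lemma that the $\F_2$-degree of $\beta_m$ in a Cantor basis equals $2^{\lceil\log_2(m+1)\rceil}$, then read off that $\F_{2^d}$ contains $\beta_0,\dotsc,\beta_{d-1}$ if and only if $d$ is a power of two; this single lemma drives both directions.
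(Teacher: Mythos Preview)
Your approach is essentially the same as the paper's: induction on $n$, with the inductive step driven by the fact that when $\dg(r)$ is a power of two, both $\alpha(\beta,\dg(r))$ and $\delta(\beta,\dg(r))$ are again Cantor bases (indeed $\delta(\beta,\dg(r))=(\beta_0,\dotsc,\beta_{n-\dg(r)-1})$), and the observation that $\beta_0,\dotsc,\beta_{\dg(r)-1}\in\F_{2^{\dg(r)}}$ forces $\dg(r)$ to be a power of two. The paper packages the needed facts as Lemma~\ref{lem:cantor} and handles the ``only if'' step slightly more directly: from linear independence and $\beta_0,\dotsc,\beta_{2^k-1}\in\F_{2^{2^k}}$ (with $2^k\le\dg(r)<2^{k+1}$) one gets $\F_{2^{2^k}}\subseteq\F_{2^{\dg(r)}}$, hence $2^k\mid\dg(r)$, hence $\dg(r)=2^k$---which is exactly what your degree lemma encodes, but without needing to compute $[\F_2(\beta_m):\F_2]$ precisely.
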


We use the following properties of Cantor bases to prove Proposition~\ref{prop:cantor-trees}.

\begin{lemma}[Properties of Cantor bases]\label{lem:cantor} Suppose that $\beta=(\beta_0,\dotsc,\beta_{n-1})\in\F^n$ is a Cantor basis. Then
\begin{enumerate}
	\item\label{cantor-li} $\beta_0,\dotsc,\beta_{n-1}$ are linearly independent over $\F_2$,
	\item\label{cantor-subfields} $\beta_0,\dotsc,\beta_{i-1}\in\F_{2^{2^k}}$ if $i\leq2^k$ for some $k\in\N$, and
	\item\label{cantor-delta}$\beta^{2^{2^k}}_i-\beta_i=\beta_{i-2^k}$ if $i\geq 2^k$ for some $k\in\N$.
\end{enumerate}
\end{lemma}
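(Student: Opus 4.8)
The plan is to prove Lemma~\ref{lem:cantor} by working through the three parts in order, using the defining recurrence~\eqref{eqn:cantor-basis} and an induction on $k$ for parts~\eqref{cantor-subfields} and~\eqref{cantor-delta}. For part~\eqref{cantor-li}, observe that the recurrence $\beta_i=\beta_{i+1}^2-\beta_{i+1}$ means that the map $y\mapsto y^2-y$ (the Artin--Schreier operator) sends $\beta_{i+1}$ to $\beta_i$; since this $\F_2$-linear map has kernel $\F_2=\{0,1\}$ and $\beta_0=1\neq 0$, one gets that $\beta_1,\dotsc,\beta_{n-1}$ are successively nonzero and, more usefully, that applying the operator repeatedly to $\beta_{n-1}$ runs through $\beta_{n-2},\dotsc,\beta_0,1,0$. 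A clean way to finish is to show by downward induction that no nontrivial $\F_2$-combination of $\beta_i,\dotsc,\beta_{n-1}$ vanishes: if $\sum_{j\ge i}c_j\beta_j=0$ with $c_i=1$, apply the operator $y\mapsto y^2-y$ enough times to reduce the lowest-index term to $\beta_0=1$ while killing it, forcing a contradiction with the kernel being $\F_2$.

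For part~\eqref{cantor-subfields}, I would induct on $k$. The base case $k=0$ asserts $\beta_0,\dotsc,\beta_{i-1}\in\F_2$ when $i\le 1$, i.e.\ $\beta_0=1\in\F_2$, which is immediate. For the inductive step, suppose the claim holds for $k$ and take $i\le 2^{k+1}$. It suffices to show $\beta_j\in\F_{2^{2^{k+1}}}$ for $j<2^{k+1}$, i.e.\ $\beta_j^{2^{2^{k+1}}}=\beta_j$. The Frobenius $\phi\colon y\mapsto y^2$ commutes with the Artin--Schreier operator $\sigma\colon y\mapsto y^2-y$, so from the recurrence $\sigma(\beta_{m+1})=\beta_m$ one gets $\sigma^r(\beta_{m+r})=\beta_m$ whenever $m+r\le n-1$. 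Part~\eqref{cantor-delta} for exponent $k$ (which I would prove alongside, or first) gives $\sigma^{2^k}(\beta_j)=\beta_j^{2^{2^k}}-\beta_j=\beta_{j-2^k}$ for $j\ge 2^k$; iterating, $\phi^{2^{k+1}}(\beta_j)-\beta_j=(\phi^{2^k}+\mathrm{id})\big(\phi^{2^k}(\beta_j)-\beta_j\big)$ using that $\phi^{2^{k+1}}-\mathrm{id}=(\phi^{2^k}-\mathrm{id})(\phi^{2^k}+\mathrm{id})$ as operators on the field — and $(\phi^{2^k}-\mathrm{id})(\beta_j)=\beta_{j-2^k}$ lies in $\F_{2^{2^k}}$ by the inductive hypothesis, hence is fixed by $\phi^{2^k}$, so $(\phi^{2^k}+\mathrm{id})(\beta_{j-2^k})=\phi^{2^k}(\beta_{j-2^k})+\beta_{j-2^k}=2\beta_{j-2^k}=0$ in characteristic two. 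This yields $\beta_j^{2^{2^{k+1}}}=\beta_j$, as required, and one checks the index bookkeeping ($j-2^k<2^k\le 2^k$, so the inductive hypothesis applies).

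For part~\eqref{cantor-delta}, the statement $\beta_i^{2^{2^k}}-\beta_i=\beta_{i-2^k}$ for $i\ge 2^k$ is exactly the identity $\sigma^{2^k}(\beta_i)=\beta_{i-2^k}$ rewritten, since by induction $\sigma^{2^k}=(\phi-\mathrm{id})^{2^k}=\phi^{2^k}-\mathrm{id}$ on elements of the relevant subfield (the cross terms vanishing in characteristic two because $\binom{2^k}{j}$ is even for $0<j<2^k$, together with the fact that $\sigma$ restricted to $\F_{2^{2^k}}$ followed by the subfield containment keeps everything in a field where $\phi$ and $\mathrm{id}$ genuinely commute). The recurrence gives $\sigma(\beta_{m+1})=\beta_m$, and composing $2^k$ times gives $\sigma^{2^k}(\beta_{i})=\beta_{i-2^k}$ provided $i-2^k\ge 0$, which is the hypothesis. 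So part~\eqref{cantor-delta} follows once one knows $\sigma^{2^k}$ agrees with $\phi^{2^k}-\mathrm{id}$ on the $\beta_j$'s, and this agreement is itself proved by induction on $k$ using part~\eqref{cantor-subfields} for smaller $k$ — the two inductions are best interleaved into a single induction on $k$ whose inductive step establishes both statements simultaneously.

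The main obstacle is the circular-looking dependency: part~\eqref{cantor-subfields} at level $k+1$ appears to need part~\eqref{cantor-delta} at level $k$, while part~\eqref{cantor-delta} at level $k$ needs part~\eqref{cantor-subfields} at level $k$; and the clean collapse $\sigma^{2^k}=\phi^{2^k}-\mathrm{id}$ only holds once we know the elements in question lie in a commutative subfield where binomial coefficients $\binom{2^k}{j}$ being even (Lucas/Kummer) kills all intermediate terms. The resolution is to set up one induction on $k$ proving the conjunction of~\eqref{cantor-subfields} and~\eqref{cantor-delta}; the base case $k=0$ is trivial ($\sigma(\beta_i)=\beta_{i-1}$ is the defining recurrence, and $\beta_0\in\F_2$), and in the step, the subfield statement at level $k$ (already in hand) legitimises the operator identity needed to derive both statements at level $k+1$. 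Part~\eqref{cantor-li} is logically independent and can be dispatched first.
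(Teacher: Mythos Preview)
Your approach is correct, but you have manufactured a difficulty that is not there. The identity $\sigma^{2^k}=\phi^{2^k}-\mathrm{id}$ holds as an equality of $\F_2$-linear endomorphisms of the whole field $\F$: the operators $\phi$ and $\mathrm{id}$ commute in the $\F_2$-algebra $\mathrm{End}_{\F_2}(\F)$, so $(\phi-\mathrm{id})^{2^k}$ expands by the binomial theorem there, and Lucas' lemma kills every middle term. No hypothesis on the $\beta_j$ lying in any particular subfield is needed. Consequently part~\eqref{cantor-delta} follows immediately from the recurrence (compose $\sigma$ with itself $2^k$ times), entirely independently of part~\eqref{cantor-subfields}; and then your induction on $k$ for part~\eqref{cantor-subfields}, using the factorisation $\phi^{2^{k+1}}-\mathrm{id}=(\phi^{2^k}-\mathrm{id})(\phi^{2^k}+\mathrm{id})$ together with part~\eqref{cantor-delta} at level $k$ and the inductive hypothesis, goes through cleanly. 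There is no circularity and no need to interleave the two inductions. (Your phrase ``commutative subfield'' is also off: all fields are commutative, and in any case the relevant commutativity is that of $\phi$ and $\mathrm{id}$ as operators.)

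By way of comparison, the paper does not prove this lemma directly but obtains it as the $q=2$, $t=1$ case of its Lemma~\ref{lem:gen-cantor}. That argument first establishes the binomial-sum identity $\beta_i=\sum_{r=0}^{j}\binom{j}{r}\beta_{i+j}^{2^r}$ by induction on $j$, then specialises via Lucas: taking $j=2^k$ gives part~\eqref{cantor-delta}, while taking $j=(2^{m-k}-1)2^k$ yields a trace formula that places the low-index $\beta_j$ in $\F_{2^{2^k}}$, giving part~\eqref{cantor-subfields}. Your operator formulation of part~\eqref{cantor-delta} is essentially the same argument in different clothing; your inductive proof of part~\eqref{cantor-subfields} is a genuine alternative to the trace route and is arguably more self-contained, since it does not appeal to an element at the top index. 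For part~\eqref{cantor-li} both you and the paper push a putative dependence relation through the Artin--Schreier map; your sketch is slightly garbled (it is cleanest to apply $\sigma$ once to a minimal relation among $\beta_0,\dotsc,\beta_{i-1}$ and use that $\sigma(\beta_0)=0$, reducing to a relation among $\beta_0,\dotsc,\beta_{i-2}$), but the idea is the same.
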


Lemma~\ref{lem:cantor} is proved by Gao and Mateer~\cite[Apendix~A]{gao2010}, and also obtained as a special case of Lemma~\ref{lem:gen-cantor} in Section~\ref{sec:cantor}.

\begin{proof}[Proof of Proposition~\ref{prop:cantor-trees}] A full binary tree with one leaf has $\image(\dg)=\{0\}$. Thus, the proposition holds for $n=1$, since a full binary tree is a reduction tree for a $1$-dimensional Cantor basis if and only if it has one leaf. Therefore, let $n\geq 2$, and suppose that the proposition is true for all smaller values of $n$. Suppose that $\beta=(\beta_0,\dotsc,\beta_{n-1})\in\F^n$ is a Cantor basis, and let $(V,E)$ be a full binary tree with root vertex $r\in V$. Then $(V,E)$ is a reduction tree for $\beta$ if and only if it has $n$ leaves, $\beta_0,\dotsc,\beta_{\dg(r)-1}\in\F_{2^{\dg(r)}}$ (recall that $\beta_0=1$ for a Cantor basis), the subtree rooted on $r_\alpha$ is a reduction tree for $\alpha(\beta,\dg(r))$, and the subtree rooted on $r_\delta$ is a reduction tree for $\delta(\beta,\dg(r))$. If the tree has $n$ leaves, then ${1\leq\dg(r)<n}$ and properties~\eqref{cantor-li} and~\eqref{cantor-subfields} of Lemma~\ref{lem:cantor} imply that $\beta_0,\dotsc,\beta_{\dg(r)-1}\in\F_{2^{\dg(r)}}$ if and only if $\dg(r)$ is a power of two. In this case, property~\eqref{cantor-delta} of the lemma implies that $\alpha(\beta,\dg(r))=(\beta_0,\dotsc,\beta_{\dg(r)-1})$ and $\delta(\beta,\dg(r))=(\beta_0,\dotsc,\beta_{n-\dg(r)-1})$. Finally, if the tree has $n$ leaves, then the subtree rooted on $r_\alpha$ has $\dg(r)$ leaves, and the subtree rooted on $r_\delta$ has $\left|\leaves_r\right|-\dg(r)=n-\dg(r)$ leaves. Therefore, the induction hypothesis implies that $(V,E)$ is a reduction tree for $\beta$ if and only if it has $n$ leaves, $\dg(r)$ is a power of two, and the subtrees rooted on $r_\alpha$ and $r_\delta$ both satisfy $\image(\dg)\subseteq\{0,2^0,2^1,2^2,\dotsc\}$. That is, if and only if the tree has $n$ leaves and $\image(\dg)\subseteq\{0,2^0,2^1,2^2,\dotsc\}$. Hence, the proposition follows by induction.
\end{proof}

\section{Conversion algorithms}\label{sec:algorithms}

In this section, we describe how to convert between the polynomial bases once~$\beta$ and a suitable reduction tree have been chosen. Accordingly, we now fix a vector $\beta=(\beta_0,\dotsc,\beta_{n-1})\in\F^n$ that has linearly independent entries over $\F_2$, and a reduction tree $(V,E)$ for $\beta$. The algorithms of this section are then presented for this arbitrary selection. Recall that we only provide algorithms for converting between the LCH basis and each of the Newton, Lagrange and monomial bases. The algorithms that we propose for each of these problems require the precomputation of a small number of constants associated with the basis and the reduction tree. Consequently, we begin the section by discussing these precomputations and their cost.

\subsection{Precomputations}

For the remainder of the section, we use the shorthand notations $\dg_v=\dg(v)$ and $n_v=\left|\leaves_v\right|$ for $v\in V$. Define $\beta_v=(\beta_{v,0},\dotsc,\beta_{v,n_v-1})\in\F^{n_v}$ for $v\in V$ recursively as follows: if $v$ is the root of the tree, then $\beta_v=\beta$; if $v$ is an internal vertex, then $\beta_{v_\alpha}=\alpha(\beta_v,\dg_v)$ and $\beta_{v_\delta}=\delta(\beta_v,\dg_v)$. Then Definition~\ref{def:reduction-tree} and Lemma~\ref{lem:reduction} imply that the vectors $\beta_v$ for $v\in V$ each have entries that are linearly independent over $\F_2$. Finally, let $\gamma_v=(\beta_{v,d_v},\dotsc,\beta_{v,n_v-1})$ for all internal $v\in V$. 

Given $\beta_v$ for some internal $v\in V$, repeated squaring allows $\delta(\beta_v,\dg_v)$ to be computed with $(n_v-\dg_v)(\dg_v+1)$ multiplications and $n_v-\dg_v$ additions. If $r\in V$ is the root vertex of the tree, then $V\setminus\leaves_r$ is the set of internal vertices in $V$, and
\begin{equation}\label{eqn:sum-identity}
	\sum_{v\in V\setminus\leaves_r}
	\dg_v
	\left(n_v-\dg_v\right)
	=
	\sum_{v\in V\setminus\leaves_r}
	\left|\leaves_{v_\alpha}\right|
	\left|\leaves_{v_\delta}\right|
	=
	\binom{\left|\leaves_r\right|}{2}
	=
	\binom{n}{2}.
\end{equation}
Thus, the vectors $\beta_v$ for $v\in V$ can be computed with $\bigO(n^2)$ field operations. The algorithms we propose for converting between the monomial and LCH bases only require the precomputation and storage of $\beta_{v_\delta,0}$ or $1/\beta_{v_\delta,0}$ for all internal $v\in V$. As a full binary tree with $n$ leaves has $n-1$ internal vertices, the algorithms require the storage of $\bigO(n)$ field elements, which can be computed with $\bigO(n^2)$ field operations.

For conversions involving the Lagrange or Newton bases, we generalise the problems to include a shift parameter, as we did at the end of Section~\ref{sec:reduction} for Lagrange to LCH conversion. As a result, we require additional machinery to allow computations related to the shift parameter to be handled in a time and space-friendly manner. Define maps $\PushDown_v:\leaves_v\times\F\rightarrow\F$ for $v\in V$ recursively as follows: for $u\in\leaves_v$ and $\lambda\in\F$,
\begin{equation*}
	\PushDown_v(u,\lambda)
	=\begin{cases}
		\lambda/\beta_{v,0} 
		&\text{if $u=v$},\\
		\PushDown_{v_\alpha}(u,\lambda)
		&\text{if $u\neq v$ and $u\in\leaves_{v_\alpha}$},\\
		\PushDown_{v_\delta}
		\left(
			u,
			\left(\lambda/\beta_{v,0}\right)^{2^{\dg_v}}
			-\lambda/\beta_{v,0}
		\right)
		&\text{if $u\neq v$ and $u\in\leaves_{v_\delta}$}.
	\end{cases}
\end{equation*}
For internal $v\in V$, define
\begin{equation*}
	\sigma_{v,i}=\sum^i_{j=0}\beta_{v,\dg_v+j}
	\quad\text{for $i\in\{0,\dotsc,n_v-\dg_v-1\}$}.
\end{equation*}
Then the algorithms we propose for conversions involving the Newton or Lagrange bases require the precomputation and storage of $\PushDown_v(u,\sigma_{v,0}),\dotsc,\PushDown_v(u,\sigma_{v,n_v-\dg_v-1})$ for all internal $v\in V$ and $u\in\leaves_{v_\alpha}$. The identity~\eqref{eqn:sum-identity} implies that $n(n-1)/2$ field elements are stored as result of this requirement. Moreover, it is straightforward to show that the precomputation can be performed with $\bigO(n^3)$ field operations.

\begin{remark}\label{rmk:cantor-precomputations} If $\beta$ is a Cantor basis, then Proposition~\ref{prop:cantor-trees} and property~\eqref{cantor-delta} of Lemma~\ref{lem:cantor} imply that $\beta_v=(\beta_0,\dotsc,\beta_{n_v-1})$ for $v\in V$. It follows that $\beta_{v_\delta,0}=\beta_0=1$ for all internal $v\in V$. Thus, no precomputations are required for converting between the LCH and monomial bases. For $v\in V$ and $u\in\leaves_v$, the map $\PushDown_v(u,{}\cdot{}):\F\rightarrow\F$ is $\F_2$-linear, while Proposition~\ref{prop:cantor-trees} and properties~\eqref{cantor-subfields} and~\eqref{cantor-delta} of Lemma~\ref{lem:cantor} imply that
\begin{equation*}\label{eqn:pushdown-cantor}
	\PushDown_v\left(u,\beta_i\right)
	=\begin{cases}
		\beta_i & \text{if $u=v$},\\
		\PushDown_{v_\alpha}\left(u,\beta_i\right)
		& \text{if $u\neq v$ and $u\in\leaves_{v_\alpha}$},\\
		\PushDown_{v_\delta}
		\left(
			u,
			\beta_{i-\dg_v}
		\right)
		& \text{if $u\neq v$, $u\in\leaves_{v_\delta}$ and $i\geq\dg_v$},\\
		0 & \text{if $u\neq v$, $u\in\leaves_{v_\delta}$ and $i<\dg_v$},\\
	\end{cases}
\end{equation*}
for $i\in\{0,\dotsc,n-1\}$. Consequently, the precomputations for conversions involving the Newton or Lagrange bases require fewer operations.
\end{remark}

\subsection{Conversion between the Newton and Lin--Chung--Han bases}\label{sec:NX}

The factorisations of the Newton basis polynomials provided by Lemma~\ref{lem:reduction} include a shift of variables for one factor. Consequently, as we did for Lagrange basis to LCH basis conversion, we consider the more general problem of converting between a basis of shifted Newton polynomials and the LCH basis. Our conversions algorithms are then based on the following analogue of Lemma~\ref{lem:setup-LX-reduction}.

\begin{lemma}\label{lem:NX-reduction} Let $v\in V$ be an internal vertex and $\ell\in\{1,\dotsc,2^{n_v}\}$. Suppose that $f_0,\dotsc,f_{\ell-1},\lambda,\fint_0,\dotsc,\fint_{\ell-1},\fout_0,\dotsc,\fout_{\ell-1}\in\F$ satisfy
\begin{equation}\label{eqn:NX-rows}
	\sum^{\min\left(\ell-2^{\dg_v}i,2^{\dg_v}\right)-1}_{j=0}
	\fint_{2^{\dg_v}i+j}
	X_{\beta_{v_\alpha},j}
	=
	\sum^{\min\left(\ell-2^{\dg_v}i,2^{\dg_v}\right)-1}_{j=0}
	f_{2^{\dg_v}i+j}
	N_{\beta_{v_\alpha},j}\left(x-\lambda-\omega_{\gamma_v,i}\right)
\end{equation}
for $i\in\{0,\dotsc,\ceil{\ell/2^{\dg_v}}-1\}$, and
\begin{equation}\label{eqn:NX-cols}
	\sum^{\ceil{(\ell-j)/2^{\dg_v}}-1}_{i=0}
	\fout_{2^{\dg_v}i+j}
	X_{\beta_{v_\delta},i}
	=
	\sum^{\ceil{(\ell-j)/2^{\dg_v}}-1}_{i=0}
	\fint_{2^{\dg_v}i+j}
	N_{\beta_{v_\delta},i}\left(x-\eta\right)
\end{equation}
for $j\in\{0,\dotsc,\min(2^{\dg_v},\ell)-1\}$, where $\eta=(\lambda/\beta_{v,0})^{2^{\dg_v}}-\lambda/\beta_{v,0}$. Then
\begin{equation}\label{eqn:NX}
	\sum^{\ell-1}_{i=0}\fout_iX_{\beta_v,i}
	=
	\sum^{\ell-1}_{i=0}f_iN_{\beta_v,i}\left(x-\lambda\right).
\end{equation}
\end{lemma}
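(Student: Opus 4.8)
The plan is to follow the structure of the proof of Lemma~\ref{lem:setup-LX-reduction}, adding the bookkeeping needed to cope with the truncation length $\ell$. The combinatorial observation underpinning everything is that the index set $S=\{(i,j)\in\N^2 : j<2^{\dg_v}\ \text{and}\ 2^{\dg_v}i+j<\ell\}$ admits two compatible descriptions: slicing by $i$, the pairs are those with $i\in\{0,\dotsc,\ceil{\ell/2^{\dg_v}}-1\}$ and $j\in\{0,\dotsc,\min(\ell-2^{\dg_v}i,2^{\dg_v})-1\}$; slicing by $j$, they are those with $j\in\{0,\dotsc,\min(2^{\dg_v},\ell)-1\}$ and $i\in\{0,\dotsc,\ceil{(\ell-j)/2^{\dg_v}}-1\}$; and $(i,j)\mapsto 2^{\dg_v}i+j$ is a bijection from $S$ onto $\{0,\dotsc,\ell-1\}$. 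These are precisely the ranges appearing in~\eqref{eqn:NX-rows}, \eqref{eqn:NX-cols} and~\eqref{eqn:NX}, so the proof amounts to expanding the right-hand side of~\eqref{eqn:NX} over $S$, rearranging, and feeding in the two hypotheses.

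First I would rewrite $\sum^{\ell-1}_{i=0}f_iN_{\beta_v,i}(x-\lambda)$ as $\sum_{(i,j)\in S}f_{2^{\dg_v}i+j}N_{\beta_v,2^{\dg_v}i+j}(x-\lambda)$ and apply the Newton-basis factorisation of Lemma~\ref{lem:reduction} to $\beta_v$ with the value $\dg_v$ — legitimate since Definition~\ref{def:reduction-tree} guarantees $\beta_{v,k}/\beta_{v,0}\in\F_{2^{\dg_v}}$ for $k<\dg_v$ — composed with the shift $x\mapsto x-\lambda$. Exactly as in the proof of Lemma~\ref{lem:setup-LX-reduction}, additivity of $y\mapsto y^{2^{\dg_v}}$ in characteristic two turns the $\beta_{v_\delta}$ factor into $N_{\beta_{v_\delta},i}\!\left((x/\beta_{v,0})^{2^{\dg_v}}-x/\beta_{v,0}-\eta\right)$ and the $\beta_{v_\alpha}$ factor into $N_{\beta_{v_\alpha},j}(x-\lambda-\omega_{\gamma_v,i})$. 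Then I would proceed in four moves. Slicing $S$ by $i$ and invoking~\eqref{eqn:NX-rows} replaces, for each $i$, the inner sum over $j$ by $\sum_j\fint_{2^{\dg_v}i+j}X_{\beta_{v_\alpha},j}$. Re-slicing $S$ by $j$ gathers the $\beta_{v_\delta}$ factors into an inner sum over $i$; substituting the polynomial $(x/\beta_{v,0})^{2^{\dg_v}}-x/\beta_{v,0}$ for the variable of~\eqref{eqn:NX-cols} makes its right-hand side coincide with that inner sum, so invoking~\eqref{eqn:NX-cols} replaces it by $\sum_i\fout_{2^{\dg_v}i+j}X_{\beta_{v_\delta},i}\!\left((x/\beta_{v,0})^{2^{\dg_v}}-x/\beta_{v,0}\right)$. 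Re-slicing by $i$ once more and using the LCH factorisation $X_{\beta_v,2^{\dg_v}i+j}=X_{\beta_{v_\delta},i}\!\left((x/\beta_{v,0})^{2^{\dg_v}}-x/\beta_{v,0}\right)X_{\beta_{v_\alpha},j}$ of Lemma~\ref{lem:reduction} collapses the double sum to $\sum_{(i,j)\in S}\fout_{2^{\dg_v}i+j}X_{\beta_v,2^{\dg_v}i+j}=\sum^{\ell-1}_{i=0}\fout_iX_{\beta_v,i}$, which is~\eqref{eqn:NX}.

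The only real obstacle is the index bookkeeping: checking that the two slicings of $S$ reproduce exactly the ranges written in~\eqref{eqn:NX-rows}--\eqref{eqn:NX}, that the interchanges of summation order are valid (they are, being finite reindexings of a sum of polynomials), and that every use of Lemma~\ref{lem:reduction} stays within its stated range, i.e.\ $i<2^{n_v-\dg_v}$ and $j<2^{\dg_v}$ throughout; the first of these follows from $\ell\leq 2^{n_v}$, since then $\ceil{\ell/2^{\dg_v}}\leq 2^{n_v-\dg_v}$. Because $\beta_{v_\alpha}=\alpha(\beta_v,\dg_v)$ and $\beta_{v_\delta}=\delta(\beta_v,\dg_v)$ have $\F_2$-linearly independent entries (Definition~\ref{def:reduction-tree} together with Lemma~\ref{lem:reduction}), all the Newton and LCH basis polynomials appearing above are well defined, so nothing beyond the two factorisations of Lemma~\ref{lem:reduction} and the combinatorics of $S$ is needed.
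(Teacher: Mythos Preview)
Your proposal is correct and is precisely the approach the paper intends: the paper omits the proof, stating only that it ``follows along similar lines to that of Lemma~\ref{lem:setup-LX-reduction}'', and what you outline is exactly that argument with the Newton-basis factorisation of Lemma~\ref{lem:reduction} in place of the Lagrange one, together with the straightforward truncation bookkeeping via the index set $S$.
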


The proof of Lemma~\ref{lem:NX-reduction} is omitted since it follows along similar lines to that of Lemma~\ref{lem:setup-LX-reduction}. Using the lemma, we obtain Algorithms~\ref{alg:N2X} and~\ref{alg:X2N} for conversion between the Newton and LCH bases. The algorithms each operate on a vector $(a_0,\dotsc,a_{\ell-1})$ of field elements that initially contains the coefficients of a polynomial on the input basis, and overwrite its entries with the coefficients of the polynomial on the output basis. The subvectors of this vector that appear in the algorithms would be represented in practice by auxiliary variables, e.g., by a pointer to their first element and a stride parameter. The map $\Gray:\N\rightarrow\N$ that appears in the algorithms is defined by $i\mapsto\min\left\{k\in\N\mid [i]_k=0\right\}$. By noting that $\Delta(0),\Delta(1),\dotsc,\Delta(2^k-2)$ is the transition sequence of the $k$-bit binary reflected Gray code, it is possible to successively compute the terms of the sequence at the cost of a small constant number of operations per element by the algorithm of Bitner, Ehrlich and Reingold~\cite{bitner1976} (see also~\cite{knuth2005}).

\begin{algorithm}[h]
	\caption{$\mathsf{N2X}(v,(\PushDown_v(u,\lambda))_{u\in\leaves_v},\ell,(a_0,\ldots,a_{\ell-1}))$}\label{alg:N2X}
	\begin{algorithmic}[1]
		\Require a vertex $v\in V$, the vector $(\PushDown_v(u,\lambda))_{u\in\leaves_v}\in\F^{n_v}$ for some $\lambda\in\F$, $\ell\in\{1,\dotsc,2^{n_v}\}$, and $a_i=f_i\in\F$ for $i\in\{0,\dotsc,\ell-1\}$.
		\Ensure $a_i=\fout_i\in\F$ for $i\in\{0,\dotsc,\ell-1\}$ such that~\eqref{eqn:NX} holds.
		\If{$v$ is a leaf}\label{N2X:base}
			\If{$\ell=2$}
				 $a_0\set a_0+\PushDown_v(v,\lambda)a_1$
			\EndIf
			\State\Return
		\EndIf
		\State\label{N2X:aux}%
			$\ell_1\set\ceil{\ell/2^{\dg_v}}-1$,
			$\ell_2\set\ell-2^{\dg_v}\ell_1$,
			$\ell_2'\set\min(2^{\dg_v},\ell)$
		\State\label{N2X:mu-nu}%
			$\mu\set(\PushDown_v(u,\lambda))_{u\in\leaves_{v_\alpha}}$,
			$\nu\set(\PushDown_v(u,\lambda))_{u\in\leaves_{v_\delta}}$
		\For{$i=0,\dotsc,\ell_1-1$}\label{N2X:rows}
			\State\label{N2X:row}\Call{N2X}{$v_\alpha,\mu,2^{\dg_v},(a_{2^{\dg_v}i},a_{2^{\dg_v}i+1},\dotsc,a_{2^{\dg_v}(i+1)-1})$}
			\State\label{N2X:update-shifts}$\mu\set\mu+(\PushDown_v(u,\sigma_{v,\Gray(i)}))_{u\in\leaves_{v_\alpha}}$
		\EndFor
		\State\label{N2X:last-row}\Call{N2X}{$v_\alpha,\mu,\ell_2,(a_{2^{\dg_v}\ell_1},a_{2^{\dg_v}\ell_1+1},\dotsc,a_{\ell-1})$}
		\For{$j=0,\dotsc,\ell_2-1$}\label{N2X:cols}
			\State\label{N2X:col-i}\Call{N2X}{$v_\delta,\nu,\ell_1+1,(a_j,a_{2^{\dg_v}+j},\dotsc,a_{2^{\dg_v}\ell_1+j})$}
		\EndFor
		\For{$j=\ell_2,\dotsc,\ell_2'-1$}
			\State\label{N2X:col-ii}\Call{N2X}{$v_\delta,\nu,\ell_1,(a_j,a_{2^{\dg_v}+j},\dotsc,a_{2^{\dg_v}(\ell_1-1)+j})$}
		\EndFor\label{N2X:cols-end}
	\end{algorithmic}
\end{algorithm}

\begin{algorithm}[h]
	\caption{$\mathsf{X2N}(v,(\PushDown_v(u,\lambda))_{u\in\leaves_v},\ell,(a_0,\ldots,a_{\ell-1}))$}\label{alg:X2N}
	\begin{algorithmic}[1]
		\Require a vertex $v\in V$, the vector $(\PushDown_v(u,\lambda))_{u\in\leaves_v}\in\F^{n_v}$ for some $\lambda\in\F$, $\ell\in\{1,\dotsc,2^{n_v}\}$, and $a_i=\fout_i\in\F$ for $i\in\{0,\dotsc,\ell-1\}$.
		\Ensure $a_i=f_i\in\F$ for $i\in\{0,\dotsc,\ell-1\}$ such that~\eqref{eqn:NX} holds.
		\If{$v$ is a leaf}\label{X2N:base}
			\If{$\ell=2$}
				$a_0\set a_0+\PushDown_v(v,\lambda)a_1$
			\EndIf
			\State\Return
		\EndIf
		\State
			\label{X2N:aux}%
			$\ell_1\set\ceil{\ell/2^{\dg_v}}-1$, 
			$\ell_2\set\ell-2^{\dg_v}\ell_1$,
			$\ell_2'\set\min(2^{\dg_v},\ell)$
		\State
			\label{X2N:mu-nu}%
			$\mu\set(\PushDown_v(u,\lambda))_{u\in\leaves_{v_\alpha}}$, 
			$\nu\set(\PushDown_v(u,\lambda))_{u\in\leaves_{v_\delta}}$
		\For{$j=0,\dotsc,\ell_2-1$}\label{X2N:cols}
			\State\label{X2N:col-i}\Call{X2N}{$v_\delta,\nu,\ell_1+1,(a_j,a_{2^{\dg_v}+j},\dotsc,a_{2^{\dg_v}\ell_1+j})$}
		\EndFor
		\For{$j=\ell_2,\dotsc,\ell_2'-1$}
			\State\label{X2N:col-ii}\Call{X2N}{$v_\delta,\nu,\ell_1,(a_j,a_{2^{\dg_v}+j},\dotsc,a_{2^{\dg_v}(\ell_1-1)+j})$}
		\EndFor\label{X2N:cols-end}
		\For{$i=0,\dotsc,\ell_1-1$}\label{X2N:rows}
			\State\label{X2N:row}\Call{X2N}{$v_\alpha,\mu,2^{\dg_v},(a_{2^{\dg_v}i},a_{2^{\dg_v}i+1},\dotsc,a_{2^{\dg_v}(i+1)-1})$}
			\State\label{X2N:update-shifts} $\mu\set\mu+(\PushDown_v(u,\sigma_{v,\Gray(i)}))_{u\in\leaves_{v_\alpha}}$
		\EndFor
		\State\label{X2N:last-row}\Call{X2N}{$v_\alpha,\mu,\ell_2,(a_{2^{\dg_v}\ell_1},a_{2^{\dg_v}\ell_1+1},\dotsc,a_{\ell-1})$}
	\end{algorithmic}
\end{algorithm}

\begin{theorem}\label{thm:NX-correctness} Algorithms~\ref{alg:N2X} and~\ref{alg:X2N} are correct.
\end{theorem}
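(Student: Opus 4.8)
The plan is to prove the correctness of Algorithms~\ref{alg:N2X} and~\ref{alg:X2N} simultaneously by induction on $n_v$, using Lemma~\ref{lem:NX-reduction} as the engine of the inductive step. The base case is $n_v=1$: then $v$ is a leaf, $N_{\beta_v,0}=X_{\beta_v,0}=1$ and $N_{\beta_v,1}=X_{\beta_v,1}=x/\beta_{v,0}$, so for $\ell=1$ there is nothing to do and for $\ell=2$ the identity $f_0N_{\beta_v,0}(x-\lambda)+f_1N_{\beta_v,1}(x-\lambda)=(f_0+(\lambda/\beta_{v,0})f_1)X_{\beta_v,0}+f_1X_{\beta_v,1}$ (using $\PushDown_v(v,\lambda)=\lambda/\beta_{v,0}$ by definition of $\PushDown_v$) shows that the single assignment $a_0\set a_0+\PushDown_v(v,\lambda)a_1$ performed by each algorithm is exactly right, and that the two algorithms invert one another as claimed.

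For the inductive step with $n_v\geq 2$, $v$ is an internal vertex, and I would argue that the algorithm computes intermediate quantities $\fint_0,\dotsc,\fint_{\ell-1}$ satisfying the hypotheses~\eqref{eqn:NX-rows} and~\eqref{eqn:NX-cols} of Lemma~\ref{lem:NX-reduction}, whence~\eqref{eqn:NX} follows. The key bookkeeping steps are: (i) verify that with $\ell_1=\ceil{\ell/2^{\dg_v}}-1$, $\ell_2=\ell-2^{\dg_v}\ell_1$ and $\ell_2'=\min(2^{\dg_v},\ell)$, the lengths passed to the recursive calls match $\min(\ell-2^{\dg_v}i,2^{\dg_v})$ for the ``row'' calls ($2^{\dg_v}$ for $i<\ell_1$, and $\ell_2$ for the final row $i=\ell_1$) and $\ceil{(\ell-j)/2^{\dg_v}}$ for the ``column'' calls ($\ell_1+1$ for $j<\ell_2$, and $\ell_1$ for $\ell_2\leq j<\ell_2'$); (ii) verify the shift arithmetic, namely that after the $i$th iteration of the row loop the running value of $\mu$ equals $(\PushDown_v(u,\lambda+\sigma_{v,i}))_{u\in\leaves_{v_\alpha}}$, so that the call with argument $i$ sees the shift $\lambda+\omega_{\gamma_v,i}$ demanded by~\eqref{eqn:NX-rows}. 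Here I would use that $\PushDown_v(u,{}\cdot{})$ restricted to $\leaves_{v_\alpha}$ agrees with $\PushDown_{v_\alpha}(u,{}\cdot{})$ and is additive in the way needed, that $\omega_{\gamma_v,i}=\sum_{k:[i]_k=1}\beta_{v,\dg_v+k}$, and the standard Gray-code telescoping identity $\omega_{\gamma_v,i+1}-\omega_{\gamma_v,i}=\sigma_{v,\Gray(i)}$ — the same identity that makes the Bitner--Ehrlich--Reingold update valid. For the column calls, the shift parameter is $\eta=(\lambda/\beta_{v,0})^{2^{\dg_v}}-\lambda/\beta_{v,0}$, and one checks that $\nu=(\PushDown_v(u,\lambda))_{u\in\leaves_{v_\delta}}$ is precisely $(\PushDown_{v_\delta}(u,\eta))_{u\in\leaves_{v_\delta}}$, directly from the third case in the recursive definition of $\PushDown_v$.

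Granting (i) and (ii), each row call is, by the induction hypothesis applied to $v_\alpha$ (which has $n_{v_\alpha}=\dg_v<n_v$ leaves), a correct $\mathsf{N2X}$ over $\beta_{v_\alpha}$ with shift $\lambda+\omega_{\gamma_v,i}$, so after the row loop the $a$-vector holds the $\fint$'s satisfying~\eqref{eqn:NX-rows}; then each column call is a correct $\mathsf{N2X}$ over $\beta_{v_\delta}$ (which has $n_{v_\delta}=n_v-\dg_v<n_v$ leaves) with shift $\eta$, producing the $\fout$'s satisfying~\eqref{eqn:NX-cols}; Lemma~\ref{lem:NX-reduction} then gives~\eqref{eqn:NX}. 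For Algorithm~\ref{alg:X2N} the argument is the mirror image: the column loop runs first, inverting~\eqref{eqn:NX-cols}, then the row loop inverts~\eqref{eqn:NX-rows}, and since at the leaf level the two elementary operations are mutually inverse, $\mathsf{X2N}$ undoes $\mathsf{N2X}$ at every level; equivalently one checks directly that $\mathsf{X2N}$ produces the unique $f$'s consistent with~\eqref{eqn:NX}. One should also note that the recursive calls never violate the precondition $\ell'\in\{1,\dotsc,2^{n_{v'}}\}$: the row calls have length $\leq 2^{\dg_v}=2^{n_{v_\alpha}}$, and the column calls have length $\ell_1+1=\ceil{\ell/2^{\dg_v}}\leq 2^{n_v-\dg_v}=2^{n_{v_\delta}}$.

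The main obstacle I anticipate is not any single deep idea but the careful verification of step (i) — that the off-by-one choices of $\ell_1,\ell_2,\ell_2'$ and the lengths of the recursive calls exactly realise the truncation pattern $\min(\ell-2^{\dg_v}i,2^{\dg_v})$ and $\ceil{(\ell-j)/2^{\dg_v}}$ appearing in Lemma~\ref{lem:NX-reduction}, including the edge cases $\ell_2=2^{\dg_v}$ (when $2^{\dg_v}\mid\ell$, so the ``last row'' is a full row and $\ell_2'=\ell_2$, making the second column loop empty) and $\ell<2^{\dg_v}$ (when $\ell_1=0$, the row loop is empty, and there is a single row call of length $\ell$). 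I would organise this by indexing entries of $a$ as $a_{2^{\dg_v}i+j}$ and tracking which $(i,j)$ pairs with $2^{\dg_v}i+j<\ell$ are touched by the row calls versus the column calls, confirming the two families of calls partition exactly those pairs.
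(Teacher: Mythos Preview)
Your proposal is correct and follows essentially the same approach as the paper: structural induction on the tree (equivalently, on $n_v$), handling the leaf case directly, and in the recursive case verifying the Gray-code shift bookkeeping $\mu=(\PushDown_{v_\alpha}(u,\lambda+\omega_{\gamma_v,i}))_{u\in\leaves_{v_\alpha}}$ and $\nu=(\PushDown_{v_\delta}(u,\eta))_{u\in\leaves_{v_\delta}}$ before invoking Lemma~\ref{lem:NX-reduction}. One small slip: where you write that after the $i$th iteration $\mu$ equals $(\PushDown_v(u,\lambda+\sigma_{v,i}))_{u\in\leaves_{v_\alpha}}$, you mean $\lambda+\omega_{\gamma_v,i+1}$ (as your very next clause makes clear); otherwise the argument matches the paper's.
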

\begin{proof} We only prove correctness for Algorithm~\ref{alg:N2X}, since the proof for Algorithm~\ref{alg:X2N} is almost identical. Suppose that the input vertex $v$ is a leaf. Then $\ell\in\{1,2\}$ since $n_v=1$. Moreover, $\leaves_v=\{v\}$, $\PushDown_v(v,\lambda)=\lambda/\beta_{v,0}$, $X_{\beta_v,0}=N_{\beta_v,0}=1$ and $X_{\beta_v,1}=N_{\beta_v,1}=x/\beta_{v,0}$. It follows that Algorithm~\ref{alg:N2X} produces the correct output whenever the input vertex is a leaf. Therefore, as $(V,E)$ is a full binary tree, it is sufficient to show that for internal $v\in V$, if the algorithm produces the correct output whenever $v_\alpha$ or $v_\delta$ is given as an input, then the algorithm produces the correct output whenever $v$ is given as an input.
	
Let $v\in V$ be an internal vertex and suppose that the algorithm produces the correct output whenever $v_\alpha$ or $v_\delta$ is given as an input. Let $\lambda\in\F$, $\ell\in\{1,\dotsc,2^{n_v}\}$ and $f_0,\dotsc,f_{\ell-1}\in\F$. Suppose that Algorithm~\ref{alg:N2X} is called on $v$, $(\PushDown_v(u,\lambda))_{u\in\leaves_v}$ and~$\ell$, with $a_i=f_i$ for $i\in\{0,\dotsc,\ell-1\}$. Then Lines~\ref{N2X:mu-nu} and~\ref{N2X:update-shifts} of the algorithm and the $\F_2$-linearity of the maps $\PushDown_v(u,{}\cdot{}):\F\rightarrow\F$, for $u\in\leaves_v$, ensure that
\begin{equation*}
	\mu
	=\left(
	\PushDown_v(u,\lambda)
		+\sum^{i-1}_{j=0}
		\PushDown_v\left(u,\sigma_{v,\Gray(j)}\right)
	\right)_{u\in\leaves_{v_\alpha}}
	=\left(
		\PushDown_v
		\left(
			u,
			\lambda
			+\sum^{i-1}_{j=0}
			\sigma_{v,\Gray(j)}
		\right)
	\right)_{u\in\leaves_{v_\alpha}}.
\end{equation*}
each time the recursive call of Line~\ref{N2X:row} is performed. As $\gamma_v=(\beta_{v,\dg_v},\dotsc,\beta_{v,n_v-\dg_v-1})$, we have $\omega_{\gamma_v,0}=0$ and
\begin{equation*}
	\omega_{\gamma_v,i}
	=\omega_{\gamma_v,i-1}
	+
	\sum^{\Gray(i-1)}_{j=0}\beta_{v,\dg_v+j}
	=\omega_{\gamma_v,i-1}
	+
	\sigma_{v,\Gray(i-1)}
	=
	\sum^{i-1}_{j=0}\sigma_{v,\Gray(j)}
\end{equation*}
for $i\in\{1,\dotsc,2^{n_v-\dg_v}-1\}$. Thus,
\begin{equation*}
	\mu
	=\left(
		\PushDown_v
		\left(u,\lambda+\omega_{\gamma_v,i}\right)
	\right)_{u\in\leaves_{v_\alpha}}\\
	=\left(
	\PushDown_{v_\alpha}
		\left(u,\lambda+\omega_{\gamma_v,i}\right)
	\right)_{u\in\leaves_{v_\alpha}}
\end{equation*}
each time the recursive call of Line~\ref{N2X:row} is performed. Similarly,
\begin{equation*}
	\mu
	=
	\left(
		\PushDown_{v_\alpha}
		\left(u,\lambda+\omega_{\gamma_v,\ceil{\ell/2^{\dg_v}}-1}\right)
	\right)_{u\in\leaves_{v_\alpha}}
\end{equation*}
when the recursive call of Line~\ref{N2X:last-row} is performed. Therefore, the assumption that Algorithm~\ref{alg:N2X} produces the correct output whenever $v_\alpha$ is given as an input implies that Lines~\ref{N2X:aux}--\ref{N2X:last-row} set $a_i=\fint_i$ for $i\in\{0,\dotsc,\ell-1\}$, where $\fint_0,\dotsc,\fint_{\ell-1}$ are the unique elements in $\F$ such that \eqref{eqn:NX-rows} holds.

The recursive calls of Lines~\ref{N2X:col-i} and~\ref{N2X:col-ii} are made with
\begin{equation*}
	\nu
	=\left(\PushDown_v\left(u,\lambda\right)\right)_{u\in\leaves_{v_\delta}}
	=\left(\PushDown_{v_\delta}\left(u,\eta\right)\right)_{u\in\leaves_{v_\delta}},
\end{equation*}
where $\eta=(\lambda/\beta_{v,0})^{2^{\dg_v}}-\lambda/\beta_{v,0}$. Therefore, the assumption that Algorithm~\ref{alg:N2X} produces the correct output whenever $v_\delta$ is given as an input implies that Lines~\ref{N2X:cols}--\ref{N2X:cols-end} set $a_i=\fout_i$ for $i\in\{0,\dotsc,\ell-1\}$, where $\fout_0,\dotsc,\fout_{\ell-1}$ are the unique elements in $\F$ such that \eqref{eqn:NX-cols} holds. Hence, Lemma~\ref{lem:NX-reduction} implies that~\eqref{eqn:NX} holds at the end of the algorithm.
\end{proof}

For conversions between the Newton and LCH bases the shift parameter $\lambda$ is equal to zero for the initial calls to Algorithms~\ref{alg:N2X} and~\ref{alg:X2N}. In this case, the vector $(\PushDown_v(u,\lambda))_{u\in\leaves_v}$ contains all zeros. If an application arises where it is necessary for the initial call to be made with $\lambda\neq 0$, then the input vector can be computed with $\bigO(n^2_v)$ field operations. Storing the input vector and the vectors $\mu$ and $\nu$ that appear in the algorithms requires storing $\bigO(n^2_v)$ field elements. Including the computation and storage of the precomputed elements $\PushDown_v(u,\sigma_{v,i})$, it follows that Algorithms~\ref{alg:N2X} and~\ref{alg:X2N} require auxiliary storage for $\bigO(n^2)$ field elements, while all precomputations can be performed with $\bigO(n^3)$ field operations.


The number of multiplications performed by Algorithms~\ref{alg:N2X} and~\ref{alg:X2N} is independent of the choice of reduction tree. This is not true of the number of additions performed by the algorithms due to the updates made to the vector $\mu$ in the recursive case. Line~\ref{N2X:update-shifts} of Algorithm~\ref{alg:N2X} and Line~\ref{X2N:update-shifts} of Algorithm~\ref{alg:X2N} each perform $(\ceil{\ell/2^{\dg_v}}-1)\dg_v$ additions over all iterations of their containing loops. Consequently, we should aim to avoid small values of $\dg_v$ when choosing a reduction tree. Moreover, we expect the number of additions performed by the algorithms to be maximised when the subtree rooted on the initial input vertex satisfies $\image(\dg)\subseteq\{0,1\}$.

\begin{example}\label{ex:NX} If $\beta$ is a Cantor basis, then Proposition~\ref{prop:cantor-trees} implies that $\dg_v\leq 2^{\ceil{\log_2 n_v}-1}$ for all internal $v\in V$, and guarantees the existence of a reduction tree for $\beta$ such that equality always holds. For $n$ up to fifteen, we confirmed experimentally that this choice of reduction tree always minimises the number of additions performed by Algorithms~\ref{alg:N2X} and~\ref{alg:X2N} over all possible reduction trees, while those with $\image(\dg)\subseteq\{0,1\}$ were found to always maximise the number of additions performed by the algorithms. Figure~\ref{fig:cantor_nx} shows the maximum and minimum number of additions performed over all reduction trees for $n=15$, as well as the number of multiplications performed for all trees.
\begin{figure}
	\includegraphics{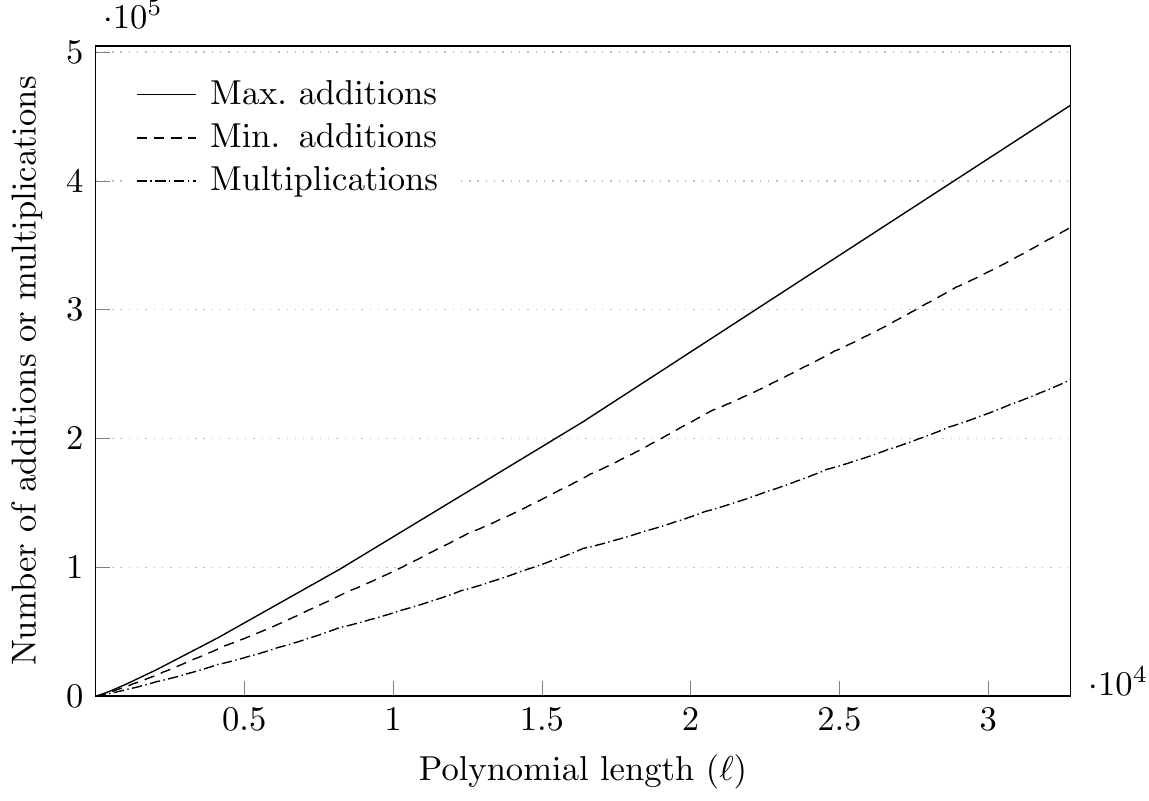}
	\caption{Maximum and minimum number of operations performed by Algorithms~\ref{alg:N2X} and~\ref{alg:X2N} for Cantor bases of dimension $15$.}
	\label{fig:cantor_nx}
\end{figure}
\end{example}

The difference between the maximum and minimum number of additions shown in Figure~\ref{fig:cantor_nx} is reflected in the bounds that we obtain on the complexities of the algorithms.

\begin{theorem}\label{thm:NX-complexity} Algorithms~\ref{alg:N2X} and~\ref{alg:X2N} perform at most $\floor{\ell/2}\ceil{\log_2\ell}$ multiplications and $\ell\left(\ceil{\log_2\ell}-1\right)+1$ additions in~$\F$. If $\dg_v=2^{\ceil{\log_2 n_v}-1}$ for all internal $v\in V$, then the algorithms perform at most $(3\ell-2)\ceil{\log_2\ell}/4$ additions in $\F$.
\end{theorem}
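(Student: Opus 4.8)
The plan is to set up a recursion on the reduction tree that tracks, simultaneously, the number of multiplications $M(v,\ell)$ and additions $A(v,\ell)$ performed by the call to Algorithm~\ref{alg:N2X} (equivalently Algorithm~\ref{alg:X2N}, since the two have identical operation counts) on an internal vertex $v$ with parameter $\ell\in\{1,\dotsc,2^{n_v}\}$. Reading off the pseudocode, the call on $v$ makes $\ell_1=\ceil{\ell/2^{\dg_v}}-1$ full recursive calls to $v_\alpha$ with parameter $2^{\dg_v}$, one more call to $v_\alpha$ with parameter $\ell_2=\ell-2^{\dg_v}\ell_1$, then $\ell_2$ calls to $v_\delta$ with parameter $\ell_1+1$ and $\ell_2'-\ell_2$ calls to $v_\delta$ with parameter $\ell_1$, where $\ell_2'=\min(2^{\dg_v},\ell)$; the only extra arithmetic is in Line~\ref{N2X:update-shifts}, which contributes $\ell_1\dg_v$ additions and no multiplications. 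The base case $n_v=1$ performs at most one multiplication and one addition (only when $\ell=2$), and zero of each when $\ell=1$. So the recurrences are
\begin{align*}
M(v,\ell)&=\ell_1 M(v_\alpha,2^{\dg_v})+M(v_\alpha,\ell_2)+\ell_2 M(v_\delta,\ell_1+1)+(\ell_2'-\ell_2)M(v_\delta,\ell_1),\\
A(v,\ell)&=\ell_1 A(v_\alpha,2^{\dg_v})+A(v_\alpha,\ell_2)+\ell_2 A(v_\delta,\ell_1+1)+(\ell_2'-\ell_2)A(v_\delta,\ell_1)+\ell_1\dg_v,
\end{align*}
with $M(v,\cdot),A(v,\cdot)\le 1$ at leaves on input $2$.

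For the multiplication bound I would prove $M(v,\ell)\le\floor{\ell/2}\ceil{\log_2\ell}$ by induction on $n_v$ (and note this is independent of the tree). The clean way is to first establish the ``full'' case $M(v,2^{n_v})\le 2^{n_v-1}n_v$ by the standard butterfly-count argument, then handle truncation: since $\ell_1\le\ceil{\ell/2^{\dg_v}}-1$, one has $2^{\dg_v}\ell_1+\ell_2=\ell$ and $\ell_1+1=\ceil{\ell/2^{\dg_v}}$, so the total ``effective length'' fed to the $v_\alpha$-recursion is $\ell$ and to the $v_\delta$-recursion is also $\ell$ (counting multiplicities); the subadditivity $\floor{a/2}+\floor{b/2}\le\floor{(a+b)/2}$ together with the fact that every subcall has depth at most $\ceil{\log_2\ell}-\dg_v$ in the $\alpha$-branch and $\ceil{\log_2\ell}$ overall then gives the bound. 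The same bookkeeping handles $A(v,\ell)$ for the generic bound: one gets $A(v,\ell)\le \ell(\ceil{\log_2\ell}-1)+1$ by showing the ``$+\ell_1\dg_v$'' terms, summed over the recursion, contribute at most $\ell(\ceil{\log_2\ell}-1)$ in the worst case, and the leaf contributions add at most $+1$ total (only the single leaf reached with $\ell=2$ on the ``left spine'' fires). The worst case for the shift-update cost is exactly when $\image(\dg)\subseteq\{0,1\}$ on the relevant subtree, because $\ell_1\dg_v=(\ceil{\ell/2^{\dg_v}}-1)\dg_v$ and for fixed $\ell$ the function $\dg\mapsto(\ceil{\ell/2^{\dg}}-1)\dg$ summed over a balanced-vs-linear tree is maximised by the linear tree — this monotonicity is the step I expect to be fiddliest.

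For the sharper bound under $\dg_v=2^{\ceil{\log_2 n_v}-1}$, I would observe (via Proposition~\ref{prop:cantor-trees}) that this makes the tree as balanced as possible: each internal vertex splits $n_v$ into $\dg_v$ and $n_v-\dg_v$ with $\dg_v$ the largest power of two below $n_v$, so $n_{v_\alpha},n_{v_\delta}\le\ceil{n_v/2}$ roughly, and the recursion depth along any root-to-leaf path is $\ceil{\log_2 n}$. The cleanest route is to first prove it for $\ell=2^n$ a power of two with $n$ itself a power of two (pure Cantor case), where the recurrence for $A$ closes exactly: writing $a_n=A(\cdot,2^n)/2^{n-1}$ one gets $a_n=a_{n/2}+ (\text{something like }n/2)$ because $\ell_1=2^{\dg_v}-1$ loses a vanishing fraction, yielding $a_n\sim \tfrac34 n$ and hence $A\le(3\ell-2)\ceil{\log_2\ell}/4$ after carefully tracking the $-2$ and $-1$ correction terms; then I extend to arbitrary $n$ and arbitrary $\ell\le 2^n$ by the truncation argument above, checking that truncation only ever decreases the count relative to the power-of-two ceiling $2^{\ceil{\log_2\ell}}$.

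The step I expect to be the main obstacle is making the truncation analysis tight enough: naively bounding each truncated subcall by its full-length cost loses constant factors, so I will need the ``conservation of effective length'' identities $\sum(\text{parameters of }\alpha\text{-calls})=\ell$ and $\sum(\text{parameters of }\delta\text{-calls})=\ell$ together with a convexity/subadditivity lemma for $\ell\mapsto\floor{\ell/2}\ceil{\log_2\ell}$ and for $\ell\mapsto\ell(\ceil{\log_2\ell}-1)$, applied branch by branch. Assembling these, the two stated inequalities follow by induction on $n_v$, with the base case $n_v=1$ supplying the isolated $+1$.
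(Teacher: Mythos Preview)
Your multiplication argument is essentially the paper's: conservation of total length across the $\alpha$- and $\delta$-calls, together with $\ceil{\log_2(\ell_1+1)}=\ceil{\log_2\ell}-\dg_v$ when $\ell_1\ne0$, lets the inductive bound close to $\floor{\ell/2}\dg_v+\floor{\ell/2}(\ceil{\log_2\ell}-\dg_v)$.

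Your addition analysis, however, has a genuine error. You claim the leaf contributions add at most $+1$ total, with ``only the single leaf reached with $\ell=2$ on the left spine'' firing. This is false: every leaf call with parameter $2$ performs one addition, and there are $\Theta(\ell)$ of them. For $n=3$, $\ell=8$ and the trivial tree ($\dg_v\equiv1$) there are $12$ leaf additions and only $5$ shift-update additions, summing to $17=\ell(\ceil{\log_2\ell}-1)+1$. The leaf additions are the dominant term, not a $+1$ correction, so your proposed decomposition (shift updates bounded by $\ell(\ceil{\log_2\ell}-1)$, leaves bounded by $1$) cannot work. The paper does not unroll the recursion globally; it runs a direct induction on $v$, substituting the bounds for $v_\alpha$ and $v_\delta$ into the recurrence and verifying algebraically that the resulting inequality closes. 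The ``$+1$'' is carried through the induction, not isolated at a single leaf.

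Your plan for the sharper bound has a related gap. Proving the bound first for $\ell=2^n$ and then invoking ``truncation only decreases the count relative to $2^{\ceil{\log_2\ell}}$'' yields only $A(\ell)\le(3\cdot2^{\ceil{\log_2\ell}}-2)\ceil{\log_2\ell}/4$, which is strictly weaker than the claimed $(3\ell-2)\ceil{\log_2\ell}/4$ whenever $\ell$ is not a power of two. The paper again argues by direct induction on $v$: the hypothesis $\dg_v=2^{\ceil{\log_2n_v}-1}$ forces $\ell_1+1\le2^{n_v-\dg_v}\le2^{\dg_v}$, and the inductive inequality closes once one checks the residual
\[
\frac{\ell_1}{\log_2(\ell_1+1)}-\frac{2^{\dg_v}-1}{\dg_v}\le0,
\]
which follows from the monotonicity of $x/\log_2(x+1)$ on $x\ge1$. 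This monotonicity step is the missing ingredient in your plan.
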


We split the proof of Theorem~\ref{thm:NX-complexity} into three lemmas, one for each of the three bounds. It is clear that Algorithms~\ref{alg:N2X} and~\ref{alg:X2N} perform the same number of multiplications when given identical inputs. Consequently, we only prove the bounds for Algorithm~\ref{alg:N2X}. The three bounds are equal to zero if $\ell=1$, and one if $\ell=2$. Thus, the bounds hold if the input vertex is a leaf. Therefore, for each of the three bounds it is sufficient to show that if $v\in V$ is an internal vertex such that the bound holds whenever the input vertex is $v_\alpha$ or $v_\delta$, then the bound holds whenever $v$ is the input vertex.

\begin{lemma} Algorithms~\ref{alg:N2X} and~\ref{alg:X2N} perform at most $\floor{\ell/2}\ceil{\log_2\ell}$ multiplications~in~$\F$.
\end{lemma}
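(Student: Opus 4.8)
The plan is to complete the induction on the reduction tree that is already set up before the statement. Since $\floor{\ell/2}\ceil{\log_2\ell}$ equals $0$ for $\ell=1$ and $1$ for $\ell=2$, the bound holds whenever the input vertex is a leaf, so it remains to treat an internal vertex $v$, for which I write $\dg=\dg_v\geq 1$, under the inductive hypothesis that the bound holds for every recursive call with input vertex $v_\alpha$ or $v_\delta$. As noted before the statement, it suffices to argue for Algorithm~\ref{alg:N2X}.

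First I would read off from Algorithm~\ref{alg:N2X} precisely which recursive calls it makes and with which lengths: writing $\ell_1=\ceil{\ell/2^\dg}-1$ and $\ell_2=\ell-2^\dg\ell_1\in\{1,\dots,2^\dg\}$, it performs $\ell_1$ calls to \textsf{N2X} on $v_\alpha$ of length $2^\dg$ (Line~\ref{N2X:row}), one call on $v_\alpha$ of length $\ell_2$ (Line~\ref{N2X:last-row}), $\ell_2$ calls on $v_\delta$ of length $\ell_1+1$ (Line~\ref{N2X:col-i}), and, when $\ell_1\geq 1$ so that $\ell_2'=2^\dg$, a further $2^\dg-\ell_2$ calls on $v_\delta$ of length $\ell_1$ (Line~\ref{N2X:col-ii}); the updates to $\mu$ contribute only additions. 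When $\ell_1=0$ one has $\ell\leq 2^\dg=2^{n_{v_\alpha}}$, the column calls all have length $1$ and by the induction hypothesis perform no multiplications, and the bound is immediate from the induction hypothesis applied to the length-$\ell$ call on $v_\alpha$. So I would assume $\ell_1\geq 1$ from here on.

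For $\ell_1\geq 1$ I would bound the ``row'' calls and the ``column'' calls separately using the induction hypothesis, which relies on two elementary facts: since $\dg\geq 1$ we have $\floor{\ell/2}=2^{\dg-1}\ell_1+\floor{\ell_2/2}$; and, setting $b=\ceil{\log_2(\ell_1+1)}$, we have $\ceil{\log_2\ell}=\dg+b$, which I would derive from the sandwich $2^{\dg+b-1}<2^\dg\ell_1+1\leq\ell\leq 2^\dg(\ell_1+1)\leq 2^{\dg+b}$. The row calls then contribute at most $\ell_1 2^{\dg-1}\dg+\floor{\ell_2/2}\ceil{\log_2\ell_2}\leq\floor{\ell/2}\dg$ (using $\ceil{\log_2\ell_2}\leq\dg$), and the column calls at most $b\left(\ell_2\floor{(\ell_1+1)/2}+(2^\dg-\ell_2)\floor{\ell_1/2}\right)$ (using $\ceil{\log_2\ell_1}\leq b$); splitting on the parity of $\ell_1$ shows that the parenthesised sum is at most $\floor{\ell/2}$, the only nontrivial case being $\ell_1$ odd, where the sum equals $2^{\dg-1}\ell_1+\ell_2-2^{\dg-1}$ and $\ell_2-2^{\dg-1}\leq\floor{\ell_2/2}$ since $\ceil{\ell_2/2}\leq 2^{\dg-1}$. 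Adding the two contributions gives at most $\floor{\ell/2}(\dg+b)=\floor{\ell/2}\ceil{\log_2\ell}$, which closes the induction.

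I expect the main obstacle to be bookkeeping rather than any real difficulty: correctly extracting the four families of recursive calls from the pseudocode (including the degenerate case $\ell_1=0$ and the correct value of $\ell_2'$), and then verifying the identity $\ceil{\log_2\ell}=\dg+\ceil{\log_2(\ell_1+1)}$ together with the parity-split inequality on which the estimate hinges. Because the stated bound is tight, these estimates must be carried out exactly, with no room for slack.
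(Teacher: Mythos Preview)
Your proof is correct and follows essentially the same approach as the paper: induction on the tree, case split on $\ell_1=0$ versus $\ell_1\geq 1$, the identity $\ceil{\log_2\ell}=\dg_v+\ceil{\log_2(\ell_1+1)}$, and the decomposition $\floor{\ell/2}=2^{\dg_v-1}\ell_1+\floor{\ell_2/2}$ to bound the row and column contributions by $\floor{\ell/2}\dg_v$ and $\floor{\ell/2}(\ceil{\log_2\ell}-\dg_v)$ respectively. The only cosmetic difference is that the paper handles the column sum via the one-line observation $\ell_2\floor{(\ell_1+1)/2}+(2^{\dg_v}-\ell_2)\floor{\ell_1/2}\leq\floor{(\ell_2(\ell_1+1)+(2^{\dg_v}-\ell_2)\ell_1)/2}=\floor{\ell/2}$ (drop floors, then re-floor the integer), whereas you obtain the same inequality by splitting on the parity of $\ell_1$.
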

\begin{proof} Suppose that the input vertex $v\in V$ to Algorithm~\ref{alg:N2X} is an internal vertex such that the algorithm performs at most $\floor{\ell/2}\ceil{\log_2\ell}$ multiplications in $\F$ whenever $v_\alpha$ or $v_\delta$ is given as the input vertex. If $\ell_1=0$, then $\ell_2=\ell_2'=\ell$ and the algorithm performs at most
\begin{equation*}
	\floor{\frac{\ell_2}{2}}
	\ceil{\log_2\ell_2}
	+
	\ell_2
	\floor{\frac{1}{2}}
	\ceil{\log_21}
	=
	\floor{\frac{\ell}{2}}
	\ceil{\log_2\ell}
\end{equation*}
multiplications. Therefore, suppose that $\ell_1\neq 0$. Then, as $\ell_2\leq 2^{\dg_v}$, Lines~\ref{N2X:rows}--\ref{N2X:last-row} of the algorithm perform at most
\begin{equation}\label{eqn:NX-l_1-0-mults}
	2^{\dg_v-1}\ell_1\dg_v
	+
	\floor{\frac{\ell_2}{2}}
	\ceil{\log_2\ell_2}
	\leq
	\left(
		2^{\dg_v-1}\ell_1
		+
		\floor{\frac{\ell_2}{2}}
	\right)
	\dg_v
	=
	\floor{\frac{\ell}{2}}\dg_v
\end{equation}
multiplications. Let $k\in\N$ such that $2^{k-1}<\ell_1+1\leq 2^k$. Then $\ceil{\log_2\ell_1+1}=k$ and $2^{\dg_v+k-1}<2^{\dg_v}(\ell_1+\ell_2/2^{\dg_v})=\ell\leq 2^{\dg_v+k}$, since $k\geq 1$ and $0<\ell_2/2^{\dg_v}\leq 1$. Thus, $\ceil{\log_2\ell_1+1}=\ceil{\log_2\ell}-\dg_v$. It follows that Lines~\ref{N2X:cols}--\ref{N2X:cols-end} of the algorithm perform at most
\begin{multline*}
	\ell_2
	\floor{\frac{\ell_1+1}{2}}
	\ceil{\log_2\ell_1+1}
	+
	\left(2^{\dg_v}-\ell_2\right)
	\floor{\frac{\ell_1}{2}}
	\ceil{\log_2\ell_1}\\
	\leq
	\floor{\frac{\ell_2(\ell_1+1)+\left(2^{\dg_v}-\ell_2\right)\ell_1}{2}}
	\ceil{\log_2\ell_1+1}
	=
	\floor{\frac{\ell}{2}}
	\left(\ceil{\log_2\ell}-\dg_v\right)
\end{multline*}
multiplications. By combining this bound with~\eqref{eqn:NX-l_1-0-mults}, it follows that  Algorithm~\ref{alg:N2X} performs at most $\floor{\ell/2}\ceil{\log_2\ell}$ multiplications if $\ell_1\neq 0$.
\end{proof}

\begin{lemma} Algorithms~\ref{alg:N2X} and~\ref{alg:X2N} perform at most $\ell\left(\ceil{\log_2\ell}-1\right)+1$ additions in~$\F$.
\end{lemma}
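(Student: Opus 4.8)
The plan is to prove the additive bound $\ell(\ceil{\log_2\ell}-1)+1$ by induction on the structure of the reduction tree, following the template already set out for the multiplication bound. As noted before the lemma, it suffices to handle an internal input vertex $v$, assuming the bound holds whenever $v_\alpha$ or $v_\delta$ is the input vertex. I will count the additions in three groups: the recursive ``row'' calls of Lines~\ref{N2X:row} and~\ref{N2X:last-row}, the shift updates on Line~\ref{N2X:update-shifts}, and the recursive ``column'' calls of Lines~\ref{N2X:col-i} and~\ref{N2X:col-ii}. As in the multiplication proof, I will split into the cases $\ell_1 = 0$ and $\ell_1 \neq 0$, the former being immediate since then $\ell_2 = \ell_2' = \ell$ and only a single recursive call on $v_\alpha$ of length $\ell$ is made.

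For $\ell_1 \neq 0$, set $k = \ceil{\log_2(\ell_1+1)}$, so that, exactly as established in the multiplication lemma, $\ceil{\log_2 \ell} = \dg_v + k$. The row calls contribute at most $\ell_1\bigl(2^{\dg_v}(\dg_v - 1) + 1\bigr) + \ell_2(\ceil{\log_2 \ell_2} - 1) + 1$ by the induction hypothesis applied at $v_\alpha$ (using $\ceil{\log_2 2^{\dg_v}} = \dg_v$ and $\ell_2 \le 2^{\dg_v}$), the shift updates contribute exactly $\ell_1 \dg_v$ additions spread over the loop, and the column calls contribute at most $\ell_2\bigl((\ell_1+1)(\ceil{\log_2(\ell_1+1)} - 1) + 1\bigr) + (2^{\dg_v} - \ell_2)\bigl(\ell_1(\ceil{\log_2 \ell_1} - 1) + 1\bigr)$ by the induction hypothesis applied at $v_\delta$. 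The task is then to sum these three contributions and show the total is at most $\ell(\dg_v + k - 1) + 1$, where $\ell = 2^{\dg_v}\ell_1 + \ell_2$.

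The main obstacle I anticipate is the bookkeeping in combining the column-call bound with the others: the ``$+1$'' constant terms from the two kinds of column calls multiply by $\ell_2$ and $2^{\dg_v} - \ell_2$ respectively, and one must check these telescoping constants do not overrun the single ``$+1$'' allowed in the target bound. The clean way to handle this is to bound $\ceil{\log_2 \ell_1} \le \ceil{\log_2(\ell_1+1)} = k$ and group: the column calls give at most $\bigl(\ell_2(\ell_1+1) + (2^{\dg_v}-\ell_2)\ell_1\bigr)(k-1) + \bigl(\ell_2 \cdot 1 + (2^{\dg_v}-\ell_2)\cdot 1 - \text{something}\bigr)$, where one uses that $(\ell_1+1)(k-1) \ge \ell_1 \cdot \ceil{\log_2\ell_1} - \ell_1$ ... but the cleaner route is to observe that $\ell_2(\ell_1+1) + (2^{\dg_v}-\ell_2)\ell_1 = \ell$, so the ``$(k-1)$'' part of the column bound is exactly $\ell(k-1)$, and the constant part is $\ell_2 + (2^{\dg_v}-\ell_2) = 2^{\dg_v}$ minus a correction coming from the $\ceil{\log_2\ell_1}$ versus $k$ discrepancy, which I will need to absorb. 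Meanwhile the row calls contribute roughly $2^{\dg_v}\ell_1(\dg_v-1) + \ell_1 + \ell_2(\dg_v - 1) + 1 = \ell(\dg_v-1) + \ell_1 + 1$, and adding the $\ell_1\dg_v$ shift additions gives $\ell(\dg_v-1) + \ell_1(\dg_v+1) + 1$. Summing with $\ell(k-1) + 2^{\dg_v}$ from the columns yields $\ell(\dg_v + k - 2) + \ell_1(\dg_v+1) + 2^{\dg_v} + 1$; I must then verify $\ell_1(\dg_v+1) + 2^{\dg_v} \le \ell = 2^{\dg_v}\ell_1 + \ell_2$, i.e. $2^{\dg_v} - \ell_2 \le \ell_1(2^{\dg_v} - \dg_v - 1)$. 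Since $\ell_1 \ge 1$ and $\ell_2 \ge 1$, this reduces to checking $2^{\dg_v} - 1 \le 2^{\dg_v} - \dg_v - 1$ when $\ell_1 = 1$, which fails unless $\dg_v = 0$ — signaling that my crude telescoping of the row-call constants is too lossy and the constants must be tracked exactly rather than bounded term-by-term. The resolution will be to not replace $\ceil{\log_2 \ell_2}$ by $\dg_v$ prematurely and to retain the exact sum over loop iterations $\sum_{i=0}^{\ell_1-1}(\text{length-}2^{\dg_v}\text{ cost})$, exploiting that most of those row costs are the full $2^{\dg_v}(\dg_v-1)+1$ while the interplay with the $+1$ constants across the $\ell_1$ rows and the column telescoping cancels precisely. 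I expect this exact accounting — rather than any conceptual difficulty — to be the crux, and I would carry it out by writing $\ell = 2^{\dg_v}(\ell_1+1) - (2^{\dg_v} - \ell_2)$ and substituting throughout so that the $\ell_1+1$ and $2^{\dg_v} - \ell_2$ groupings match the natural split of the column calls.
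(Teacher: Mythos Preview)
Your plan matches the paper's proof almost exactly: split on $\ell_1=0$ versus $\ell_1\neq 0$, bound the row calls by the induction hypothesis at $v_\alpha$, add the $\ell_1\dg_v$ shift additions, bound the column calls by the induction hypothesis at $v_\delta$ using $\ell_2(\ell_1+1)+(2^{\dg_v}-\ell_2)\ell_1=\ell$ and $\ceil{\log_2\ell_1}\le\ceil{\log_2(\ell_1+1)}=\ceil{\log_2\ell}-\dg_v$, and sum. Your diagnosis is also correct: replacing $\ceil{\log_2\ell_2}$ by $\dg_v$ in the row bound is exactly the lossy step, and the paper, like you, keeps the term $\ell_2(\ceil{\log_2\ell_2}-\dg_v)$ intact.

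Where you underestimate the difficulty is the last line. After summing with $\ceil{\log_2\ell_2}$ retained and using $\ell_1\ge 1$ to drop $(\ell_1-1)(2^{\dg_v}-\dg_v-1)\ge 0$, the residual inequality you must prove is
\[
\ell_2\bigl(\dg_v+1-\ceil{\log_2\ell_2}\bigr)\;\ge\;\dg_v+1
\qquad\text{for all }1\le\ell_2\le 2^{\dg_v},
\]
and this is not mere bookkeeping: it is the genuine content of the bound and does not fall out of the regrouping $\ell=2^{\dg_v}(\ell_1+1)-(2^{\dg_v}-\ell_2)$ you propose. The paper establishes it by rewriting the left side as $\ell_2\floor{\log_2(2^{\dg_v+1}/\ell_2)}$ and observing that $y\mapsto\floor{\log_2 y}/y$ on $[2,2^{\dg_v+1}]$ is large enough; equivalently, one checks directly that for each $a=\ceil{\log_2\ell_2}\in\{0,\dotsc,\dg_v\}$ the minimum of $\ell_2(\dg_v+1-a)$ over integers $\ell_2\in(2^{a-1},2^a]$ is at least $\dg_v+1$ (the tight cases are $\ell_2=1$ and $\ell_2=2^{\dg_v}$). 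So your plan is sound, but be aware that this numerical inequality is the crux you still need to write down and verify, not a formality.
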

\begin{proof} Suppose that the input vertex $v\in V$ to Algorithm~\ref{alg:N2X} is an internal vertex such that the algorithm performs at most $\ell\left(\ceil{\log_2\ell}-1\right)+1$ additions in $\F$ whenever $v_\alpha$ or $v_\delta$ is given as the input vertex. If $\ell_1=0$, then $\ell_2=\ell_2'=\ell$ and the algorithm performs at most
\begin{equation*}
	\ell_2(\ceil{\log_2\ell_2}-1)+1
	+
	\ell_2
	\left(1(\ceil{\log_21}-1)+1\right)
	=
	\ell(\ceil{\log_2\ell}-1)+1
\end{equation*}	
additions. Therefore, suppose that $\ell_1\neq 0$. Then Lines~\ref{N2X:rows}--\ref{N2X:last-row} of the algorithm perform at most
\begin{multline*}
	\ell_1
	\left(
		2^{\dg_v}\left(\dg_v-1\right)
		+1
		+\dg_v
	\right)
	+\ell_2
	\left(
		\ceil{\log_2\ell_2}
		-1
	\right)
	+1\\
	=
	\ell
	\left(\dg_v-1\right)+1
	+
	\ell_1(\dg_v+1)
	+
	\ell_2
	\left(\ceil{\log_2\ell_2}-\dg_v\right)
\end{multline*}
additions, while Lines~\ref{N2X:cols}--\ref{N2X:cols-end} perform at most
\begin{multline*}
	\ell_2
	\left(\ell_1+1\right)
	(\ceil{\log_2\ell_1+1}-1)
	+
	\left(2^{\dg_v}-\ell_2\right)
	\ell_1
	\left(\ceil{\log_2\ell_1}-1\right)
	+2^{\dg_v}\\
	\leq
	\ell\left(\ceil{\log_2\ell_1+1}-1\right)+2^{\dg_v}
	=
	\ell
	\left(
		\ceil{\log_2\ell}
		-\dg_v
	\right)
	-\ell
	+2^{\dg_v}
\end{multline*}
additions. As $1\leq\ell_2\leq2^{\dg_v}$, it follows that Algorithm~\ref{alg:N2X} performs at most
\begin{multline*}
	\ell\left(\ceil{\log_2\ell}-1\right)+1
	+\ell_1\left(\dg_v+1\right)
	+\ell_2\left(\ceil{\log_2\ell_2}-\dg_v\right)
	-\ell+2^{\dg_v}\\
	\begin{aligned}
		&=
		\ell\left(\ceil{\log_2\ell}-1\right)+1
		-\left(2^{\dg_v}-\dg_v-1\right)
		\left(\ell_1-1\right)
		+\dg_v+1
		+\ell_2
		\ceil{\log_2\frac{\ell_2}{2^{\dg_v+1}}}\\
		&\leq
		\ell\left(\ceil{\log_2\ell}-1\right)+1
		+\dg_v+1
		-\frac{\floor{\log_22^{\dg_v+1}/\ell_2}}{2^{\dg_v+1}/\ell_2}
		2^{\dg_v+1}\\
		&\leq
		\ell\left(\ceil{\log_2\ell}-1\right)+1
	\end{aligned}
\end{multline*}
additions if $\ell_1\neq 0$.
\end{proof}

\begin{lemma} Suppose that $\dg_v=2^{\ceil{\log_2 n_v}-1}$ for all internal $v\in V$. Then Algorithms~\ref{alg:N2X} and~\ref{alg:X2N} perform at most $(3\ell-2)\ceil{\log_2\ell}/4$ additions in $\F$.
\end{lemma}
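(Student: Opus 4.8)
The plan is to run the same structural induction used for the two preceding lemmas. Since the bound equals $0$ for $\ell=1$ and $1$ for $\ell=2$, it holds whenever the input vertex is a leaf, so it suffices to show that if $v\in V$ is internal and the bound holds whenever $v_\alpha$ or $v_\delta$ is the input vertex, then it holds when $v$ is the input vertex. Write $p=\dg_v=2^{\ceil{\log_2 n_v}-1}$ and $m=2^p$, and retain the notation $\ell_1=\ceil{\ell/m}-1$, $\ell_2=\ell-m\ell_1\in\{1,\dotsc,m\}$ of the algorithm. The case $\ell_1=0$ is immediate: then $\ell_2=\ell_2'=\ell$, the $\mu$-update loop is empty, every call in Lines~\ref{N2X:cols}--\ref{N2X:cols-end} has length $\ell_1+1=1$ and so is free, and the only remaining work is the single recursive call of Line~\ref{N2X:last-row} on $v_\alpha$ with length $\ell$, which the inductive hypothesis bounds as required.

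For $\ell_1\geq 1$ I would first record the arithmetic facts behind the estimate. As $v$ is internal, $n_v\geq 2$, so $2^{\ceil{\log_2 n_v}-1}<n_v\leq 2^{\ceil{\log_2 n_v}}$ and hence $n_v-\dg_v\leq\dg_v=p$; with $\ell\leq 2^{n_v}$ this gives $\ell_1+1=\ceil{\ell/m}\leq 2^{n_v-p}\leq m$. Putting $k=\ceil{\log_2(\ell_1+1)}$, we get $1\leq k\leq p$ and $2^{k-1}\leq\ell_1\leq 2^k-1$, and then $\ceil{\log_2\ell}=p+k$ exactly as in the proof of the multiplication bound. I also note $\ceil{\log_2\ell_2}\leq p$ (as $\ell_2\leq m$) and $\ceil{\log_2\ell_1}\leq k$.

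I would then tally the additions performed when $v$ is the input vertex: Lines~\ref{N2X:rows}--\ref{N2X:last-row} contribute the $\ell_1 p$ additions of the $\mu$-updates (Line~\ref{N2X:update-shifts} touching $|\leaves_{v_\alpha}|=p$ entries each time), plus $\ell_1$ recursive calls on $v_\alpha$ of length $m$ and one of length $\ell_2$, while Lines~\ref{N2X:cols}--\ref{N2X:cols-end} contribute $\ell_2$ recursive calls on $v_\delta$ of length $\ell_1+1$ and $m-\ell_2$ of length $\ell_1$ (here $\ell_2'=\min(m,\ell)=m$). Bounding each recursive call by the inductive hypothesis, using $\ceil{\log_2 m}=p$ together with the log-estimates above, and collapsing the sums with $m\ell_1+\ell_2=\ell$, the total simplifies to $\left(3\ell\ceil{\log_2\ell}+2p\ell_1-2p-2mk\right)/4$; so the desired bound $(3\ell-2)\ceil{\log_2\ell}/4$ reduces to $2p\ell_1-2p-2mk\leq-2\ceil{\log_2\ell}=-2(p+k)$, that is, to $p\ell_1\leq(m-1)k$.

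Establishing $p\ell_1\leq(m-1)k$ is the crux, and the one point where the balanced choice $\dg_v=2^{\ceil{\log_2 n_v}-1}$ really matters: it is what forces $\ell_1\leq m-1$ and hence $k\leq p$. Since $\ell_1\leq 2^k-1$, it remains to check $p(2^k-1)\leq(m-1)k=(2^p-1)k$ for integers $1\leq k\leq p$; this holds because $(2^k-1)/k$ is the average of $2^0,2^1,\dotsc,2^{k-1}$ and $(2^p-1)/p$ is the average of the longer initial run $2^0,2^1,\dotsc,2^{p-1}$, and extending a run of a non-decreasing sequence cannot lower its average. I expect this last comparison — distilling the clean inequality $p\ell_1\leq(m-1)k$ out of the inductive hypothesis and verifying it — to be the main obstacle, the rest being bookkeeping of the same sort already done for the multiplicative and unrestricted additive bounds. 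As in those lemmas, Algorithm~\ref{alg:X2N} is handled identically, since it performs the same number of additions as Algorithm~\ref{alg:N2X} on identical inputs.
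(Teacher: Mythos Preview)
Your proof is correct and follows essentially the same route as the paper: the same structural induction, the same case split on $\ell_1=0$ versus $\ell_1\geq 1$, and the same reliance on $n_v-\dg_v\leq\dg_v$ to force $\ell_1+1\leq 2^{\dg_v}$. The only real difference is cosmetic: after summing the contributions, the paper writes the residual term as $\tfrac{\dg_v}{2}\log_2(\ell_1+1)\bigl(\ell_1/\log_2(\ell_1+1)-(2^{\dg_v}-1)/\dg_v\bigr)$ and disposes of it by invoking monotonicity of the real function $x/\log_2(x+1)$ on $[1,\infty)$, whereas you isolate the equivalent integer inequality $p\ell_1\leq(2^p-1)k$ and prove it via the averaging observation $(2^k-1)/k=\tfrac{1}{k}\sum_{i<k}2^i$. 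Both arguments encode the same monotonicity fact; yours is arguably a touch more elementary.
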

\begin{proof} Suppose that $\dg_v=2^{\ceil{\log_2 n_v}-1}$ for all internal $v\in V$. Furthermore, suppose that the input vertex $v\in V$ to Algorithm~\ref{alg:N2X} is an internal vertex such that the algorithm performs at most $(3\ell-2)\ceil{\log_2\ell}/4$ additions in $\F$ whenever $v_\alpha$ or $v_\delta$ is given as the input vertex. If $\ell_1=0$, then $\ell_2=\ell_2'=\ell$ and the algorithm performs at most
\begin{equation*}
	\frac{3\ell_2-2}{4}\ceil{\log_2\ell_2}
	+\ell_2
	\left(
		\frac{3-2}{4}\ceil{\log_21}
	\right)
	=
	\frac{3\ell-2}{4}\ceil{\log_2\ell}.
\end{equation*}
additions. Therefore, suppose that $\ell_1\neq 0$. Then, as $\ell_2\leq 2^{\dg_v}$, Lines~\ref{N2X:rows}--\ref{N2X:last-row} of the algorithm perform at most
\begin{equation*}
	\ell_1\left(\frac{3\cdot 2^{\dg_v}-2}{4}\dg_v+\dg_v\right)
	+\frac{3\ell_2-2}{4}\ceil{\log_2\ell_2}
	\leq
	\frac{3\ell-2}{4}\dg_v
	+\frac{\ell_1}{2}\dg_v
\end{equation*}
additions, while Lines~\ref{N2X:cols}--\ref{N2X:cols-end} perform at most
\begin{multline*}
	\ell_2
	\frac{3(\ell_1+1)-2}{4}
	\ceil{\log_2\ell_1+1}
	+
	\left(2^{\dg_v}-\ell_2\right)
	\frac{3\ell_1-2}{4}
	\ceil{\log_2\ell_1}\\
	\begin{aligned}
	&\leq
	\frac{3\ell-2^{\dg_v+1}}{4}
	\ceil{\log_2\ell_1+1}\\
	&\leq
	\frac{3\ell-2}{4}
	\left(\ceil{\log_2\ell}-\dg_v\right)
	-
	\frac{2^{\dg_v}-1}{2}
	\log_2\left(\ell_1+1\right)
	\end{aligned}
\end{multline*}
additions. It follows that Algorithm~\ref{alg:N2X} performs at most
\begin{equation*}
	\frac{3\ell-2}{4}
	\ceil{\log_2\ell}
	+
	\frac{\dg_v}{2}
	\log_2\left(\ell_1+1\right)
	\left(
		\frac{\ell_1}{\log_2\left(\ell_1+1\right)}
		-\frac{2^{\dg_v}-1}{\dg_v}
	\right)
	\leq
	\frac{3\ell-2}{4}
	\ceil{\log_2\ell}
\end{equation*}
additions if $\ell_1\neq 0$, since $\ell_1+1\leq 2^{n_v-\dg_v}\leq 2^{\dg_v}$ and the function $x/\log_2(x+1)$ is increasing for $x\geq 1$.
\end{proof}

\subsection{Conversion from the Lagrange basis to the Lin--Chung--Han basis}\label{sec:L2X}

Our algorithms for converting between the Lagrange and LCH bases are direct analogues of Harvey's cache-friendly truncated FFT and inverse truncated FFT algorithms~\cite{harvey2009}. We align our presentation of the algorithms and their proofs of correctness with their counterparts given by Harvey, and direct the reader to Harvey's paper for further motivation behind the algorithms. In this section, we focus on the problem of converting from the Lagrange basis to the LCH basis. The inverse problem of converting from the LCH basis to the Lagrange basis is considered separately in the next section.

We propose Algorithm~\ref{alg:L2X} for converting from the Lagrange basis to the LCH basis. Like the algorithms of Section~\ref{sec:NX}, Algorithm~\ref{alg:L2X} operates on a vector of field elements whose initial entries are overwritten with the output. However, the length of the vector is determined by the input vertex rather than the parameter $\ell$ which bounds the polynomial length. Consequently, the vector may have coordinates that are never used to store input or output values, but which are still used for intermediate computations. If the parameter $c$ is less than $\ell$, then the algorithm differs substantially from the other basis conversion algorithms in this paper by requiring that the vector initially contain a combination of coefficients from  a polynomial's representations on the input and output bases. If the parameter $\flag$ is set equal to one, then the algorithm has the additional unique feature of being required to compute a coefficient from the input basis representation. These two parameters are part of the internal mechanics of the recursive approach used by the algorithm, and in most practical applications, such as the multiplication of binary polynomials, the initial call to the algorithm is made with $c=\ell$ and $\flag=0$. However, one may be required to initially call the algorithm with $c<\ell$ if it used within the Hermite interpolation algorithm of Coxon~\cite{coxon2018}.

The reduction used by Algorithm~\ref{alg:L2X} is provided by the following generalisation of Lemma~\ref{lem:setup-LX-reduction}, which is rephrased to reflect the fact that the algorithm takes in a mixture of coefficients from representations on the Lagrange and LCH bases.

\begin{lemma}\label{lem:LX-reduction} Let $v\in V$ be an internal vertex and $\ell\in\{1,\dotsc,2^{n_v}\}$. Suppose that $f_0,\dotsc,f_{2^{n_v}-1},\lambda,\fout_0,\dotsc,\fout_{\ell-1}\in\F$ satisfy
\begin{equation}\label{eqn:LX}
	\sum^{\ell-1}_{i=0}
	\fout_i
	X_{\beta_v,i}
	=
	\sum^{2^{n_v}-1}_{i=0}
	f_i
	L_{\beta_v,i}\left(x-\lambda\right).
\end{equation}
Then there exist unique elements $\fint_0,\dotsc,\fint_{2^{n_v}-1}\in\F$ such that
\begin{equation}\label{eqn:LX-rows} 
	\sum^{\min\left(2^{\dg_v},\ell\right)-1}_{j=0}
	\fint_{2^{\dg_v}i+j}
	X_{\beta_{v_\alpha},j}
	=
	\sum^{2^{\dg_v}-1}_{j=0}
	f_{2^{\dg_v}i+j}
	L_{\beta_{v_\alpha},j}
	\left(x-\lambda-\omega_{\gamma_v,i}\right)
\end{equation}
for $i\in\{0,\dotsc,2^{n_v-\dg_v}-1\}$, and
\begin{equation}\label{eqn:LX-cols}
	\sum^{\floor{(\ell-1-j)/2^{\dg_v}}}_{i=0}
	\fout_{2^{\dg_v}i+j}
	X_{\beta_{v_\delta},i}
	=
	\sum^{2^{n_v-\dg_v}-1}_{i=0}
	\fint_{2^{\dg_v}i+j}
	L_{\beta_{v_\delta},i}\left(x-\eta\right)
\end{equation}
for $j\in\{0,\dotsc,2^{\dg_v}-1\}$, where $\eta=(\lambda/\beta_{v,0})^{2^{\dg_v}}-\lambda/\beta_{v,0}$. 
\end{lemma}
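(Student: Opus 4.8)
The plan is to follow the template of the proof of Lemma~\ref{lem:setup-LX-reduction}, but carefully track the truncation effects caused by the fact that the left-hand sums in \eqref{eqn:LX}, \eqref{eqn:LX-rows} and \eqref{eqn:LX-cols} are pruned (only indices $0,\dots,\ell-1$ survive), whereas the right-hand sums over the Lagrange basis always run over a full dyadic block. The key point is that Lemma~\ref{lem:reduction}, applied at the vertex $v$ with $\dg=\dg_v$, gives
\begin{equation*}
	\sum^{2^{n_v}-1}_{i=0}
	f_iL_{\beta_v,i}\left(x-\lambda\right)
	=
	\sum^{2^{n_v-\dg_v}-1}_{i=0}
	\left(
		\sum^{2^{\dg_v}-1}_{j=0}
		f_{2^{\dg_v}i+j}
		L_{\beta_{v_\alpha},j}\left(x-\lambda-\omega_{\gamma_v,i}\right)
	\right)
	L_{\beta_{v_\delta},i}
	\left(
		\left(x/\beta_{v,0}\right)^{2^{\dg_v}}-x/\beta_{v,0}-\eta
	\right),
\end{equation*}
exactly as in the displayed multiline equation in the proof of Lemma~\ref{lem:setup-LX-reduction}, with $\omega_{\gamma_v,i}$ and $\eta$ as specified. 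The existence and uniqueness of $\fint_0,\dots,\fint_{2^{n_v}-1}$ satisfying \eqref{eqn:LX-rows} is immediate: for each fixed $i$, the right-hand side of \eqref{eqn:LX-rows} is a polynomial of degree less than $2^{\dg_v}$, hence has a unique expansion on $\{X_{\beta_{v_\alpha},0},\dots,X_{\beta_{v_\alpha},2^{\dg_v}-1}\}$; one reads off $\fint_{2^{\dg_v}i+j}$ for $j<\min(2^{\dg_v},\ell)$ from the first $\min(2^{\dg_v},\ell)$ coefficients, and \emph{defines} the remaining $\fint_{2^{\dg_v}i+j}$ for $\min(2^{\dg_v},\ell)\le j<2^{\dg_v}$ to be the higher coefficients of that same expansion. (When $\ell\ge 2^{\dg_v}$ the min is $2^{\dg_v}$ and there are no "extra" coefficients to define; the two regimes must both be covered.)

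Next I would substitute \eqref{eqn:LX-rows} into the factorised identity above. After swapping the order of the two summations, the right-hand side of \eqref{eqn:LX} becomes
\begin{equation*}
	\sum^{2^{\dg_v}-1}_{j=0}
	\left(
		\sum^{2^{n_v-\dg_v}-1}_{i=0}
		\fint_{2^{\dg_v}i+j}
		L_{\beta_{v_\delta},i}
		\left(\left(x/\beta_{v,0}\right)^{2^{\dg_v}}-x/\beta_{v,0}-\eta\right)
	\right)
	X_{\beta_{v_\alpha},j}(x).
\end{equation*}
Now the crucial bookkeeping step: for each $j$, the inner sum, viewed as a polynomial in $y=(x/\beta_{v,0})^{2^{\dg_v}}-x/\beta_{v,0}$, has degree less than $2^{n_v-\dg_v}$, so it has a unique expansion on $\{X_{\beta_{v_\delta},0},\dots,X_{\beta_{v_\delta},2^{n_v-\dg_v}-1}\}$; \eqref{eqn:LX-cols} asserts that this expansion is $\sum_i \fout_{2^{\dg_v}i+j}X_{\beta_{v_\delta},i}$ truncated to $i\le\floor{(\ell-1-j)/2^{\dg_v}}$. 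Since each $X_{\beta_{v_\delta},i}(y)$ composed with $y=(x/\beta_{v,0})^{2^{\dg_v}}-x/\beta_{v,0}$ has $x$-degree exactly $2^{\dg_v}i$ by Lemma~\ref{lem:reduction} (indeed $X_{\beta_v,2^{\dg_v}i}=X_{\beta_{v_\delta},i}((x/\beta_{v,0})^{2^{\dg_v}}-x/\beta_{v,0})\cdot 1$), the product $X_{\beta_{v_\delta},i}(\cdots)X_{\beta_{v_\alpha},j}(x)$ equals $X_{\beta_v,2^{\dg_v}i+j}$ and has $x$-degree $2^{\dg_v}i+j$. Thus the terms with $2^{\dg_v}i+j\le\ell-1$, i.e. $i\le\floor{(\ell-1-j)/2^{\dg_v}}$, are exactly the terms of total degree $<\ell$; and the uniqueness of the expansion of $\sum_{i<\ell}\fout_iX_{\beta_v,i}$ on the LCH basis forces the coefficients to match those prescribed by \eqref{eqn:LX-cols}. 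To make this rigorous I would argue that \eqref{eqn:LX-cols}, for the stated ranges of $j$ and $i$, determines $\fout_0,\dots,\fout_{\ell-1}$ uniquely (each $\fout_k$ with $k=2^{\dg_v}i+j$, $j<2^{\dg_v}$, $i\le\floor{(\ell-1-j)/2^{\dg_v}}$ appears in exactly one equation), and then verify that with these values \eqref{eqn:LX} holds by re-deriving the chain of equalities in the reverse direction, exactly mirroring Lemma~\ref{lem:setup-LX-reduction}.

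The main obstacle I anticipate is purely combinatorial rather than algebraic: making the index-range arithmetic airtight. One must check that $\{(i,j): 0\le j<2^{\dg_v},\ 0\le i\le\floor{(\ell-1-j)/2^{\dg_v}}\}$ is precisely $\{(i,j): 2^{\dg_v}i+j\le\ell-1,\ 0\le j<2^{\dg_v}\}$ (straightforward, but the floor and the $\ell-1-j$ must be handled with care when $\ell-1-j<0$, in which case the upper limit is $-1$ and the sum is empty), and that the "extra" $\fint$-coefficients introduced above, though they appear in \eqref{eqn:LX-cols} for indices $i$ with $2^{\dg_v}i+j\ge\ell$, are exactly the ones needed to make the truncated LCH expansion on the left of \eqref{eqn:LX-cols} agree with the untruncated Lagrange sum on the right — i.e. nothing is lost or double-counted at the truncation boundary. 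Once the index accounting is pinned down, the algebra is a verbatim rerun of the computation in the proof of Lemma~\ref{lem:setup-LX-reduction}, so I would present that part tersely and spend the write-up budget on the degree-counting and the existence/uniqueness statements.
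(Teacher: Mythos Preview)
Your approach is correct but takes a genuinely different route from the paper. You define the $\fint$'s via the \emph{full} LCH expansion of each row (the untruncated version of~\eqref{eqn:LX-rows}), then substitute into the factorisation from Lemma~\ref{lem:reduction}, swap sums, and match coefficients on the $X_{\beta_v}$ basis to read off~\eqref{eqn:LX-cols}. The paper instead defines the $\fint$'s from~\eqref{eqn:LX-cols} (where they are uniquely determined as Lagrange coefficients of the given left-hand side), pads with $\fout_\ell=\dotsb=\fout_{2^{n_v}-1}=0$, observes that this padding makes all three truncated sums equal to their untruncated counterparts, and then invokes Lemma~\ref{lem:setup-LX-reduction} as a black box together with uniqueness of the $f_i$'s to force~\eqref{eqn:LX-rows}. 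Your route is self-contained and essentially re-derives Lemma~\ref{lem:setup-LX-reduction} inline with the truncation bookkeeping folded in; the paper's route is shorter and more modular, trading the degree-matching argument for a single appeal to the already-proved untruncated case.

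Two small points to tighten in your write-up. First, uniqueness of the $\fint$'s should be argued from~\eqref{eqn:LX-cols} (each column equation determines its $\fint$'s as shifted-Lagrange coefficients of a known polynomial), not from uniqueness of the $\fout$'s, which are part of the hypothesis. Second, you should explicitly close the loop that~\eqref{eqn:LX-rows} holds \emph{as stated}, with the truncated left-hand side: when $\ell<2^{\dg_v}$, your ``extra'' coefficients $\fint_{2^{\dg_v}i+j}$ for $j\geq\ell$ must be shown to vanish. This drops out immediately from~\eqref{eqn:LX-cols} for those $j$ (the left-hand sum is empty, so the Lagrange expansion on the right is identically zero), but your obstacle paragraph gestures at a different issue and does not quite say this.
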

\begin{proof} Let $v\in V$, $\ell\in\{1,\dotsc,2^{n_v}\}$ and $f_0,\dotsc,f_{2^{n_v}-1},\lambda,\fout_0,\dotsc,\fout_{\ell-1}\in\F$ satisfy the conditions of the lemma. Then there exist unique elements $\fint_0,\dotsc,\fint_{2^{n_v}-1}\in\F$ such that~\eqref{eqn:LX-cols} holds for $j\in\{0,\dotsc,2^{\dg_v}-1\}$. We show that they also satisfy~\eqref{eqn:LX-rows} for $i\in\{0,\dotsc,2^{n_v-\dg_v}-1\}$. If $j\in\{0,\dotsc,2^{\dg_v}-1\}$ satisfies $j\geq\ell$, then \eqref{eqn:LX-cols} implies that $\fint_{2^{\dg_v}i+j}=0$ for $i\in\{0,\dotsc,2^{n_v-\dg_v}-1\}$. Therefore, if we let $\fout_\ell=\dotsb=\fout_{2^{n_v}-1}=0$, then the left-hand sides of \eqref{eqn:LX}, \eqref{eqn:LX-rows} and \eqref{eqn:LX-cols} remain unchanged if we replace $\ell$ by $2^{n_v}$ in the summation bounds. Consequently, \eqref{eqn:LX-rows} must hold for $i\in\{0,\dotsc,2^{n_v-\dg_v}-1\}$, since otherwise Lemma~\ref{lem:setup-LX-reduction} allows us to contradict the uniqueness of the coefficients $f_0,\dotsc,f_{2^{n_v}-1}$ in~\eqref{eqn:LX} by writing the polynomial on the left-hand side of~\eqref{eqn:LX-rows} on the basis
\begin{equation*}
	\left\{
		L_{\beta_{v_\alpha},0}
		\left(x-\lambda-\omega_{\gamma_v,i}\right),
		\dotsc,
		L_{\beta_{v_\alpha},2^{\dg_v}-1}
		\left(x-\lambda-\omega_{\gamma_v,i}\right)
	\right\}
\end{equation*}
for $i\in\{0,\dotsc,2^{n_v-\dg_v}-1\}$. 
\end{proof}

\begin{algorithm}[h]
	\caption{$\mathsf{L2X}(v,(\PushDown_v(u,\lambda))_{u\in\leaves_v},c,\ell,\flag,(a_0,\ldots,a_{2^{n_v}-1}))$}\label{alg:L2X}
	\begin{algorithmic}[1]
		\Require a vertex $v\in V$, the vector $(\PushDown_v(u,\lambda))_{u\in\leaves_v}\in\F^{n_v}$ for some $\lambda\in\F$, $c,\ell\in\N$ such that $c\leq\ell$ and $1\leq \ell\leq 2^{n_v}$, $\flag\in\{0,1\}$ such that $1\leq\flag+c\leq 2^{n_v}$, $a_i=f_i\in\F$ for $i\in\{0,\dotsc,c-1\}$, and $a_i=\fout_i\in\F$ for $i\in\{c,\dotsc,\ell-1\}$.
		\Ensure $a_i=\fout_i\in\F$ for $i\in\{0,\dotsc,c-1\}$ such that~\eqref{eqn:LX} holds for some $f_c,\dotsc,f_{2^{n_v}-1}\in\F$, and $a_c=f_c$ if $\flag=1$.
		\If{$v$ is a leaf}\label{L2X:base}
			\If{$c=2$}
				$a_1\set a_0+a_1$, $a_0\set a_0+\PushDown_v(v,\lambda)a_1$
			\EndIf
 			\If{$c=1$, $\ell=2$ and $\flag=1$}
				$w\set\PushDown_v(v,\lambda)a_1$, $a_1\set a_0+a_1$, $a_0\set a_0+w$
			\EndIf
			\If{$c=1$, $\ell=2$ and $\flag=0$}
				$a_0\set a_0+\PushDown_v(v,\lambda)a_1$
			\EndIf
			\If{$c=0$ and $\ell=2$}
				$a_0\set a_0+\PushDown_v(v,\lambda)a_1$
			\EndIf
			\If{$c=1$, $\ell=1$ and $\flag=1$}
				$a_1\set a_0$
			\EndIf
		\State\Return
		\EndIf\label{L2X:base-end}
		\State
			\label{L2X:aux}%
			$c_1\set\floor{c/2^{\dg_v}}$,
			$c_2\set c-2^{\dg_v}c_1$
		\State
			$\ell_1\set\floor{\ell/2^{\dg_v}}$,
			$\ell_2\set \ell-2^{\dg_v}\ell_1$
		\State
			$\ell_2'\set\min(2^{\dg_v},\ell)$,
			$\flag'\set\min(\flag+c_2,1)$
		\State
			\label{L2X:aux-end}%
			$s\set\min(c_2,\ell_2)$,
			$t\set\max(c_2,\ell_2)$
		\State
			$\mu\set(\PushDown_v(u,\lambda))_{u\in\leaves_{v_\alpha}}$,
			$\nu\set(\PushDown_v(u,\lambda))_{u\in\leaves_{v_\delta}}$
		\For{$i=0,\dotsc,c_1+\flag'-2$}\label{L2X:rows}
			\State
				\label{L2X:row}%
				\Call{L2X}{$v_\alpha,\mu,2^{\dg_v},2^{\dg_v},0,(a_{2^{\dg_v}i},a_{2^{\dg_v}i+1},\dotsc,a_{2^{\dg_v}(i+1)-1})$}
			\State
				\label{L2X:update-shifts}%
				$\mu\set\mu+(\PushDown_v(u,\sigma_{v,\Gray(i)}))_{u\in\leaves_{v_\alpha}}$
		\EndFor
		\If{$\flag'=0$}
			\State\label{L2X:sometimes-last-row}%
			\Call{L2X}{$v_\alpha,\mu,2^{\dg_v},2^{\dg_v},0,(a_{2^{\dg_v}(c_1-1)},a_{2^{\dg_v}(c_1-1)+1},\dotsc,a_{2^{\dg_v}c_1-1})$}
		\EndIf\label{L2X:rows-end}
		\For{$j=c_2,\dotsc,t-1$}\label{L2X:right-cols-1}
			\State
				\label{L2X:right-col-1}%
				\Call{L2X}{$v_\delta,\nu,c_1,\ell_1+1,\flag',(a_j,a_{2^{\dg_v}+j},\dotsc,a_{2^{\dg_v}(2^{n_v-\dg_v}-1)+j})$}
		\EndFor\label{L2X:right-cols-1-end}
		\For{$j=t,\dotsc,\ell_2'-1$}\label{L2X:right-cols-2}
			\State
				\label{L2X:right-col-2}%
				\Call{L2X}{$v_\delta,\nu,c_1,\ell_1,\flag',(a_j,a_{2^{\dg_v}+j},\dotsc,a_{2^{\dg_v}(2^{n_v-\dg_v}-1)+j})$}
		\EndFor\label{L2X:right-cols-2-end}
		\If{$\flag'=1$}\label{L2X:last-row}
			\State
				\label{L2X:bottom-row}%
				\Call{L2X}{$v_\alpha,\mu,c_2,\ell_2',\flag,(a_{2^{\dg_v}c_1},a_{2^{\dg_v}c_1+1},\dotsc,a_{2^{\dg_v}(c_1+1)-1})$}
		\EndIf\label{L2X:last-row-end}
		\For{$j=0,\dotsc,s-1$}\label{L2X:left-cols-1}
			\State
				\label{L2X:left-col-1}%
				\Call{L2X}{$v_\delta,\nu,c_1+1,\ell_1+1,0,(a_j,a_{2^{\dg_v}+j},\dotsc,a_{2^{\dg_v}(2^{n_v-\dg_v}-1)+j})$}
		\EndFor\label{L2X:left-cols-1-end}
		\For{$j=s,\dotsc,c_2-1$}\label{L2X:left-cols-2}
			\State
				\label{L2X:left-col-2}%
				\Call{L2X}{$v_\delta,\nu,c_1+1,\ell_1,0,(a_j,a_{2^{\dg_v}+j},\dotsc,a_{2^{\dg_v}(2^{n_v-\dg_v}-1)+j})$}
		\EndFor\label{L2X:left-cols-2-end}
	\end{algorithmic}
\end{algorithm}

\begin{theorem} Algorithm~\ref{alg:L2X} is correct.
\end{theorem}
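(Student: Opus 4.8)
The plan is to argue by induction on $n_v$, following the pattern of the proof of Theorem~\ref{thm:NX-correctness} but carrying along the extra truncation parameters $c$, $\ell$ and $\flag$; the algebraic content of the recursion is exactly the factorisation supplied by Lemma~\ref{lem:LX-reduction}, so once the bookkeeping is pinned down the argument is mostly mechanical. Since the recursive calls are on $v_\alpha$ and $v_\delta$, which have strictly fewer leaves than $v$, it suffices to prove the leaf case directly and then prove the inductive step assuming correctness for the two children.

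For the base case $v$ is a leaf, so $n_v=1$, $\leaves_v=\{v\}$, $\PushDown_v(v,\lambda)=\lambda/\beta_{v,0}$, and $X_{\beta_v,0}=1$, $X_{\beta_v,1}=x/\beta_{v,0}$, $L_{\beta_v,0}(x-\lambda)=(x-\lambda)/\beta_{v,0}+1$, $L_{\beta_v,1}(x-\lambda)=(x-\lambda)/\beta_{v,0}$. The admissible triples $(c,\ell,\flag)$ — those with $c\le\ell$, $1\le\ell\le 2$ and $1\le\flag+c\le 2$ — are exactly $(0,1,1)$, $(1,1,0)$, $(1,1,1)$, $(0,2,1)$, $(1,2,0)$, $(1,2,1)$ and $(2,2,0)$, and one checks each branch of Lines~\ref{L2X:base}--\ref{L2X:base-end} against the input/output specification. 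This is a routine finite verification; for instance, in the cases $(1,1,0)$ and $(0,1,1)$ no branch fires, which is correct because $\fout_0X_{\beta_v,0}=f_0L_{\beta_v,0}(x-\lambda)+f_1L_{\beta_v,1}(x-\lambda)$ forces $f_1=f_0$ and $\fout_0=f_0$, so the entry already stored in $a_0$ is the required value.

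For the inductive step let $v$ be internal and assume the statement for the inputs $v_\alpha$ and $v_\delta$. First, exactly as in the proof of Theorem~\ref{thm:NX-correctness}, the $\F_2$-linearity of the maps $\PushDown_v(u,\cdot)$ together with $\omega_{\gamma_v,i}=\sum_{j=0}^{i-1}\sigma_{v,\Gray(j)}$ show that the $i$th row call (Lines~\ref{L2X:row} and~\ref{L2X:sometimes-last-row}, and the bottom-row call of Line~\ref{L2X:bottom-row} with $i=c_1$) is made with $\mu=(\PushDown_{v_\alpha}(u,\lambda+\omega_{\gamma_v,i}))_{u\in\leaves_{v_\alpha}}$, while every column call is made with $\nu=(\PushDown_{v_\delta}(u,\eta))_{u\in\leaves_{v_\delta}}$ for $\eta=(\lambda/\beta_{v,0})^{2^{\dg_v}}-\lambda/\beta_{v,0}$. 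The two structural claims are then: (i) after Lines~\ref{L2X:rows}--\ref{L2X:rows-end} and the bottom-row call of Line~\ref{L2X:bottom-row}, the array holds the coefficients $\fint$ of Lemma~\ref{lem:LX-reduction}, with $a_c=f_c$ if $\flag=1$; and (ii) the column calls of Lines~\ref{L2X:right-cols-1}--\ref{L2X:left-cols-2-end} then convert these into the $\fout$, establishing \eqref{eqn:LX-cols}, whence \eqref{eqn:LX} by Lemma~\ref{lem:LX-reduction}. For (i) one observes that rows $0,\dots,c_1-1$ occupy positions $<c$, hence on entry hold Lagrange coefficients, so the full-conversion calls $\mathsf{L2X}(v_\alpha,\cdot,2^{\dg_v},2^{\dg_v},0,\cdot)$ apply by the induction hypothesis (the split of Lines~\ref{L2X:rows}--\ref{L2X:rows-end} into the loop and the conditional merely saves the final $\mu$-update when $c_2=0$, and is well-formed since $\flag'=0$ forces $c=2^{\dg_v}c_1$ with $c_1\ge 1$), while the partial row $c_1$ is handled by Line~\ref{L2X:bottom-row} with the flag propagated so that $a_c=f_c$ is recovered. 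For (ii) one tracks, for each column index $j$, which of its entries currently hold a value of $\fint$-type ($\fint_{2^{\dg_v}i+j}$ for $i$ small) and which already hold the target $\fout$-type value ($\fout_{2^{\dg_v}i+j}$ for larger $i$); this pins the per-column parameters to $c=c_1+1$, $\flag=0$ when $j<c_2$, and $c=c_1$, $\flag=\flag'$ when $j\ge c_2$, together with $\ell=\ell_1+1$ when $j<\ell_2$ and $\ell=\ell_1$ otherwise, which is precisely what the code passes, so the induction hypothesis applies columnwise.

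The main obstacle is the interleaving of the bottom-row call among the column calls and the attendant data-dependency argument, which is the direct analogue of the correctness proof for Harvey's cache-friendly inverse truncated FFT~\cite{harvey2009}: the ``right'' columns ($j\ge c_2$) must run \emph{before} Line~\ref{L2X:bottom-row}, because, when $\flag'=1$, these are the calls that hand back the values $\fint_{2^{\dg_v}c_1+j}$ sitting at sub-index $c_1$, and the bottom-row call reads exactly those as its inputs at sub-indices $c_2\le j<\ell_2'$; conversely the ``left'' columns ($j<c_2$) must run \emph{after} Line~\ref{L2X:bottom-row}, since for these $j$ the row-$c_1$ entry lies in a position $<c$ that only the bottom-row call fills with $\fint_{2^{\dg_v}c_1+j}$. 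Checking that this ordering, together with the degenerate ranges $\ell_1=0$ and $c_1=0$ and the splits induced by $\flag'$, $s=\min(c_2,\ell_2)$ and $t=\max(c_2,\ell_2)$, leaves every recursive call with inputs satisfying its precondition (in particular $1\le\flag'+c_1\le 2^{n_{v_\delta}}$ and $1\le\flag+c_2\le 2^{\dg_v}$), and that the computed $\fint$ and $\fout$ are consistent with the existentially quantified coefficients $f_c,\dots,f_{2^{n_v}-1}$ of the specification, is the delicate bookkeeping that carries the argument through.
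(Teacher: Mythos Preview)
Your approach is the same as the paper's: structural induction on the tree, the $\mu$/$\nu$ bookkeeping borrowed from Theorem~\ref{thm:NX-correctness}, and Lemma~\ref{lem:LX-reduction} to supply the intermediate coefficients $\fint_0,\dotsc,\fint_{2^{n_v}-1}$. The base case and the identification of the column parameters are correct, and your final paragraph shows you understand precisely why the right-column block must precede the bottom-row call and the left-column block must follow it.

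Where your write-up goes wrong is the decomposition into claims (i) and (ii). Claim~(i) as stated (``after Lines~\ref{L2X:rows}--\ref{L2X:rows-end} and the bottom-row call the array holds the $\fint$'') is false: by the time Line~\ref{L2X:bottom-row} has executed, Lines~\ref{L2X:right-cols-1}--\ref{L2X:right-cols-2-end} have already overwritten $a_{2^{\dg_v}i+j}$ with $\fout_{2^{\dg_v}i+j}$ for $i<c_1$ and $j\ge c_2$, and the bottom-row call itself leaves positions $j>c_2$ in row~$c_1$ unspecified. So there is no point in the algorithm at which the array holds ``the $\fint$'', and (i)/(ii) cannot be established as sequential claims. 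The paper avoids this by not attempting such a decomposition: it first fixes the unique $\fout_0,\dots,\fout_{c-1},f_c,\dots,f_{2^{n_v}-1}$ making~\eqref{eqn:LX} hold, invokes Lemma~\ref{lem:LX-reduction} to get the $\fint$'s, and then tracks the contents of the array line by line through four cases ($\ell_1=0$; $\ell_1\neq 0,\,c_2=0$; $\ell_1\neq 0,\,0<c_2\le\ell_2$; $\ell_1\neq 0,\,\ell_2<c_2$), which is exactly the ``delicate bookkeeping'' you allude to but do not carry out. Replacing your (i)/(ii) framing with that case split would make the argument go through cleanly.
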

\begin{proof} Suppose that $v\in V$ is a leaf. Then Table~\ref{tab:L2X-base-case} displays the input and output requirements of Algorithm~\ref{alg:L2X} on the vector $(a_0,\dotsc,a_{2^{n_v}-1})=(a_0,a_1)$ for each possible input that includes $v$. The table also shows the output of the algorithm as computed by Lines~\ref{L2X:base}--\ref{L2X:base-end}. The elements $f_i$ and $\fout_i$ that appear in a row of the table are the coefficients of~\eqref{eqn:LX} for the specified value of $\ell$. Elements denoted by asterisks are unspecified by the algorithm. As $v$ is a leaf, we have
\begin{equation*}
	X_{\beta_v,0}=1,
	\quad
	X_{\beta_v,1}=\frac{x}{\beta_{v,0}},
	\quad
	L_{\beta_v,0}=\frac{x}{\beta_{v,0}}+1
	\quad\text{and}\quad
	L_{\beta_v,1}=\frac{x}{\beta_{v,0}}.
\end{equation*}
Moreover, $\PushDown_v(v,\lambda)=\lambda/\beta_{v,0}$ for $\lambda\in\F$. Thus, the coefficients of \eqref{eqn:LX} satisfy $\fout_0=f_0+\PushDown_v(v,\lambda)\left(f_0+f_1\right)$ and $\fout_1=f_0+f_1$ if $\ell=2$, and $\fout_0=f_0=f_1$ if $\ell=1$. Using these equation, one can readily verify that the computed output agrees with the required output for all inputs. Consequently, Algorithm~\ref{alg:L2X} produces the correct output whenever the input vertex is a leaf. Therefore, as $(V,E)$ is a full binary tree, it is sufficient to show that for all internal $v\in V$, if the algorithm produces the correct output whenever $v_\alpha$ or $v_\delta$ is given as an input, then it produces the correct output whenever $v$ is given as an input.

\begin{table}[h]
	\setlength{\belowcaptionskip}{0pt}
	\begin{tabular}{ccccc@{}cccccc@{}cc}
		\toprule
		\multicolumn{5}{c}{Input} &  & \multicolumn{4}{c}{Required output}  & & \multicolumn{2}{c}{Computed output} \\
		\cmidrule{1-5} \cmidrule{7-10} \cmidrule{12-13}
		$c$ & $\ell$ & $\flag$ &   $a_0$   &   $a_1$   &  &  & $a_0$ &  & $a_1$  &  &                      $a_0$                       &     $a_1$     \\
		\midrule
		$2$ &  $2$   &   $0$   &   $f_0$   &   $f_1$   &  &  & $h_0$ &  & $h_1$  &  & $f_0+\PushDown_v(v,\lambda)\left(f_0+f_1\right)$ &   $f_0+f_1$   \\
		$1$ &  $2$   &   $1$   &   $f_0$   & $\fout_1$ &  &  & $h_0$ &  & $f_1$  &  &       $f_0+\PushDown_v(v,\lambda)\fout_1$        & $f_0+\fout_1$ \\
		$1$ &  $2$   &   $0$   &   $f_0$   & $\fout_1$ &  &  & $h_0$ &  & $\ast$ &  &       $f_0+\PushDown_v(v,\lambda)\fout_1$        &    $\ast$     \\
		$0$ &  $2$   &   $1$   & $\fout_0$ & $\fout_1$ &  &  & $f_0$ &  & $\ast$ &  &     $\fout_0+\PushDown_v(v,\lambda)\fout_1$      &    $\ast$     \\
		$1$ &  $1$   &   $1$   &   $f_0$   &  $\ast$   &  &  & $h_0$ &  & $f_1$  &  &                      $f_0$                       &     $f_0$     \\
		$1$ &  $1$   &   $0$   &   $f_0$   &  $\ast$   &  &  & $h_0$ &  & $\ast$ &  &                      $f_0$                       &    $\ast$     \\
		$0$ &  $1$   &   $1$   & $\fout_0$ &  $\ast$   &  &  & $f_0$ &  & $\ast$ &  &                    $\fout_0$                     &    $\ast$     \\
		\bottomrule
	\end{tabular}
	\vspace{\abovecaptionskip}
	\caption{Required and computed outputs of Algorithm~\ref{alg:L2X} when $v$ is a leaf.\strut}\label{tab:L2X-base-case}
\end{table}

Let $v\in V$ be an internal vertex and suppose that Algorithm~\ref{alg:L2X} produces the correct output whenever $v_\alpha$ or $v_\delta$ is given as an input. Let $\lambda\in\F$, $c,\ell\in\N$ such that $c\leq\ell$ and $1\leq\ell\leq 2^{n_v}$, and $\flag\in\{0,1\}$ such that $1\leq\flag+c\leq 2^{n_v}$. Suppose that Algorithm~\ref{alg:L2X} is called on $v$, $(\PushDown_v(u,\lambda))_{u\in\leaves_v}$, $c$, $\ell$, $\flag$ and $(a_0,\dotsc,a_{2^{n_v}-1})$, with $a_i=f_i\in\F$ for $i\in\{0,\dotsc,c-1\}$, and $a_i=\fout_i\in\F$ for $i\in\{c,\dotsc,\ell-1\}$. Then there exist unique elements $\fout_0,\dotsc,\fout_{c-1},f_c,\dotsc,f_{2^{n_v}-1}\in\F$ such that~\eqref{eqn:LX} holds. In-turn, Lemma~\ref{lem:LX-reduction} implies that there exist unique elements $\fint_0,\dotsc,\fint_{2^{n_v}-1}\in\F$ such that~\eqref{eqn:LX-rows} and~\eqref{eqn:LX-cols} hold.

Repeating arguments from the proof of Theorem~\ref{thm:NX-correctness} shows that vector $\mu$ is equal to $(\PushDown_{v_\alpha}(u,\lambda+\omega_{\gamma_v,i}))_{u\in\leaves_{v_\alpha}}$ each time the recursive call of Line~\ref{L2X:row} is made, equal to $(\PushDown_{v_\alpha}(u,\lambda+\omega_{\gamma_v,c_1-1}))_{u\in\leaves_{v_\alpha}}$ whenever the recursive call of Line~\ref{L2X:sometimes-last-row} is performed, and equal to $(\PushDown_{v_\alpha}(u,\lambda+\omega_{\gamma_v,c_1}))_{u\in\leaves_{v_\alpha}}$ whenever the recursive call of Line~\ref{L2X:bottom-row} is performed. Similarly, the vector $\nu$ is equal to $(\PushDown_{v_\delta}(u,\eta))_{u\in\leaves_{v_\delta}}$ for the recursive calls of Lines~\ref{L2X:right-col-1}, \ref{L2X:right-col-2}, \ref{L2X:left-col-1} and~\ref{L2X:left-col-2}. It follows that if the recursive call of Line~\ref{L2X:row} is made with $a_{2^{\dg_v} i+j}=f_{2^{\dg_v} i+j}$ for $j\in\{0,\dotsc,2^{\dg_v}-1\}$, then~\eqref{eqn:LX-rows} and the assumption that the algorithm produces the correct output whenever $v_\alpha$ is given as an input imply that $a_{2^{\dg_v}i+j}=\fint_{2^{\dg_v}i+j}$ for $j\in\{0,\dotsc,2^{\dg_v}-1\}$ afterwards. Similarly, if the recursive call of Line~\ref{L2X:right-col-1} is made with $a_{2^{\dg_v}i+j}=\fint_{2^{\dg_v}i+j}$ for $i\in\{0,\dotsc,c_1-1\}$ and $a_{2^{\dg_v}i+j}=\fout_{2^{\dg_v}i+j}$ for $i\in\{c_1,\dotsc,\ell_1\}$, then~\eqref{eqn:LX-cols} and the assumption that the algorithm produces the correct output whenever $v_\delta$ is given as an input imply that $a_{2^{\dg_v}i+j}=\fout_{2^{\dg_v}i+j}$ for $i\in\{0,\dotsc,c_1-1\}$, and $a_{2^{\dg_v}c_1+j}=\fint_{2^{\dg_v}c_1+j}$ if $\flag'=1$, afterwards. Similar statements hold for the remaining recursive calls made by the algorithm.

The remainder of the proof is split into four cases. For each case, we provide an example in either Figure~\ref{fig:L2X-a-and-b} or~\ref{fig:L2X-c-and-d} of how the vector $(a_0,\dotsc,a_{2^{n_v}-1})$ evolves during the algorithm. In the figures, the vector is represented by the $2^{n_v-\dg_v}\times 2^{\dg_v}$ matrix $(a_{2^{\dg_v}i+j})_{0\leq i< 2^{n_v-\dg_v},0\leq j<2^{\dg_v}}$. Under this representation, the subvectors that are subjected to recursive calls by the algorithm correspond to row and column vectors of the matrix. Asterisks in the figures represent unspecified entries, while entries surrounded by parenthesis are computed only if $\flag=1$, and are unspecified otherwise.

\textbf{Case~a:}
Suppose that $\ell_1=0$. Then Lines~\ref{L2X:aux}--\ref{L2X:aux-end} of the algorithm set $c_1=0$, $c_2=s=c$, $\ell_2=\ell_2'=t=\ell$ and $\flag'=1$.
Thus, Lines~\ref{L2X:rows}--\ref{L2X:rows-end} have no effect.
Equation~\eqref{eqn:LX-cols} implies that Lines~\ref{L2X:right-cols-1}--\ref{L2X:right-cols-1-end} set $a_j=\fint_j$ for $j\in\{c,\dotsc,\ell-1\}$.
Lines~\ref{L2X:right-cols-2}--\ref{L2X:right-cols-2-end} have no effect since $\ell_2'=t$.
Equation~\eqref{eqn:LX-rows} implies that Lines~\ref{L2X:last-row}--\ref{L2X:last-row-end} set $a_i=\fint_i$ for $i\in\{0,\dotsc,c-1\}$, and $a_c=f_c$ if $\flag=1$.
Equation~\eqref{eqn:LX-cols} implies that Lines~\ref{L2X:left-cols-1}--\ref{L2X:left-cols-1-end} set $a_i=\fout_i$ for $i\in\{0,\dotsc,c-1\}$.
Finally, Lines~\ref{L2X:left-cols-2}--\ref{L2X:left-cols-2-end} have no effect since $s=c_2$.

\textbf{Case~b:}
Suppose that $\ell_1\neq 0$ and $c_2=0$.
Then Lines~\ref{L2X:aux}--\ref{L2X:aux-end} set $c_1=c/2^{\dg_v}$, $\ell_2'=2^{\dg_v}$, $\flag'=\flag$, $s=0$ and $t=\ell_2$.
Equation~\eqref{eqn:LX-rows} implies that Lines~\ref{L2X:rows}--\ref{L2X:rows-end} set $a_i=\fint_i$ for $i\in\{0,\dotsc,c-1\}$. 
Thus, \eqref{eqn:LX-cols} implies that Lines~\ref{L2X:right-cols-1}--\ref{L2X:right-cols-2-end} set $a_i=\fout_i$ for $i\in\{0,\dotsc,c-1\}$, and $a_{c+j}=\fint_{c+j}$ for $j\in\{0,\dotsc,2^{\dg_v}-1\}$ if $\flag=1$. Equation~\eqref{eqn:LX-rows} implies that Lines~\ref{L2X:last-row}--\ref{L2X:last-row-end} set $a_c=f_c$ if $\flag=1$, and have no effect otherwise.
Lines~\ref{L2X:left-cols-1}--\ref{L2X:left-cols-2-end} have no effect since $s=c_2=0$.

\tikzset{
	highlight/.style={
		densely dotted,
		thick,
		rounded corners
	},
	lxnode/.style={
		scale=0.87,
		inner sep=0,
		outer sep=0,
		anchor=base,
		text depth=5pt,
		text height=2.6ex,
		text width=2.1em,
		align=center,
	},
    lx matrix/.style={
        matrix of math nodes,
		nodes in empty cells,
		left delimiter=(,
		right delimiter=),
		inner sep=0pt,
		column sep=2pt,
		row sep=2pt,
		nodes={lxnode},
    },
    lx subfig/.style={
       	baseline=(m),
		tight background,
		every left delimiter/.style={xshift=.3em},
		every right delimiter/.style={xshift=-.3em},
    }
}

\begin{figure}
	\begin{tabular}{@{}c@{\ \ }c@{}}
		\textbf{Case a} & \textbf{Case b}\\[1ex]
		\begin{tikzpicture}[lx subfig]
			\matrix (m)[lx matrix]{
				f_0  & f_1  & f_2  & \fout_3 & \fout_4 & \fout_5 & \ast & \ast \\
				\ast & \ast & \ast & \ast    & \ast    & \ast    & \ast & \ast \\
				\ast & \ast & \ast & \ast    & \ast    & \ast    & \ast & \ast \\
				\ast & \ast & \ast & \ast    & \ast    & \ast    & \ast & \ast \\
			};
		\end{tikzpicture}
		&
		\begin{tikzpicture}[lx subfig]
			\matrix (m)[lx matrix]{
				f_0        & f_1        & f_2        & f_3        & f_4        & f_5    & f_6    & f_7    \\
				f_8        & f_9        & f_{10}     & f_{11}     & f_{12}     & f_{13} & f_{14} & f_{15} \\
				\fout_{16} & \fout_{17} & \fout_{18} & \fout_{19} & \fout_{20} & \ast   & \ast   & \ast   \\
				\ast       & \ast       & \ast       & \ast       & \ast       & \ast   & \ast   & \ast   \\
			};
		\end{tikzpicture}
		\vspace{1ex}\\
		\multicolumn{2}{c}{(3a)\ Initial contents.}\\[1ex]
		\begin{tikzpicture}[lx subfig]
			\matrix (m)[lx matrix]{
				f_0  & f_1  & f_2  & \fout_3 & \fout_4 & \fout_5 & \ast & \ast \\
				\ast & \ast & \ast & \ast    & \ast    & \ast    & \ast & \ast \\
				\ast & \ast & \ast & \ast    & \ast    & \ast    & \ast & \ast \\
				\ast & \ast & \ast & \ast    & \ast    & \ast    & \ast & \ast \\
			};
		\end{tikzpicture}
		&
		\begin{tikzpicture}[lx subfig]
			\matrix (m)[lx matrix]{
				\fint_0    & \fint_1    & \fint_2    & \fint_3    & \fint_4    & \fint_5    & \fint_6    & \fint_7    \\
				\fint_8    & \fint_9    & \fint_{10} & \fint_{11} & \fint_{12} & \fint_{13} & \fint_{14} & \fint_{15} \\
				\fout_{16} & \fout_{17} & \fout_{18} & \fout_{19} & \fout_{20} & \ast       & \ast       & \ast       \\
				\ast       & \ast       & \ast       & \ast       & \ast       & \ast       & \ast       & \ast       \\
			};
			\begin{scope}[on background layer]
				\draw[highlight] (m-1-1.south west) rectangle (m-1-8.north east);
				\draw[highlight] (m-2-1.south west) rectangle (m-2-8.north east);
			\end{scope}
		\end{tikzpicture}
		\vspace{1ex}\\
		\multicolumn{2}{c}{(3b)\ After Lines~\ref{L2X:rows}--\ref{L2X:rows-end} make recursive calls on the highlighted rows.}\\[1ex]
		\begin{tikzpicture}[lx subfig]
			\matrix (m)[lx matrix]{
				f_0  & f_1  & f_2  & \fint_3 & \fint_4 & \fint_5 & \ast & \ast \\
				\ast & \ast & \ast & \ast    & \ast    & \ast    & \ast & \ast \\
				\ast & \ast & \ast & \ast    & \ast    & \ast    & \ast & \ast \\
				\ast & \ast & \ast & \ast    & \ast    & \ast    & \ast & \ast \\
			};
			\begin{scope}[on background layer]
				\draw[highlight] (m-4-4.south west) rectangle (m-1-4.north east);
				\draw[highlight] (m-4-5.south west) rectangle (m-1-5.north east);
				\draw[highlight] (m-4-6.south west) rectangle (m-1-6.north east);
			\end{scope}
		\end{tikzpicture}
		&
		\begin{tikzpicture}[lx subfig]
			\matrix (m)[lx matrix]{
				\fout_0      & \fout_1      & \fout_2      & \fout_3      & \fout_4      & \fout_5      & \fout_6      & \fout_7      \\
				\fout_8      & \fout_9      & \fout_{10}   & \fout_{11}   & \fout_{12}   & \fout_{13}   & \fout_{14}   & \fout_{15}   \\
				(\fint_{16}) & (\fint_{17}) & (\fint_{18}) & (\fint_{19}) & (\fint_{20}) & (\fint_{21}) & (\fint_{22}) & (\fint_{23}) \\
				\ast         & \ast         & \ast         & \ast         & \ast         & \ast         & \ast         & \ast         \\
			};
			\begin{scope}[on background layer]
				\draw[highlight] (m-4-1.south west) rectangle (m-1-1.north east);
				\draw[highlight] (m-4-2.south west) rectangle (m-1-2.north east);
				\draw[highlight] (m-4-3.south west) rectangle (m-1-3.north east);
				\draw[highlight] (m-4-4.south west) rectangle (m-1-4.north east);
				\draw[highlight] (m-4-5.south west) rectangle (m-1-5.north east);
				\draw[highlight] (m-4-6.south west) rectangle (m-1-6.north east);
				\draw[highlight] (m-4-7.south west) rectangle (m-1-7.north east);
				\draw[highlight] (m-4-8.south west) rectangle (m-1-8.north east);
			\end{scope}
		\end{tikzpicture}
		\vspace{1ex}\\
		\multicolumn{2}{c}{(3c)\ After Lines~\ref{L2X:right-cols-1}--\ref{L2X:right-cols-2-end} make recursive calls on the highlighted columns.}\\[1ex]
		\begin{tikzpicture}[lx subfig]
			\matrix (m)[lx matrix]{
				\fint_0 & \fint_1 & \fint_2 & (f_3) & \ast & \ast & \ast & \ast \\
				\ast    & \ast    & \ast    & \ast  & \ast & \ast & \ast & \ast \\
				\ast    & \ast    & \ast    & \ast  & \ast & \ast & \ast & \ast \\
				\ast    & \ast    & \ast    & \ast  & \ast & \ast & \ast & \ast \\
			};
			\begin{scope}[on background layer]
				\draw[highlight] (m-1-1.south west) rectangle (m-1-8.north east);
			\end{scope}
		\end{tikzpicture}
		&
		\begin{tikzpicture}[lx subfig]
			\matrix (m)[lx matrix]{
				\fout_0  & \fout_1 & \fout_2    & \fout_3    & \fout_4    & \fout_5    & \fout_6    & \fout_7    \\
				\fout_8  & \fout_9 & \fout_{10} & \fout_{11} & \fout_{12} & \fout_{13} & \fout_{14} & \fout_{15} \\
				(f_{16}) & \ast    & \ast       & \ast       & \ast       & \ast       & \ast       & \ast       \\
				\ast     & \ast    & \ast       & \ast       & \ast       & \ast       & \ast       & \ast       \\
			};
			\begin{scope}[on background layer]
				\draw[highlight] (m-3-1.south west) rectangle (m-3-8.north east);
			\end{scope}
		\end{tikzpicture}
		\vspace{1ex}\\
		\multicolumn{2}{c}{(3d)\ After Lines~\ref{L2X:bottom-row} makes a recursive call on the highlighted row.}\\[1ex]
		\begin{tikzpicture}[lx subfig]
			\matrix (m)[lx matrix]{
				\fout_0 & \fout_1 & \fout_2 & (f_3) & \ast & \ast & \ast & \ast \\
				\ast    & \ast    & \ast    & \ast  & \ast & \ast & \ast & \ast \\
				\ast    & \ast    & \ast    & \ast  & \ast & \ast & \ast & \ast \\
				\ast    & \ast    & \ast    & \ast  & \ast & \ast & \ast & \ast \\
			};
			\begin{scope}[on background layer]
				\draw[highlight] (m-4-1.south west) rectangle (m-1-1.north east);
				\draw[highlight] (m-4-2.south west) rectangle (m-1-2.north east);
				\draw[highlight] (m-4-3.south west) rectangle (m-1-3.north east);
			\end{scope}
		\end{tikzpicture}
		&
		\begin{tikzpicture}[lx subfig]
			\matrix (m)[lx matrix]{
				\fout_0  & \fout_1 & \fout_2    & \fout_3    & \fout_4    & \fout_5    & \fout_6    & \fout_7    \\
				\fout_8  & \fout_9 & \fout_{10} & \fout_{11} & \fout_{12} & \fout_{13} & \fout_{14} & \fout_{15} \\
				(f_{16}) & \ast    & \ast       & \ast       & \ast       & \ast       & \ast       & \ast       \\
				\ast     & \ast    & \ast       & \ast       & \ast       & \ast       & \ast       & \ast       \\
			};
		\end{tikzpicture}
		\vspace{1ex}\\
		\multicolumn{2}{c}{(3e)\ After Lines~\ref{L2X:left-cols-1}--\ref{L2X:left-cols-2-end} make recursive calls on the highlighted columns.}
	\end{tabular}
	\caption{Evolution of the vector $(a_0,\dotsc,a_{2^{n_v}-1})$ during Algorithm~\ref{alg:L2X} for $n_v=5$, $\dg_v=3$, $c=3$, $\ell=6$ (Case~a), and $n_v=5$, $\dg_v=3$, $c=16$, $\ell=21$ (Case~b).}\label{fig:L2X-a-and-b}
\end{figure}

\textbf{Case~c:}
Suppose that $\ell_1\neq 0$ and $0<c_2\leq\ell_2$.
Then Lines~\ref{L2X:aux}--\ref{L2X:aux-end} set $\ell_2'=2^{\dg_v}$, $\flag'=1$, $s=c_2$ and $t=\ell_2$.
Equation~\eqref{eqn:LX-rows} implies that Lines~\ref{L2X:rows}--\ref{L2X:rows-end} set $a_i=\fint_i$ for $i\in\{0,\dotsc,2^{\dg_v}c_1-1\}$. 
Thus, \eqref{eqn:LX-cols} implies that Lines~\ref{L2X:right-cols-1}--\ref{L2X:right-cols-2-end} set $a_{2^{\dg_v}i+j}=\fout_{2^{\dg_v}i+j}$ and $a_{2^{\dg_v}c_1+j}=\fint_{2^{\dg_v}c_1+j}$ for $i\in\{0,\dotsc,c_1-1\}$ and $j\in\{c_2,\dotsc,2^{\dg_v}-1\}$.
Equation~\eqref{eqn:LX-rows} implies that Lines~\ref{L2X:last-row}--\ref{L2X:last-row-end} set $a_{2^{\dg_v}c_1+j}=\fint_{2^{\dg_v}c_1+j}$ for $j\in\{0,\dotsc,c_2-1\}$, and $a_c=f_c$ if $\flag=1$.
Equation~\eqref{eqn:LX-cols} implies that Lines~\ref{L2X:left-cols-1}--\ref{L2X:left-cols-1-end} set $a_{2^{\dg_v}i+j}=\fout_{2^{\dg_v}i+j}$ for $i\in\{0,\dotsc,c_1\}$ and $j\in\{0,\dotsc,c_2-1\}$.
Lines~\ref{L2X:left-cols-2}--\ref{L2X:left-cols-2-end} have no effect since $s=c_2$.

\textbf{Case~d:}
Suppose that $\ell_1\neq 0$ and $\ell_2<c_2$.
Then Lines~\ref{L2X:aux}--\ref{L2X:aux-end} set $\ell_2'=2^{\dg_v}$, $\flag'=1$, $s=\ell_2$ and $t=c_2$.
Equation~\eqref{eqn:LX-rows} implies that Lines~\ref{L2X:rows}--\ref{L2X:rows-end} set $a_i=\fint_i$ for $i\in\{0,\dotsc,2^{\dg_v}c_1-1\}$. 
Lines~\ref{L2X:right-cols-1}--\ref{L2X:right-cols-1-end} have no effect since $t=c_2$.
Equation~\eqref{eqn:LX-cols} implies that Lines~\ref{L2X:right-cols-2}--\ref{L2X:right-cols-2-end} set $a_{2^{\dg_v}i+j}=\fout_{2^{\dg_v}i+j}$ and $a_{2^{\dg_v}c_1+j}=\fint_{2^{\dg_v}c_1+j}$ for $i\in\{0,\dotsc,c_1-1\}$ and $j\in\{c_2,\dotsc,2^{\dg_v}-1\}$.
Equation~\eqref{eqn:LX-rows} implies Lines~\ref{L2X:last-row}--\ref{L2X:last-row-end} set $a_{2^{\dg_v}c_1+j}=\fint_{2^{\dg_v}c_1+j}$ for $j\in\{0,\dotsc,c_2-1\}$, and $a_c=f_c$ if $\flag=1$.
Equation~\eqref{eqn:LX-cols} implies that Lines~\ref{L2X:left-cols-1}--\ref{L2X:left-cols-2-end} set $a_{2^{\dg_v}i+j}=\fout_{2^{\dg_v}i+j}$ for $i\in\{0,\dotsc,c_1\}$ and $j\in\{0,\dotsc,c_2-1\}$.

\begin{figure}
	\begin{tabular}{@{}c@{\ \ }c@{}}
		\textbf{Case c} & \textbf{Case d}\\[1ex]
		\begin{tikzpicture}[lx subfig]
			\matrix (m)[lx matrix]{
				f_0        & f_1        & f_2        & f_3        & f_4        & f_5        & f_6        & f_7        \\
				f_8        & f_9        & f_{10}     & f_{11}     & f_{12}     & f_{13}     & f_{14}     & f_{15}     \\
				f_{16}     & f_{17}     & \fout_{18} & \fout_{19} & \fout_{20} & \fout_{21} & \fout_{22} & \fout_{23} \\
				\fout_{24} & \fout_{25} & \fout_{26} & \fout_{27} & \ast       & \ast       & \ast       & \ast       \\
			};
		\end{tikzpicture}
		&
		\begin{tikzpicture}[lx subfig]
			\matrix (m)[lx matrix]{
				f_0        & f_1        & f_2        & f_3        & f_4    & f_5    & f_6        & f_7        \\
				f_8        & f_9        & f_{10}     & f_{11}     & f_{12} & f_{13} & f_{14}     & f_{15}     \\
				f_{16}     & f_{17}     & f_{18}     & f_{19}     & f_{20} & f_{21} & \fout_{22} & \fout_{23} \\
				\fout_{24} & \fout_{25} & \fout_{26} & \fout_{27} & \ast   & \ast   & \ast       & \ast       \\
			};
		\end{tikzpicture}
		\vspace{1ex}\\
		\multicolumn{2}{c}{(4a)\ Initial contents.}\\[1ex]
		\begin{tikzpicture}[lx subfig]
			\matrix (m)[lx matrix]{
				\fint_0    & \fint_1    & \fint_2    & \fint_3    & \fint_4    & \fint_5    & \fint_6    & \fint_7    \\
				\fint_8    & \fint_9    & \fint_{10} & \fint_{11} & \fint_{12} & \fint_{13} & \fint_{14} & \fint_{15} \\
				f_{16}     & f_{17}     & \fout_{18} & \fout_{19} & \fout_{20} & \fout_{21} & \fout_{22} & \fout_{23} \\
				\fout_{24} & \fout_{25} & \fout_{26} & \fout_{27} & \ast       & \ast       & \ast       & \ast       \\
			};
			\begin{scope}[on background layer]
				\draw[highlight] (m-1-1.south west) rectangle (m-1-8.north east);
				\draw[highlight] (m-2-1.south west) rectangle (m-2-8.north east);
			\end{scope}
		\end{tikzpicture}
		&
		\begin{tikzpicture}[lx subfig]
			\matrix (m)[lx matrix]{
				\fint_0    & \fint_1    & \fint_2    & \fint_3    & \fint_4    & \fint_5    & \fint_6    & \fint_7    \\
				\fint_8    & \fint_9    & \fint_{10} & \fint_{11} & \fint_{12} & \fint_{13} & \fint_{14} & \fint_{15} \\
				f_{16}     & f_{17}     & f_{18}     & f_{19}     & f_{20}     & f_{21}     & \fout_{22} & \fout_{23} \\
				\fout_{24} & \fout_{25} & \fout_{26} & \fout_{27} & \ast       & \ast       & \ast       & \ast       \\
			};
			\begin{scope}[on background layer]
				\draw[highlight] (m-1-1.south west) rectangle (m-1-8.north east);
				\draw[highlight] (m-2-1.south west) rectangle (m-2-8.north east);
			\end{scope}
		\end{tikzpicture}
		\vspace{1ex}\\
		\multicolumn{2}{c}{(4b)\ After Lines~\ref{L2X:rows}--\ref{L2X:rows-end} make recursive calls on the highlighted rows.}\\[1ex]
		\begin{tikzpicture}[lx subfig]
			\matrix (m)[lx matrix]{
				\fint_0    & \fint_1    & \fout_2    & \fout_3    & \fout_4    & \fout_5    & \fout_6    & \fout_7    \\
				\fint_8    & \fint_9    & \fout_{10} & \fout_{11} & \fout_{12} & \fout_{13} & \fout_{14} & \fout_{15} \\
				f_{16}     & f_{17}     & \fint_{18} & \fint_{19} & \fint_{20} & \fint_{21} & \fint_{22} & \fint_{23} \\
				\fout_{24} & \fout_{25} & \ast       & \ast       & \ast       & \ast       & \ast       & \ast       \\
			};
			\begin{scope}[on background layer]
				\draw[highlight] (m-4-3.south west) rectangle (m-1-3.north east);
				\draw[highlight] (m-4-4.south west) rectangle (m-1-4.north east);
				\draw[highlight] (m-4-5.south west) rectangle (m-1-5.north east);
				\draw[highlight] (m-4-6.south west) rectangle (m-1-6.north east);
				\draw[highlight] (m-4-7.south west) rectangle (m-1-7.north east);
				\draw[highlight] (m-4-8.south west) rectangle (m-1-8.north east);
			\end{scope}
		\end{tikzpicture}
		&
		\begin{tikzpicture}[lx subfig]
			\matrix (m)[lx matrix]{
				\fint_0    & \fint_1    & \fint_2    & \fint_3    & \fint_4    & \fint_5    & \fout_6    & \fout_7    \\
				\fint_8    & \fint_9    & \fint_{10} & \fint_{11} & \fint_{12} & \fint_{13} & \fout_{14} & \fout_{15} \\
				f_{16}     & f_{17}     & f_{18}     & f_{19}     & f_{20}     & f_{21}     & \fint_{22} & \fint_{23} \\
				\fout_{24} & \fout_{25} & \fout_{26} & \fout_{27} & \ast       & \ast       & \ast       & \ast       \\
			};
			\begin{scope}[on background layer]
				\draw[highlight] (m-4-7.south west) rectangle (m-1-7.north east);
				\draw[highlight] (m-4-8.south west) rectangle (m-1-8.north east);
			\end{scope}
		\end{tikzpicture}
		\vspace{1ex}\\
		\multicolumn{2}{c}{(4c)\ After Lines~\ref{L2X:right-cols-1}--\ref{L2X:right-cols-2-end} make recursive calls on the highlighted columns.}\\[1ex]
		\begin{tikzpicture}[lx subfig]
			\matrix (m)[lx matrix]{
				\fint_0    & \fint_1    & \fout_2    & \fout_3    & \fout_4    & \fout_5    & \fout_6    & \fout_7    \\
				\fint_8    & \fint_9    & \fout_{10} & \fout_{11} & \fout_{12} & \fout_{13} & \fout_{14} & \fout_{15} \\
				\fint_{16} & \fint_{17} & (f_{18})   & \ast       & \ast       & \ast       & \ast       & \ast       \\
				\fout_{24} & \fout_{25} & \ast       & \ast       & \ast       & \ast       & \ast       & \ast       \\
			};
			\begin{scope}[on background layer]
				\draw[highlight] (m-3-1.south west) rectangle (m-3-8.north east);
			\end{scope}
		\end{tikzpicture}
		&
		\begin{tikzpicture}[lx subfig]
			\matrix (m)[lx matrix]{
				\fint_0    & \fint_1    & \fint_2    & \fint_3    & \fint_4    & \fint_5    & \fout_6    & \fout_7    \\
				\fint_8    & \fint_9    & \fint_{10} & \fint_{11} & \fint_{12} & \fint_{13} & \fout_{14} & \fout_{15} \\
				\fint_{16} & \fint_{17} & \fint_{18} & \fint_{19} & \fint_{20} & \fint_{21} & (f_{22})   & \ast       \\
				\fout_{24} & \fout_{25} & \fout_{26} & \fout_{27} & \ast       & \ast       & \ast       & \ast       \\
			};
			\begin{scope}[on background layer]
				\draw[highlight] (m-3-1.south west) rectangle (m-3-8.north east);
			\end{scope}
		\end{tikzpicture}
		\vspace{1ex}\\
		\multicolumn{2}{c}{(4d)\ After Lines~\ref{L2X:bottom-row} makes a recursive call on the highlighted row.}\\[1ex]
		\begin{tikzpicture}[lx subfig]
			\matrix (m)[lx matrix]{
				\fout_0    & \fout_1    & \fout_2    & \fout_3    & \fout_4    & \fout_5    & \fout_6    & \fout_7    \\
				\fout_8    & \fout_9    & \fout_{10} & \fout_{11} & \fout_{12} & \fout_{13} & \fout_{14} & \fout_{15} \\
				\fout_{16} & \fout_{17} & (f_{18})   & \ast       & \ast       & \ast       & \ast       & \ast       \\
				\ast       & \ast       & \ast       & \ast       & \ast       & \ast       & \ast       & \ast       \\
			};
			\begin{scope}[on background layer]
				\draw[highlight] (m-4-1.south west) rectangle (m-1-1.north east);
				\draw[highlight] (m-4-2.south west) rectangle (m-1-2.north east);
			\end{scope}
		\end{tikzpicture}
		&
		\begin{tikzpicture}[lx subfig]
			\matrix (m)[lx matrix]{
				\fout_0    & \fout_1    & \fout_2    & \fout_3    & \fout_4    & \fout_5    & \fout_6    & \fout_7    \\
				\fout_8    & \fout_9    & \fout_{10} & \fout_{11} & \fout_{12} & \fout_{13} & \fout_{14} & \fout_{15} \\
				\fout_{16} & \fout_{17} & \fout_{18} & \fout_{19} & \fout_{20} & \fout_{21} & (f_{22})   & \ast       \\
				\ast       & \ast       & \ast       & \ast       & \ast       & \ast       & \ast       & \ast       \\
			};
			\begin{scope}[on background layer]
				\draw[highlight] (m-4-1.south west) rectangle (m-1-1.north east);
				\draw[highlight] (m-4-2.south west) rectangle (m-1-2.north east);
				\draw[highlight] (m-4-3.south west) rectangle (m-1-3.north east);
				\draw[highlight] (m-4-4.south west) rectangle (m-1-4.north east);
				\draw[highlight] (m-4-5.south west) rectangle (m-1-5.north east);
				\draw[highlight] (m-4-6.south west) rectangle (m-1-6.north east);
			\end{scope}
		\end{tikzpicture}
		\vspace{1ex}\\
		\multicolumn{2}{c}{(4e)\ After Lines~\ref{L2X:left-cols-1}--\ref{L2X:left-cols-2-end} make recursive calls on the highlighted columns.}
	\end{tabular}
	\caption{Evolution of the vector $(a_0,\dotsc,a_{2^{n_v}-1})$ during Algorithm~\ref{alg:L2X} for $n_v=5$, $\dg_v=3$, $c=18$, $\ell=28$ (Case~c), and  $n_v=5$, $\dg_v=3$, $c=22$, $\ell=28$ (Case~d).}\label{fig:L2X-c-and-d}
\end{figure}
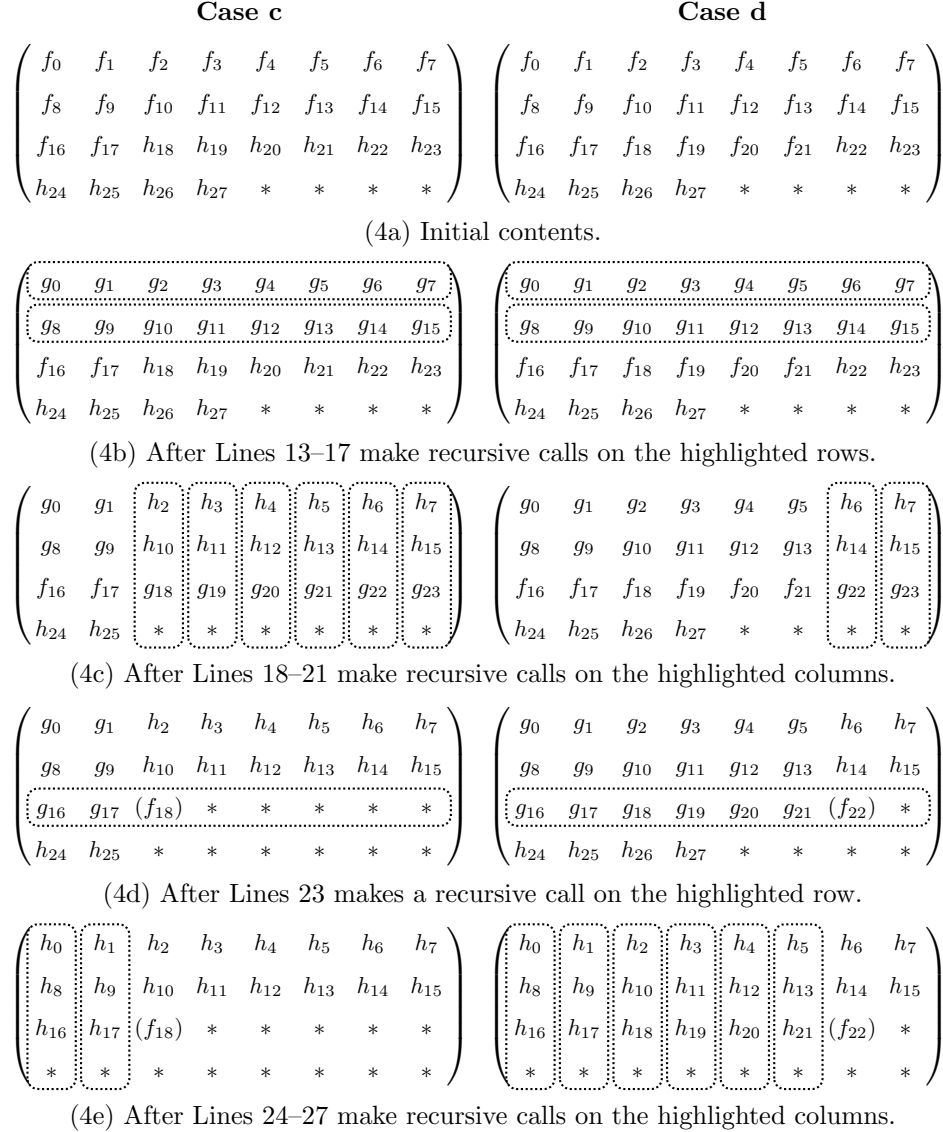

In each of the four cases, Algorithm~\ref{alg:L2X} terminates with $a_i=\fout_i$ for $i\in\{0,\dotsc,c-1\}$, and $a_c=f_c$ if $\flag=1$, as required. Hence, for internal $v\in V$, if the algorithm produces the correct output whenever $v_\alpha$ or $v_\delta$ is given as an input, then it produces the correct output whenever $v$ is given as an input.
\end{proof}

Algorithm~\ref{alg:L2X} requires the same precomputations as the algorithms of Section~\ref{sec:NX}. However, as the length of the vector on which the algorithm operates is no longer tied to the polynomial length $\ell$, the auxiliary space requirements grow to $2^{n_v}-\ell+\bigO(n^2)$ field elements, where the last term accounts for the storage of the vectors $\mu$ and $\nu$, and the precomputed elements $\PushDown_{v}(u,\sigma_{v,i})$. The update to the vector $\mu$ in Line~\ref{L2X:update-shifts} of the algorithm performs $(c_1+\flag'-1)\dg_v=(\floor{c/2^{\dg_v}}+\flag'-1)\dg_v$ additions over all iterations of its containing loop. Therefore, as for the algorithms of Section~\ref{sec:NX}, small values of $\dg_v$ should be avoided when choosing a reduction tree.

\begin{example}\label{ex:L2X} For $\beta$ equal to a Cantor basis of dimension $15$, and inputs $\ell\in\{1,\dotsc,2^{15}\}$, $c=\ell$ and $\flag=0$, Figure~\ref{fig:cantor_lx} shows the maximum and minimum number of additions performed by Algorithm~\ref{alg:L2X} over all possible reduction trees for the basis. The number of multiplications performed by the algorithm for each value of $\ell$, which is independent of the choice of reduction tree, is also shown in the figure. As for Example~\ref{ex:NX}, the number of additions performed for each value of $\ell$ is maximised by the tree with $\image(\dg)\subseteq\{0,1\}$, and minimised by the tree such that $\dg_v=2^{\ceil{\log_2 n_v}-1}$ for all internal $v\in V$.
\begin{figure}
	\includegraphics{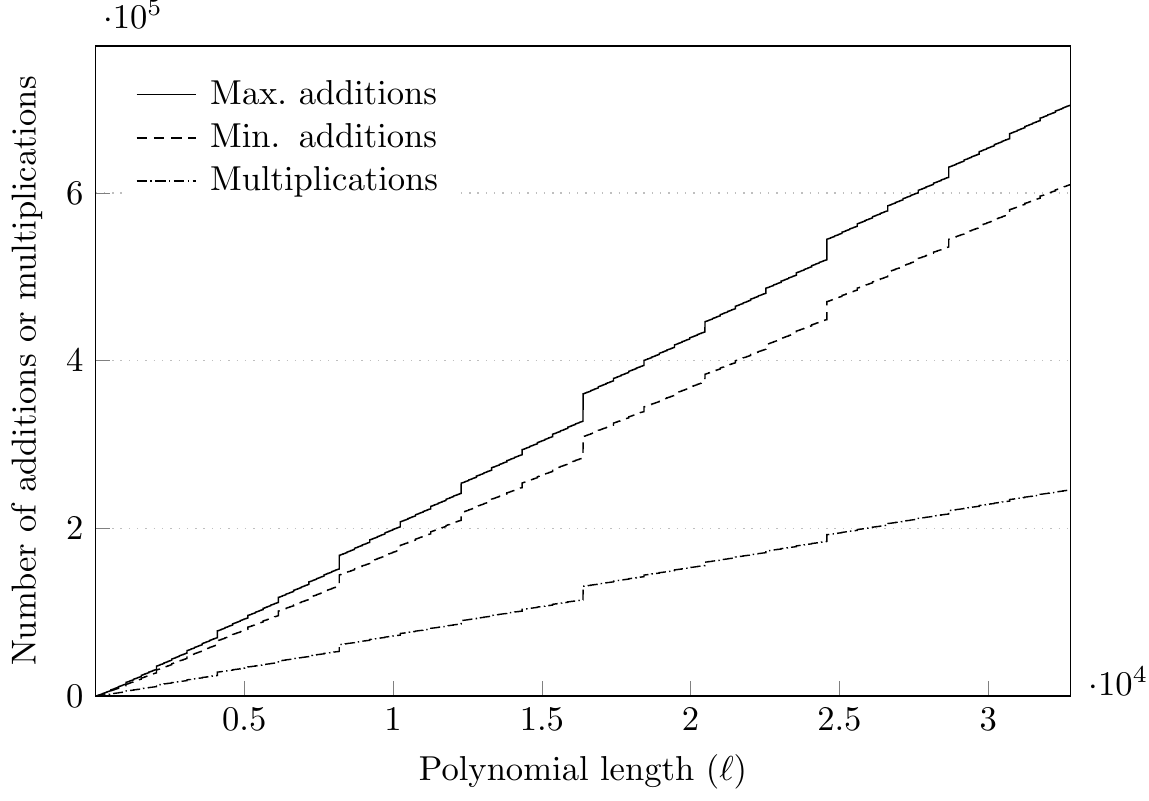}
	\caption{Maximum and minimum number of operations performed by Algorithm~\ref{alg:L2X} (Algorithm~\ref{alg:X2L}) for Cantor bases of dimension $15$, with parameters $c=\ell$ and $\flag=0$ (respectively, $c=\ell$).}
	\label{fig:cantor_lx}
\end{figure}
\end{example}

\begin{theorem}\label{thm:L2X-complexity} Algorithm~\ref{alg:L2X} performs at most
\begin{equation*}
	\min
	\left(
		\frac{c+\flag-1}{2}
		\left(\ceil{\log_2c+\flag}-1\right)
		+\ell-1,
		2^{n_v-1}n_v
	\right)
\end{equation*}
multiplications in $\F$, and at most
\begin{equation*}
	\min
	\left(
		\frac{c+\flag-1}{2}
		\left(3\ceil{\log_2c+\flag}-1\right)
		+\ell-1,
		2^{n_v-1}\left(3n_v-2\right)+1
	\right)
\end{equation*}
additions in $\F$.
\end{theorem}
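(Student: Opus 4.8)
The plan is to establish each of the two quantities appearing inside the two minima by a separate structural induction over the reduction tree: that Algorithm~\ref{alg:L2X} performs at most $2^{n_v-1}n_v$ and at most $\frac{c+\flag-1}{2}(\ceil{\log_2(c+\flag)}-1)+\ell-1$ multiplications in $\F$, and at most $2^{n_v-1}(3n_v-2)+1$ and at most $\frac{c+\flag-1}{2}(3\ceil{\log_2(c+\flag)}-1)+\ell-1$ additions in $\F$; the theorem then follows by taking minima. As in the proof of correctness, for each of the four bounds it suffices to verify the base case ($v$ a leaf) and then to show that if the bound holds whenever $v_\alpha$ or $v_\delta$ is the input vertex, it holds whenever $v$ is.

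For the base case, $n_v=1$ and the admissible triples $(c,\ell,\flag)$ are the seven listed in Table~\ref{tab:L2X-base-case}. Reading the operation counts directly off Lines~\ref{L2X:base}--\ref{L2X:base-end}, Algorithm~\ref{alg:L2X} performs at most one multiplication and at most two additions, the maxima being attained precisely when $c+\flag=\ell=2$; one checks that each of the four bounds evaluates to at least the corresponding count for those triples, and is nonnegative otherwise.

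For the inductive step I would first tally the recursive calls and the shift-update cost for an internal vertex $v$. Writing $c_1=\floor{c/2^{\dg_v}}$, $c_2=c-2^{\dg_v}c_1$, $\flag'=\min(\flag+c_2,1)$, $\ell_1=\floor{\ell/2^{\dg_v}}$, $\ell_2=\ell-2^{\dg_v}\ell_1$, $\ell_2'=\min(2^{\dg_v},\ell)$, $s=\min(c_2,\ell_2)$ and $t=\max(c_2,\ell_2)$, one sees from the algorithm that it makes exactly $c_1$ calls on $v_\alpha$ with parameters $(2^{\dg_v},2^{\dg_v},0)$ (Lines~\ref{L2X:row} and~\ref{L2X:sometimes-last-row}), one further call on $v_\alpha$ with parameters $(c_2,\ell_2',\flag)$ when $\flag'=1$ (Line~\ref{L2X:bottom-row}), exactly $\ell_2'$ calls on $v_\delta$ whose parameters lie in $\{(c_1,\ell_1,\flag'),\,(c_1,\ell_1+1,\flag'),\,(c_1+1,\ell_1,0),\,(c_1+1,\ell_1+1,0)\}$, with multiplicities determined by $s$, $t$, $c_2$ and $\ell_2'$ (Lines~\ref{L2X:right-col-1}--\ref{L2X:left-col-2}), and $(c_1+\flag'-1)\dg_v$ additions over all executions of Line~\ref{L2X:update-shifts}, the latter contributing nothing to the multiplication count. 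For the two bounds that depend only on $n_v$, the step is short: bounding the $c_1$ row calls by the dense bound for $v_\alpha$ (which beats the truncated bound there, since $2^{\dg_v}\geq\dg_v+1$) and every other call by the inductive hypothesis, and using $c_1+\flag'\leq 2^{n_v-\dg_v}$ (verified from $\flag+c\leq 2^{n_v}$ by a short case analysis on $\flag'$) together with $\ell_2'\leq 2^{\dg_v}$, the multiplication sum collapses to $2^{n_v-1}\dg_v+2^{n_v-1}(n_v-\dg_v)=2^{n_v-1}n_v$, while the addition sum is at most $2^{n_v-1}(3n_v-4)+(c_1+\flag')(\dg_v+1)-\dg_v+\ell_2'$, which is at most $2^{n_v-1}(3n_v-2)+1$ exactly because $(2^{n_v-\dg_v}-1)(2^{\dg_v}-\dg_v-1)\geq 0$.

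The main obstacle is the inductive step for the two $(c,\ell,\flag)$-dependent bounds, where the floor/ceiling truncation, the $\min$ inside each per-call bound, and the interplay of $c_2$, $\ell_2$, $s$ and $t$ all come into play at once. Here I would follow the four-way case split used by Harvey~\cite{harvey2009} for the cache-friendly truncated FFT, which is exactly the split into Cases~a--d of the preceding correctness proof: $\ell_1=0$ (where there are no row calls and the bottom-row call on $v_\alpha$ inherits the full bound for $(c,\ell,\flag)$ while the $v_\delta$ calls and shift updates cost nothing); $\ell_1\neq 0$ with $c_2=0$; $\ell_1\neq 0$ with $0<c_2\leq\ell_2$; and $\ell_1\neq 0$ with $\ell_2<c_2$. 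In each case one substitutes the per-call bounds into the tally above, eliminates $c_1,c_2,\ell_1,\ell_2$ via $c=2^{\dg_v}c_1+c_2$ and $\ell=2^{\dg_v}\ell_1+\ell_2$ with $0\leq c_2,\ell_2<2^{\dg_v}$, rewrites the logarithms using the identity $\ceil{\log_2(c_1+\flag')}=\ceil{\log_2(c+\flag)}-\dg_v$ (valid when $c_1\geq 1$; the case $c_1=0$ is subsumed by the $\ell_1=0$ analysis), and then absorbs the remaining lower-order terms — in particular the shift-update cost $(c_1+\flag'-1)\dg_v$, which fits into the ``$3\ceil{\log_2}$'' budget for additions because $2^{\dg_v}\geq 2\dg_v$ — exactly as in the three complexity lemmas proved above for Algorithms~\ref{alg:N2X} and~\ref{alg:X2N}, using the monotonicity of $x\mapsto x/\log_2(x+1)$ for $x\geq 1$. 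I expect Cases~c and~d (those with $c_2\neq 0$) to be the most delicate, since there the bottom-row call, both families of left-column calls and both families of right-column calls are simultaneously present, and $\ceil{\log_2(c_2+\flag)}$ must be related carefully to $\ceil{\log_2(c+\flag)}$.
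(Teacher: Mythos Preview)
Your overall architecture---split the theorem into four separate bounds and prove each by structural induction on the reduction tree, with the base case read off Table~\ref{tab:L2X-base-case}---is exactly what the paper does, and your treatment of the two dense bounds $2^{n_v-1}n_v$ and $2^{n_v-1}(3n_v-2)+1$ is essentially identical to the paper's Lemmas~\ref{lem:L2X-multiplications} and~\ref{lem:L2X-additions} (the paper phrases $c_1+\flag'\leq 2^{n_v-\dg_v}$ as ``$c_1\leq 2^{n_v-\dg_v}$ with equality implying $\flag'=0$'', but it is the same inequality).

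For the two $(c,\ell,\flag)$-dependent bounds you diverge from the paper. You propose to follow the four correctness-proof Cases~a--d; the paper instead bounds the row calls (Lines~\ref{L2X:rows}--\ref{L2X:rows-end}) by the \emph{dense} bound just established, writes down one estimate for each of the four blocks of recursive calls, and then splits only on whether $\flag'=1$ or $\flag'=0$. This two-way split collapses your Cases~b--d into a single computation: the quantities $s$, $t$, $\ell_2$ and $\ell_2'$ never need to be tracked individually because the right-column contributions are bounded uniformly by $(2^{\dg_v}-c_2)$ times the per-call bound and the left-column contributions by $c_2$ times theirs, and the interaction terms cancel against the $(c-c_2)\dg_v/2$ coming from the rows. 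The delicate Cases~c and~d you anticipate simply do not arise. Your route would still reach the goal, but with substantially more bookkeeping and no gain.

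One genuine slip: you assert that ``the case $c_1=0$ is subsumed by the $\ell_1=0$ analysis''. It is not---one can have $c_1=0$ while $\ell_1>0$ (take $c<2^{\dg_v}\leq\ell$), which lands in your Case~c or~d, not Case~a. This is easily repaired (when $c_1=0$ there are no row calls and $c_2=c$, so the estimate collapses immediately), but as stated the sketch has a hole there. The monotonicity of $x/\log_2(x+1)$, which you invoke by analogy with the Cantor-basis lemma for Algorithms~\ref{alg:N2X}--\ref{alg:X2N}, is not needed anywhere in the paper's proof of Theorem~\ref{thm:L2X-complexity}; the inequalities close with nothing subtler than $2^{\dg_v}\geq\dg_v+1$.
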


We split the proof of Theorem~\ref{thm:L2X-complexity} into four lemmas, one for each bound. It is readily verified that the bounds hold if the input vertex is a leaf. Therefore, as $(V,E)$ is a full binary tree, it is sufficient to show for each bound that if $v\in V$ is an internal vertex such that the bound holds whenever the input vertex is $v_\alpha$ or $v_\delta$, then the bound holds whenever $v$ is the input vertex.

\begin{lemma}\label{lem:L2X-multiplications} Algorithm~\ref{alg:L2X} performs at most $2^{n_v-1}n_v$ multiplications in $\F$.
\end{lemma}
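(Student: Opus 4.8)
The plan is to prove the bound by induction on the reduction tree, as in the scheme described before the statement: the bound holds when the input vertex is a leaf, so it suffices to show that if $v\in V$ is internal and the bound holds whenever $v_\alpha$ or $v_\delta$ is the input vertex, then it holds whenever $v$ is the input vertex. Since $n_{v_\alpha}=\dg_v$ and $n_{v_\delta}=n_v-\dg_v$, the induction hypothesis says that every recursive call made by Algorithm~\ref{alg:L2X} on $v_\alpha$ performs at most $2^{\dg_v-1}\dg_v$ multiplications, and every recursive call on $v_\delta$ performs at most $2^{n_v-\dg_v-1}(n_v-\dg_v)$ multiplications, irrespective of the other parameters passed to the call. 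As Lines~\ref{L2X:base}--\ref{L2X:base-end} are not executed for internal $v$, and the remaining lines perform no multiplications in $\F$ other than those occurring within recursive calls, everything reduces to counting the recursive calls on each of $v_\alpha$ and $v_\delta$.

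First I would count the calls on $v_\alpha$: Line~\ref{L2X:row} is invoked $\max(0,c_1+\flag'-1)$ times by the loop of Line~\ref{L2X:rows}, Line~\ref{L2X:sometimes-last-row} is invoked once when $\flag'=0$, and Line~\ref{L2X:bottom-row} once when $\flag'=1$. Since $\flag'=\min(\flag+c_2,1)=0$ forces $\flag=c_2=0$, hence $c=2^{\dg_v}c_1$ and therefore $c_1\geq 1$ (using $1\leq\flag+c$), these three contributions always sum to exactly $c_1+\flag'$. The calls on $v_\delta$, coming from Lines~\ref{L2X:right-col-1}, \ref{L2X:right-col-2}, \ref{L2X:left-col-1} and~\ref{L2X:left-col-2}, number $(t-c_2)+(\ell_2'-t)+s+(c_2-s)=\ell_2'$ in total.

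Next I would bound these two counts by $2^{n_v-\dg_v}$ and $2^{\dg_v}$ respectively. The bound $\ell_2'\leq 2^{\dg_v}$ is immediate from $\ell_2'=\min(2^{\dg_v},\ell)$. For the bound on $c_1+\flag'$ I would invoke the precondition $\flag+c\leq 2^{n_v}$: if $\flag'=0$ then $c_1+\flag'=c_1=c/2^{\dg_v}\leq 2^{n_v-\dg_v}$; if $\flag'=1$ then $\flag=1$ or $c_2\geq 1$, so $2^{\dg_v}c_1=c-c_2\leq 2^{n_v}-1$, and since $2^{\dg_v}c_1$ is a multiple of $2^{\dg_v}$ with $\dg_v\geq 1$ it is at most $2^{n_v}-2^{\dg_v}$, whence $c_1+\flag'=c_1+1\leq 2^{n_v-\dg_v}$. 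Combining these with the per-call bounds from the induction hypothesis, the total number of multiplications performed when $v$ is the input vertex is at most
\begin{equation*}
	(c_1+\flag')\,2^{\dg_v-1}\dg_v+\ell_2'\,2^{n_v-\dg_v-1}(n_v-\dg_v)
	\leq 2^{n_v-1}\dg_v+2^{n_v-1}(n_v-\dg_v)
	=2^{n_v-1}n_v,
\end{equation*}
which closes the induction.

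I expect the only real obstacle to be the bookkeeping: pinning down the exact per-line call counts --- in particular the interaction of the possibly-empty loop of Line~\ref{L2X:rows} with the two guarded single calls of Lines~\ref{L2X:sometimes-last-row} and~\ref{L2X:bottom-row}, together with the implication $\flag'=0\Rightarrow c_1\geq 1$ that makes their sum equal $c_1+\flag'$ in every valid case --- and then checking that the precondition $\flag+c\leq 2^{n_v}$, with $\dg_v\geq 1$, is exactly what forces $c_1+\flag'\leq 2^{n_v-\dg_v}$. Beyond this there is no genuine analytic difficulty.
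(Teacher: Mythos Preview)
Your proposal is correct and follows essentially the same approach as the paper: both proceed by induction on the tree, bound every recursive call on $v_\alpha$ by $2^{\dg_v-1}\dg_v$ and every call on $v_\delta$ by $2^{n_v-\dg_v-1}(n_v-\dg_v)$, observe that there are at most $2^{n_v-\dg_v}$ calls on $v_\alpha$ (via $c_1+\flag'\leq 2^{n_v-\dg_v}$, using that $c_1=2^{n_v-\dg_v}$ forces $c_2=\flag=\flag'=0$) and at most $2^{\dg_v}$ calls on $v_\delta$ (via $\ell_2'\leq 2^{\dg_v}$), and sum. The only cosmetic difference is that the paper groups the contributions by line ranges rather than first totaling the call counts.
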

\begin{proof} Suppose that Algorithm~\ref{alg:L2X} is called on an internal vertex $v\in V$ such that the bound of the lemma holds whenever the input vertex is $v_\alpha$ or $v_\delta$. Then Lines~\ref{L2X:rows}--\ref{L2X:rows-end} of the algorithm perform at most $c_12^{\dg_v-1}\dg_v\leq(2^{n_v-\dg_v}-\flag')2^{\dg_v-1}\dg_v$ multiplications, since $c_1\leq 2^{n_v-\dg_v}$ with equality implying $c_2=\flag=\flag'=0$. If $\ell\geq 2^{\dg_v}$, then $\ell_2'=2^{\dg_v}\geq t\geq c_2$. If $\ell<2^{\dg_v}$, then $\ell=\ell_2=\ell_2'\geq c=c_2$. Thus, the inequalities $c_2\leq t\leq\ell_2'\leq 2^{\dg_v}$ hold in either case. It follows that Lines~\ref{L2X:right-cols-1}--\ref{L2X:right-cols-2-end} perform at most $(2^{\dg_v}-c_2)2^{n_v-\dg_v-1}(n_v-\dg_v)$ multiplications. Lines~\ref{L2X:last-row}--\ref{L2X:last-row-end} performs at most $\flag'2^{\dg_v-1}\dg_v$ multiplications, while Lines~\ref{L2X:left-cols-1}--\ref{L2X:left-cols-2-end} perform at most $c_22^{n_v-\dg_v-1}(n_v-\dg_v)$. Summing these contributions, it follows that Algorithm~\ref{alg:L2X} performs at most $2^{n_v-1}n_v$ multiplications. 
\end{proof}

\begin{lemma} Algorithm~\ref{alg:L2X} performs at most
\begin{equation}\label{eqn:L2X-mults-small-bnd}
	\frac{c+\flag-1}{2}
	\left(\ceil{\log_2c+\flag}-1\right)
	+\ell-1
\end{equation}
multiplications in $\F$.
\end{lemma}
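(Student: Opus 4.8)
The base case, where the input vertex is a leaf, is immediate from Table~\ref{tab:L2X-base-case}, so by the reduction already in place it suffices to treat an internal input vertex $v$ under the assumption that the bound \eqref{eqn:L2X-mults-small-bnd}, henceforth abbreviated $M(c,\ell,\flag)$, bounds the multiplication count of every recursive call made on $v_\alpha$ or $v_\delta$. The plan is first to tabulate those calls. Inspection of Lines~\ref{L2X:rows}--\ref{L2X:left-cols-2-end} shows that Algorithm~\ref{alg:L2X} makes exactly $c_1$ calls on $v_\alpha$ with parameters $(2^{\dg_v},2^{\dg_v},0)$ (the loop of Lines~\ref{L2X:rows}--\ref{L2X:rows-end} together with Line~\ref{L2X:sometimes-last-row} when $\flag'=0$); when $\flag'=1$, one further call on $v_\alpha$ with parameters $(c_2,\ell_2',\flag)$ (Line~\ref{L2X:bottom-row}); and exactly $\ell_2'$ calls on $v_\delta$, of which the $c_2$ handling an index $j<c_2$ have $c$-parameter $c_1+1$ and flag $0$, the remaining $\ell_2'-c_2$ have $c$-parameter $c_1$ and flag $\flag'$, and the call handling index $j$ has $\ell$-parameter $\floor{(\ell-1-j)/2^{\dg_v}}+1$.

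The key observation is that the $c_1$ calls with parameters $(2^{\dg_v},2^{\dg_v},0)$ should be bounded not by $M$ but by the \emph{already established} Lemma~\ref{lem:L2X-multiplications}, which gives at most $2^{\dg_v-1}\dg_v$ multiplications each, while the bottom-row call and the $\ell_2'$ column calls are bounded by $M$. Two simplifications follow: summing the term $\ell-1$ of $M$ over the $\ell_2'$ column calls yields $\sum_{j=0}^{\ell_2'-1}\floor{(\ell-1-j)/2^{\dg_v}}=\ell-\ell_2'$, and the multiplication term of $M$ is the same for every column call, equal to $\tfrac{c_1}{2}(\ceil{\log_2(c_1+1)}-1)$ when $\flag'=1$ (noting $c_1+\flag'=c_1+1$ matches the $c$-parameter plus flag of both the left and the right columns) and to $\tfrac{c_1-1}{2}(\ceil{\log_2 c_1}-1)$ when $\flag'=0$ (in which case $c_2=\flag=0$, so $c_1\geq 1$). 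I would then split into cases. If $c_1=0$, then $\flag'=1$, the column calls contribute no multiplications, $c_2=c$, and the total is at most $M(c,\ell_2',\flag)+(\ell-\ell_2')=M(c,\ell,\flag)$. If $c_1\geq 1$, then $\ell\geq c\geq 2^{\dg_v}$, so $\ell_2'=2^{\dg_v}$, and the subcases $\flag'=1$ and $\flag'=0$ remain.

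In both subcases with $c_1\geq 1$ the bound reduces to a bounded computation once $\ceil{\log_2(c+\flag)}$ is decomposed: when $\flag'=1$ one has $2^{\dg_v}c_1<c+\flag\leq 2^{\dg_v}(c_1+1)$, on which range of integers $\ceil{\log_2 y}$ is constant, so $\ceil{\log_2(c+\flag)}=\dg_v+\ceil{\log_2(c_1+1)}$; when $\flag'=0$ one has $c+\flag=2^{\dg_v}c_1$, so $\ceil{\log_2(c+\flag)}=\dg_v+\ceil{\log_2 c_1}$. Substituting these, and bounding the multiplication term of the bottom-row call using $\ceil{\log_2(c_2+\flag)}\leq\dg_v$, the target inequality collapses when $\flag'=1$ to $\tfrac{c_2+\flag-1}{2}\ceil{\log_2(c_1+1)}\geq 0$, which holds since $c_2+\flag\geq 1$, and when $\flag'=0$ to $(\ceil{\log_2 c_1}+1)(2^{\dg_v-1}-\tfrac12)\geq\tfrac{\dg_v}{2}$, which follows from $2^{\dg_v}\geq\dg_v+1$. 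I expect the main obstacle to be the bookkeeping of the first step — the exact number and parameters of the recursive calls in each regime, together with the degenerate behaviour when $\ell_1=0$ or $c_1=0$ — rather than the inequalities themselves; a secondary but essential point is to bound the $c_1$ full-length sub-calls via Lemma~\ref{lem:L2X-multiplications}, since using $M$ for them fails to close the induction once $\dg_v\geq 2$.
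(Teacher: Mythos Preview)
Your proposal is correct and follows essentially the same approach as the paper: both bound the $c_1$ full-length $v_\alpha$ calls by Lemma~\ref{lem:L2X-multiplications} rather than by the inductive hypothesis, bound the remaining calls by $M$, and split according to $\flag'$ using the identity $\ceil{\log_2(c+\flag)}=\dg_v+\ceil{\log_2(c_1+1)}$ (respectively $\dg_v+\ceil{\log_2 c_1}$). Your organisation is a little cleaner---handling $c_1=0$ separately and then observing that all $\ell_2'$ column calls share the same value of $c+\flag$ so that their multiplication terms collapse---whereas the paper keeps the four column groups separate and instead splits the $\flag'=1$ case into three sub-subcases; but the underlying computation and the decisive use of Lemma~\ref{lem:L2X-multiplications} are identical.
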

\begin{proof} Suppose that Algorithm~\ref{alg:L2X} is called on an internal vertex $v\in V$ such that the bound of the lemma holds whenever the input vertex is $v_\alpha$ or $v_\delta$. Then Lemma~\ref{lem:L2X-multiplications} implies that Lines~\ref{L2X:rows}--\ref{L2X:rows-end} perform at most
\begin{equation}\label{eqn:L2X-rows-mults}
	c_12^{\dg_v-1}\dg_v=\frac{c-c_2}{2}\dg_v
\end{equation}
multiplications. As $c_2\leq t\leq\ell_2'\leq 2^{\dg_v}$, Lines~\ref{L2X:right-cols-1}--\ref{L2X:right-cols-2-end} perform at most
\begin{equation}\label{eqn:L2X-right-cols-mults}
	\left(2^{\dg_v}-c_2\right)
	\frac{c_1+\flag'-1}{2}
	\left(\ceil{\log_2c_1+\flag'}-1\right)
	+\left(\ell_2'-c_2\right)
	\left(\ell_1-1\right)
	+t-c_2
\end{equation}
multiplications. Lines~\ref{L2X:last-row}--\ref{L2X:last-row-end} perform at most
\begin{equation}\label{eqn:L2X-last-row-mults}
	\flag'
	\left(
		\frac{c_2+\flag-1}{2}
		\left(\ceil{\log_2\max(c_2+\flag,1)}-1\right)
		+\ell_2'-1
	\right)
\end{equation}
multiplications, while Lines~\ref{L2X:left-cols-1}--\ref{L2X:left-cols-2-end} perform at most
\begin{equation}\label{eqn:L2X-left-cols-mults}
	c_2
	\left(
		\frac{c_1}{2}
		\left(\ceil{\log_2c_1+1}-1\right)
		+\ell_1-1
	\right)
	+s
\end{equation}
multiplications. We show that the sum of these contributions is bounded by~\eqref{eqn:L2X-mults-small-bnd}.

Suppose that $\flag'=1$. Then $1\leq c_2+\flag\leq c+\flag$. Thus, the sum of \eqref{eqn:L2X-rows-mults}--\eqref{eqn:L2X-left-cols-mults} in this case is at most
\begin{equation}\label{eqn:L2X-mults-flag-1}
	\frac{c-c_2}{2}
	\left(\ceil{\log_2c_1+1}+\dg_v-1\right)
	+\frac{c_2+\flag-1}{2}
	\left(\ceil{\log_2c+\flag}-1\right)
	+\ell-1,
\end{equation}
since
\begin{equation*}
	\ell_2'\left(\ell_1-1\right)
	+s+t-c_2
	=\ell_2'\left(\ell_1-1\right)+\ell_2
	\leq 2^{\dg_v}\ell_1+\ell_2-\ell_2'
	=\ell-\ell_2'.
\end{equation*}
If $c_1=0$, then $c_2=c$. If $c_1\neq 0$ and $c_2\neq 0$, then
\begin{equation*}
	\ceil{\log_2c_1+1}
	=\ceil{\log_2\ceil{c/2^{\dg_v}}}
	=\ceil{\log_2 c}-\dg_v
	\leq\ceil{\log_2 c+\flag}-\dg_v.
\end{equation*}
If $c_1\neq 0$ and $c_2=0$, then $\flag=1$ since $c_2+\flag\geq 1$, and
\begin{equation*}
	\ceil{\log_2c_1+1}
	=\ceil{\log_2 c/2^{\dg_v}+1}
	=\ceil{\log_2 c+1}-\dg_v
	=\ceil{\log_2 c+b}-\dg_v.
\end{equation*}
In all three of these cases, substituting into~\eqref{eqn:L2X-mults-flag-1} yields~\eqref{eqn:L2X-mults-small-bnd}.

Suppose now that $\flag'=0$. Then $c_2=0$ and $\flag=0$. As $c+\flag\geq 1$, it follows that $c_1=c/2^{\dg_v}\neq 0$. Thus, $\ell_2'=2^{\dg_v}$, since $\ell\geq c\geq 2^{\dg_v}$. Therefore, the sum of \eqref{eqn:L2X-rows-mults}--\eqref{eqn:L2X-left-cols-mults} in this case is at most
\begin{multline*}
	\frac{c}{2}\dg_v
	+\frac{c-2^{\dg_v}}{2}\left(\ceil{\log_2c_1}-1\right)
	+\ell-2^{\dg_v}\\
	\begin{aligned}
		&=
		\frac{c}{2}\dg_v
		+\frac{c-1}{2}\left(\ceil{\log_2c}-\dg_v-1\right)
		-\frac{2^{\dg_v}-1}{2}\left(\ceil{\log_2c_1}-1\right)
		+\ell-2^{\dg_v}\\
		&\leq
		\frac{c}{2}\dg_v
		+\frac{c-1}{2}\left(\ceil{\log_2c}-\dg_v-1\right)
		+\frac{2^{\dg_v}-1}{2}
		+\ell-2^{\dg_v}\\
		&=
		\frac{c-1}{2}
		\left(\ceil{\log_2c}-1\right)
		+\ell-1
		-\frac{1}{2}\left(2^{\dg_v}-\dg_v-1\right)\\
		&\leq
		\frac{c+\flag-1}{2}
		\left(\ceil{\log_2c+\flag}-1\right)
		+\ell-1,
	\end{aligned}
\end{multline*}
as required.
\end{proof}

\begin{lemma}\label{lem:L2X-additions} Algorithm~\ref{alg:L2X} performs at most $2^{n_v-1}\left(3n_v-2\right)+1$ additions in $\F$.
\end{lemma}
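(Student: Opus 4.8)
The plan is to argue by induction on the structure of the reduction tree $(V,E)$, exactly as in the proof of Lemma~\ref{lem:L2X-multiplications}. As noted just before the lemmas, the bound $2^{n_v-1}(3n_v-2)+1$ equals $2$ when $v$ is a leaf, and Table~\ref{tab:L2X-base-case} shows that Lines~\ref{L2X:base}--\ref{L2X:base-end} perform at most two additions; so the base case is already in hand. I would therefore fix an internal vertex $v\in V$ and assume that the bound holds whenever $v_\alpha$ or $v_\delta$ is the input vertex. Since $n_{v_\alpha}=\dg_v$ and $n_{v_\delta}=n_v-\dg_v$, this says that each recursive call on $v_\alpha$ performs at most $2^{\dg_v-1}(3\dg_v-2)+1$ additions, and each recursive call on $v_\delta$ performs at most $2^{n_v-\dg_v-1}(3(n_v-\dg_v)-2)+1$ additions, uniformly in the parameters $c$, $\ell$ and $\flag$.

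Next I would tally the recursive calls block by block, reusing the counting already carried out in the proof of Lemma~\ref{lem:L2X-multiplications}: Lines~\ref{L2X:rows}--\ref{L2X:rows-end} make $c_1$ calls on $v_\alpha$, Lines~\ref{L2X:last-row}--\ref{L2X:last-row-end} make $\flag'$ calls on $v_\alpha$, and Lines~\ref{L2X:right-cols-1}--\ref{L2X:right-cols-2-end} together with Lines~\ref{L2X:left-cols-1}--\ref{L2X:left-cols-2-end} make $(\ell_2'-c_2)+c_2=\ell_2'$ calls on $v_\delta$. The one ingredient absent from the multiplication count is the update of $\mu$ in Line~\ref{L2X:update-shifts}, which costs $\dg_v$ additions for each of the $c_1+\flag'-1$ iterations of the loop of Lines~\ref{L2X:rows}--\ref{L2X:rows-end}; here one checks $c_1+\flag'-1\geq 0$, since $c_1=0$ forces $c<2^{\dg_v}$ and hence $\flag'=\min(\flag+c,1)=1$. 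Collecting these contributions, Algorithm~\ref{alg:L2X} performs at most
\begin{equation*}
	(c_1+\flag')\left(2^{\dg_v-1}(3\dg_v-2)+1\right)
	+(c_1+\flag'-1)\dg_v
	+\ell_2'\left(2^{n_v-\dg_v-1}(3(n_v-\dg_v)-2)+1\right)
\end{equation*}
additions. I would then substitute $c_1+\flag'\leq 2^{n_v-\dg_v}$ (valid because $c_1=2^{n_v-\dg_v}$ forces $c=2^{n_v}$, hence $c_2=\flag=\flag'=0$) and $\ell_2'=\min(2^{\dg_v},\ell)\leq 2^{\dg_v}$, expand, and collect the coefficient of $2^{n_v-1}$ to bound the above by
\begin{equation*}
	2^{n_v-1}(3n_v-4)+2^{n_v-\dg_v}(\dg_v+1)+2^{\dg_v}-\dg_v.
\end{equation*}

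Finally I would close the induction with the elementary identity
\begin{equation*}
	2^{n_v-\dg_v}(\dg_v+1)+2^{\dg_v}-\dg_v-\left(2^{n_v}+1\right)
	=\left(2^{\dg_v}-\dg_v-1\right)\left(1-2^{n_v-\dg_v}\right),
\end{equation*}
whose right-hand side is at most zero because $1\leq\dg_v\leq n_v-1$ makes the first factor nonnegative and the second factor at most $-1$; hence the quantity above is at most $2^{n_v-1}(3n_v-4)+2^{n_v}+1=2^{n_v-1}(3n_v-2)+1$, as required. I expect the main obstacle to be the careful bookkeeping behind the block-by-block count of recursive calls --- in particular, confirming the number of iterations of each loop in the boundary regimes ($c_1=0$, empty loops, and the two values of $\flag'$) and verifying that every recursive call receives a length parameter at most $2^{\dg_v}$ (for calls on $v_\alpha$) or at most $2^{n_v-\dg_v}$ (for calls on $v_\delta$), the latter being delicate only for the length-$(\ell_1+1)$ column calls, which are made only when $\ell_2\geq 1$ and hence when $\ell_1+1\leq 2^{n_v-\dg_v}$. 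I would dispatch these points by appealing to the correctness theorem for Algorithm~\ref{alg:L2X} together with its four-case analysis, which already records exactly which recursive calls are performed and with which parameters; the remainder is the routine algebra leading to the factorisation above.
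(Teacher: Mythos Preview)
Your proposal is correct and follows essentially the same approach as the paper's proof: induct on the tree, count the recursive calls block by block ($c_1+\flag'$ calls on $v_\alpha$, $\ell_2'$ calls on $v_\delta$, plus $(c_1+\flag'-1)\dg_v$ additions for the $\mu$-updates), bound $c_1+\flag'\leq 2^{n_v-\dg_v}$ and $\ell_2'\leq 2^{\dg_v}$, and finish with the factorisation $(2^{\dg_v}-\dg_v-1)(2^{n_v-\dg_v}-1)\geq 0$. The paper presents the same computation with the blocks bounded separately before summing, arriving at the identical final expression and factorisation.
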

\begin{proof} Suppose that Algorithm~\ref{alg:L2X} is called on an internal vertex $v\in V$ such that the bound of the lemma holds whenever the input vertex is $v_\alpha$ or $v_\delta$. Then Lines~\ref{L2X:rows}--\ref{L2X:rows-end} of the algorithm perform at most
\begin{multline*}
	c_1
	\left(
		2^{\dg_v-1}
		\left(3\dg_v-2\right)
		+1
		+\dg_v
	\right)
	-\left(1-\flag'\right)\dg_v\\
	\leq
	\left(2^{n_v-\dg_v}-\flag'\right)
	\left(
		2^{\dg_v-1}
		\left(3\dg_v-2\right)
		+1
	\right)
	+\left(2^{n_v-\dg_v}-1\right)\dg_v
\end{multline*}
additions, since $c_1\leq 2^{n_v-\dg_v}$ with equality implying that $c_2=\flag=\flag'=0$. As $c_2\leq t\leq\ell_2'\leq 2^{\dg_v}$, Lines~\ref{L2X:right-cols-1}--\ref{L2X:right-cols-2-end} perform at most $(2^{\dg_v}-c_2)(2^{n_v-\dg_v-1}(3(n_v-\dg_v)-2)+1)$ additions. Lines~\ref{L2X:last-row}--\ref{L2X:last-row-end} perform at most $\flag'(2^{\dg_v-1}(3\dg_v-2)+1)$ additions, while Lines~\ref{L2X:left-cols-1}--\ref{L2X:left-cols-2-end} perform at most $c_2(2^{n_v-\dg_v-1}(3(n_v-\dg_v)-2)+1)$ additions. Summing these contributions, it follows that Algorithm~\ref{alg:L2X} performs at most
\begin{equation*}
	2^{n_v-1}(3n_v-2)+1
	-\left(2^{\dg_v}-\dg_v-1\right)\left(2^{n_v-\dg_v}-1\right)
	\leq
	2^{n_v-1}(3n_v-2)+1
\end{equation*}
additions.
\end{proof}

\begin{lemma} Algorithm~\ref{alg:L2X} performs at most
\begin{equation}\label{eqn:L2X-adds-small-bnd}
	\frac{c+\flag-1}{2}
	\left(3\ceil{\log_2c+\flag}-1\right)
	+\ell-1
\end{equation}
additions in $\F$.
\end{lemma}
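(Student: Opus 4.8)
The plan is to follow the pattern of the preceding three lemmas and argue by structural induction on the reduction tree, so that by the reduction already noted before Lemma~\ref{lem:L2X-multiplications} it suffices to handle an internal input vertex $v$ under the hypothesis that the bound \eqref{eqn:L2X-adds-small-bnd} holds whenever $v_\alpha$ or $v_\delta$ is the input vertex; the leaf case is dispatched as already observed, since for $n_v=1$ one has $\ell\in\{1,2\}$ and $c+\flag\in\{1,2\}$, and \eqref{eqn:L2X-adds-small-bnd} then evaluates to $\ell-1$ when $c+\flag=1$ and to $\ell$ when $c+\flag=2$, which a row-by-row inspection of Table~\ref{tab:L2X-base-case} confirms is never exceeded by Lines~\ref{L2X:base}--\ref{L2X:base-end}.

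For the inductive step I would tally the additions block by block, exactly as in the proof of the multiplication bound \eqref{eqn:L2X-mults-small-bnd} but now also charging the $\mu$-updates. Lines~\ref{L2X:rows}--\ref{L2X:rows-end} make $c_1$ recursive calls of length $2^{\dg_v}$ (parameters $c=\ell=2^{\dg_v}$), for which I use the coarser bound of Lemma~\ref{lem:L2X-additions} rather than \eqref{eqn:L2X-adds-small-bnd}, together with the $(c_1+\flag'-1)\dg_v$ additions that Line~\ref{L2X:update-shifts} performs over all iterations of its loop. Lines~\ref{L2X:right-cols-1}--\ref{L2X:right-cols-2-end} and Lines~\ref{L2X:left-cols-1}--\ref{L2X:left-cols-2-end} make column recursions to $v_\delta$, and Lines~\ref{L2X:last-row}--\ref{L2X:last-row-end} a single row recursion to $v_\alpha$; for all of these I invoke the inductive bound, controlling the number of columns via $c_2\le t\le\ell_2'\le 2^{\dg_v}$ and collecting the affine (non-logarithmic) contributions using the telescoping identity $\ell_2'(\ell_1-1)+s+t-c_2=\ell_2+\ell_2'(\ell_1-1)\le \ell-\ell_2'$, so that together with the $\ell_2'-1$ term from the last-row call they assemble into the required $+\ell-1$.

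Summing the four contributions and splitting into the cases $\flag'=1$ and $\flag'=0$ is the step I expect to be the main obstacle. As in the multiplication proof, when $\flag'=1$ one uses $1\le c_2+\flag\le c+\flag$ and the subcase analysis $c_1=0$, or $c_1\neq 0$ with $c_2\neq 0$, or $c_1\neq 0$ with $c_2=0$ (which forces $\flag=1$), to rewrite $\ceil{\log_2 c_1+\flag'}$ as $\ceil{\log_2 c+\flag}-\dg_v$; when $\flag'=0$ one has $c_2=\flag=0$, $c_1=c/2^{\dg_v}\neq 0$ and $\ell_2'=2^{\dg_v}$, and one rewrites $\ceil{\log_2 c_1}=\ceil{\log_2 c}-\dg_v$. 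The delicate point is that the coefficient $3$ in $3\ceil{\log_2 c+\flag}$ is essentially tight: the $3\dg_v$ coming from each of the $c_1$ length-$2^{\dg_v}$ row recursions must combine with the $3\ceil{\log_2 c_1+\flag'}$ coming from the (at most $2^{\dg_v}$) column recursions into precisely $3\ceil{\log_2 c+\flag}$, leaving only just enough slack to cover both the $(c_1+\flag'-1)\dg_v$ additions of Line~\ref{L2X:update-shifts} and the $c_1$ extra additions (the $+1$ per row) carried by the coarse bound of Lemma~\ref{lem:L2X-additions}. Concretely, after simplification the surplus should appear as a nonnegative multiple of $2^{\dg_v}-\dg_v-1$ (together with a nonnegative term of the form $(2^{\dg_v}-1)(\cdots)$ in the $\flag'=0$ branch, mirroring the final display in the proof of the multiplication bound), which yields \eqref{eqn:L2X-adds-small-bnd} and closes the induction.
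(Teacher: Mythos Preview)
Your proposal is correct and follows essentially the same approach as the paper's own proof: structural induction using Lemma~\ref{lem:L2X-additions} for the $c_1$ full row calls, the inductive bound for the column calls and the partial row call, the same telescoping $\ell_2'(\ell_1-1)+s+t-c_2\le\ell-\ell_2'$ for the affine terms, and the same $\flag'=1$/$\flag'=0$ case split with the slack appearing as a nonnegative multiple of $2^{\dg_v}-\dg_v-1$. The only cosmetic difference is that for $\flag'=1$ the paper collapses your three subcases into the single observation that either $c_1=0$ (so $c_2=c$) or $\ceil{\log_2 c_1+1}\le\ceil{\log_2 c+\flag}-\dg_v$.
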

\begin{proof} Suppose that Algorithm~\ref{alg:L2X} is called on an internal vertex $v\in V$ such that the bound of the lemma holds whenever the input vertex is $v_\alpha$ or $v_\delta$. Then Lemma~\ref{lem:L2X-additions} implies that Lines~\ref{L2X:rows}--\ref{L2X:rows-end} perform at most
\begin{equation}\label{eqn:L2X-rows-adds}
	c_1
	\left(
		2^{\dg_v-1}
		\left(3\dg_v-2\right)
		+1
		+\dg_v
	\right)
	-\left(1-\flag'\right)\dg_v
	\leq
	\frac{c-c_2}{2}
	3\dg_v
	-\left(1-\flag'\right)\dg_v
\end{equation}
additions. As $c_2\leq t\leq\ell_2'\leq 2^{\dg_v}$, Lines~\ref{L2X:right-cols-1}--\ref{L2X:right-cols-2-end} perform at most
\begin{equation}\label{eqn:L2X-right-cols-adds}
	\left(2^{\dg_v}-c_2\right)
	\frac{c_1+\flag'-1}{2}
	\left(3\ceil{\log_2c_1+\flag'}-1\right)
	+\left(\ell_2'-c_2\right)
	\left(\ell_1-1\right)
	+t-c_2
\end{equation}
additions. Lines~\ref{L2X:last-row}--\ref{L2X:last-row-end} perform at most
\begin{equation}\label{eqn:L2X-last-row-adds}
	\flag'
	\left(
		\frac{c_2+\flag-1}{2}
		\left(3\ceil{\log_2\max(c_2+\flag,1)}-1\right)
		+\ell_2'-1
	\right)
\end{equation}
additions, while Lines~\ref{L2X:left-cols-1}--\ref{L2X:left-cols-2-end} perform at most
\begin{equation}\label{eqn:L2X-left-cols-adds}
	c_2
	\left(
		\frac{c_1}{2}
		\left(3\ceil{\log_2c_1+1}-1\right)
		+\ell_1-1
	\right)
	+s
\end{equation}
additions. We show that the sum of these contributions is bounded by~\eqref{eqn:L2X-adds-small-bnd}.

Suppose that $\flag'=1$. Then the sum of \eqref{eqn:L2X-rows-adds}--\eqref{eqn:L2X-left-cols-adds} is at most
\begin{equation}\label{eqn:L2X-adds-flag-1}
	\frac{c-c_2}{2}
	\left(3\ceil{\log_2c_1+1}+3\dg_v-1\right)
	+\frac{c_2+\flag-1}{2}
	\left(3\ceil{\log_2c+\flag}-1\right)
	+\ell-1.
\end{equation}
If $c_1=0$, then $c_2=c$. If not, then $\ceil{\log_2c_1+1}\leq\ceil{\log_2 c+\flag}-\dg_v$. In either case, substituting into~\eqref{eqn:L2X-adds-flag-1} yields~\eqref{eqn:L2X-adds-small-bnd}.

Suppose now that $\flag'=0$. Then $c_2=\flag=0$, $c_1\neq 0$ and $\ell_2'=2^{\dg_v}$. It follows that the sum of \eqref{eqn:L2X-rows-adds}--\eqref{eqn:L2X-left-cols-adds} in this case is at most
\begin{multline*}
	\frac{c}{2}3\dg_v-\dg_v
	+\frac{c-2^{\dg_v}}{2}
	\left(3\ceil{\log_2c_1}-1\right)
	+\ell-2^{\dg_v}\\
	\begin{aligned}
		 &\leq
		 \frac{c-1}{2}3\dg_v+\frac{1}{2}\dg_v
		 +\frac{c-1}{2}\left(3\ceil{\log_2c}-3\dg_v-1\right)
		 +\frac{2^{\dg_v}-1}{2}
		 +\ell-2^{\dg_v}\\
		 &=
		 \frac{c-1}{2}\left(3\ceil{\log_2c}-1\right)+\ell-1
		 -\frac{1}{2}\left(2^{\dg_v}-\dg_v-1\right)\\
		 &\leq
		 \frac{c+\flag-1}{2}\left(3\ceil{\log_2c+\flag}-1\right)
		 +\ell-1,
	\end{aligned}
\end{multline*}
as required.
\end{proof}

\subsection{Conversion from the Lin--Chung--Han basis to the Lagrange basis}\label{sec:X2L}

We propose Algorithm~\ref{alg:X2L} for converting from the LCH basis to the Lagrange basis. The parameter $c$ plays a different role than in Algorithm~\ref{alg:L2X}, with its function being to specify the number Lagrange basis coefficients returned by the algorithm, rather than to specify a mixture of coefficients. For conversion from the LCH basis to the Lagrange basis, Algorithm~\ref{alg:X2L} is initially called with $c=2^{n_v}$. Smaller initial values of $c$ are relevant, for example, when using the algorithm within the Hermite evaluation algorithm of Coxon~\cite{coxon2018}.

\begin{algorithm}[h!]
	\caption{$\mathsf{X2L}(v,(\PushDown_v(u,\lambda))_{u\in\leaves_v},c,\ell,(a_0,\ldots,a_{2^{n_v}-1}))$}\label{alg:X2L}
	\begin{algorithmic}[1]
		\Require a vertex $v\in V$, the vector $(\PushDown_v(u,\lambda))_{u\in\leaves_v}\in\F^{n_v}$ for some $\lambda\in\F$, $c,\ell\in\{1,2,\dotsc,2^{n_v}\}$, and $a_i=\fout_i\in\F$ for $i\in\{0,\dotsc,\ell-1\}$.
		\Ensure $a_i=f_i\in\F$ for $i\in\{0,\dotsc,c-1\}$ such that~\eqref{eqn:LX} holds for some $f_c,\dotsc,f_{2^{n_v}-1}\in\F$.
		\If{$v$ is a leaf}\label{X2L:base}
			\If{$c=2$ and $\ell=2$}
				$a_0\set a_0+\PushDown_v(v,\lambda)a_1$, $a_1\set a_0+a_1$
			\EndIf
 			\If{$c=1$ and $\ell=2$}
	 			$a_0\set a_0+\PushDown_v(v,\lambda)a_1$
			\EndIf
			\If{$c=2$ and $\ell=1$}
				$a_1\set a_0$
			\EndIf
		\State\Return
		\EndIf\label{X2L:base-end}
		\State
			\label{X2L:aux}%
			$c_1\set\ceil{c/2^{\dg_v}}-1$,
			$c_2\set c-2^{\dg_v}c_1$
		\State
			\label{X2L:aux-end}%
			$\ell_1\set\floor{\ell/2^{\dg_v}}$,
			$\ell_2\set\ell-2^{\dg_v}\ell_1$
			$\ell_2'\set\min(2^{\dg_v},\ell)$
		\State
			$\mu\set(\PushDown_v(u,\lambda))_{u\in\leaves_{v_\alpha}}$,
			$\nu\set(\PushDown_v(u,\lambda))_{u\in\leaves_{v_\delta}}$
		\For{$j=0,\dotsc,\ell_2-1$}\label{X2L:left-cols}
			\State
				\label{X2L:left-col}%
				\Call{X2L}{$v_\delta,\nu,c_1+1,\ell_1+1,(a_j,a_{2^{\dg_v}+j},\dotsc,a_{2^{\dg_v}(2^{n_v-\dg_v}-1)+j})$}
		\EndFor\label{X2L:left-cols-end}
		\For{$j=\ell_2,\dotsc,\ell_2'-1$}\label{X2L:right-cols}
			\State
				\label{X2L:right-col}%
				\Call{X2L}{$v_\delta,\nu,c_1+1,\ell_1,(a_j,a_{2^{\dg_v}+j},\dotsc,a_{2^{\dg_v}(2^{n_v-\dg_v}-1)+j})$}
		\EndFor\label{X2L:right-cols-end}
		\For{$i=0,\dotsc,c_1-1$}\label{X2L:rows}
			\State
				\label{X2L:row}%
				\Call{X2L}{$v_\alpha,\mu,2^{\dg_v},\ell_2',(a_{2^{\dg_v}i},a_{2^{\dg_v}i+1},\dotsc,a_{2^{\dg_v}(i+1)-1})$}
			\State
				\label{X2L:update-shifts}%
				$\mu\set\mu+(\PushDown_v(u,\sigma_{v,\Gray(i)}))_{u\in\leaves_{v_\alpha}}$
		\EndFor\label{X2L:rows-end}
		\State\label{X2L:last-row}\Call{X2L}{$v_\alpha,\mu,c_2,\ell_2',(a_{2^{\dg_v}c_1},a_{2^{\dg_v}c_1+1},\dotsc,a_{2^{\dg_v}(c_1+1)-1})$}
	\end{algorithmic}
\end{algorithm}

\begin{theorem} Algorithm~\ref{alg:X2L} is correct.
\end{theorem}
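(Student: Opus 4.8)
The plan is to mirror the proof of correctness for Algorithm~\ref{alg:L2X} structurally, but working in the reverse direction as dictated by the way Lemma~\ref{lem:LX-reduction} is used. First I would dispatch the base case: when $v$ is a leaf, $n_v=1$, so $c,\ell\in\{1,2\}$ with $c\leq\ell$, and the relevant basis polynomials are $X_{\beta_v,0}=1$, $X_{\beta_v,1}=x/\beta_{v,0}$, $L_{\beta_v,0}=x/\beta_{v,0}+1$, $L_{\beta_v,1}=x/\beta_{v,0}$, together with $\PushDown_v(v,\lambda)=\lambda/\beta_{v,0}$. Expanding~\eqref{eqn:LX} on these bases gives $f_0=\fout_0+\PushDown_v(v,\lambda)\fout_1$ and $f_1=\fout_0+\fout_1$ when $\ell=2$, and $f_0=f_1=\fout_0$ when $\ell=1$; one then checks line by line that Lines~\ref{X2L:base}--\ref{X2L:base-end} compute exactly $a_0=f_0$ for $c\in\{1,2\}$ and additionally $a_1=f_1$ when $c=2$, covering all admissible triples $(c,\ell,{}\cdot{})$. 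Since $(V,E)$ is a full binary tree, it then suffices to prove the inductive step: for an internal vertex $v$, assuming correctness whenever the input vertex is $v_\alpha$ or $v_\delta$, correctness holds when the input is $v$.

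For the inductive step I would first reuse verbatim the shift-tracking argument from the proof of Theorem~\ref{thm:NX-correctness}: Lines~\ref{X2L:mu-nu}-style initialisations together with the updates $\mu\set\mu+(\PushDown_v(u,\sigma_{v,\Gray(i)}))_{u\in\leaves_{v_\alpha}}$ and the identity $\omega_{\gamma_v,i}=\sum_{j=0}^{i-1}\sigma_{v,\Gray(j)}$, plus $\F_2$-linearity of the maps $\PushDown_v(u,{}\cdot{})$, guarantee that $\mu=(\PushDown_{v_\alpha}(u,\lambda+\omega_{\gamma_v,i}))_{u\in\leaves_{v_\alpha}}$ at the $i$th iteration of the loop at Line~\ref{X2L:rows} and that $\mu=(\PushDown_{v_\alpha}(u,\lambda+\omega_{\gamma_v,c_1}))_{u\in\leaves_{v_\alpha}}$ at Line~\ref{X2L:last-row}, while $\nu=(\PushDown_{v_\delta}(u,\eta))_{u\in\leaves_{v_\delta}}$ throughout the column loops, where $\eta=(\lambda/\beta_{v,0})^{2^{\dg_v}}-\lambda/\beta_{v,0}$. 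Then, given the input $a_i=\fout_i$ for $i\in\{0,\dotsc,\ell-1\}$ satisfying~\eqref{eqn:LX}, Lemma~\ref{lem:LX-reduction} furnishes unique $\fint_0,\dotsc,\fint_{2^{n_v}-1}$ with \eqref{eqn:LX-rows} and \eqref{eqn:LX-cols}. The column loops at Lines~\ref{X2L:left-cols}--\ref{X2L:right-cols-end} operate on the column subvectors indexed by fixed $j$; since X2L with input vertex $v_\delta$ converts LCH to Lagrange, and since for $j<\ell_2$ the column has length parameter $\ell_1+1$ (the number of nonzero $\fout_{2^{\dg_v}i+j}$ terms, recalling $\floor{(\ell-1-j)/2^{\dg_v}}$ equals $\ell_1$ when $j<\ell_2$ and $\ell_1-1$ otherwise) while for $j\geq\ell_2$ it has length $\ell_1$, equation~\eqref{eqn:LX-cols} and the induction hypothesis show these loops overwrite the columns with $a_{2^{\dg_v}i+j}=\fint_{2^{\dg_v}i+j}$ for $i\in\{0,\dotsc,c_1\}$ (truncated appropriately at the last row when $j\geq c_2$). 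After that, the row loops at Lines~\ref{X2L:rows}--\ref{X2L:last-row} apply X2L with input vertex $v_\alpha$ to each full row $i<c_1$ (length parameter $2^{\dg_v}$, output length $\ell_2'$) and then to the partial last row (length parameter $c_2$); by~\eqref{eqn:LX-rows}, the shift values computed above, and the induction hypothesis, this overwrites each row $i$ with $a_{2^{\dg_v}i+j}=f_{2^{\dg_v}i+j}$ for the required ranges of $j$, so that at termination $a_i=f_i$ for $i\in\{0,\dotsc,c-1\}$, which is exactly~\eqref{eqn:LX} for the resulting $f_0,\dotsc,f_{2^{n_v}-1}$.

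The main obstacle I anticipate is bookkeeping of the index ranges, specifically confirming that the truncation parameters $c_1=\ceil{c/2^{\dg_v}}-1$, $c_2=c-2^{\dg_v}c_1\in\{1,\dotsc,2^{\dg_v}\}$, $\ell_1=\floor{\ell/2^{\dg_v}}$, $\ell_2=\ell-2^{\dg_v}\ell_1$, and $\ell_2'=\min(2^{\dg_v},\ell)$ line up exactly with the upper summation limits $\floor{(\ell-1-j)/2^{\dg_v}}$ in~\eqref{eqn:LX-cols} and $\min(2^{\dg_v},\ell)-1$ in~\eqref{eqn:LX-rows}, and that the recursive calls to X2L at $v_\delta$ and $v_\alpha$ are always made with arguments $(c',\ell')$ lying in the admissible range $\{1,\dotsc,2^{n_{v_\delta}}\}$ respectively $\{1,\dotsc,2^{n_{v_\alpha}}\}$ with $c'\leq\ell'$; in particular one must check that $c_1+1\leq 2^{n_{v_\delta}}=2^{n_v-\dg_v}$ and $c_2\leq\ell_2'=2^{\dg_v}$, and handle the degenerate cases $\ell<2^{\dg_v}$ (where $\ell_1=0$, $\ell_2=\ell_2'=\ell$) and $c\leq 2^{\dg_v}$ (where $c_1=0$, $c_2=c$) so that the loops that should be empty are indeed empty and the last-row call at Line~\ref{X2L:last-row} carries the full burden. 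Unlike Algorithm~\ref{alg:L2X}, there is no $\flag$ parameter and no mixed input/output convention, so no asterisk-tracking table is needed and the case split collapses; the remaining work is the purely routine verification just described, which I would present compactly rather than via the four-case figure treatment used for Algorithm~\ref{alg:L2X}.
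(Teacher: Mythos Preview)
Your overall approach matches the paper's: dispatch the leaf case directly, then for an internal vertex invoke Lemma~\ref{lem:LX-reduction} to obtain the intermediate coefficients $\fint_i$, apply the induction hypothesis first to the column calls (via~\eqref{eqn:LX-cols}) and then to the row calls (via~\eqref{eqn:LX-rows}), with the shift-tracking argument from the proof of Theorem~\ref{thm:NX-correctness} supplying the correct values of $\mu$ and $\nu$ at each recursive call.

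There is one genuine error, though: you have imported the constraint $c\leq\ell$ from Algorithm~\ref{alg:L2X}, but Algorithm~\ref{alg:X2L} has no such requirement---its inputs are simply $c,\ell\in\{1,\dotsc,2^{n_v}\}$, independently. This matters in two places. First, your leaf-case analysis omits the admissible input $(c,\ell)=(2,1)$, for which the algorithm sets $a_1\leftarrow a_0$ and one must verify $f_0=f_1=\fout_0$; the paper does include this case (see Table~\ref{tab:X2L-base-case}). Second, the recursive calls routinely violate $c'\leq\ell'$: Line~\ref{X2L:row} is called with $(c',\ell')=(2^{\dg_v},\ell_2')$ where $\ell_2'=\min(2^{\dg_v},\ell)$ can be strictly less than $2^{\dg_v}$, and Line~\ref{X2L:right-col} is called with $(c',\ell')=(c_1+1,\ell_1)$ where nothing forces $c_1<\ell_1$. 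So the bookkeeping you flag as the main obstacle---checking that each recursive call satisfies $c'\leq\ell'$---is neither required nor true in general. Once you drop this spurious constraint and add the missing base case, the remaining index-range verification is exactly as routine as you anticipate, and the proof is essentially what the paper gives.
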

\begin{proof} Table~\ref{tab:X2L-base-case} displays the input and output requirements of Algorithm~\ref{alg:X2L} when the input vertex $v$ is a leaf, as well as showing the output of the algorithm as computed by Lines~\ref{X2L:base}--\ref{X2L:base-end}. The elements $f_i$ and $\fout_i$ that appear in a row of the table are the coefficients of~\eqref{eqn:LX} for the specified value of $\ell$. Elements denoted by asterisks are unspecified by the algorithm. As $v$ is a leaf, the coefficients of \eqref{eqn:LX} satisfy $f_0=\fout_0+\PushDown_v(v,\lambda)\fout_1$ and $f_1=\fout_1+(\fout_0+\PushDown_v(v,\lambda)\fout_1)$ if $\ell=2$, and $f_0=f_1=\fout_0$ if $\ell=1$. Using these equation, one can readily verify that the computed output agrees with the required output for all inputs. Consequently, Algorithm~\ref{alg:X2L} produces the correct output whenever the input vertex is a leaf. Therefore, as $(V,E)$ is a full binary tree, it is sufficient to show that for all internal $v\in V$, if the algorithm produces the correct output whenever $v_\alpha$ or $v_\delta$ is given as an input, then it produces the correct output whenever $v$ is given as an input.

\begin{table}[h]
	\setlength{\belowcaptionskip}{0pt}
	\begin{tabular}{cccc@{}cccccc@{}cc}
		\toprule
		\multicolumn{4}{c}{Input} &  & \multicolumn{4}{c}{Required output}  & & \multicolumn{2}{c}{Computed output} \\
		\cmidrule{1-4} \cmidrule{6-9} \cmidrule{11-12}
		$c$ & $\ell$ &   $a_0$   &   $a_1$   &&& $a_0$ && $a_1$  &&                  $a_0$                  & $a_1$     \\
		\midrule
		$2$ &  $2$   & $\fout_0$ & $\fout_1$ &&& $f_0$ && $f_1$  && $\fout_0+\PushDown_v(v,\lambda)\fout_1$ & $\fout_1+(\fout_0+\PushDown_v(v,\lambda)\fout_1)$ \\
		$1$ &  $2$   & $\fout_0$ & $\fout_1$ &&& $f_0$ && $\ast$ && $\fout_0+\PushDown_v(v,\lambda)\fout_1$ & $\ast$    \\
		$2$ &  $1$   & $\fout_0$ &  $\ast$   &&& $f_0$ && $f_1$  &&                $\fout_0$                & $\fout_0$ \\
		$1$ &  $1$   & $\fout_0$ &  $\ast$   &&& $f_0$ && $\ast$ &&                $\fout_0$                & $\ast$    \\
		\bottomrule
	\end{tabular}
	\vspace{\abovecaptionskip}
	\caption{Required and computed outputs of Algorithm~\ref{alg:X2L} when $v$ is a leaf.}\label{tab:X2L-base-case}
\end{table}

Let $v\in V$ be an internal vertex and suppose that Algorithm~\ref{alg:X2L} produces the correct output whenever $v_\alpha$ or $v_\delta$ is given as an input. Suppose that the algorithm is called on $v$, the vector $(\PushDown_v(u,\lambda))_{u\in\leaves_v}$ for some $\lambda\in\F$, integers $c,\ell\in\{1,2,\dotsc,2^{n_v}\}$ and $(a_0,\dotsc,a_{2^{n_v}-1})$, with $a_i=\fout_i\in\F$ for $i\in\{0,\dotsc,\ell-1\}$. Then there exist unique elements $f_0,\dotsc,f_{2^{n_v}-1}\in\F$ such that~\eqref{eqn:LX} holds. In-turn, Lemma~\ref{lem:LX-reduction} implies that there exist unique elements $\fint_0,\dotsc,\fint_{2^{n_v}-1}\in\F$ such that~\eqref{eqn:LX-rows} and~\eqref{eqn:LX-cols} hold. 


Once again repeating arguments from the proof of Theorem~\ref{thm:NX-correctness} shows that $\nu=(\PushDown_{v_\delta}(u,\eta))_{u\in\leaves_{v_\delta}}$ for the recursive calls of Lines~\ref{X2L:left-cols}--\ref{X2L:right-cols-end}, $\mu=(\PushDown_{v_\alpha}(u,\lambda+\omega_{\gamma_v,i}))_{u\in\leaves_{v_\alpha}}$ each time the recursive call of Line~\ref{X2L:row} is performed, and finally $\mu=(\PushDown_{v_\alpha}(u,\lambda+\omega_{\gamma_v,c_1}))_{u\in\leaves_{v_\alpha}}$ for the recursive call of Line~\ref{X2L:last-row}. Thus, \eqref{eqn:LX-cols} and the assumption that the algorithm produces the correct output whenever $v_\delta$ is given as an input imply that  Lines~\ref{X2L:left-cols}--\ref{X2L:right-cols-end} set $a_{2^{\dg_v}i+j}=\fint_{2^{\dg_v}i+j}$ for $i\in\{0,\dotsc,c_1\}$ and $j\in\{0,\dotsc,\min(2^{\dg_v},\ell)-1\}$. Consequently, \eqref{eqn:LX-rows} and the assumption that the algorithm produces the correct output whenever $v_\alpha$ is given as an input imply that Lines~\ref{X2L:rows}--\ref{X2L:rows-end} set $a_i=f_i$ for $i\in\{0,\dotsc,2^{\dg_v}c_1-1\}$, and that Line~\ref{X2L:last-row} sets $a_i=f_i$ for $i\in\{2^{\dg_v}c_1,\dotsc,c-1\}$. Therefore, the algorithm terminates with $a_i=f_i$ for $i\in\{0,\dotsc,c-1\}$, as required. Hence, for internal $v\in V$, if the algorithm produces the correct output whenever $v_\alpha$ or $v_\delta$ is given as an input, then it produces the correct output whenever $v$ is given as an input.
\end{proof}

Algorithm~\ref{alg:X2L} requires the same precomputations as the algorithms of Sections~\ref{sec:NX} and~\ref{sec:L2X}, while the algorithm requires auxiliary space for $2^{n_v}-\max(c,\ell)+\bigO(n^2)$ field elements. Small values of $\dg_v$ should once again be avoided when choosing a reduction tree for the algorithm, in order to help reduce the number of additions performed by the updates to the vector $\mu$ in Line~\ref{X2L:update-shifts}.

\begin{example}\label{ex:X2L} For $\beta$ equal to a Cantor basis of dimension $15$, and inputs $\ell\in\{1,\dotsc,2^{15}\}$ and $c=\ell$,  Figure~\ref{fig:cantor_lx} also shows the maximum and minimum number of additions performed by Algorithm~\ref{alg:X2L} over all possible reduction trees for the basis, as well as the number of multiplications performed by the algorithm for all such trees. Thus, Algorithm~\ref{alg:L2X} with $c=\ell$ and $\flag=0$ performs the same number of operations for both extremes (see Example~\ref{ex:L2X}). As for Examples~\ref{ex:NX} and~\ref{ex:L2X}, the maximum and minimum number of additions performed for each $\ell$ are given respectively by the trees with $\dg_v=1$ and $\dg_v=2^{\ceil{\log_2 n_v}-1}$ for all internal $v\in V$.
\end{example}

\begin{theorem}\label{thm:X2L-complexity} Algorithm~\ref{alg:X2L} performs at most
\begin{equation*}
	\min
	\left(
		\frac{c-1}{2}
		\left(\ceil{\log_2c}-1\right)
		+\ell-1,
		2^{n_v-1}n_v
	\right)
\end{equation*}
multiplications in $\F$, and at most
\begin{equation*}
	\min
	\left(
		\frac{c-1}{2}
		\left(3\ceil{\log_2c}-1\right)
		+\ell-1,
		2^{n_v-1}\left(3n_v-2\right)+1
	\right)
\end{equation*}
additions in $\F$.
\end{theorem}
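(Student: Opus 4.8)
The plan is to prove Theorem~\ref{thm:X2L-complexity} by structural induction on the reduction tree, exactly mirroring the four-lemma decomposition used for Theorem~\ref{thm:L2X-complexity}. Since Algorithm~\ref{alg:X2L} is the inverse of Algorithm~\ref{alg:L2X} and has the same recursive skeleton (a block of column recursions of length $c_1+1$ or $c_1$, a block of row recursions of length $2^{\dg_v}$, a final row recursion of length $c_2$, plus the $\mu$-update additions), each of the four bounds can be attacked separately. First I would observe that the base case is immediate: inspecting Table~\ref{tab:X2L-base-case}, when $v$ is a leaf the algorithm performs at most one multiplication and one addition, and each of the four claimed bounds evaluates to $0$ when $c=\ell=1$ and to at most $1$ when $c=2$ (note $\ell\le 2$ always when $n_v=1$); so it suffices to carry out the inductive step for an internal vertex $v$, assuming the bound for inputs $v_\alpha$ and $v_\delta$.

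For the two ``small'' bounds (those involving $\ceil{\log_2 c}$), I would set $c_1=\ceil{c/2^{\dg_v}}-1$, $c_2=c-2^{\dg_v}c_1$, $\ell_2'=\min(2^{\dg_v},\ell)$ as in Lines~\ref{X2L:aux}--\ref{X2L:aux-end}, and bound the four contributions: the $\ell_2'$ column recursions of length $c_1+1$, the $2^{\dg_v}-\ell_2'$ column recursions of length $c_1$ (actually $\ell_2$ of length $\ell_1+1$ and $\ell_2'-\ell_2$ of length $\ell_1$, but since the multiplication/addition counts are monotone in $\ell$ I can bound both by the length-$\ell_2'$ count), the $c_1$ row recursions of length $2^{\dg_v}$, the single row recursion of length $c_2$, and the $c_1\dg_v$ additions from Line~\ref{X2L:update-shifts}. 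Applying the induction hypothesis to each recursive call and using $\ceil{\log_2 (c_1+1)}=\ceil{\log_2 c}-\dg_v$ when $c_1\neq 0$ (the key arithmetic identity, valid because $2^{\dg_v}c_1 < c \le 2^{\dg_v}(c_1+1)$), together with $\ell_2'(\ell_1 - 1) + \ell_2 \le 2^{\dg_v}\ell_1 + \ell_2 - \ell_2' = \ell - \ell_2'$ to absorb the $+\ell-1$ terms, should collapse the sum to the claimed bound after handling the edge case $c_1=0$ (where $c_2=c$ and there are no full row recursions) separately. The telescoping here is essentially identical to the $\flag'=1$ case in the proof of the corresponding lemma for Algorithm~\ref{alg:L2X}, only cleaner since Algorithm~\ref{alg:X2L} has no $\flag$ parameter.

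For the two ``large'' bounds $2^{n_v-1}n_v$ (multiplications) and $2^{n_v-1}(3n_v-2)+1$ (additions), I would proceed as in Lemmas~\ref{lem:L2X-multiplications} and~\ref{lem:L2X-additions}: use $c_1 \le 2^{n_v-\dg_v}$, $\ell_2' \le 2^{\dg_v}$, and $c_2 \le 2^{\dg_v}$, bound the column recursions by $2^{\dg_v}$ calls on inputs of dimension $n_v-\dg_v$, the row recursions by $2^{n_v-\dg_v}$ calls on inputs of dimension $\dg_v$, and the $\mu$-update by at most $(2^{n_v-\dg_v}-1)\dg_v$ additions. For multiplications this gives $2^{\dg_v}\cdot 2^{n_v-\dg_v-1}(n_v-\dg_v) + 2^{n_v-\dg_v}\cdot 2^{\dg_v-1}\dg_v = 2^{n_v-1}n_v$ directly. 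For additions one gets $2^{\dg_v}(2^{n_v-\dg_v-1}(3(n_v-\dg_v)-2)+1) + 2^{n_v-\dg_v}(2^{\dg_v-1}(3\dg_v-2)+1) + (2^{n_v-\dg_v}-1)\dg_v$, which I would simplify to $2^{n_v-1}(3n_v-2)+1 - (2^{\dg_v}-\dg_v-1)(2^{n_v-\dg_v}-1) \le 2^{n_v-1}(3n_v-2)+1$, using $2^{\dg_v}\ge \dg_v+1$. I expect the main obstacle to be bookkeeping rather than conceptual: carefully matching the recursive call lengths in Lines~\ref{X2L:left-col}, \ref{X2L:right-col}, \ref{X2L:row} and~\ref{X2L:last-row} against the monotone-in-$\ell$ complexity estimates so that every call is validly bounded, and correctly treating the degenerate cases $c_1=0$ and $\ell < 2^{\dg_v}$ (where $\ell_2' = \ell$ rather than $2^{\dg_v}$). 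Finally, the theorem follows because each quantity is bounded by the minimum of the two bounds proved, and taking the minimum is preserved under the induction since both bounds are established independently.
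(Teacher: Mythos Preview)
Your approach is correct and essentially identical to the paper's: it proves the four bounds separately by structural induction, invoking the already-established large bound (the analogue of Lemma~\ref{lem:X2L-multiplications}/\ref{lem:X2L-additions}) on the $c_1$ full-length row recursions inside the proof of the small bound, and then telescoping via the identity $\ceil{\log_2(c_1+1)}=\ceil{\log_2 c}-\dg_v$ when $c_1\neq 0$ together with $\ell_2'(\ell_1-1)+\ell_2=\ell-\ell_2'$. Two minor slips to fix when you write it up: in the leaf case $c=\ell=2$ the algorithm performs \emph{two} additions (and the addition bounds evaluate to $2$, not $1$), and there are no ``$2^{\dg_v}-\ell_2'$ column recursions of length $c_1$''---all $\ell_2'$ column calls carry $c$-parameter $c_1+1$, with your parenthetical about the $\ell$-parameters $\ell_1+1$ and $\ell_1$ being the correct description.
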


We split the proof of Theorem~\ref{thm:X2L-complexity} into four lemmas, one for each bound. It is readily verified that the bounds hold if the input vertex is a leaf. Therefore, as $(V,E)$ is a full binary tree, it is sufficient to show for each bound that if $v\in V$ is an internal vertex such that the bound holds whenever the input vertex is $v_\alpha$ or $v_\delta$, then the bound holds whenever $v$ is the input vertex.

\begin{lemma}\label{lem:X2L-multiplications} Algorithm~\ref{alg:X2L} performs at most $2^{n_v-1}n_v$ multiplications in $\F$.
\end{lemma}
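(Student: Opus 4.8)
The plan is to prove the bound $2^{n_v-1}n_v$ on multiplications by induction on the structure of the reduction tree, following the same template already used for Lemma~\ref{lem:L2X-multiplications}. The base case (when the input vertex is a leaf) has $n_v = 1$, so $2^{n_v-1}n_v = 1$, and one checks directly from Lines~\ref{X2L:base}--\ref{X2L:base-end} of Algorithm~\ref{alg:X2L} that at most one multiplication (namely $\PushDown_v(v,\lambda)a_1$) is performed. For the inductive step, assume $v$ is internal and the bound holds whenever the input vertex is $v_\alpha$ or $v_\delta$. We then count the multiplications contributed by each block of recursive calls in the algorithm and show they sum to at most $2^{n_v-1}n_v$.

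The key counting steps are as follows. The two column loops, Lines~\ref{X2L:left-cols}--\ref{X2L:left-cols-end} and Lines~\ref{X2L:right-cols}--\ref{X2L:right-cols-end}, together make $\ell_2' = \min(2^{\dg_v},\ell) \le 2^{\dg_v}$ recursive calls on $v_\delta$, each on a vector of length $2^{n_v-\dg_v}$; by the induction hypothesis each such call costs at most $2^{n_v-\dg_v-1}(n_v-\dg_v)$ multiplications, giving at most $2^{\dg_v}\cdot 2^{n_v-\dg_v-1}(n_v-\dg_v) = 2^{n_v-1}(n_v-\dg_v)$ in total. The row loop, Lines~\ref{X2L:rows}--\ref{X2L:rows-end}, makes $c_1$ calls on $v_\alpha$ plus the single call on Line~\ref{X2L:last-row}, i.e. at most $c_1 + 1 = \ceil{c/2^{\dg_v}} \le 2^{n_v-\dg_v}$ calls (using $c \le 2^{n_v}$), each costing at most $2^{\dg_v-1}\dg_v$ multiplications by the induction hypothesis, for a total of at most $2^{n_v-\dg_v}\cdot 2^{\dg_v-1}\dg_v = 2^{n_v-1}\dg_v$. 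Adding the $\bigO(1)$ — in fact zero — extra multiplications outside the recursive calls, the grand total is at most $2^{n_v-1}(n_v-\dg_v) + 2^{n_v-1}\dg_v = 2^{n_v-1}n_v$, completing the induction.

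The one point requiring a little care is justifying the bound $c_1 + 1 \le 2^{n_v - \dg_v}$ on the number of row calls: one has $c_1 = \ceil{c/2^{\dg_v}} - 1$, so $c_1 + 1 = \ceil{c/2^{\dg_v}} \le \ceil{2^{n_v}/2^{\dg_v}} = 2^{n_v-\dg_v}$ since $1 \le \dg_v \le n_v - 1$. Similarly one must note that each of the $c_1$ calls in the loop is of length exactly $2^{\dg_v}$ (Line~\ref{X2L:row} passes $\ell=\ell_2' \le 2^{\dg_v}$ but a vector of length $2^{\dg_v}$, and the induction hypothesis gives $2^{\dg_v-1}\cdot 2^{\dg_v}\cdot$... no: it gives $\min(\cdots, 2^{\dg_v-1}\dg_v)$, so $2^{\dg_v-1}\dg_v$), and likewise the Line~\ref{X2L:last-row} call is on a length-$2^{\dg_v}$ vector. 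No genuine obstacle arises here, since this proof is essentially identical in form to Lemma~\ref{lem:L2X-multiplications}; the only difference is that the roles of the row and column loops are reversed and there is no $\flag$ parameter, so the accounting is marginally simpler. I therefore expect the main "work" to be purely bookkeeping — tracking which loop operates on which dimension and invoking the induction hypothesis with the correct length parameter — rather than any conceptual difficulty.
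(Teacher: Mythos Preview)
Your proposal is correct and follows essentially the same approach as the paper: induct on the tree structure, bound the column calls by $\ell_2'\cdot 2^{n_v-\dg_v-1}(n_v-\dg_v)\le 2^{n_v-1}(n_v-\dg_v)$ and the row calls by $(c_1+1)\cdot 2^{\dg_v-1}\dg_v\le 2^{n_v-1}\dg_v$, then add. The paper's version is simply more terse (the base case is dispatched in the preamble to Theorem~\ref{thm:X2L-complexity}), and your third paragraph's hesitation about the row-call length is unnecessary --- the induction hypothesis applied to $v_\alpha$ gives the $2^{\dg_v-1}\dg_v$ bound directly from $n_{v_\alpha}=\dg_v$, independently of the parameters $c_2$, $\ell_2'$ passed to the call.
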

\begin{proof} Suppose that Algorithm~\ref{alg:X2L} is called on an internal vertex $v\in V$ such that the bound of the lemma holds whenever the input vertex is $v_\alpha$ or $v_\delta$. Then Lines~\ref{X2L:left-cols}--\ref{X2L:right-cols-end} of the algorithm perform at most $\ell_2'2^{n_v-\dg_v-1}(n_v-\dg_v)\leq 2^{n_v-1}(n_v-\dg_v)$ multiplications, while Lines~\ref{X2L:rows}--\ref{X2L:last-row} perform at most $(c_1+1)2^{\dg_v-1}\dg_v\leq 2^{n_v-1}\dg_v$ multiplications. Summing theses contributions, it follows that Algorithm~\ref{alg:X2L} performs at most $2^{n_v-1}n_v$ multiplications.
\end{proof}

\begin{lemma} Algorithm~\ref{alg:X2L} performs at most $(c-1)(\ceil{\log_2 c}-1)/2+\ell-1$ multiplications in $\F$.
\end{lemma}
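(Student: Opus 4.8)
The statement to prove is the small-input multiplication bound for Algorithm~\ref{alg:X2L}, namely that it performs at most $(c-1)(\ceil{\log_2 c}-1)/2 + \ell - 1$ multiplications in $\F$. The plan is to mimic the structure already used for the analogous bound for Algorithm~\ref{alg:L2X}: induct on the tree, reduce to the internal-vertex case, bound the contributions of the three groups of recursive calls (the ``column'' calls in Lines~\ref{X2L:left-cols}--\ref{X2L:right-cols-end}, the ``row'' calls in Lines~\ref{X2L:rows}--\ref{X2L:rows-end}, and the final row call in Line~\ref{X2L:last-row}) using the inductive hypothesis together with Lemma~\ref{lem:X2L-multiplications}, and then show the sum is at most the claimed expression. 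The base case ($v$ a leaf) is immediate: the bound evaluates to $0$ when $c=1$ and to $1$ when $c=2$, and Table~\ref{tab:X2L-base-case} shows at most one multiplication is performed.

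For the inductive step, fix an internal vertex $v$ and assume the bound holds for inputs $v_\alpha$ and $v_\delta$. Set $c_1 = \ceil{c/2^{\dg_v}}-1$, $c_2 = c - 2^{\dg_v}c_1$, so $1 \le c_2 \le 2^{\dg_v}$. The $\ell_2$ calls of Line~\ref{X2L:left-col} (with parameters $c_1+1$, $\ell_1+1$) and the $\ell_2'-\ell_2$ calls of Line~\ref{X2L:right-col} (with parameters $c_1+1$, $\ell_1$) together act on $\ell_2'\le 2^{\dg_v}$ columns of length $2^{n_v-\dg_v}$; by the inductive hypothesis each contributes at most $\tfrac{c_1}{2}(\ceil{\log_2(c_1+1)}-1) + (\ell_1+1) - 1$ in the first case and the same with $\ell_1$ in the second. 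The $c_1$ calls of Line~\ref{X2L:row} and the one of Line~\ref{X2L:last-row} act on rows of length $2^{\dg_v}$ with polynomial-length parameter $\ell_2'$; using Lemma~\ref{lem:X2L-multiplications} each of the first $c_1$ contributes at most $2^{\dg_v-1}\dg_v$, while the last contributes at most $\tfrac{c_2-1}{2}(\ceil{\log_2 c_2}-1) + \ell_2' - 1$. Summing and using $\ell_2'\ell_1 + \ell_2 \le \ell$ (since $\ell = 2^{\dg_v}\ell_1 + \ell_2$ and $\ell_2' \le 2^{\dg_v}$, and when $\ell < 2^{\dg_v}$ one has $\ell_1 = 0$, $\ell_2' = \ell_2 = \ell$), the $\ell$-dependent terms collapse to at most $\ell-1$, leaving a purely $c$-dependent quantity of the form $\tfrac{c-c_2}{2}\dg_v + (2^{\dg_v}-c_2)\tfrac{c_1}{2}(\ceil{\log_2(c_1+1)}-1) + \tfrac{c_2-1}{2}(\ceil{\log_2 c_2}-1)$ to be bounded above by $\tfrac{c-1}{2}(\ceil{\log_2 c}-1)$.

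The main obstacle, as in the $\mathsf{L2X}$ analysis, is the case split on whether $c_1 = 0$ and the reconciliation of the ceiling-logarithm terms. When $c_1 = 0$ we have $c = c_2$ and the bound is immediate. When $c_1 \ne 0$ and $c_2 < 2^{\dg_v}$ (i.e. $c$ is not a multiple of $2^{\dg_v}$) one uses $\ceil{\log_2(c_1+1)} = \ceil{\log_2\ceil{c/2^{\dg_v}}} = \ceil{\log_2 c} - \dg_v$; when $c_1\ne 0$ and $c_2 = 2^{\dg_v}$ (i.e. $2^{\dg_v}\mid c$, so $c_1 = c/2^{\dg_v}-1$) one instead has $\ceil{\log_2(c_1+1)} = \ceil{\log_2 c} - \dg_v$ as well, and $\ceil{\log_2 c_2} = \dg_v$. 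Substituting these identities and regrouping, the inequality reduces — after discarding a manifestly nonnegative slack term of the shape $\tfrac12(2^{\dg_v} - \dg_v - 1)(c_1 - 1)$ or similar — to the desired bound. I would lay out these two or three subcases explicitly as a displayed chain of inequalities, exactly parallel to the proof of the $\mathsf{L2X}$ multiplication bound, checking at each step that the coefficient of the discarded slack is nonnegative because $2^{\dg_v} \ge \dg_v + 1$.
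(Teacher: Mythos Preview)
Your overall approach matches the paper's: induct on the tree, use the inductive hypothesis on the $\ell_2'$ column calls of Lines~\ref{X2L:left-cols}--\ref{X2L:right-cols-end}, use Lemma~\ref{lem:X2L-multiplications} on the $c_1$ full row calls and the inductive hypothesis on Line~\ref{X2L:last-row}, and combine. The $\ell$-dependent bookkeeping ($\ell_2'(\ell_1-1)+\ell_2\le\ell-\ell_2'$, cancelling against the $+\ell_2'-1$ from the last row) is exactly what the paper does.

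However, your displayed ``purely $c$-dependent quantity'' contains an error: the coefficient on the column term should be $\ell_2'$ (then bounded by $2^{\dg_v}$, giving $2^{\dg_v}\cdot\tfrac{c_1}{2}=\tfrac{c-c_2}{2}$), not $2^{\dg_v}-c_2$. In Algorithm~\ref{alg:X2L} \emph{all} $\ell_2'$ column calls are made with $c$-parameter $c_1+1$; the splitting of columns by $c_2$ that you seem to be importing is specific to Algorithm~\ref{alg:L2X}. With the wrong coefficient the intermediate bound can undercount the column contribution (take $\ell_2'=2^{\dg_v}$ and $c_2=1$), so the chain of inequalities would not be valid as written.

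Once the coefficient is fixed to $\tfrac{c-c_2}{2}$, the endgame is simpler than you suggest. No split on whether $c_2=2^{\dg_v}$ and no slack of the form $\tfrac12(2^{\dg_v}-\dg_v-1)(c_1-1)$ is needed. The paper simply observes $c_2\le c$, hence $\ceil{\log_2 c_2}\le\ceil{\log_2 c}$, and then uses the single identity $\ceil{\log_2(c_1+1)}=\ceil{\log_2 c}-\dg_v$ (valid whenever $c_1\neq 0$, since $c_1+1=\ceil{c/2^{\dg_v}}$) to collapse
\[
\tfrac{c-c_2}{2}\bigl(\ceil{\log_2(c_1+1)}+\dg_v-1\bigr)+\tfrac{c_2-1}{2}\bigl(\ceil{\log_2 c}-1\bigr)
\]
directly to $\tfrac{c-1}{2}(\ceil{\log_2 c}-1)$; the case $c_1=0$ is trivial since then $c_2=c$.
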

\begin{proof} Suppose that Algorithm~\ref{alg:X2L} is called on an internal vertex $v\in V$ such that the bound of the lemma holds whenever the input vertex is $v_\alpha$ or $v_\delta$. Then Lines~\ref{X2L:left-cols}--\ref{X2L:right-cols-end} perform at most
\begin{equation*}
	2^{\dg_v}
	\frac{c_1}{2}
	\left(\ceil{\log_2 c_1+1}-1\right)
	+\ell_2'\left(\ell_1-1\right)
	+\ell_2
	\leq
	\frac{c-c_2}{2}
	\left(\ceil{\log_2 c_1+1}-1\right)
	+\ell-\ell_2'
\end{equation*}
multiplications. Lemma~\ref{lem:X2L-multiplications} implies that Lines~\ref{X2L:rows}--\ref{X2L:last-row} perform at most
\begin{equation*}
	c_12^{\dg_v-1}\dg_v
	+\frac{c_2-1}{2}
	\left(\ceil{\log_2c_2}-1\right)
	+\ell_2'-1
	=\frac{c-c_2}{2}\dg_v
	+\frac{c_2-1}{2}
	\left(\ceil{\log_2c_2}-1\right)
	+\ell_2'-1
\end{equation*}
multiplications. As $c_2\leq c$, it follows that Algorithm~\ref{alg:X2L} performs at most
\begin{equation*}
	\frac{c-c_2}{2}
	\left(\ceil{\log_2 c_1+1}+\dg_v-1\right)
	+\frac{c_2-1}{2}
	\left(\ceil{\log_2c}-1\right)
	+\ell-1
\end{equation*}
multiplications. Hence, Algorithm~\ref{alg:X2L} performs at most $(c-1)(\ceil{\log_2 c}-1)/2+\ell-1$ multiplications, since $c_2=c$ if $c_1=0$, and $\ceil{\log_2c_1+1}=\ceil{\log_2 c}-\dg_v$ otherwise.
\end{proof}

\begin{lemma}\label{lem:X2L-additions} Algorithm~\ref{alg:X2L} performs at most $2^{n_v-1}(3n_v-2)+1$ additions in $\F$.
\end{lemma}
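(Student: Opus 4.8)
The plan is to prove the bound by induction over the reduction tree $(V,E)$, in the same style as the proofs of Lemmas~\ref{lem:L2X-multiplications}--\ref{lem:L2X-additions}. For the base case, when the input vertex $v$ is a leaf, one checks Lines~\ref{X2L:base}--\ref{X2L:base-end}: the costliest branch is $c=\ell=2$, which performs two additions, and $2^{1-1}(3\cdot 1-2)+1=2$, so the bound holds. It therefore suffices to fix an internal vertex $v$, assume the bound whenever the input vertex is $v_\alpha$ or $v_\delta$, and establish it for $v$; recall $n_{v_\alpha}=\dg_v$ and $n_{v_\delta}=n_v-\dg_v$, with $1\leq\dg_v\leq n_v-1$ since $v$ is internal.

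I would then count the additions performed on input $v$. Lines~\ref{X2L:left-cols}--\ref{X2L:right-cols-end} make $\ell_2'$ recursive calls on $v_\delta$, contributing at most $\ell_2'\bigl(2^{n_v-\dg_v-1}(3(n_v-\dg_v)-2)+1\bigr)$ additions by the inductive hypothesis; Lines~\ref{X2L:rows}--\ref{X2L:last-row} make $c_1+1$ recursive calls on $v_\alpha$, contributing at most $(c_1+1)\bigl(2^{\dg_v-1}(3\dg_v-2)+1\bigr)$ additions, plus the $c_1\dg_v$ additions from the updates to $\mu$ on Line~\ref{X2L:update-shifts}. These call counts $\ell_2'$ and $c_1+1$ are precisely the ones already used in the proof of Lemma~\ref{lem:X2L-multiplications}. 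Using $\ell_2'\leq 2^{\dg_v}$ and $c_1\leq 2^{n_v-\dg_v}-1$ (consequences of $\ell,c\leq 2^{n_v}$), adding the two contributions, and expanding, the leading terms $2^{n_v-1}(3(n_v-\dg_v)-2)$ and $2^{n_v-1}(3\dg_v-2)$ collapse to $2^{n_v-1}(3n_v-4)$, and the remaining terms total $2^{n_v-\dg_v}(\dg_v+1)+2^{\dg_v}-\dg_v$.

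The inductive step thus reduces to the single numerical inequality $2^{n_v-\dg_v}(\dg_v+1)+2^{\dg_v}-\dg_v\leq 2^{n_v}+1$, and verifying this is the only point that needs genuine care. I would prove it by rewriting it in the equivalent form $(2^{n_v-\dg_v}-1)(\dg_v+1-2^{\dg_v})\leq 0$: the first factor is positive because $\dg_v\leq n_v-1$, and the second is nonpositive because $2^{\dg_v}\geq\dg_v+1$ for all $\dg_v\geq 1$. The inequality is tight precisely when $\dg_v=1$, consistent with the observation in Example~\ref{ex:X2L} that reduction trees with $\image(\dg)\subseteq\{0,1\}$ maximise the number of additions. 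Substituting back yields the claimed bound $2^{n_v-1}(3n_v-2)+1$, completing the induction.
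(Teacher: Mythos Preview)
Your proof is correct and follows essentially the same approach as the paper: the same inductive setup, the same bounds $\ell_2'\leq 2^{\dg_v}$ and $c_1+1\leq 2^{n_v-\dg_v}$, and the same controlling inequality, which the paper writes as $(2^{\dg_v}-\dg_v-1)(2^{n_v-\dg_v}-1)\geq 0$ and you write equivalently as $(2^{n_v-\dg_v}-1)(\dg_v+1-2^{\dg_v})\leq 0$. Your explicit base-case check and the observation about tightness at $\dg_v=1$ are nice touches not spelled out in the paper's proof.
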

\begin{proof} Suppose that Algorithm~\ref{alg:X2L} is called on an internal vertex $v\in V$ such that the bound of the lemma holds whenever the input vertex is $v_\alpha$ or $v_\delta$. Then Lines~\ref{X2L:left-cols}--\ref{X2L:right-cols-end} of the algorithm perform at most
\begin{equation*}
	\ell_2'\left(2^{n_v-\dg_v-1}\left(3\left(n_v-\dg_v\right)-2\right)+1\right)
	\leq
	2^{n_v-1}\left(3n_v-3\dg_v\right)
	-2^{\dg_v}\left(2^{n_v-\dg_v}-1\right)
\end{equation*}
additions. Lines~\ref{X2L:rows}--\ref{X2L:last-row} perform at most
\begin{equation*}
	(c_1+1)\left(2^{\dg_v-1}(3\dg_v-2)+\dg_v+1\right)-\dg_v
	\leq 2^{n_v-1}\left(3\dg_v-2\right)+1
	+\left(\dg_v+1\right)
	\left(2^{n_v-\dg_v}-1\right)
\end{equation*}
additions. It follows that Algorithm~\ref{alg:X2L} performs at most
\begin{equation*}
	2^{n_v-1}\left(3n_v-2\right)+1
	-\left(2^{\dg_v}-\dg_v-1\right)
	\left(2^{n_v-\dg_v}-1\right)
	\leq
	2^{n_v-1}\left(3n_v-2\right)+1
\end{equation*}
additions.
\end{proof}

\begin{lemma} Algorithm~\ref{alg:X2L} performs at most $(c-1)(3\ceil{\log_2c}-1)/2+\ell-1$ additions in $\F$.
\end{lemma}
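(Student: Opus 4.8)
The plan is to establish the bound by induction on the structure of the full binary tree $(V,E)$, exactly in the manner of the preceding lemmas. The bound is readily verified from Table~\ref{tab:X2L-base-case} when the input vertex is a leaf, so it suffices to fix an internal vertex $v\in V$, assume the bound holds whenever $v_\alpha$ or $v_\delta$ is the input vertex, and bound the number of additions performed by Algorithm~\ref{alg:X2L} when called on $v$ with parameters $c$ and $\ell$. Throughout I would use the values $c_1=\ceil{c/2^{\dg_v}}-1$, $c_2=c-2^{\dg_v}c_1$, $\ell_1=\floor{\ell/2^{\dg_v}}$, $\ell_2=\ell-2^{\dg_v}\ell_1$ and $\ell_2'=\min(2^{\dg_v},\ell)$ set on Lines~\ref{X2L:aux}--\ref{X2L:aux-end}, together with the identities $2^{\dg_v}c_1=c-c_2$, $\ell_2'\ell_1+\ell_2=\ell$, and the observation that $c_2\in\{1,\dotsc,2^{\dg_v}\}$, so that the final recursive call on Line~\ref{X2L:last-row} is always well defined.

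First I would bound the column calls. Lines~\ref{X2L:left-cols}--\ref{X2L:right-cols-end} make $\ell_2$ recursive calls on $v_\delta$ with parameters $(c_1+1,\ell_1+1)$ and $\ell_2'-\ell_2$ calls with parameters $(c_1+1,\ell_1)$; applying the induction hypothesis to each, using $\ell_2'\ell_1+\ell_2=\ell$, $2^{\dg_v}c_1=c-c_2$ and $\ell_2'\leq 2^{\dg_v}$, bounds their total cost by $\tfrac{c-c_2}{2}\left(3\ceil{\log_2(c_1+1)}-1\right)+\ell-\ell_2'$. Next the row calls: Lines~\ref{X2L:rows}--\ref{X2L:rows-end} make $c_1$ recursive calls on $v_\alpha$ with $c$-parameter $2^{\dg_v}$, each bounded via Lemma~\ref{lem:X2L-additions} by $2^{\dg_v-1}(3\dg_v-2)+1$, together with $c_1\dg_v$ additions from the $c_1$ executions of Line~\ref{X2L:update-shifts}; Line~\ref{X2L:last-row} makes one further call on $v_\alpha$ with parameters $(c_2,\ell_2')$, bounded by the induction hypothesis by $\tfrac{c_2-1}{2}\left(3\ceil{\log_2c_2}-1\right)+\ell_2'-1$. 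Using $2^{\dg_v}c_1=c-c_2$, the row contribution collapses to $\tfrac{3\dg_v-2}{2}(c-c_2)+c_1(\dg_v+1)+\tfrac{c_2-1}{2}\left(3\ceil{\log_2c_2}-1\right)+\ell_2'-1$.

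Adding the column and row contributions cancels the $\pm\ell_2'$ and yields the bound
\[
	\frac{c-c_2}{2}\left(3\ceil{\log_2(c_1+1)}+3\dg_v-3\right)
	+c_1(\dg_v+1)
	+\frac{c_2-1}{2}\left(3\ceil{\log_2c_2}-1\right)
	+\ell-1
\]
on the number of additions performed by Algorithm~\ref{alg:X2L}. When $c_1=0$ one has $c_2=c$, and this quantity equals $\tfrac{c-1}{2}(3\ceil{\log_2c}-1)+\ell-1$ exactly, as required. When $c_1\neq 0$ one has $c_1+1=\ceil{c/2^{\dg_v}}$ and $c>2^{\dg_v}$, so $\ceil{\log_2(c_1+1)}=\ceil{\log_2c}-\dg_v$; substituting this and simplifying, the target bound $\tfrac{c-1}{2}(3\ceil{\log_2c}-1)+\ell-1$ minus the displayed quantity reduces to
\[
	\frac{3(c_2-1)}{2}\left(\ceil{\log_2c}-\ceil{\log_2c_2}\right)
	+c_1\left(2^{\dg_v}-\dg_v-1\right),
\]
which is nonnegative because $c\geq c_2\geq 1$, $c_1\geq 1$, and $2^{\dg_v}\geq \dg_v+1$ for $\dg_v\geq 1$. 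Hence the bound holds when $v$ is the input vertex, completing the induction.

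The only genuine work is the concluding algebraic simplification and, in particular, the bookkeeping that keeps the two ceilings $\ceil{\log_2(c_1+1)}$ and $\ceil{\log_2c_2}$ separate until the identity $\ceil{\log_2(c_1+1)}=\ceil{\log_2c}-\dg_v$ can be applied in the case $c_1\neq 0$; everything else mirrors the proofs of Lemma~\ref{lem:X2L-additions} and the small-bound additions lemma for Algorithm~\ref{alg:L2X}.
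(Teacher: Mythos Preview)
Your proof is correct and follows essentially the same approach as the paper's: the same induction, the same column and row bounds (using Lemma~\ref{lem:X2L-additions} for the full rows and the induction hypothesis for the last row and the columns), and the same key identity $\ceil{\log_2(c_1+1)}=\ceil{\log_2 c}-\dg_v$ when $c_1\neq 0$. The only cosmetic difference is that the paper drops the nonnegative term $c_1(2^{\dg_v}-\dg_v-1)$ and replaces $\ceil{\log_2 c_2}$ by $\ceil{\log_2 c}$ before the final substitution, whereas you carry all terms to the end and verify the resulting difference is nonnegative.
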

\begin{proof} Suppose that Algorithm~\ref{alg:X2L} is called on an internal vertex $v\in V$ such that the bound of the lemma holds whenever the input vertex is $v_\alpha$ or $v_\delta$. Then Lines~\ref{X2L:left-cols}--\ref{X2L:right-cols-end} perform at most
\begin{equation*}
	2^{\dg_v}\frac{c_1}{2}
	\left(3\ceil{\log_2 c_1+1}-1\right)
	+\ell_2'\left(\ell_1-1\right)
	+\ell_2
	\leq
	\frac{c-c_2}{2}
	\left(3\ceil{\log_2 c_1+1}-1\right)
	+\ell-\ell_2'
\end{equation*}
additions. Lemma~\ref{lem:X2L-additions} implies that Lines~\ref{X2L:rows}--\ref{X2L:rows-end} perform at most
\begin{equation*}
	c_1\left(2^{\dg_v-1}\left(3\dg_v-2\right)+1+\dg_v\right)
	=\frac{c-c_2}{2}3\dg_v
	-c_1\left(2^{\dg_v}-\dg_v-1\right)
	\leq
	\frac{c-c_2}{2}3\dg_v
\end{equation*}
additions. Line~\ref{X2L:last-row} performs at most
\begin{equation*}
	\frac{c_2-1}{2}
	\left(3\ceil{\log_2c_2}-1\right)
	+\ell_2'-1
\end{equation*}
additions. As $c_2\leq c$, it follows that Algorithm~\ref{alg:X2L} performs at most
\begin{equation*}
	\frac{c-c_2}{2}
	\left(3\ceil{\log_2 c_1+1}+3\dg_v-1\right)
	+\frac{c_2-1}{2}
	\left(3\ceil{\log_2c}-1\right)
	+\ell-1
\end{equation*}
additions. Hence, Algorithm~\ref{alg:X2L} performs at most $(c-1)(3\ceil{\log_2 c}-1)/2+\ell-1$ additions, since $c_2=c$ if $c_1=0$, and $\ceil{\log_2c_1+1}=\ceil{\log_2 c}-\dg_v$ otherwise.
\end{proof}

\subsection{Interlude: generalised Taylor expansion}\label{sec:taylor}

The generalised Taylor expansion of a polynomial $F\in\F[x]$ at a degree $t\geq 1$ polynomial $T\in\F[x]$, also called its $T$-adic expansion, is the series expansion
\begin{equation*}
	F=F_0+F_1T+F_2T^2+\dotsb
\end{equation*}
such that $F_i\in\F[x]_t$ for $i\in\N$. Gao and Mateer~\cite[Section~II]{gao2010} provide a fast algorithm for computing the coefficients of the Taylor expansion when $T=x^t-x$ with $t\geq 2$. The algorithm is then utilised as part of their additive FFT algorithms. Our algorithm for converting from the monomial basis to the LCH basis similarly relies on their generalised Taylor expansion algorithm. Consequently, we make a brief aside to recall their algorithm.

The algorithm of Gao and Mateer can be viewed as a specialisation of the recursive algorithm of von zur Gathen~\cite{gathen1990} that takes advantage of easy division by $(x^t-x)^{2^k}=x^{2^kt}-x^{2^k}$ in characteristic two. We present a nonrecursive version of their algorithm modelled on the basis conversion algorithms of van der Hoeven and Schost~\cite[Section~2.2]{hoeven2013}. We also present the inverse algorithm, which recovers a polynomial from the coefficients of its Taylor expansion at $x^t-x$, as it is required by our algorithm for converting from the LCH basis to monomial basis. Finally, we derive a bound on the complexity of both algorithms that is tighter than the one provided by Gao and Mateer.

Let $F\in\F[x]_\ell$ and $t\geq 2$ be an integer. For $k\in\N$, define $F_{k,0},F_{k,1},\dotsc\in\F[x]_{2^k t}$ by the equation
\begin{equation}\label{eqn:taylor}
	F
	=\sum_{i\in\N}
	F_{k,i}
	\left(x^t-x\right)^{2^ki}.
\end{equation}
Then $F_{0,0},F_{0,1},\dotsc$ are the coefficients of the Taylor expansion at $x^t-x$, while $F_{k,0}=F$ for $k\geq\ceil{\log_2\ceil{\ell/t}}$. By grouping terms of indices $2i$ and $2i+1$ in~\eqref{eqn:taylor}, it follows that
\begin{equation*}
	F
	=\sum_{i\in\N}
	\left(
		F_{k,2i}
		+F_{k,2i+1}
		\left(x^{2^kt}-x^{2^k}\right)
	\right)
	\left(x^t-x\right)^{2^{k+1}i}
	\quad\text{for $k\in\N$}.
\end{equation*}
Thus, we obtain the recursive formula
\begin{equation*}
	F_{k+1,i}
	=F_{k,2i}
	+x^{2^k}F_{k,2i+1}
	+x^{2^k t}F_{k,2i+1}
	\quad\text{for $k,i\in\N$}.
\end{equation*}
Given $F_{k,2i}$ and $F_{k,2i+1}$ on the monomial basis, the recursive formula allows $F_{k+1,i}$ to be readily computed on the monomial basis. The formula also allows this computation to be easily inverted. Therefore, given the Taylor coefficients $F_{0,0},F_{0,1},\dotsc$  on the monomial basis, we can efficiently compute $F=F_{\ceil{\log_2\ceil{\ell/t}},0}$ on the monomial basis by means of the recursive formula, and vice versa. Using this observation, we obtain Algorithms~\ref{alg:M2T} and~\ref{alg:T2M}.

\begin{algorithm}[h]
	\caption{$\mathsf{TaylorExpansion}(t,\ell,(a_0,\dotsc,a_{\ell-1}))$}\label{alg:M2T}
	\begin{algorithmic}[1]
		\Require Integers $t\geq 2$ and $\ell\geq 1$, and $a_i=f_i\in\F$ for $i\in\{0,\dotsc,\ell-1\}$.
		\Ensure $a_i=\ftay_i$ for $i\in\{0,\dotsc,\ell-1\}$ such that
		\begin{equation}\label{eqn:TM}
			\sum^{\ceil{\ell/t}-1}_{i=0}
			\left(
				\sum^{\min(\ell-ti,t)-1}_{j=0}\ftay_{ti+j}x^j
			\right)
			\left(x^t-x\right)^i
			=
			\sum^{\ell-1}_{i=0}f_ix^i.
		\end{equation}
		\For{$k=\ceil{\log_2\ceil{\ell/t}}-1,\dotsc,0$}
			\State%
				$\ell_1\set\floor{\ell/(2^{k+1}t)}$,
				$\ell_2\set\ell-2^{k+1}t\ell_1$
			\For{$i=0,\dotsc,\ell_1-1$}
				\For{$j=2^kt-1,\dotsc,0$}
					\State $a_{2^kt(2i)+2^k+j}\set a_{2^kt(2i)+2^k+j}+a_{2^kt(2i+1)+j}$
				\EndFor
			\EndFor
			\For{$j=\ell_2-2^kt-1,\dotsc,0$}
				\State $a_{2^kt(2\ell_1)+2^k+j}\set a_{2^kt(2\ell_1)+2^k+j}+a_{2^kt(2\ell_1+1)+j}$
			\EndFor
		\EndFor
	\end{algorithmic}
\end{algorithm}

\begin{algorithm}[h]
	\caption{$\mathsf{InverseTaylorExpansion}(t,\ell,(a_0,\dotsc,a_{\ell-1}))$}\label{alg:T2M}
	\begin{algorithmic}[1]
		\Require Integers $t\geq 2$ and $\ell\geq 1$, and $a_i=\ftay_i\in\F$ for $i\in\{0,\dotsc,\ell-1\}$.
		\Ensure $a_i=f_i$ $i\in\{0,\dotsc,\ell-1\}$ such that~\eqref{eqn:TM} holds.	
		\For{$k=0,\dotsc,\ceil{\log_2\ceil{\ell/t}}-1$}
			\State
				\label{T2M:main-loop-start}%
				$\ell_1\set\floor{\ell/(2^{k+1}t)}$,
				$\ell_2\set\ell-2^{k+1}t\ell_1$
			\For{$i=0,\dotsc,\ell_1-1$}
				\For{$j=0,\dotsc,2^kt-1$}
					\State $a_{2^kt(2i)+2^k+j}\set a_{2^k t(2i)+2^k+j}+a_{2^k t(2i+1)+j}$
				\EndFor
			\EndFor
			\For{$j=0,\dotsc,\ell_2-2^kt-1$}
				\State $a_{2^k t(2\ell_1)+2^k+j}\set a_{2^kt(2\ell_1)+2^k+j}+a_{2^k t(2\ell_1+1)+j}$
			\EndFor\label{T2M:main-loop-end}
		\EndFor
	\end{algorithmic}
\end{algorithm}

\begin{lemma}\label{lem:taylor} Algorithms~\ref{alg:M2T} and~\ref{alg:T2M} perform at most $\floor{\ell/2}\ceil{\log_2\ceil{\ell/t}}$ additions in~$\F$.
\end{lemma}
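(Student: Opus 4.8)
The plan is to prove the bound by analysing the loop structure of Algorithm~\ref{alg:M2T} directly; since Algorithm~\ref{alg:T2M} has exactly the same loop bounds and the same pattern of additions (only the order of the innermost iterations differs), the same count applies verbatim to it. First I would observe that the outer loop runs over $k$ from $\ceil{\log_2\ceil{\ell/t}}-1$ down to $0$, so there are $\ceil{\log_2\ceil{\ell/t}}$ iterations of it; it therefore suffices to show that the work done inside a single iteration with a fixed value of $k$ performs at most $\floor{\ell/2}$ additions, whence the product bound follows immediately.

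For a fixed $k$, set $m = 2^k t$, so $\ell_1 = \floor{\ell/(2m)}$ and $\ell_2 = \ell - 2m\ell_1 \in \{0,1,\dotsc,2m-1\}$. The inner double loop over $i \in \{0,\dotsc,\ell_1-1\}$ and $j \in \{0,\dotsc,m-1\}$ performs exactly $m$ additions for each of the $\ell_1$ values of $i$, i.e. $m\ell_1$ additions in total. The trailing loop over $j \in \{0,\dotsc,\ell_2 - m - 1\}$ performs $\max(\ell_2 - m, 0)$ additions. Hence the total number of additions in this iteration is $m\ell_1 + \max(\ell_2 - m, 0)$. The key step is to verify that $m\ell_1 + \max(\ell_2 - m, 0) \le \floor{\ell/2}$. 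If $\ell_2 \le m$ this quantity is $m\ell_1 \le (2m\ell_1)/2 \le \floor{\ell/2}$ since $2m\ell_1 \le \ell$ and $m\ell_1$ is an integer. If $\ell_2 > m$, then it equals $m\ell_1 + \ell_2 - m = m(2\ell_1 - 1) + \ell_2$; writing $\ell = 2m\ell_1 + \ell_2$ we get $\floor{\ell/2} = m\ell_1 + \floor{\ell_2/2}$, so the claim reduces to $m(\ell_1 - 1) + \ell_2 \le m\ell_1 + \floor{\ell_2/2}$, i.e. $\ell_2 - \floor{\ell_2/2} \le m$, i.e. $\ceil{\ell_2/2} \le m$, which holds because $\ell_2 \le 2m - 1$ forces $\ceil{\ell_2/2} \le m$. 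This completes the per-iteration bound.

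I expect the main (and only slightly delicate) obstacle to be exactly this arithmetic inequality $m\ell_1 + \max(\ell_2-m,0) \le \floor{\ell/2}$, where care is needed with the floor function and the two cases according to whether the trailing partial block contributes any additions; everything else is a routine counting of loop iterations. I would also note at the outset the degenerate case $\ell \le t$, where $\ceil{\ell/t}=1$, $\ceil{\log_2\ceil{\ell/t}}=0$, the outer loop is empty, no additions are performed, and the bound $\floor{\ell/2}\cdot 0 = 0$ holds trivially; and I would remark that when the outer loop does execute, summing the per-iteration bound of $\floor{\ell/2}$ over its $\ceil{\log_2\ceil{\ell/t}}$ iterations yields the stated total. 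Finally, since Algorithm~\ref{alg:T2M} differs from Algorithm~\ref{alg:M2T} only in running the outer loop in the opposite direction and the innermost loop over $j$ in increasing rather than decreasing order --- neither of which changes the number of additions performed --- the identical bound holds for it, completing the proof.
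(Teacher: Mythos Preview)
Your proof is correct and takes essentially the same approach as the paper's: both count the additions in a single outer-loop iteration as $m\ell_1+\max(\ell_2-m,0)$ with $m=2^kt$, bound this by $\floor{\ell/2}$ (the paper via a one-line chain of inequalities, you via an explicit case split on $\ell_2\le m$ versus $\ell_2>m$), and then sum over the $\ceil{\log_2\ceil{\ell/t}}$ iterations. One harmless typo: $m\ell_1+\ell_2-m$ equals $m(\ell_1-1)+\ell_2$, not $m(2\ell_1-1)+\ell_2$, though you correctly use the former in the very next clause.
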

\begin{proof} For each $k\in\{0,\dotsc,\ceil{\log_2\ceil{\ell/t}}-1\}$, Lines~\ref{T2M:main-loop-start}--\ref{T2M:main-loop-end} of either algorithm perform at most
\begin{equation*}
	2^kt\ell_1+\max(\ell_2-2^kt,0)
	\leq 2^kt\ell_1+\max(\ell_2-\ceil{\ell_2/2},\floor{\ell_2/2})
	= 2^kt\ell_1+\floor{\ell_2/2}
	=\floor{\ell/2}
\end{equation*}
additions in $\F$.
\end{proof}

\subsection{Conversion between the Lin--Chung--Han and monomial bases}\label{sec:MX}

We use Lemma~\ref{lem:reduction} to provide algorithms for converting between the monomial basis and the ``twisted'' LCH basis $\{X_{\beta_v,0}(\beta_{v,0}x),\dotsc,X_{\beta_v,\ell-1}(\beta_{v,0}x)\}$ of~$\F[x]_\ell$, for $v\in V$ and $\ell\in\{1,\dotsc,2^{n_v}\}$. Conversions between the LCH and monomial bases then require at most an additional $\max(2\ell-3,0)$ multiplications for performing the substitution $x\mapsto x/\beta_{v,0}$ or $x\mapsto\beta_{v,0}x$. In particular, no additional multiplications are required if $\beta$ is a Cantor basis, since $\beta_{v,0}=1$ for all $v\in V$ (see Remark~\ref{rmk:cantor-precomputations}). We base the conversion algorithms on the following analogue of Lemma~\ref{lem:setup-LX-reduction}.

\begin{lemma}\label{lem:XM-reduction} Let $v\in V$ be an internal vertex and $\ell\in\{1,\dotsc,2^{n_v}\}$. Suppose that $\fout_0,\dotsc,\fout_{\ell-1},\lambda,\fint_0,\dotsc,\fint_{\ell-1},\ftay_0,\dotsc,\ftay_{\ell-1}\in\F$ satisfy
\begin{equation}\label{eqn:XM-rows}
	\sum^{\min\left(\ell-2^{\dg_v}i,2^{\dg_v}\right)-1}_{j=0}
	\fint_{2^{\dg_v}i+j}
	x^j
	=
	\sum^{\min\left(\ell-2^{\dg_v}i,2^{\dg_v}\right)-1}_{j=0}
	\fout_{2^{\dg_v}i+j}
	X_{\beta_{v_\alpha},j}\left(\beta_{v_\alpha,0}x\right)
\end{equation}
for $i\in\{0,\dotsc,\ceil{\ell/2^{\dg_v}}-1\}$, and
\begin{equation}\label{eqn:XM-cols}
	\sum^{\ceil{(\ell-j)/2^{\dg_v}}-1}_{i=0}
	\ftay_{2^{\dg_v}i+j}
	x^i
	=\sum^{\ceil{(\ell-j)/2^{\dg_v}}-1}_{i=0}
	\fint_{2^{\dg_v}i+j}
	X_{\beta_{v_\delta},i}\left(\beta_{v_\delta,0} x\right)
\end{equation}
for $j\in\{0,\dotsc,\min(2^{\dg_v},\ell)-1\}$. Then
\begin{equation}\label{eqn:XT}
	\sum^{\ceil{\ell/2^{\dg_v}}-1}_{i=0}
	\left(
		\sum^{\min\left(\ell-2^{\dg_v}i,2^{\dg_v}\right)-1}_{j=0}
		\ftay_{2^{\dg_v}i+j}
		x^j
	\right)
	\left(
		\frac{x^{2^{\dg_v}}-x}
		{\beta_{v_\delta,0}}
	\right)^i
	=\sum^{\ell-1}_{i=0}
	\fout_i
	X_{\beta_v,i}\left(\beta_{v,0}x\right).
\end{equation}
\end{lemma}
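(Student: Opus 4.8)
The plan is to derive~\eqref{eqn:XT} directly from the factorisation of the LCH basis polynomials supplied by Lemma~\ref{lem:reduction}, applied to the vector $\beta_v$ with splitting parameter $\dg_v$. By the discussion following Definition~\ref{def:reduction-tree}, the hypotheses of Lemma~\ref{lem:reduction} are met for this choice, with $\alpha(\beta_v,\dg_v)=\beta_{v_\alpha}$ and $\delta(\beta_v,\dg_v)=\beta_{v_\delta}$; in particular $\beta_{v_\alpha,0}=\beta_{v,0}$, while $\beta_{v_\delta,0}\neq 0$ since $\beta_{v_\delta}$ has $\F_2$-linearly independent entries. The first step is to substitute $x\mapsto\beta_{v,0}x$ into the identity $X_{\beta_v,2^{\dg_v}i+j}=X_{\beta_{v_\delta},i}((x/\beta_{v,0})^{2^{\dg_v}}-x/\beta_{v,0})\,X_{\beta_{v_\alpha},j}(x)$ of Lemma~\ref{lem:reduction}. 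Since $(\beta_{v,0}x/\beta_{v,0})^{2^{\dg_v}}-\beta_{v,0}x/\beta_{v,0}=x^{2^{\dg_v}}-x$ and $\beta_{v_\alpha,0}=\beta_{v,0}$, this yields
\begin{equation*}
	X_{\beta_v,2^{\dg_v}i+j}(\beta_{v,0}x)
	=X_{\beta_{v_\delta},i}\left(\beta_{v_\delta,0}\cdot\frac{x^{2^{\dg_v}}-x}{\beta_{v_\delta,0}}\right)
	X_{\beta_{v_\alpha},j}(\beta_{v_\alpha,0}x)
\end{equation*}
for the relevant ranges of $i$ and $j$, where I have rewritten the $\beta_{v_\delta}$-factor so as to expose the twisted LCH basis polynomial evaluated at $(x^{2^{\dg_v}}-x)/\beta_{v_\delta,0}$.

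Next I would expand the right-hand side of~\eqref{eqn:XT}. Writing each index $i\in\{0,\dotsc,\ell-1\}$ as $2^{\dg_v}i'+j$ with $0\leq j<\min(\ell-2^{\dg_v}i',2^{\dg_v})$ and $0\leq i'\leq\ceil{\ell/2^{\dg_v}}-1$, substituting the displayed factorisation, and collapsing the inner sum over $j$ by means of~\eqref{eqn:XM-rows}, gives
\begin{equation*}
	\sum^{\ell-1}_{i=0}\fout_iX_{\beta_v,i}(\beta_{v,0}x)
	=\sum^{\ceil{\ell/2^{\dg_v}}-1}_{i'=0}
	X_{\beta_{v_\delta},i'}\left(\beta_{v_\delta,0}\cdot\frac{x^{2^{\dg_v}}-x}{\beta_{v_\delta,0}}\right)
	\sum^{\min(\ell-2^{\dg_v}i',2^{\dg_v})-1}_{j=0}
	\fint_{2^{\dg_v}i'+j}x^j.
\end{equation*}
Interchanging the order of summation groups the terms by $j$, producing an inner sum $\sum_{i'}\fint_{2^{\dg_v}i'+j}X_{\beta_{v_\delta},i'}(\beta_{v_\delta,0}\cdot(x^{2^{\dg_v}}-x)/\beta_{v_\delta,0})$ over $0\leq i'\leq\ceil{(\ell-j)/2^{\dg_v}}-1$. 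Now I substitute $x\mapsto(x^{2^{\dg_v}}-x)/\beta_{v_\delta,0}$ into the univariate polynomial identity~\eqref{eqn:XM-cols}, which is legitimate since both sides of~\eqref{eqn:XM-cols} are polynomials in a single variable; this replaces each such inner sum by $\sum_{i'}\ftay_{2^{\dg_v}i'+j}((x^{2^{\dg_v}}-x)/\beta_{v_\delta,0})^{i'}$. Reversing the interchange of summation then regroups the terms by $i'$ and recovers exactly the left-hand side of~\eqref{eqn:XT}.

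The whole argument is thus a chain of substitutions and reindexings; no uniqueness statement is needed here, in contrast to Lemma~\ref{lem:LX-reduction}, and the proof runs along the same lines as those of Lemmas~\ref{lem:setup-LX-reduction} and~\ref{lem:NX-reduction}. The only point requiring care --- and the main obstacle in writing it out in full --- is the bookkeeping of the truncated summation ranges across the two interchanges of summation: one must check that the index set $\{(i',j):0\leq i'\leq\ceil{\ell/2^{\dg_v}}-1,\ 0\leq j\leq\min(\ell-2^{\dg_v}i',2^{\dg_v})-1\}$ coincides with $\{(i',j):0\leq j\leq\min(2^{\dg_v},\ell)-1,\ 0\leq i'\leq\ceil{(\ell-j)/2^{\dg_v}}-1\}$, both being equal to $\{(i',j):0\leq j<2^{\dg_v},\ 2^{\dg_v}i'+j<\ell\}$, and that the summation bounds appearing in~\eqref{eqn:XM-rows} and~\eqref{eqn:XM-cols} are precisely those needed for the two collapses above. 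This is routine but must be carried out carefully so that the truncated case $\ell<2^{n_v}$ is handled correctly throughout.
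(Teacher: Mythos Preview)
Your proposal is correct and follows essentially the same approach as the paper's own proof: both apply the factorisation of Lemma~\ref{lem:reduction} to $\beta_v$, substitute $x\mapsto\beta_{v,0}x$ (using $\beta_{v_\alpha,0}=\beta_{v,0}$), collapse the inner sum via~\eqref{eqn:XM-rows}, interchange the order of summation, and then apply~\eqref{eqn:XM-cols} after the substitution $x\mapsto(x^{2^{\dg_v}}-x)/\beta_{v_\delta,0}$. Your discussion of the index-set bookkeeping is a welcome elaboration of a step the paper leaves implicit.
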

\begin{proof} Let $v\in V$ be an internal vertex and $\ell\in\{1,\dotsc,2^{n_v}\}$. Suppose that $\fout_0,\dotsc,\fout_{\ell-1},\lambda,\fint_0,\dotsc,\fint_{\ell-1},\ftay_0,\dotsc,\ftay_{\ell-1}\in\F$ satisfy equations~\eqref{eqn:XM-rows} and~\eqref{eqn:XM-cols}. Then $\beta_{v,i}/\beta_{v,0}\in\F_{2^{\dg_v}}$ for $i\in\{0,\dotsc,\dg_v-1\}$, since $(V,E)$ is a reduction tree for $\beta$. Thus, Lemma~\ref{lem:reduction} implies that
\begin{multline*}
	\sum^{\ell-1}_{i=0}
	\fout_i
	X_{\beta_v,i}\left(\beta_{v,0}x\right)
	=\sum^{\ceil{\ell/2^{\dg_v}}-1}_{i=0}
	\left(
		\sum^{\min\left(\ell-2^{\dg_v}i,2^{\dg_v}\right)-1}_{j=0}
		\fout_{2^{\dg_v}i+j}
		X_{\beta_{v_\alpha},j}\left(\beta_{v,0}x\right)
	\right)\\
	\times 
	X_{\beta_{v_\delta},i}\left(x^{2^{\dg_v}}-x\right).
\end{multline*}
Substituting in $\beta_{v,0}=\beta_{v_\alpha,0}$, \eqref{eqn:XM-rows} and~\eqref{eqn:XM-cols}, it follows that
\begin{align*}
	\sum^{\ell-1}_{i=0}
	\fout_i
	X_{\beta_v,i}\left(\beta_{v,0}x\right)
	&=
	\sum^{\ceil{\ell/2^{\dg_v}}-1}_{i=0}
	\left(
		\sum^{\min\left(\ell-2^{\dg_v}i,2^{\dg_v}\right)-1}_{j=0}
		\fint_{2^{\dg_v}i+j}
		x^j
	\right)
	X_{\beta_{v_\delta},i}\left(x^{2^{\dg_v}}-x\right)\\
	&=
	\sum^{\min\left(2^{\dg_v},\ell\right)-1}_{j=0}
	\left(
		\sum^{\ceil{(\ell-j)/2^{\dg_v}}-1}_{i=0}
		\fint_{2^{\dg_v}i+j}
		X_{\beta_{v_\delta},i}
		\left(
			\beta_{v_\delta,0}
			\frac{x^{2^{\dg_v}}-x}{\beta_{v_\delta,0}}
		\right)
	\right)
	x^j\\
	&=
	\sum^{\min\left(2^{\dg_v},\ell\right)-1}_{j=0}
	\left(
		\sum^{\ceil{(\ell-j)/2^{\dg_v}}-1}_{i=0}
		\ftay_{2^{\dg_v}i+j}
		\left(
			\frac{x^{2^{\dg_v}}-x}{\beta_{v_\delta,0}}
		\right)^i
	\right)
	x^j.
\end{align*}
Hence, \eqref{eqn:XT} holds.
\end{proof}

Using Lemma~\ref{lem:XM-reduction}, we obtain Algorithms~\ref{alg:X2M} and~\ref{alg:M2X} for converting between the monomial basis and the twisted basis $\{X_{\beta_v,0}(\beta_{v,0}x),\dotsc,X_{\beta_v,\ell-1}(\beta_{v,0}x)\}$ of $\F[x]_\ell$. Each algorithm makes what is now a familiar pattern of recursive calls, but with the addition of the computation of either a generalised Taylor expansion or the inverse transformation, for which the algorithms of Section~\ref{sec:taylor} are used.

\begin{algorithm}[h]
	\caption{$\mathsf{X2M}(v,\ell,(a_0,\ldots,a_{\ell-1}))$}\label{alg:X2M}
	\begin{algorithmic}[1]
		\Require a vertex $v\in V$, $\ell\in\{1,\dotsc,2^{n_v}\}$, and $a_i=\fout_i\in\F$ for $i\in\{0,\dotsc,\ell-1\}$.
		\Ensure $a_i=f_i\in\F$ for $i\in\{0,\dotsc,\ell-1\}$ such that
		\begin{equation}\label{eqn:XM}
			\sum^{\ell-1}_{i=0}
			f_ix^i
			=
			\sum^{\ell-1}_{i=0}
			\fout_i
			X_{\beta_v,i}\left(\beta_{v,0}x\right).
		\end{equation}
		\If{$\ell\leq 2$}
			\Return
		\EndIf
		\State\label{X2M:aux}%
			$\ell_1\set\ceil{\ell/2^{\dg_v}}-1$,
			$\ell_2\set \ell-2^{\dg_v}\ell_1$,
			$\ell_2'\set\min(2^{\dg_v},\ell)$
		\For{$i=0,\dotsc,\ell_1-1$}\label{X2M:rows}
			\State\Call{X2M}{$v_\alpha,2^{\dg_v},(a_{2^{\dg_v}i},a_{2^{\dg_v}i+1},\dotsc,a_{2^{\dg_v}(i+1)-1})$}
		\EndFor
		\State\label{X2M:last-row}\Call{X2M}{$v_\alpha,\ell_2,(a_{2^{\dg_v}\ell_1},a_{2^{\dg_v}\ell_1+1},\dotsc,a_{\ell-1})$}
		\For{$j=0,\dotsc,\ell_2-1$}\label{X2M:cols}
			\State\Call{X2M}{$v_\delta,\ell_1+1,(a_j,a_{2^{\dg_v}+j},\dotsc,a_{2^{\dg_v}\ell_1+j})$}
		\EndFor\label{X2M:left-cols-end}
		\For{$j=\ell_2,\dotsc,\ell_2'-1$}\label{X2M:right-cols}
			\State\Call{X2M}{$v_\delta,\ell_1,(a_j,a_{2^{\dg_v}+j},\dotsc,a_{2^{\dg_v}(\ell_1-1)+j})$}
		\EndFor\label{X2M:cols-end}
		\If{$\ell_1\neq 0$ and $1/\beta_{v_\delta,0}\neq 1$}\label{X2M:scale}
			\State $w\set 1/\beta_{v_\delta,0}$
			\For{$i=1,\dotsc,\ell_1-1$}
				\For{$j=0,\dotsc,2^{\dg_v}-1$}
					\State $a_{2^{\dg_v}i+j}\set wa_{2^{\dg_v}i+j}$
				\EndFor
				\State $w\set w/\beta_{v_\delta,0}$
			\EndFor
			\For{$j=0,\dotsc,\ell_2-1$}
				\State $a_{2^{\dg_v}\ell_1+j}\set wa_{2^{\dg_v}\ell_1+j}$
			\EndFor
		\EndIf\label{X2M:scale-end}
		\State\label{X2M:taylor}\Call{InverseTaylorExpansion}{$2^{\dg_v},\ell,(a_0,a_1,\dotsc,a_{\ell-1})$}
	\end{algorithmic}
\end{algorithm}

\begin{algorithm}[h]
	\caption{$\mathsf{M2X}(v,\ell,(a_0,\ldots,a_{\ell-1}))$}\label{alg:M2X}
	\begin{algorithmic}[1]
		\Require a vertex $v\in V$, $\ell\in\{1,\dotsc,2^{n_v}\}$, and $a_i=f_i\in\F$ for $i\in\{0,\dotsc,\ell-1\}$.
		\Ensure $a_i=\fout_i\in\F$ for $i\in\{0,\dotsc,\ell-1\}$ such that \eqref{eqn:XM} holds.
		\If{$\ell\leq 2$}
			\Return
		\EndIf
		\State\label{M2X:aux}%
			$\ell_1\set\ceil{\ell/2^{\dg_v}}-1$,
			$\ell_2\set \ell-2^{\dg_v}\ell_1$,
			$\ell_2'\set\min(2^{\dg_v},\ell)$
		\State\label{M2X:taylor}\Call{TaylorExpansion}{$2^{\dg_v},\ell,(a_0,a_1,\dotsc,a_{\ell-1})$}
		\If{$\ell_1\neq 0$ and $\beta_{v_\delta,0}\neq 1$}\label{M2X:scale}
			\State $w\set\beta_{v_\delta,0}$
			\For{$i=1,\dotsc,\ell_1-1$}
				\For{$j=0,\dotsc,2^{\dg_v}-1$}
					\State $a_{2^{\dg_v}i+j}\set wa_{2^{\dg_v}i+j}$
				\EndFor
				\State $w\set\beta_{v_\delta,0}w$
			\EndFor
			\For{$j=0,\dotsc,\ell_2-1$}
				\State $a_{2^{\dg_v}\ell_1+j}\set wa_{2^{\dg_v}\ell_1+j}$
			\EndFor
		\EndIf\label{M2X:scale-end}
		\For{$j=0,\dotsc,\ell_2-1$}\label{M2X:cols}
			\State\Call{M2X}{$v_\delta,\ell_1+1,(a_j,a_{2^{\dg_v}+j},\dotsc,a_{2^{\dg_v}\ell_1+j})$}
		\EndFor
		\For{$j=\ell_2,\dotsc,\ell_2'-1$}
			\State\Call{M2X}{$v_\delta,\ell_1,(a_j,a_{2^{\dg_v}+j},\dotsc,a_{2^{\dg_v}(\ell_1-1)+j})$}
		\EndFor\label{M2X:cols-end}
		\For{$i=0,\dotsc,\ell_1-1$}\label{M2X:rows}
			\State\Call{M2X}{$v_\alpha,2^{\dg_v},(a_{2^{\dg_v}i},a_{2^{\dg_v}i+1},\dotsc,a_{2^{\dg_v}(i+1)-1})$}
		\EndFor
		\State\label{M2X:last-row}\Call{M2X}{$v_\alpha,\ell_2,(a_{2^{\dg_v}\ell_1},a_{2^{\dg_v}\ell_1+1},\dotsc,a_{\ell-1})$}
	\end{algorithmic}
\end{algorithm}

\begin{theorem} Algorithms~\ref{alg:X2M} and~\ref{alg:M2X} are correct.
\end{theorem}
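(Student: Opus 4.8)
The statement to prove is that Algorithms~\ref{alg:X2M} and~\ref{alg:M2X} are correct. These two algorithms are inverses of one another and make essentially the same recursive calls in reverse order, so I would prove correctness of Algorithm~\ref{alg:X2M} in detail and then remark that the proof for Algorithm~\ref{alg:M2X} is obtained by reversing the order of operations and swapping \textsf{InverseTaylorExpansion} for \textsf{TaylorExpansion} (which are mutually inverse by the discussion in Section~\ref{sec:taylor}). The proof will follow the now-familiar pattern of the earlier correctness proofs (Theorems~\ref{thm:NX-correctness} and the correctness theorems for Algorithms~\ref{alg:L2X} and~\ref{alg:X2L}): induct on the structure of the reduction tree $(V,E)$, treating the leaf case directly and then showing that if the algorithm is correct when called on $v_\alpha$ or $v_\delta$, it is correct when called on an internal vertex $v$.

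For the base case, suppose $v$ is a leaf, so $n_v = 1$ and $\ell \in \{1,2\}$. Since $\ell \leq 2$ the algorithm returns immediately without modifying $(a_0,\dotsc,a_{\ell-1})$. I must check that this is correct: for $\ell \leq 2$ we have $X_{\beta_v,0}(\beta_{v,0}x) = 1$ and, if $\ell = 2$, $X_{\beta_v,1}(\beta_{v,0}x) = \beta_{v,0}x/\beta_{v,0} = x$, so the twisted LCH basis of $\F[x]_\ell$ coincides with the monomial basis $\{1\}$ or $\{1,x\}$. Hence $f_i = \fout_i$ already, and doing nothing is correct. (This also handles any internal vertex reached with $\ell \leq 2$.)

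For the inductive step, let $v$ be internal and assume correctness for $v_\alpha$ and $v_\delta$. Given input $a_i = \fout_i$ for $i \in \{0,\dotsc,\ell-1\}$, I would track the contents of the vector through the four phases. Lines~\ref{X2M:rows}--\ref{X2M:last-row} apply \textsf{X2M} to each ``row'' block of length $2^{\dg_v}$ (with the last block of length $\ell_2$); by the induction hypothesis these produce exactly the $\fint$ of equation~\eqref{eqn:XM-rows}. Lines~\ref{X2M:cols}--\ref{X2M:cols-end} then apply \textsf{X2M} to each ``column'' subvector (of length $\ell_1+1$ for $j < \ell_2$, length $\ell_1$ for $\ell_2 \leq j < \ell_2'$), which by induction converts from the twisted LCH basis for $\beta_{v_\delta}$ to the monomial basis — i.e.\ it computes coefficients $b_{2^{\dg_v}i+j}$ with $\sum_i b_{2^{\dg_v}i+j}x^i = \sum_i \fint_{2^{\dg_v}i+j} X_{\beta_{v_\delta},i}(\beta_{v_\delta,0}x)$. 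The scaling loop in Lines~\ref{X2M:scale}--\ref{X2M:scale-end} multiplies the entry indexed $2^{\dg_v}i+j$ by $(1/\beta_{v_\delta,0})^i$, turning $\sum_i b_{2^{\dg_v}i+j}x^i$ into $\sum_i \ftay_{2^{\dg_v}i+j}\,x^i$ where $\ftay_{2^{\dg_v}i+j} = b_{2^{\dg_v}i+j}/\beta_{v_\delta,0}^i$, so that the $\ftay$ satisfy precisely~\eqref{eqn:XM-cols} after absorbing the substitution $x \mapsto x/\beta_{v_\delta,0}$ (note $i=0$ needs no scaling, consistent with the loop starting at $i=1$, and the $1/\beta_{v_\delta,0}=1$ case correctly skips the loop). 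Finally Line~\ref{X2M:taylor} calls \textsf{InverseTaylorExpansion} with $t = 2^{\dg_v}$, which by the specification of Algorithm~\ref{alg:T2M} (equation~\eqref{eqn:TM}) recovers from the $\ftay$ the coefficients $f_i$ on the monomial basis such that $\sum_i f_i x^i = \sum_i \big(\sum_j \ftay_{2^{\dg_v}i+j}x^j\big)(x^{2^{\dg_v}}-x)^i$; comparing with~\eqref{eqn:XT} of Lemma~\ref{lem:XM-reduction} (with the factor $1/\beta_{v_\delta,0}$ already folded into the $\ftay$) shows $\sum_i f_i x^i = \sum_i \fout_i X_{\beta_v,i}(\beta_{v,0}x)$, which is~\eqref{eqn:XM}. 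For Algorithm~\ref{alg:M2X} the same bookkeeping runs in reverse: first \textsf{TaylorExpansion}, then the scaling by $\beta_{v_\delta,0}^i$, then the column calls, then the row calls, each inverting the corresponding step of \textsf{X2M}; correctness again follows from Lemma~\ref{lem:XM-reduction}, Lemma~\ref{lem:taylor}'s algorithms being mutual inverses, and the induction hypothesis.

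**Main obstacle.** The delicate point is bookkeeping the interaction between the column recursions and the scaling loop: the recursive \textsf{X2M} calls on columns convert to the monomial basis in the variable of the \emph{inner} polynomial, but Lemma~\ref{lem:XM-reduction} wants the column expansions in the twisted variable $x/\beta_{v_\delta,0}$, and one must verify that the per-row scaling by powers of $1/\beta_{v_\delta,0}$ (done \emph{between} the column recursions and the Taylor step, and with the $i=0$ row and the $\beta_{v_\delta,0}=1$ shortcut handled specially) exactly reconciles these. I would make this precise by carefully writing out the polynomial identity that the vector contents encode after each of the four phases, so that the final comparison with~\eqref{eqn:XT} is immediate; the index ranges ($\ell_1$, $\ell_2$, $\ell_2'$, the $\ceil{\cdot}$ vs.\ $\floor{\cdot}$ conventions) must be checked to line up with the summation bounds in Lemma~\ref{lem:XM-reduction}, exactly as in the earlier proofs.
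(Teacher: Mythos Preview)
Your proposal is correct and follows essentially the same approach as the paper: structural induction on the reduction tree, handling $\ell\leq 2$ via $X_{\beta_v,i}(\beta_{v,0}x)=x^i$ for $i\in\{0,1\}$, and in the inductive step tracking the array through the row calls (producing the $\fint$ of~\eqref{eqn:XM-rows}), the column calls (producing the $\ftay$ of~\eqref{eqn:XM-cols}), the scaling by powers of $1/\beta_{v_\delta,0}$, and finally \textsf{InverseTaylorExpansion}, with Lemma~\ref{lem:XM-reduction} tying these together. The paper's proof differs only cosmetically: it names the output of the column calls $\ftay$ (your $b$) rather than the scaled values, and it explicitly separates the cases $\ell\leq 2^{\dg_v}$ (where the scaling and Taylor steps are vacuous) and $\ell>2^{\dg_v}$.
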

\begin{proof} We prove correctness for Algorithm~\ref{alg:X2M} by induction on $\ell$. The proof of correctness for Algorithm~\ref{alg:M2X} is omitted since it is almost identical. For $v\in V$, we have $X_{\beta_v,0}(\beta_{v,0}x)=1$ and $X_{\beta_v,1}(\beta_{v,0}x)=x$. Thus, Algorithm~\ref{alg:X2M} produces the correct output for all inputs with $\ell\leq 2$. In particular, it follows that the algorithm produces the correct output whenever the input vertex is a leaf. Therefore, it is sufficient to show that for internal $v\in V$, if the algorithm produces the correct output whenever $v_\alpha$ or $v_\delta$ is given as an input, then it produces the correct output whenever $v$ and $\ell\in\{3,\dotsc,2^{n_v}\}$ are given as inputs.

Let $v\in V$ be an internal vertex and suppose that Algorithm~\ref{alg:X2M} produces the correct output whenever $v_\alpha$ or $v_\delta$ is given as an input. Suppose that the algorithm is called on $v$ and $\ell\in\{3,\dotsc,2^{n_v}\}$, with $a_i=\fout_i\in\F$ for $i\in\{0,\dotsc,\ell-1\}$. Then the assumption that Algorithm~\ref{alg:X2M} produces the correct output whenever $v_\alpha$ is given as an input implies that Lines~\ref{X2M:aux}--\ref{X2M:last-row} of the algorithm set $a_i=\fint_i$ for $i\in\{0,\dotsc,\ell-1\}$, where $\fint_0,\dotsc,\fint_{\ell-1}$ are the unique elements in $\F$ such that \eqref{eqn:XM-rows} holds. Similarly, the assumption implies that Lines~\ref{X2M:cols}--\ref{X2M:cols-end} then set $a_i=\ftay_i$ for $i\in\{0,\dotsc,\ell-1\}$, where $\ftay_0,\dotsc,\ftay_{\ell-1}$ are the unique elements in $\F$ such that \eqref{eqn:XM-cols} holds. As $v$ is an internal vertex, Lemma~\ref{lem:XM-reduction} implies that $\ftay_0,\dotsc,\ftay_{\ell-1}$ also satisfy~\eqref{eqn:XT}.

Let $f_0,\dotsc,f_{\ell-1}$ be the unique elements in $\F$ such that \eqref{eqn:XM} holds. If ${\ell\leq 2^{\dg_v}}$, then \eqref{eqn:XT} and \eqref{eqn:XM} imply that $f_i=\ftay_i$ for $i\in\{0,\dotsc,\ell-1\}$. Moreover, Lines~\ref{X2M:scale}--\ref{X2M:taylor} have no effect in this case. Therefore, the algorithm produces the correct output if $\ell\leq 2^{\dg_v}$. If $\ell>2^{\dg_v}$, then Lines~\ref{X2M:scale}--\ref{X2M:scale-end} set
$a_{2^{\dg_v}i+j}=\ftay_{2^{\dg_v}i+j}/\beta^i_{v_\delta,0}$ for $i\in\{1,\dotsc,\ceil{\ell/2^{\dg_v}}-1\}$ and $j\in\{0,\dotsc,\min(\ell-2^{\dg_v}i,2^{\dg_v})-1\}$. Substituting into~\eqref{eqn:XT}, it follows that
\begin{equation*}
	\sum^{\ceil{\ell/2^{\dg_v}}-1}_{i=0}
	\left(
	\sum^{\min(\ell-2^{\dg_v}i,2^{\dg_v})-1}_{j=0}
	a_{2^{\dg_v}i+j}
	x^j
	\right)
	\left(x^{2^{\dg_v}}-x\right)^i
	=
	\sum^{\ell-1}_{i=0}
	\fout_i
	X_{\beta_v,i}\left(\beta_{v,0}x\right)
\end{equation*}
when \textsf{InverseTaylorExpansion} is called in Line~\ref{X2M:taylor}. Thus, the algorithm produces the correct output if $\ell>2^{\dg_v}$. Hence, for internal $v\in V$, if the algorithm produces the correct output whenever $v_\alpha$ or $v_\delta$ is given as an input, then it produces the correct output whenever $v$ and $\ell\in\{3,\dotsc,2^{n_v}\}$ are given as inputs.
\end{proof}

Algorithm~\ref{alg:M2X} requires the precomputation and storage of the elements $\beta_{v_\delta,0}$, while their inverses are required for Algorithm~\ref{alg:X2M}. Consequently, the algorithms require auxiliary storage for $\bigO(n)$ field elements, while all precomputations can be performed with $\bigO(n^2)$ field operations. If $\ell_1=\ceil{\ell/2^{\dg_v}}-1$ is nonzero, then Lines~\ref{X2M:scale}--\ref{X2M:scale-end} of Algorithm~\ref{alg:X2M} and Lines~\ref{M2X:scale}--\ref{M2X:scale-end} of Algorithm~\ref{alg:M2X} perform
\begin{equation*}
	\left(\ell_1-1\right)\left(2^{\dg_v}+1\right)+\ell_2
	=\ell+\ceil{\ell/2^{\dg_v}}-2^{\dg_v}-2
\end{equation*}
multiplications, while the calls made by the algorithms to either \textsf{TaylorExpansion} or \textsf{InverseTaylorExpansion} perform at most $\floor{\ell/2}\ceil{\log_2\ceil{\ell/2^{\dg_v}}}$ additions. It follows that we should once again aim to avoid small values of $\dg_v$ when choosing a reduction tree for the algorithms. However, compared to the algorithms for conversion between the LCH and the Newton and Lagrange bases, a much greater cost in terms of multiplications and additions is incurred if one fails to do so. If $\beta$ is a Cantor basis, then $\beta_{v_\delta,0}=1$ for all internal $v\in V$, regardless of the choice of reduction tree (see Remark~\ref{rmk:cantor-precomputations}). Thus, Algorithms~\ref{alg:X2M} and~\ref{alg:M2X} perform no multiplications in this case, and require no precomputations.

Lin et al.~\cite{lin2016a} provide two algorithms for converting from the monomial basis to the LCH basis when $\ell$ is a power of two, one for arbitrary bases and one for Cantor bases. The reduction strategy they apply for the arbitrary bases corresponds to reduction trees with $\image(\dg)\subseteq\{0,1\}$. For such reduction trees, Algorithm~\ref{alg:M2X} performs the same number of additions as their algorithm, but fewer multiplications in the recursive case (after equalising precomputations). For Cantor bases, Algorithm~\ref{alg:M2X} reduces to their algorithm by choosing the reduction tree so that $\dg_v=2^{\ceil{\log_2n_v}-1}$ for all internal $v\in V$.

\begin{theorem}\label{thm:XM-complexity} Algorithms~\ref{alg:X2M} and~\ref{alg:M2X} perform at most $\floor{\ell/2}\left(3\ceil{\log_2\ell}-4\right)+1$ multiplications and $\floor{\ell/2}\binom{\ceil{\log_2\ell}}{2}$ additions in $\F$. If $\dg_v=2^{\ceil{\log_2 n_v}-1}$ for all internal $v\in V$, then the algorithms perform at most $\floor{\ell/2}\ceil{\log_2 \ell}\ceil{\log_2\log_2\max(\ell,2)}$ additions in~$\F$. If $\beta$ is a Cantor basis, then the algorithms perform no multiplications.
\end{theorem}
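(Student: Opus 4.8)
The plan is to establish Theorem~\ref{thm:XM-complexity} by four separate inductions on the reduction tree, one per assertion, in the style of the complexity proofs for Algorithms~\ref{alg:N2X} and~\ref{alg:L2X}. For every bound the base case covers all inputs with $\ell\leq 2$: since $X_{\beta_v,0}(\beta_{v,0}x)=1$ and $X_{\beta_v,1}(\beta_{v,0}x)=x$, Algorithms~\ref{alg:X2M} and~\ref{alg:M2X} return immediately, performing no operations, while each stated bound is nonnegative there; in particular the bounds hold whenever the input vertex is a leaf. It then suffices, for each bound, to show that if it holds whenever $v_\alpha$ or $v_\delta$ is the input vertex, then it holds whenever an internal $v$ with some $\ell\in\{3,\dotsc,2^{n_v}\}$ is the input. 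In that step I would account for the work of Algorithm~\ref{alg:X2M} (the counts for Algorithm~\ref{alg:M2X} are identical) as the sum of the $\ell_1$ recursive calls on $v_\alpha$ of length $2^{\dg_v}$ and the one of length $\ell_2$ (Lines~\ref{X2M:rows}--\ref{X2M:last-row}), the $\ell_2$ recursive calls on $v_\delta$ of length $\ell_1+1$ and the $2^{\dg_v}-\ell_2$ of length $\ell_1$ (Lines~\ref{X2M:cols}--\ref{X2M:cols-end}), the $\ell+\ell_1-2^{\dg_v}-1$ multiplications of Lines~\ref{X2M:scale}--\ref{X2M:scale-end} when $\ell_1\geq 1$ (none otherwise), and the at most $\floor{\ell/2}\ceil{\log_2(\ell_1+1)}$ additions of the call to \textsf{InverseTaylorExpansion} in Line~\ref{X2M:taylor}, bounded via Lemma~\ref{lem:taylor}. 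The subcase $\ell_1=0$ is trivial, since the algorithm then merely forwards the call to $v_\alpha$ with the length unchanged. For $\ell_1\geq 1$ I would repeatedly use $\ell_1 2^{\dg_v-1}+\floor{\ell_2/2}=\floor{\ell/2}$, $\ceil{\log_2\ell}=\dg_v+\ceil{\log_2(\ell_1+1)}$, and $\ell_2\floor{(\ell_1+1)/2}+(2^{\dg_v}-\ell_2)\floor{\ell_1/2}\leq\floor{\ell/2}$, the last verified by splitting on the parity of $\ell_1$ and using $1\leq\ell_2\leq 2^{\dg_v}$.

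For the additive bound $\floor{\ell/2}\binom{\ceil{\log_2\ell}}{2}$, write $a=\dg_v$ and $b=\ceil{\log_2(\ell_1+1)}$, so $\ceil{\log_2\ell}=a+b$. Feeding in the inductive hypotheses and the identities above, the $v_\alpha$-calls contribute at most $\floor{\ell/2}\binom{a}{2}$ (using $\ceil{\log_2\ell_2}\leq a$), the $v_\delta$-calls at most $\floor{\ell/2}\binom{b}{2}$, and the Taylor call at most $\floor{\ell/2}b$; the Pascal identity $\binom{a+b}{2}=\binom{a}{2}+\binom{b}{2}+ab$ together with $ab\geq b$ closes the induction. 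The multiplicative bound $\floor{\ell/2}(3\ceil{\log_2\ell}-4)+1$ is proved the same way: the $v_\alpha$-calls give at most $\floor{\ell/2}(3a-4)+\ell_1+1$, the $v_\delta$-calls at most $\floor{\ell/2}(3b-4)+2^{a}$, and adding the $\ell+\ell_1-2^{a}-1$ multiplications of the scaling loop and comparing with $\floor{\ell/2}(3(a+b)-4)+1$ leaves the slack $4\floor{\ell/2}-\ell+1-2\ell_1$, which is positive because $\ell\geq 2\ell_1+1$. The value $b=1$ (equivalently $\ell_1=1$) needs slightly different handling in both arguments, since $\binom{1}{2}=0$ and $3\cdot 1-4<0$ make the generic estimates for the $v_\delta$-calls too crude; there one substitutes the exact counts for the length-$1$ and length-$2$ calls and uses $2^{a}+1\leq 3\floor{\ell/2}$. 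The last assertion of the theorem is immediate: by Remark~\ref{rmk:cantor-precomputations}, $\beta_{v_\delta,0}=1$ for every internal vertex $v$ of a reduction tree of a Cantor basis, so the guard in Line~\ref{X2M:scale} (resp.\ Line~\ref{M2X:scale}) is never met, and as the scaling loops are the only source of multiplications, no multiplications are performed.

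For the sharpened additive bound under $\dg_v=2^{\ceil{\log_2 n_v}-1}$, put $h=\ceil{\log_2 n_v}$, so $a=\dg_v=2^{h-1}$ is a power of two with $\ceil{\log_2 a}=h-1$. Since $n_{v_\delta}=n_v-2^{h-1}\leq 2^{h-1}$, one gets $\ell_1+1=\ceil{\ell/2^{\dg_v}}\leq 2^{n_{v_\delta}}\leq 2^{a}$, hence $b=\ceil{\log_2(\ell_1+1)}\leq a$; combined with $\ceil{\log_2\ell}=a+b$ and $2^{h-1}<a+b\leq 2^{h}$, this forces $\ceil{\log_2\log_2\ell}=h$. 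The inductive hypotheses then bound the $v_\alpha$-calls by $\floor{\ell/2}\,a(h-1)$, the $v_\delta$-calls by $\floor{\ell/2}\,b(h-1)$ (using $\ceil{\log_2\log_2\max(\ell_1+1,2)}\leq\ceil{\log_2 a}=h-1$), and the Taylor call by $\floor{\ell/2}b$; their sum equals $\floor{\ell/2}\bigl((a+b)h-a\bigr)\leq\floor{\ell/2}\ceil{\log_2\ell}\ceil{\log_2\log_2\max(\ell,2)}$, as claimed. The part of the argument I expect to demand the most care is the bookkeeping around $\ell_2'$ and the degenerate values $\ell_1=0$ and $b=1$, which escape the generic telescoping and must be checked by hand, exactly as for the analogous edge cases in the complexity proofs for Algorithms~\ref{alg:N2X} and~\ref{alg:L2X}.
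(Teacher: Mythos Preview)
Your proposal is correct and follows essentially the same approach as the paper: split the theorem into its four assertions, prove each by induction on the reduction tree, dispose of the $\ell_1=0$ subcase by a direct hand-off to $v_\alpha$, and for $\ell_1\geq 1$ telescope the row, column, scaling, and Taylor contributions using the identities $\ell_1\,2^{\dg_v-1}+\floor{\ell_2/2}=\floor{\ell/2}$ and $\ceil{\log_2\ell}=\dg_v+\ceil{\log_2(\ell_1+1)}$. Your explicit caution about $b=1$ for the multiplicative bound is well placed---the paper's uniform column estimate $\floor{\ell/2}(3b-4)+2^{\dg_v}$ is in fact slightly loose in the wrong direction there, and the slack you identify, $3\floor{\ell/2}\geq 2^{\dg_v}+1$, is exactly what rescues that edge case.
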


We have already shown that Algorithms~\ref{alg:M2T} and~\ref{alg:T2M} perform no multiplications when $\beta$ is a Cantor basis. We split the remainder of the proof of Theorem~\ref{thm:XM-complexity} into three lemmas, one for each of three remaining bounds. It is clear that Algorithms~\ref{alg:X2M} and~\ref{alg:M2X} perform the same number of multiplications when given identical inputs. Consequently, we only prove the bounds for Algorithm~\ref{alg:X2M}. All three bounds are equal to zero or one for $\ell\leq 2$, while Algorithm~\ref{alg:X2M} performs no additions or multiplications for such input values of $\ell$. In particular, it follows that all three bound holds if the input vertex is a leaf. Consequently, for each of the three bounds it is sufficient to show that if $v\in V$ is an internal vertex such that the bound holds whenever the input vertex is $v_\alpha$ or $v_\delta$, then the bound holds whenever the input vertex is $v$ and $\ell\in\{3,\dotsc,2^{n_v}\}$.

\begin{lemma} Algorithms~\ref{alg:X2M} and~\ref{alg:M2X} perform at most $\floor{\ell/2}\left(3\ceil{\log_2\ell}-4\right)+1$ multiplications in $\F$.
\end{lemma}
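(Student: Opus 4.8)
The plan is to carry out the inductive step that the paragraph preceding Theorem~\ref{thm:XM-complexity} has already isolated: fix an internal vertex $v$, write $d=\dg_v\geq 1$, assume the bound $\floor{\ell'/2}(3\ceil{\log_2\ell'}-4)+1$ whenever the input vertex is $v_\alpha$ or $v_\delta$, and verify it when the input is $v$ and $\ell\in\{3,\dotsc,2^{n_v}\}$; only Algorithm~\ref{alg:X2M} need be treated. With $\ell_1=\ceil{\ell/2^d}-1$ and $\ell_2=\ell-2^d\ell_1$ one has $1\leq\ell_2\leq 2^d$ and, since $d\geq 1$, $\floor{\ell/2}=2^{d-1}\ell_1+\floor{\ell_2/2}$. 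The multiplications performed by the algorithm are exactly those of: $\ell_1$ recursive calls of length $2^d$ and one of length $\ell_2$ on $v_\alpha$; $\ell_2$ recursive calls of length $\ell_1+1$ and $2^d-\ell_2$ of length $\ell_1$ on $v_\delta$; the scaling loop of Lines~\ref{X2M:scale}--\ref{X2M:scale-end}, which performs $\ell+\ceil{\ell/2^d}-2^d-2$ multiplications when $\ell_1\neq 0$ and none otherwise; and the call to \textsf{InverseTaylorExpansion}, which performs no multiplications (it performs only additions, by Lemma~\ref{lem:taylor}).

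I would split into the cases $\ell_1=0$, $\ell_1=1$ and $\ell_1\geq 2$. When $\ell_1=0$ the algorithm reduces to a single recursive call of length $\ell\leq 2^d=2^{n_{v_\alpha}}$ on $v_\alpha$ (the length-$1$ column calls, the scaling loop and the Taylor call all do nothing), so the inductive hypothesis gives the bound at once. When $\ell_1=1$ every column recursive call has length at most $2$ and hence performs no multiplications, the scaling loop performs $\ell_2$ multiplications, and $\ceil{\log_2\ell}=d+1$; using $\ceil{\log_2\ell_2}\leq d$ one gets a total of at most $\floor{\ell/2}(3d-4)+\ell_2+2$, and the desired bound $\floor{\ell/2}(3(d+1)-4)+1$ follows from $3\floor{\ell/2}\geq \ell_2+1$, which holds because $\floor{\ell/2}\geq 2^{d-1}+\floor{\ell_2/2}$.

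For $\ell_1\geq 2$ I would apply the inductive hypothesis to each recursive call, use $\ceil{\log_2 2^d}=d$, $\ceil{\log_2\ell_2}\leq d$, $k:=\ceil{\log_2(\ell_1+1)}=\ceil{\log_2\ell}-d\geq 2$ and $\ceil{\log_2\ell_1}\leq k$, and collect the contributions into the single inequality
\begin{equation*}
	\ell_2\floor{\tfrac{\ell_1+1}{2}}(3k-4)+(2^d-\ell_2)\floor{\tfrac{\ell_1}{2}}(3\ceil{\log_2\ell_1}-4)+2\ell_1+\ell\leq 3k\floor{\tfrac{\ell}{2}}+1 ,
\end{equation*}
from which the target bound follows using $\floor{\ell/2}=2^{d-1}\ell_1+\floor{\ell_2/2}$ and $\ell=2^d\ell_1+\ell_2$. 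To establish this inequality I would split on the parity of $\ell_1$. If $\ell_1$ is even, $\floor{(\ell_1+1)/2}=\floor{\ell_1/2}=\ell_1/2$, and after bounding $\ceil{\log_2\ell_1}\leq k$ the claim reduces to $\ell_2-(2^d-2)\ell_1\leq 3k\floor{\ell_2/2}+1$, which is immediate from $\ell_2\leq 3k\floor{\ell_2/2}+1$ (using $k\geq 2$, $\ell_2\geq 1$). If $\ell_1$ is odd then $\ell_1\geq 3$ is not a power of two, so $\ceil{\log_2\ell_1}=k$; substituting $\floor{(\ell_1+1)/2}=(\ell_1+1)/2$ and $\floor{\ell_1/2}=(\ell_1-1)/2$ collapses the claim to $3k\bigl(\ceil{\ell_2/2}-2^{d-1}\bigr)\leq 3\ell_2+(2^d-2)\ell_1-2^{d+1}+1$, whose left side is at most $0$ because $\ell_2\leq 2^d$ and whose right side is at least $0$ because $\ell_1\geq 3$ and $\ell_2\geq 1$.

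The routine arithmetic aside, the main obstacle is the book-keeping in the case $\ell_1\geq 2$: the inductive estimates must be applied with essentially no slack, which in particular forbids over-estimating the $2^d-\ell_2$ length-$\ell_1$ column calls by the length-$(\ell_1+1)$ bound and forces one to track $\ceil{\log_2\ell_1}$ as distinct from $\ceil{\log_2(\ell_1+1)}$. The odd-$\ell_1$ subcase is the delicate one, and it turns precisely on the easily overlooked elementary fact $\ceil{\ell_2/2}\leq 2^{d-1}$ for $\ell_2\leq 2^d$; the remaining bounds of the theorem (the $\dg_v$-balanced addition bound and the binomial addition bound) will be handled by the two subsequent lemmas along the same inductive lines.
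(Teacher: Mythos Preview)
Your argument is correct and follows the same inductive skeleton as the paper: bound the row calls, the column calls, the scaling loop, and the (multiplication-free) Taylor step separately, then sum. The difference is only in the granularity of the case split. The paper treats all $\ell_1\geq 1$ at once: it bounds the column contribution by replacing $\ceil{\log_2\ell_1}$ with $k=\ceil{\log_2(\ell_1+1)}$ and then using $\ell_2\floor{(\ell_1+1)/2}+(2^{\dg_v}-\ell_2)\floor{\ell_1/2}\leq\floor{\ell/2}$ to obtain $\floor{\ell/2}(3k-4)+2^{\dg_v}$, after which a short chain of inequalities finishes the job. Your version instead peels off $\ell_1=1$ (where the column calls have length $\leq 2$ and cost nothing) and for $\ell_1\geq 2$ splits on the parity of $\ell_1$, which lets you compute the floor terms exactly rather than bounding them.

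Your finer split is not wasted effort: the paper's manipulation tacitly needs $3k-4\geq 0$ to push $\floor{\ell/2}$ through, and this fails precisely when $\ell_1=1$ (where $k=1$). Your separate $\ell_1=1$ case sidesteps that issue cleanly. So the two arguments are the same in spirit, with yours being slightly more careful at the boundary; the paper's presentation is more compact but relies on an inequality step that, strictly speaking, only goes through for $\ell_1\geq 2$.
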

\begin{proof} Suppose that for some internal vertex $v\in V$, Algorithm~\ref{alg:X2M} performs at most $\floor{\ell/2}(3\ceil{\log_2\ell}-4)+1$ multiplications in $\F$ whenever $v_\alpha$ or $v_\delta$ is given as the input vertex. Furthermore, suppose that $v$ and $\ell\in\{3,\dotsc,2^{n_v}\}$ are given as inputs to the algorithm. If $\ell_1=0$, then $\ell_2=\ell_2'=\ell$ and the algorithm performs at most
\begin{equation*}
	\floor{\frac{\ell_2}{2}}
	\left(3\ceil{\log_2\ell_2}-4\right)
	+1
	+\ell_2\times 0
	=
	\floor{\frac{\ell}{2}}
	\left(3\ceil{\log_2\ell}-4\right)
	+1
\end{equation*}	
multiplications. Therefore, suppose that $\ell_1\neq 0$. Then, as $\ell_2\leq 2^{\dg_v}$, Lines~\ref{X2M:rows}--\ref{X2M:last-row} of the algorithm perform at most
\begin{equation*}
	\ell_1
	\left(
		2^{\dg_v-1}
		\left(3\dg_v-4\right)
		+1
	\right)
	+
	\floor{\frac{\ell_2}{2}}
	\left(3\ceil{\log_2 \ell_2}-4\right)
	+1
	\leq
	\floor{\frac{\ell}{2}}
	\left(3\dg_v-4\right)
	+\ell_1+1
\end{equation*}
multiplications. Lines~\ref{X2M:cols}--\ref{X2M:cols-end} perform at most
\begin{multline*}
	\ell_2
	\floor{\frac{\ell+1}{2}}
	\left(3\ceil{\log_2\ell_1+1}-4\right)
	+\left(2^{\dg_v}-\ell_2\right)
	\floor{\frac{\ell_1}{2}}
	\left(3\ceil{\log_2\ell_1}-4\right)
	+2^{\dg_v}\\
	\begin{aligned}
		&\leq
		\floor{\frac{\ell_2(\ell+1)+(2^{\dg_v}-\ell_2)\ell_1}{2}}
		\left(3\ceil{\log_2\ell_1+1}-4\right)
		+2^{\dg_v}\\
		&=\floor{\frac{\ell}{2}}
		\left(3\ceil{\log_2\ell}-3\dg_v-4\right)
		+2^{\dg_v}
	\end{aligned}
\end{multline*}
multiplications, Lines~\ref{X2M:scale}--\ref{X2M:scale-end} perform $\left(\ell_1-1\right)\left(2^{\dg_v}+1\right)+\ell_2$ multiplications, and Line~\ref{X2M:taylor} performs no multiplications. Summing these bounds, it follows that Algorithm~\ref{alg:X2M} performs at most
\begin{multline*}
	\floor{\frac{\ell}{2}}
	\left(3\ceil{\log_2\ell}-4\right)
	+1
	-4\floor{\frac{\ell}{2}}
	+\left(2^{\dg_v}+2\right)\ell_1
	+\ell_2
	-1\\
	\begin{aligned}
		&\leq
		\floor{\frac{\ell}{2}}
		\left(3\ceil{\log_2\ell}-4\right)
		+1
		-2\ell
		+\left(2^{\dg_v}+2\right)\ell_1
		+\ell_2
		+1\\
		&=
		\floor{\frac{\ell}{2}}
		\left(3\ceil{\log_2\ell}-4\right)
		+1
		-\left(2^{\dg_v}-2\right)\ell_1
		-(\ell_2-1)\\
		&\leq
		\floor{\frac{\ell}{2}}
		\left(3\ceil{\log_2\ell}-4\right)
		+1
	\end{aligned}
\end{multline*}
multiplications.
\end{proof}

\begin{lemma} Algorithms~\ref{alg:X2M} and~\ref{alg:M2X} perform at most $\floor{\ell/2}\binom{\ceil{\log_2\ell}}{2}$ additions in $\F$.
\end{lemma}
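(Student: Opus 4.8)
The plan is to argue by induction along the full binary tree $(V,E)$, in the style of the preceding proofs in this section. Since $\floor{\ell/2}\binom{\ceil{\log_2\ell}}{2}=0$ whenever $\ell\le 2$ and Algorithm~\ref{alg:X2M} performs no additions in that range, the bound holds whenever the input vertex is a leaf; so it suffices to fix an internal vertex $v\in V$, assume the bound holds whenever $v_\alpha$ or $v_\delta$ is the input vertex, and establish it when $v$ and some $\ell\in\{3,\dots,2^{n_v}\}$ are the inputs. As stated, I treat only Algorithm~\ref{alg:X2M}; Algorithm~\ref{alg:M2X} makes the mirror-image sequence of recursive calls, its \textsf{TaylorExpansion} call costs the same as the \textsf{InverseTaylorExpansion} call by Lemma~\ref{lem:taylor}, and Lines~\ref{M2X:scale}--\ref{M2X:scale-end} contribute only multiplications.

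If $\ell_1=0$, then Line~\ref{X2M:aux} sets $\ell_2=\ell_2'=\ell$: Lines~\ref{X2M:rows} run vacuously, Line~\ref{X2M:last-row} contributes at most $\floor{\ell/2}\binom{\ceil{\log_2\ell}}{2}$ by the induction hypothesis, the recursive calls in Lines~\ref{X2M:cols}--\ref{X2M:cols-end} have length $\ell_1+1=1$ and cost nothing, Lines~\ref{X2M:scale}--\ref{X2M:scale-end} are skipped, and the \textsf{InverseTaylorExpansion} call in Line~\ref{X2M:taylor} costs at most $\floor{\ell/2}\ceil{\log_2\ceil{\ell/2^{\dg_v}}}=\floor{\ell/2}\ceil{\log_2 1}=0$ by Lemma~\ref{lem:taylor}.

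Now suppose $\ell_1\ne 0$, and put $k=\ceil{\log_2(\ell_1+1)}$. Then $\ell_2'=2^{\dg_v}$, and since $\ell=2^{\dg_v}\ell_1+\ell_2$ with $1\le\ell_2\le 2^{\dg_v}$ one has $2^{\dg_v+k-1}<\ell\le 2^{\dg_v+k}$, hence $\ceil{\log_2\ell}=\dg_v+k$. Bounding the $\ell_1$ length-$2^{\dg_v}$ row calls, the single length-$\ell_2$ row call, the $\ell_2$ length-$(\ell_1+1)$ column calls and the $2^{\dg_v}-\ell_2$ length-$\ell_1$ column calls by the induction hypothesis (using the monotonicity of $\binom{\cdot}{2}$ with $\ceil{\log_2\ell_2}\le\dg_v$ and $\ceil{\log_2\ell_1}\le k$), the scaling of Lines~\ref{X2M:scale}--\ref{X2M:scale-end} by $0$, and the \textsf{InverseTaylorExpansion} call by Lemma~\ref{lem:taylor}, the number of additions is at most
\begin{multline*}
	\ell_1 2^{\dg_v-1}\binom{\dg_v}{2}
	+\floor{\ell_2/2}\binom{\dg_v}{2}
	+\ell_2\floor{(\ell_1+1)/2}\binom{k}{2}\\
	+\left(2^{\dg_v}-\ell_2\right)\floor{\ell_1/2}\binom{k}{2}
	+\floor{\ell/2}k.
\end{multline*}
Because $2^{\dg_v}\ell_1$ is even, the superadditivity $\sum_i\floor{a_i/2}\le\floor{(\sum_i a_i)/2}$ and the identity $2^{\dg_v}\ell_1+\ell_2=\ell$ collapse the two row terms to at most $\floor{\ell/2}\binom{\dg_v}{2}$ and the two column terms to at most $\floor{\ell/2}\binom{k}{2}$. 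Hence the total is at most $\floor{\ell/2}\left(\binom{\dg_v}{2}+\binom{k}{2}+k\right)$.

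The one step that is not a mechanical simplification — and hence the crux — is the inequality $\binom{\dg_v}{2}+\binom{k}{2}+k\le\binom{\dg_v+k}{2}$. This follows from the identity $\binom{\dg_v+k}{2}=\binom{\dg_v}{2}+\binom{k}{2}+\dg_v k$ together with $k\le\dg_v k$, the latter valid because $\dg_v\ge 1$ for an internal vertex. Substituting $\dg_v+k=\ceil{\log_2\ell}$ gives at most $\floor{\ell/2}\binom{\ceil{\log_2\ell}}{2}$ additions, which completes the induction.
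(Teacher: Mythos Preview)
Your proof is correct and follows essentially the same approach as the paper's: the same induction on the tree, the same split into the $\ell_1=0$ and $\ell_1\neq 0$ cases, the same bounds $\floor{\ell/2}\binom{\dg_v}{2}$, $\floor{\ell/2}\binom{k}{2}$, and $\floor{\ell/2}k$ for the row calls, column calls, and Taylor step, and the same closing identity $\binom{\dg_v+k}{2}=\binom{\dg_v}{2}+\binom{k}{2}+\dg_v k$ with $\dg_v\ge 1$. The paper phrases the last step as $\binom{\dg_v}{2}+\binom{k}{2}+k=\binom{\dg_v+k}{2}-k(\dg_v-1)\le\binom{\dg_v+k}{2}$, which is your inequality rewritten.
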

\begin{proof} Suppose that for some internal vertex $v\in V$, Algorithm~\ref{alg:X2M} performs at most $\floor{\ell/2}\binom{\ceil{\log_2\ell}}{2}$ additions in $\F$ whenever $v_\alpha$ or $v_\delta$ is given as the input vertex. Furthermore, suppose that $v$ and $\ell\in\{3,\dotsc,2^{n_v}\}$ are given as inputs to the algorithm. If $\ell_1=0$, then $\ell_2=\ell_2'=\ell$ and the algorithm performs at most
\begin{equation*}
	\floor{\frac{\ell_2}{2}}
	\binom{\ceil{\log_2 \ell_2}}{2}
	+\ell_2\times 0
	+\floor{\frac{\ell}{2}}
	\ceil{\log_21}
	=\floor{\frac{\ell}{2}}
	\binom{\ceil{\log_2 \ell}}{2}
\end{equation*}	
additions. Therefore, suppose that $\ell_1>0$. Then, as $\ell_2\leq 2^{\dg_v}$, Lines~\ref{X2M:rows}--\ref{X2M:last-row} of the algorithm perform at most
\begin{equation*}
	\ell_12^{\dg_v-1}
	\binom{\dg_v}{2}
	+\floor{\frac{\ell_2}{2}}
	\binom{\ceil{\log_2\ell_2}}{2}
	\leq
	\ell_12^{\dg_v-1}
	\binom{\dg_v}{2}
	+\floor{\frac{\ell_2}{2}}
	\binom{\dg_v}{2}
	=
	\floor{\frac{\ell}{2}}
	\binom{\dg_v}{2}
\end{equation*}
additions. Lines~\ref{X2M:cols}--\ref{X2M:cols-end} of the algorithm  perform at most
\begin{equation*}
	\ell_2
	\floor{\frac{\ell+1}{2}}
	\binom{\ceil{\log_2\ell_1+1}}{2}
	+\left(2^{\dg_v}-\ell_2\right)
	\floor{\frac{\ell_1}{2}}
	\binom{\ceil{\log_2\ell_1}}{2}
	\leq
	\floor{\frac{\ell}{2}}
	\binom{\ceil{\log_2\ell_1+1}}{2}
\end{equation*}
additions, since $\ell_2(\ell+1)+(2^{\dg_v}-\ell_2)\ell_1=\ell$. Lines~\ref{X2M:scale}--\ref{X2M:scale-end} perform no additions, while Lemma~\ref{lem:taylor} implies that Line~\ref{X2M:taylor} performs at most $\floor{\ell/2}\ceil{\log_2\ceil{\ell/2^{\dg_v}}}=\floor{\ell/2}\ceil{\log_2\ell_1+1}$ additions. As $\ceil{\log_2\ell_1+1}=\ceil{\log_2\ell}-\dg_v$, it follows by summing these bounds that Algorithm~\ref{alg:X2M} performs at most
\begin{equation*}
	\floor{\frac{\ell}{2}}
	\left(
		\binom{\ceil{\log_2\ell}}{2}
		-\ceil{\log_2\ell_1+1}
		\left(\dg_v-1\right)
	\right)\\
	\leq
	\floor{\frac{\ell}{2}}
	\binom{\ceil{\log_2\ell}}{2}
\end{equation*}
additions.
\end{proof}

\begin{lemma} Suppose that $\dg_v=2^{\ceil{\log_2 n_v}-1}$ for all internal $v\in V$. Then Algorithms~\ref{alg:X2M} and~\ref{alg:M2X} perform at most $\floor{\ell/2}\ceil{\log_2 \ell}\ceil{\log_2\log_2\max(\ell,2)}$ additions in~$\F$.
\end{lemma}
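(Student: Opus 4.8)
The plan is to prove the bound by induction on $\ell$, following exactly the template used for the previous two lemmas, but now exploiting the fact that the recursion tree is ``balanced'' (i.e., $\dg_v=2^{\ceil{\log_2 n_v}-1}$ for all internal $v\in V$), so that the recursion depth in $\ell$ is logarithmic. First I would record the structural consequences of the balancedness assumption: when $v$ is an internal vertex and $\ell\in\{3,\dotsc,2^{n_v}\}$ is given as input, the parameter $\ell_2'=\min(2^{\dg_v},\ell)$ satisfies $\ell_2'\leq 2^{\dg_v}$, while $\ell_1+1=\ceil{\ell/2^{\dg_v}}$; moreover, because $\dg_v\approx n_v/2$ and $\ell\leq 2^{n_v}$, one has $\ceil{\log_2\ell_1+1}=\ceil{\log_2\ell}-\dg_v$ and, crucially, $\ceil{\log_2\log_2\max(\ell_2',2)}$ and $\ceil{\log_2\log_2\max(\ell_1+1,2)}$ are each at most $\ceil{\log_2\log_2\max(\ell,2)}-1$ unless $\ell$ is tiny. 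The key point is that splitting $\ell\mapsto(2^{\dg_v},\ell_1+1)$ roughly halves $\log_2\ell$ at the root, so the ``$\log\log$'' factor drops by one at each level.

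Next I would carry out the inductive step. Suppose the bound holds whenever the input vertex is $v_\alpha$ or $v_\delta$, and that $v$ with $\ell\in\{3,\dotsc,2^{n_v}\}$ is the input. Handle the degenerate case $\ell_1=0$ (so $\ell_2=\ell_2'=\ell\leq 2^{\dg_v}$) separately: then Lines~\ref{X2M:rows}--\ref{X2M:last-row} perform a single recursive call on $v_\alpha$ with length $\ell$, Lines~\ref{X2M:cols}--\ref{X2M:cols-end} perform $\ell$ recursive calls of length $1$ (no work), Lines~\ref{X2M:scale}--\ref{X2M:scale-end} do nothing, and Line~\ref{X2M:taylor} performs at most $\floor{\ell/2}\ceil{\log_2 1}=0$ additions, so the total is at most $\floor{\ell/2}\ceil{\log_2\ell}\ceil{\log_2\log_2\max(\ell,2)}$ by the induction hypothesis applied to $v_\alpha$. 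For $\ell_1\neq 0$: Lines~\ref{X2M:rows}--\ref{X2M:last-row} perform $\ell_1$ calls of length $2^{\dg_v}$ plus one of length $\ell_2\leq 2^{\dg_v}$, contributing at most $\ell_1 2^{\dg_v-1}\dg_v\ceil{\log_2\log_2\max(2^{\dg_v},2)}+\floor{\ell_2/2}\dg_v\ceil{\log_2\dg_v}$, which I would bound by $\floor{\ell/2}\dg_v(\ceil{\log_2\log_2\max(\ell,2)}-1)$ using $\ell=2^{\dg_v}\ell_1+\ell_2$ and $\ceil{\log_2\log_2\max(2^{\dg_v},2)}\le\ceil{\log_2\log_2\max(\ell,2)}-1$; Lines~\ref{X2M:cols}--\ref{X2M:cols-end} perform $\ell_2$ calls of length $\ell_1+1$ plus $2^{\dg_v}-\ell_2$ calls of length $\ell_1$, contributing at most $\floor{\ell/2}\ceil{\log_2\ell_1+1}(\ceil{\log_2\log_2\max(\ell,2)}-1)$ by the same device (using $\ell_2(\ell_1+1)+(2^{\dg_v}-\ell_2)\ell_1=\ell$ and $\ceil{\log_2\log_2\max(\ell_1+1,2)}\leq\ceil{\log_2\log_2\max(\ell,2)}-1$); Lines~\ref{X2M:scale}--\ref{X2M:scale-end} perform no additions; and Line~\ref{X2M:taylor} performs at most $\floor{\ell/2}\ceil{\log_2\ceil{\ell/2^{\dg_v}}}=\floor{\ell/2}\ceil{\log_2\ell_1+1}$ additions by Lemma~\ref{lem:taylor}.

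Summing the four contributions and using $\ceil{\log_2\ell_1+1}=\ceil{\log_2\ell}-\dg_v$, the total is at most
\begin{equation*}
	\floor{\frac{\ell}{2}}
	\left(
		\left(\dg_v+\ceil{\log_2\ell_1+1}\right)
		\left(\ceil{\log_2\log_2\max(\ell,2)}-1\right)
		+\ceil{\log_2\ell_1+1}
	\right),
\end{equation*}
and since $\dg_v+\ceil{\log_2\ell_1+1}=\ceil{\log_2\ell}$ and $\ceil{\log_2\ell_1+1}\leq\ceil{\log_2\ell}$, this is at most $\floor{\ell/2}\ceil{\log_2\ell}\ceil{\log_2\log_2\max(\ell,2)}$, completing the induction. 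The main obstacle I anticipate is the bookkeeping around the ``$\log\log$'' drop: one must verify carefully that for $\ell\geq 3$ and a balanced split one genuinely has $\ceil{\log_2\log_2\max(2^{\dg_v},2)}\leq\ceil{\log_2\log_2\max(\ell,2)}-1$ and $\ceil{\log_2\log_2\max(\ell_1+1,2)}\leq\ceil{\log_2\log_2\max(\ell,2)}-1$ — this needs $2^{\dg_v}<\ell$ and $\ell_1+1<\ell$ together with the fact that $\lceil\log_2\lceil\log_2 m\rceil\rceil$ strictly increases across each doubling of $\lceil\log_2 m\rceil$, and the edge cases $\ell=3,4$ (where $\dg_v=1$, $\ell_1+1=2$, and the $\log\log$ term is $\ceil{\log_2 1}=0$ on one side) must be checked by hand; the small cases where $\ell_1=0$ or $\ell\leq 4$ are where the inequalities are tightest and deserve explicit attention.

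\begin{proof} Suppose that $\dg_v=2^{\ceil{\log_2 n_v}-1}$ for all internal $v\in V$. Suppose that for some internal vertex $v\in V$, Algorithm~\ref{alg:X2M} performs at most $\floor{\ell/2}\ceil{\log_2\ell}\ceil{\log_2\log_2\max(\ell,2)}$ additions in $\F$ whenever $v_\alpha$ or $v_\delta$ is given as the input vertex, and suppose that $v$ and $\ell\in\{3,\dotsc,2^{n_v}\}$ are given as inputs. If $\ell_1=0$, then $\ell_2=\ell_2'=\ell$ and the algorithm performs at most
\begin{equation*}
	\floor{\frac{\ell}{2}}
	\ceil{\log_2\ell}
	\ceil{\log_2\log_2\max(\ell,2)}
	+\ell\times 0
	+\floor{\frac{\ell}{2}}
	\ceil{\log_21}
	=
	\floor{\frac{\ell}{2}}
	\ceil{\log_2\ell}
	\ceil{\log_2\log_2\max(\ell,2)}
\end{equation*}
additions. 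Therefore, suppose that $\ell_1\neq 0$. Then $2^{\dg_v}<\ell$, so $\ceil{\log_2\ell}\geq\dg_v+1\geq 2$, and $\ell_1+1=\ceil{\ell/2^{\dg_v}}\leq\ell/2^{\dg_v-1}$, so $\ceil{\log_2(\ell_1+1)}\leq\ceil{\log_2\ell}-\dg_v+1\leq\ceil{\log_2\ell}$; since $n_v-\dg_v\leq\dg_v$ we have $\ell_1+1\leq 2^{n_v-\dg_v}\leq 2^{\dg_v}$, and as the map $m\mapsto\ceil{\log_2\log_2\max(m,2)}$ is nondecreasing it follows that $\ceil{\log_2\log_2\max(2^{\dg_v},2)}$ and $\ceil{\log_2\log_2\max(\ell_1+1,2)}$ are each at most $\ceil{\log_2\log_2\max(\ell,2)}-1$, the strict drop holding because $\ceil{\log_2 m}\leq\dg_v\leq\ceil{\log_2\ell}-1$ for $m\in\{2^{\dg_v},\ell_1+1\}$ while $\ceil{\log_2\ell}\geq\dg_v+1$. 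As $\ell_2\leq 2^{\dg_v}$ and $\ceil{\log_2\ell_2}\leq\dg_v$, Lines~\ref{X2M:rows}--\ref{X2M:last-row} of the algorithm perform at most
\begin{multline*}
	\ell_12^{\dg_v-1}\dg_v
	\ceil{\log_2\log_2\max(2^{\dg_v},2)}
	+\floor{\frac{\ell_2}{2}}\dg_v
	\ceil{\log_2\log_2\max(\ell_2,2)}\\
	\leq
	\left(\ell_12^{\dg_v-1}+\floor{\frac{\ell_2}{2}}\right)
	\dg_v
	\left(\ceil{\log_2\log_2\max(\ell,2)}-1\right)
	=
	\floor{\frac{\ell}{2}}
	\dg_v
	\left(\ceil{\log_2\log_2\max(\ell,2)}-1\right)
\end{multline*}
additions. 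Lines~\ref{X2M:cols}--\ref{X2M:cols-end} perform at most
\begin{multline*}
	\ell_2
	\floor{\frac{\ell_1+1}{2}}
	\ceil{\log_2(\ell_1+1)}
	\ceil{\log_2\log_2\max(\ell_1+1,2)}\\
	+\left(2^{\dg_v}-\ell_2\right)
	\floor{\frac{\ell_1}{2}}
	\ceil{\log_2\ell_1}
	\ceil{\log_2\log_2\max(\ell_1,2)}\\
	\leq
	\floor{\frac{\ell_2(\ell_1+1)+\left(2^{\dg_v}-\ell_2\right)\ell_1}{2}}
	\ceil{\log_2(\ell_1+1)}
	\left(\ceil{\log_2\log_2\max(\ell,2)}-1\right)\\
	=
	\floor{\frac{\ell}{2}}
	\ceil{\log_2(\ell_1+1)}
	\left(\ceil{\log_2\log_2\max(\ell,2)}-1\right)
\end{multline*}
additions, since $\ell_2(\ell_1+1)+(2^{\dg_v}-\ell_2)\ell_1=\ell$. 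Lines~\ref{X2M:scale}--\ref{X2M:scale-end} perform no additions, while Lemma~\ref{lem:taylor} implies that Line~\ref{X2M:taylor} performs at most $\floor{\ell/2}\ceil{\log_2\ceil{\ell/2^{\dg_v}}}=\floor{\ell/2}\ceil{\log_2(\ell_1+1)}$ additions. As $\dg_v+\ceil{\log_2(\ell_1+1)}=\ceil{\log_2\ell}$ and $\ceil{\log_2(\ell_1+1)}\leq\ceil{\log_2\ell}$, summing these bounds shows that Algorithm~\ref{alg:X2M} performs at most
\begin{multline*}
	\floor{\frac{\ell}{2}}
	\left(
		\left(\dg_v+\ceil{\log_2(\ell_1+1)}\right)
		\left(\ceil{\log_2\log_2\max(\ell,2)}-1\right)
		+\ceil{\log_2(\ell_1+1)}
	\right)\\
	=
	\floor{\frac{\ell}{2}}
	\left(
		\ceil{\log_2\ell}
		\ceil{\log_2\log_2\max(\ell,2)}
		-\ceil{\log_2\ell}
		+\ceil{\log_2(\ell_1+1)}
	\right)
	\leq
	\floor{\frac{\ell}{2}}
	\ceil{\log_2\ell}
	\ceil{\log_2\log_2\max(\ell,2)}
\end{multline*}
additions.
\end{proof}
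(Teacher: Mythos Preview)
Your proof is correct and follows the same induction and case split as the paper. The only substantive difference is bookkeeping: the paper applies the ``$\log\log$ drops by one'' inequality only to the column recursions (with lengths $\ell_1,\ell_1+1$) and bounds the row recursions by $\floor{\ell/2}\dg_v\ceil{\log_2\log_2\ell}$, so that the Taylor term $\floor{\ell/2}\ceil{\log_2(\ell_1+1)}$ cancels exactly against the gain from the columns; you instead apply the drop to \emph{both} rows and columns and then bound the leftover Taylor term crudely by $\floor{\ell/2}\ceil{\log_2\ell}$. Both routes land on the same final inequality.

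One point deserves tightening. Your justification for the drop --- ``$\ceil{\log_2 m}\leq\dg_v\leq\ceil{\log_2\ell}-1$'' --- is not by itself enough: from $\log_2 m\le\dg_v<\log_2\ell$ one only gets $\ceil{\log_2\log_2 m}\le\ceil{\log_2\dg_v}$ and $\ceil{\log_2\log_2\ell}\ge\lfloor\log_2\dg_v\rfloor+1$, and these need not differ by one in general. What makes it work here is that the hypothesis $\dg_v=2^{\ceil{\log_2 n_v}-1}$ forces $\dg_v$ to be a power of two, so $\log_2\dg_v$ is an integer and the two bounds coincide. You should state this explicitly. (The paper avoids needing the drop for $2^{\dg_v}$ altogether, and for $\ell_1+1$ it argues via an explicit $k$ with $2^{2^{k-1}}<\ell_1+1\le 2^{2^k}$ together with $\ell_1+1\le 2^{n_v-\dg_v}\le 2^{\dg_v}$ to show $\ell>2^{2^k}$.)
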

\begin{proof} Suppose that $\dg_v=2^{\ceil{\log_2 n_v}-1}$ for all internal vertices $v\in V$. Furthermore, suppose that for some internal vertex $v\in V$, Algorithm~\ref{alg:X2M} performs at most $\floor{\ell/2}\ceil{\log_2 \ell}\ceil{\log_2\log_2\max(\ell,2)}$ additions in $\F$ whenever $v_\alpha$ or $v_\delta$ is given as the input vertex. Finally, suppose that $v$ and $\ell\in\{3,\dotsc,2^{n_v}\}$ are given as inputs to the algorithm. If $\ell_1=0$, then $\ell_2=\ell_2'=\ell$ and the algorithm performs at most
\begin{equation*}
	\floor{\frac{\ell_2}{2}}
	\ceil{\log_2 \ell_2}
	\ceil{\log_2\log_2\ell_2}
	+\ell_2\times 0
	+\floor{\frac{\ell}{2}}
	\ceil{\log_21}
	=\floor{\frac{\ell}{2}}
	\ceil{\log_2 \ell}
	\ceil{\log_2\log_2\ell}
\end{equation*}	
additions. Therefore, suppose that $\ell_1>0$. Then, as $\ell_2\leq 2^{\dg_v}<\ell$, Lines~\ref{X2M:rows}--\ref{X2M:last-row} of the algorithm perform at most
\begin{equation}\label{eqn:XM-cantor-rows-adds}
	\ell_12^{\dg_v-1}
	\dg_v\ceil{\log_2\log_2\ell}
	+
	\floor{\frac{\ell_2}{2}}
	\dg_v\ceil{\log_2\log_2\ell}
	=
	\floor{\frac{\ell}{2}}
	\dg_v\ceil{\log_2\log_2\ell}
\end{equation}
additions. Lines~\ref{X2M:cols}--\ref{X2M:left-cols-end} of the algorithm perform at most
\begin{equation*}
	\ell_2
	\floor{\frac{\ell_1+1}{2}}
	\ceil{\log_2 \ell_1+1}
	\ceil{\log_2\log_2\ell_1+1}
\end{equation*}
additions, while Lines~\ref{X2M:right-cols}--\ref{X2M:cols-end} perform at most
\begin{equation*}
	\left(2^{\dg_v}-\ell_2\right)
	\floor{\frac{\ell_1}{2}}
	\ceil{\log_2 \ell_1}
	\ceil{\log_2\log_2\max\left(\ell_1,2\right)}
\end{equation*}
additions. As $\ell_2(\ell_1+1)+(2^{\dg_v}-\ell_2)\ell_1=\ell$, it follows that Lines~\ref{X2M:cols}--\ref{X2M:cols-end} of the algorithm perform at most $\floor{\ell/2}\ceil{\log_2 \ell_1+1}\ceil{\log_2\log_2\ell_1+1}$ additions. If $\ell_1\geq 2$, then there exists an integer $k\geq 1$ such that $2^{2^{k-1}}<\ell_1+1\leq 2^{2^k}$. Then $\ceil{\log_2\log_2\ell_1+1}=k$, $2^{2^{k-1}}<2^{n_v-\dg_v}\leq 2^{\dg_v}$ and $\ell=2^{\dg_v}(\ell_1+\ell_2/2^{\dg_v})>2^{\dg_v+2^{k-1}}>2^{2^k}$. Thus, $\ceil{\log_2\log_2\ell_1+1}\leq\ceil{\log_2\log_2\ell}-1$ if $\ell_1\geq 2$. As $\ell\geq 3$, the inequality also holds if $\ell_1=1$. Therefore, Lines~\ref{X2M:cols}--\ref{X2M:cols-end} of the algorithm perform at most
\begin{equation}\label{eqn:XM-cantor-cols-adds}
	\floor{\frac{\ell}{2}}
	\left(\ceil{\log_2 \ell}-\dg_v\right)
	\ceil{\log_2\log_2\ell}
	-\floor{\frac{\ell}{2}}
	\ceil{\log_2 \ell_1+1}
\end{equation}
additions. Lines~\ref{X2M:scale}--\ref{X2M:scale-end} of the algorithm perform no additions, while Lemma~\ref{lem:taylor} implies that Line~\ref{X2M:taylor} performs at most $\floor{\ell/2}\ceil{\log_2\ceil{\ell/2^{\dg_v}}}=\floor{\ell/2}\ceil{\log_2\ell_1+1}$ additions. By combining this last bound with the bounds~\eqref{eqn:XM-cantor-rows-adds} and~\eqref{eqn:XM-cantor-cols-adds} on the number of additions performed by Lines~\ref{X2M:rows}--\ref{X2M:last-row} and Lines~\ref{X2M:cols}--\ref{X2M:cols-end}, it follows that Algorithm~\ref{alg:X2M} performs at most $\floor{\ell/2}\ceil{\log_2 \ell}\ceil{\log_2\log_2\ell}$ additions, which is the required bound since $\ell\geq 3$.
\end{proof}

\section{Constructing a basis and reduction tree}\label{sec:construction}

Let $\beta=(\beta_0,\dotsc,\beta_{n-1})\in\F^n$ have entries that are linearly independent over~$\F_2$. If $n=1$, then there exists a unique reduction tree for $\beta$, the tree consisting of a single vertex. If $n>1$, then a full binary tree is a reduction tree for $\beta$ if it has $n$ leaves, the subtrees rooted on the children $r_\alpha$ and $r_\delta$ of the tree's root vertex $r$ are themselves reduction trees for $\alpha(\beta,\dg(r))$ and $\delta(\beta,\dg(r))$, respectively, and the quotients $\beta_0/\beta_0,\dotsc,\beta_{\dg(r)-1}/\beta_0$ belong to $\F_{2^{\dg(r)}}$. The requirement on the quotients is trivially satisfied if $\dg(r)=1$. Consequently, the full binary tree with $n$ leaves and $\image(\dg)\subseteq\{0,1\}$ is a reduction tree for $\beta$ (Proposition~\ref{prop:trivial-reduction-tree}). We view this tree as the trivial choice of reduction tree for the basis, and as capturing the approach used by existing algorithms. We expect such trees to yield the worst algebraic complexity for the algorithms of Section~\ref{sec:algorithms}. Accordingly, when we have freedom to choose the basis vector, our choice should enable us to avoid their use. Cantor bases provide such a choice, and we witnessed the benefits they provide in Section~\ref{sec:algorithms}. However, Cantor bases are restricted to extensions with degree divisible by sufficiently large powers of two. In this section, we propose new basis constructions that allow us to benefit similarly in other extensions.

Regardless of the chosen basis vector, the choice of reduction trees is limited by the subfield structure of $\F$.

\begin{proposition}\label{prop:subfields} If $(V,E)$ is a reduction tree for some vector in $\F^n$, then
\begin{equation*}
	 \dg(v_\alpha)<\dg(v)<n\leq [\F:\F_2]
	 \quad\text{and}\quad
	 \max\left(\dg(v_\alpha),1\right)\mid\dg(v)\mid[\F:\F_2]
\end{equation*}
for all internal $v\in V$.
\end{proposition}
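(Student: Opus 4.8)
The plan is to prove the statement by induction on $n$, unwinding the recursive structure of Definition~\ref{def:reduction-tree}. Fix a reduction tree $(V,E)$ for some $\beta\in\F^n$ with $n\geq 2$ (the case $n=1$ is vacuous, since the tree is a single leaf and has no internal vertices), and let $r$ be its root. First I would handle the root $r$ itself: by condition~\eqref{item:reduction-tree-quotients} of Definition~\ref{def:reduction-tree}, the quotients $\beta_1/\beta_0,\dotsc,\beta_{\dg(r)-1}/\beta_0$ all lie in $\F_{2^{\dg(r)}}$; since these together with $1=\beta_0/\beta_0$ are $\F_2$-linearly independent (as $\beta$ has linearly independent entries), they span a $\dg(r)$-dimensional $\F_2$-subspace of $\F_{2^{\dg(r)}}$, which forces $\dg(r)\leq[\F_{2^{\dg(r)}}:\F_2]=\dg(r)$, so in fact $\F_2(\beta_1/\beta_0,\dotsc,\beta_{\dg(r)-1}/\beta_0)=\F_{2^{\dg(r)}}$. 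This is the key mechanism: a reduction tree forces the presence of the subfield $\F_{2^{\dg(v)}}$ inside $\F$ at every internal vertex $v$, because the quotients $\beta_{v,i}/\beta_{v,0}$ for $i<\dg(v)$ span all of it.

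Next I would establish the divisibility $\dg(v)\mid[\F:\F_2]$ for an arbitrary internal $v\in V$. The subfield-forcing observation of the previous paragraph, applied to the vector $\beta_v$ and its reduction subtree (which exists by conditions~\eqref{item:reduction-tree-alpha} and~\eqref{item:reduction-tree-delta} iterated down from the root, using the vectors $\beta_v$ constructed in Section~\ref{sec:algorithms} from $\beta$ and the tree), shows $\F_{2^{\dg(v)}}\subseteq\F$, hence $\dg(v)=[\F_{2^{\dg(v)}}:\F_2]$ divides $[\F:\F_2]$. For the chain $\dg(v_\alpha)<\dg(v)$: since $v$ is internal, $\dg(v)=|\leaves_{v_\alpha}|=n_{v_\alpha}$; if $v_\alpha$ is a leaf then $\dg(v_\alpha)=0<1\leq\dg(v)$, and if $v_\alpha$ is internal then $\dg(v_\alpha)=|\leaves_{(v_\alpha)_\alpha}|<|\leaves_{v_\alpha}|=\dg(v)$ because $(v_\alpha)_\alpha$ is a proper subset of the leaves below $v_\alpha$ (the $\delta$-child contributes at least one leaf). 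The bound $\dg(v)<n$ follows from $\dg(v)=|\leaves_{v_\alpha}|<|\leaves_r|=n$ since $\leaves_{v_\alpha}\subsetneq\leaves_r$ (again $v$'s $\delta$-subtree, or ancestors' $\delta$-subtrees, contribute at least one leaf not in $\leaves_{v_\alpha}$); and $n\leq[\F:\F_2]$ because $\beta$ has $n$ entries linearly independent over $\F_2$.

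Finally, the divisibility $\max(\dg(v_\alpha),1)\mid\dg(v)$: if $v_\alpha$ is a leaf this reads $1\mid\dg(v)$, which is trivial. If $v_\alpha$ is internal, apply the already-proven fact $\dg(w)\mid[\F':\F_2]$ with $w=v_\alpha$ and the ground field replaced by the relevant recursive field — but cleaner is to argue directly that $\F_{2^{\dg(v_\alpha)}}\subseteq\F_{2^{\dg(v)}}$: the subtree at $v_\alpha$ is a reduction tree for $\beta_{v_\alpha}=\alpha(\beta_v,\dg(v))=(\beta_{v,0},\dotsc,\beta_{v,\dg(v)-1})$, whose entries, after dividing by $\beta_{v,0}$, all lie in $\F_{2^{\dg(v)}}$ by condition~\eqref{item:reduction-tree-quotients} at $v$; hence the same subfield-forcing argument carried out over the field $\F_{2^{\dg(v)}}$ (which legitimately contains these quotients) shows $\F_{2^{\dg(v_\alpha)}}\subseteq\F_{2^{\dg(v)}}$, giving $\dg(v_\alpha)\mid\dg(v)$. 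The main obstacle I anticipate is bookkeeping the recursive descent cleanly — making precise that the vectors $\beta_v$ and their subtrees satisfy the hypotheses of Definition~\ref{def:reduction-tree} with the appropriate ground field at each level so that the subfield-forcing lemma applies; once that is set up, each individual inequality and divisibility is a one-line consequence. I would structure the write-up as a single induction on the number of leaves, proving all four relations simultaneously at the root and invoking the inductive hypothesis on the two subtrees.
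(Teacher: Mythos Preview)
Your proposal is correct and follows essentially the same route as the paper: the key mechanism in both is that the $\dg(v)$ linearly independent quotients $\beta_{v,i}/\beta_{v,0}$ in $\F_{2^{\dg(v)}}$ force $\F_{2^{\dg(v)}}\subseteq\F$, whence $\dg(v)\mid[\F:\F_2]$, and the same argument applied to $\beta_{v_\alpha}=\alpha(\beta_v,\dg(v))$ (whose scaled entries lie in $\F_{2^{\dg(v)}}$) gives $\F_{2^{\dg(v_\alpha)}}\subseteq\F_{2^{\dg(v)}}$. The only cosmetic difference is that you package this as an induction on $n$, whereas the paper argues directly for an arbitrary internal vertex $v$ by invoking the vectors $\beta_v$ already set up after Definition~\ref{def:reduction-tree}; your inductive wrapper is harmless but unnecessary.
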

\begin{proof} Suppose that $(V,E)$ is a reduction tree for some vector $\beta\in\F^n$. Then $n\leq[\F:\F_2]$ since Definition~\ref{def:reduction-tree} requires $\beta$ to have linearly independent entries over $\F_2$. The definition also requires the tree to have $n$ leaves. Thus,
\begin{equation*}
	 \dg(v_\alpha)
	 <\left|\leaves_{v_\alpha}\right|
	 =\dg(v)
	 <\left|\leaves_v\right|
	 \leq n
	 \leq[\F:\F_2]
\end{equation*}
for all internal $v\in V$.

Let $v\in V$ be an internal vertex. Then it follows from Definition~\ref{def:reduction-tree} that the subtree rooted on $v$ is a reduction tree for some vector $\beta_v=(\beta_{v,0},\dotsc,\beta_{v,\left|\leaves_v\right|-1})\in\F^{\left|\leaves_v\right|}$ that has linearly independent entries over $\F_2$. Moreover, as $\left|\leaves_v\right|>1$, the definition implies that the quotients $\beta_{v,0}/\beta_{v,0},\dotsc,\beta_{v,\dg(v)-1}/\beta_{v,0}$ belong to $\F_{2^{\dg(v)}}$. These quotients inherit linear independence over $\F_2$. Thus, they form a basis of the extension $\F_{2^{\dg(v)}}/\F_2$. As the quotients also belong to $\F$, it follows that $\F_{2^{\dg(v)}}$ is a subfield of $\F$. Therefore, $\dg(v)$ divides $[\F:\F_2]$. Similarly, if the vertex is $v_\alpha$ is an internal vertex, then $\F_{2^{\dg(v_\alpha)}}$ is a subfield of $\F_{2^{\dg(v)}}$, since the subtree rooted on $v_\alpha$ is a reduction tree for $\alpha(\beta_v,\dg(v))=(\beta_{v,0},\dotsc,\beta_{v,\dg(v)-1})$. As $v_\alpha$ is an internal vertex if and only if $\dg(v_\alpha)\geq 1$, it follows that $\max\left(\dg(v_\alpha),1\right)$ divides $\dg(v)$.
\end{proof}

\begin{corollary}\label{cor:subfields} Suppose that the entries of $\beta\in\F^n$ are linearly independent over~$\F_2$, and $[\F:\F_2]$ has no proper factor less than $n$. Then a full binary tree is a reduction tree for $\beta$ if and only if it has $n$ leaves and $\image(\dg)\subseteq\{0,1\}$.
\end{corollary}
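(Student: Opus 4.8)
The plan is to obtain the corollary directly by combining Propositions~\ref{prop:trivial-reduction-tree} and~\ref{prop:subfields}, with essentially no new computation. First I would dispatch the ``if'' direction: if a full binary tree has $n$ leaves and $\image(\dg)\subseteq\{0,1\}$, then Proposition~\ref{prop:trivial-reduction-tree} already asserts that it is a reduction tree for $\beta$, and this requires nothing about the subfield structure of $\F$. So the real content of the corollary lies in the converse.

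For the converse, suppose $(V,E)$ is a reduction tree for $\beta$. By Definition~\ref{def:reduction-tree} it has exactly $n$ leaves, so it suffices to show that $\dg(v)\in\{0,1\}$ for every $v\in V$. A leaf $v$ has $\dg(v)=0$ by the definition of the labelling, so I would fix an internal vertex $v$, for which $\dg(v)=\left|\leaves_{v_\alpha}\right|\geq 1$. Proposition~\ref{prop:subfields} supplies the two facts $\dg(v)\mid[\F:\F_2]$ and $\dg(v)<n\leq[\F:\F_2]$; in particular $\dg(v)$ is a divisor of $[\F:\F_2]$ that is strictly smaller than both $[\F:\F_2]$ and $n$. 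If $\dg(v)\geq 2$, then $\dg(v)$ is a proper factor of $[\F:\F_2]$ that is less than $n$, contradicting the hypothesis; hence $\dg(v)=1$. This yields $\image(\dg)\subseteq\{0,1\}$ and completes the argument.

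The only genuine point requiring care — the ``main obstacle'', such as it is — is pinning down the intended reading of ``proper factor''. For the argument above to go through it must mean a divisor of $[\F:\F_2]$ lying strictly between $1$ and $[\F:\F_2]$; this is also the only sensible reading, since $1$ always divides $[\F:\F_2]$ and is less than $n$ whenever $n\geq 2$, so under any convention that counts $1$ as a proper factor the hypothesis would be vacuous for $n\geq 2$. I would state this convention explicitly at the outset of the proof. Finally I would note that the case $n=1$ is degenerate: the unique full binary tree with one leaf is a single vertex with $\image(\dg)=\{0\}$, it is a reduction tree for any one-dimensional $\beta$ directly from Definition~\ref{def:reduction-tree}, and the hypothesis on $[\F:\F_2]$ is vacuous, so both implications hold trivially.
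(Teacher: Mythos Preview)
Your proposal is correct and follows exactly the paper's approach: the paper's proof is a two-line appeal to Propositions~\ref{prop:trivial-reduction-tree} and~\ref{prop:subfields} for sufficiency and necessity respectively, which is precisely what you do (with some extra care about the reading of ``proper factor'' and the $n=1$ case that the paper leaves implicit).
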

\begin{proof} Proposition~\ref{prop:trivial-reduction-tree} implies that it is sufficient to have $n$ leaves and $\image(\dg)\subseteq\{0,1\}$, while Proposition~\ref{prop:subfields} implies that it is also necessary.
\end{proof}

\subsection{A construction for arbitrary fields}\label{sec:tower}

It follows from Proposition~\ref{prop:subfields} that a path in a reduction tree that consists of two or more edges of the form $\{v,v_\alpha\}$ admits a nontrivial tower of subfields of $\F$. However, the existence of a basis vector of a prescribed dimension that has a nontrivial reduction tree is not guaranteed by the existence of nontrivial tower of subfields. Indeed, Corollary~\ref{cor:subfields} shows that it is necessary for the tower to contain a subfield other than $\F_2$ of degree bounded by the dimension. In this section, we show that this requirement is also sufficient.

\begin{theorem}\label{thm:sufficient} Suppose there exists a tower of subfields
\begin{equation}\label{eqn:tower}
	\F_2=\F_{2^{n_0}}\subset\F_{2^{n_1}}\subset\dotsb\subset\F_{2^{n_m}}=\F.
\end{equation}
Let $\{\vartheta_{k,0},\dotsc,\vartheta_{k,n_{k+1}/n_k-1}\}$ be a basis of $\F_{2^{n_{k+1}}}/\F_{2^{n_k}}$ for $k\in\{0,\dotsc,m-1\}$, and
\begin{equation*}
	\beta_i=\prod^{m-1}_{k=0}\vartheta_{k,i_k}
	\quad\text{such that}\quad
	\sum^{m-1}_{k=0}i_kn_k=i
\end{equation*}
for $i\in\{0,\dotsc,n_m-1\}$. Then $\beta_0,\dotsc,\beta_{n_m-1}\in\F$ are linearly independent over~$\F_2$. Moreover, a full binary tree $(V,E)$ with $n\leq n_m$ leaves is a reduction tree for $(\beta_0,\dotsc,\beta_{n-1})$ if $\image(\dg)\subseteq\{0,n_0,\dotsc,n_{m-1}\}$ and $\dg(v_\delta)\leq\dg(v)$ for all internal $v\in V$.
\end{theorem}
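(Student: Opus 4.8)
The plan is to prove the two claims by induction on $m$, working from the structure of the tower. First I would establish linear independence of $\beta_0,\dotsc,\beta_{n_m-1}$ over $\F_2$. The key observation is that $\{\beta_i : \sum i_k n_k = i\}$ is precisely the set of products $\prod_{k} \vartheta_{k,i_k}$ as the multi-index $(i_0,\dotsc,i_{m-1})$ ranges over $\prod_k \{0,\dotsc,n_{k+1}/n_k - 1\}$, so there are exactly $\prod_k (n_{k+1}/n_k) = n_m$ of them. A standard tower argument shows these form a basis of $\F_{2^{n_m}}/\F_2$: the products $\prod_{k\geq 1}\vartheta_{k,i_k}$ with the first index fixed span a copy of $\F_{2^{n_m}}$ over $\F_{2^{n_1}}$, and then multiplying by the $\F_2$-basis $\{\vartheta_{0,i_0}\}$ of $\F_{2^{n_1}}/\F_2$ refines this to an $\F_2$-basis. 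I would set this up by induction on $m$, peeling off the bottom layer $\F_2 \subset \F_{2^{n_1}}$.

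Next I would prove the reduction-tree claim by induction on $n$ (equivalently on the number of leaves), using Definition~\ref{def:reduction-tree}. The base case $n=1$ is immediate: the single-vertex tree has $\image(\dg)=\{0\}$ and is vacuously a reduction tree. For $n\geq 2$, let $r$ be the root and $t = \dg(r)$; by hypothesis $t \in \{n_0,\dotsc,n_{m-1}\}$, and since $1 \leq t < n \leq n_m$ we have $t = n_j$ for some $j \in \{0,\dotsc,m-1\}$ with $n_j < n_m$, i.e.\ $j \leq m-1$. Condition~\eqref{item:reduction-tree-quotients} of Definition~\ref{def:reduction-tree} requires $\beta_i/\beta_0 \in \F_{2^{n_j}}$ for $i < n_j$. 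Here I would use that for $i < n_j$, the multi-index of $\beta_i$ has $i_k = 0$ for all $k \geq j$, so $\beta_i = \left(\prod_{k<j}\vartheta_{k,i_k}\right)\prod_{k\geq j}\vartheta_{k,0}$, and likewise $\beta_0 = \prod_k \vartheta_{k,0}$; hence $\beta_i/\beta_0 = \left(\prod_{k<j}\vartheta_{k,i_k}\right)/\left(\prod_{k<j}\vartheta_{k,0}\right) \in \F_{2^{n_j}}$, since each $\vartheta_{k,i_k}$ for $k < j$ lies in $\F_{2^{n_{k+1}}} \subseteq \F_{2^{n_j}}$.

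Then I must verify conditions~\eqref{item:reduction-tree-alpha} and~\eqref{item:reduction-tree-delta}: that the subtree on $r_\alpha$ is a reduction tree for $\alpha(\beta',t) = (\beta_0,\dotsc,\beta_{t-1})$ and the subtree on $r_\delta$ is a reduction tree for $\delta(\beta',t)$, where $\beta' = (\beta_0,\dotsc,\beta_{n-1})$. For the $\alpha$-side: the first $t = n_j$ entries $\beta_0,\dotsc,\beta_{n_j-1}$, scaled by $\beta_0$, are exactly $\beta_0$ times the basis elements built from the sub-tower $\F_2 \subset \F_{2^{n_1}} \subset \dotsb \subset \F_{2^{n_j}}$ with the same basis elements $\vartheta_{k,\cdot}$ for $k < j$; scaling a reduction tree's basis by a common nonzero factor doesn't affect the reduction-tree property (since $\alpha$ and $\delta$ only involve quotients $\beta_i/\beta_0$, which are scale-invariant), so by the induction hypothesis (the subtree on $r_\alpha$ has $t = n_j \leq n_j$ leaves, satisfies the image condition and the $\dg(v_\delta) \leq \dg(v)$ condition as a subtree of $(V,E)$) it is a reduction tree. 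For the $\delta$-side I expect the main obstacle: I need to show $\delta(\beta',n_j) = \left(\left(\beta_{n_j}/\beta_0\right)^{2^{n_j}} - \beta_{n_j}/\beta_0,\dotsc\right)$ is, up to scaling, the basis built from the shifted tower $\F_2 = \F_{2^{n_j}/n_j} \subset \F_{2^{n_{j+1}}} \subset \dotsb \subset \F_{2^{n_m}}$ viewed over $\F_{2^{n_j}}$ as its new ground field — and here the constraint $\dg(v_\delta) \leq \dg(v)$ is what ensures the $\delta$-subtree's internal labels stay within $\{0, n_0, \dotsc, n_{j}\}$, hence within the allowed set $\{0,n_0,\dotsc,n_{m-1}\}$, so the induction hypothesis applies. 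The delicate point is checking that the operator $y \mapsto y^{2^{n_j}} - y$ sends the product-basis elements $\beta_{n_j i}/\beta_0$ (for the sub-tower indices) to a scalar multiple of a corresponding product basis for the truncated tower; this requires tracking how the Frobenius-type map $y \mapsto y^{2^{n_j}}-y$ interacts with the multiplicative structure $\beta_i = \prod_k \vartheta_{k,i_k}$, and is where I would spend the most care — essentially re-deriving, in the multiplicative-basis language, the analogue of property~\eqref{cantor-delta} of Lemma~\ref{lem:cantor}. Once that is in hand, both subtree conditions follow from the induction hypothesis, and the theorem follows by induction.
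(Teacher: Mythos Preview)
Your argument for linear independence and for condition~\eqref{item:reduction-tree-quotients} of Definition~\ref{def:reduction-tree} is fine, and the $\alpha$-side goes through (indeed more simply than you wrote: $\alpha(\beta',n_j)=(\beta_0,\dotsc,\beta_{n_j-1})$ is literally a prefix of the original sequence, so the induction hypothesis on $n$ applies directly without any appeal to sub-towers or scaling).

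The genuine gap is on the $\delta$-side. Your induction hypothesis is a statement about prefixes $(\beta_0,\dotsc,\beta_{n'-1})$ of the \emph{fixed} product basis; to invoke it for the subtree on $r_\delta$ you would need $\delta(\beta',n_j)$ to be such a prefix, or at least a product basis for \emph{some} tower of fields. It is neither in general. The theorem allows $\dg(r)=n_j$ to be strictly smaller than the largest $n_k<n$, so $n-n_j$ can exceed $n_{j+1}-n_j$; in that regime the indices $n_j+i$ carry into levels $k>j$, and the map $y\mapsto y^{2^{n_j}}-y$ does not factor through the product $\prod_{k\geq j}\vartheta_{k,a_k}/\vartheta_{k,0}$. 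Concretely, with the tower $\F_2\subset\F_4\subset\F_{16}$ and $n=4$, $\dg(r)=1$, one computes $\delta_2=\omega\,\delta_1+(\vartheta_{1,1}/\vartheta_{1,0})^2$ (where $\omega=\vartheta_{0,1}/\vartheta_{0,0}$), which is not of product form. Even when $n\leq n_{j+1}$, the level-$j$ images $(\vartheta_{j,s+1}/\vartheta_{j,0})^{2^{n_j}}-\vartheta_{j,s+1}/\vartheta_{j,0}$ are only $n_{j+1}/n_j-1$ in number and do not form a basis of a field extension, so the theorem's hypotheses cannot be re-instantiated.

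The paper's remedy is exactly to strengthen the induction hypothesis to a property that \emph{is} closed under $\delta$: it proves (Lemma~\ref{lem:sufficient}) that any $\beta$ satisfying $\beta_i/\beta_{n_k\lfloor i/n_k\rfloor}\in\F_{2^{n_k}}$ for all $n_k\leq\dg(r)$ admits the tree as a reduction tree. The key identity
\[
\frac{\beta_{n_\ell+i}}{\beta_0}
=\frac{\beta_{n_\ell+i}}{\beta_{n_k\lfloor(n_\ell+i)/n_k\rfloor}}\cdot\frac{\beta_{n_\ell+n_k\lfloor i/n_k\rfloor}}{\beta_0},
\]
with the first factor in $\F_{2^{n_k}}\subseteq\F_{2^{n_\ell}}$, shows that applying $y\mapsto y^{2^{n_\ell}}-y$ yields $\delta_i/\delta_{n_k\lfloor i/n_k\rfloor}\in\F_{2^{n_k}}$ for every $n_k\leq n_\ell$; the hypothesis $\dg(r_\delta)\leq\dg(r)=n_\ell$ is precisely what ensures that only these $n_k$ need checking. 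The product basis is then shown separately to satisfy this quotient condition for all $k$. Your plan is one abstraction step short: replacing ``is a product basis'' by this quotient condition is the missing idea.
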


The requirements of Theorem~\ref{thm:sufficient} are satisfied by the full binary tree with $n$ leaves and $\image(\dg)\subseteq\{0,1\}$. Consequently, Proposition~\ref{prop:trivial-reduction-tree} follows from the case $m=1$. We delay the proof of the theorem until the end of the section. Instead, we now show that the basis vectors given by the construction of the theorem allow a nontrivial and, more importantly, beneficial choice of reduction trees.

\begin{proposition}\label{prop:max-tree-sufficient} If $I\subseteq\N$ and $(V,E)$ is a full binary tree such that
\begin{equation*}
	\dg(v)=\max\left\{i\in I\mid i<\left|\leaves_v\right|\right\}
\end{equation*}
for all internal $v\in V$, then $\dg(v_\delta)\leq\dg(v)$ for all internal $v\in V$.
\end{proposition}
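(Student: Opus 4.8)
The plan is to prove this by a straightforward induction on the number of leaves $n = \left|\leaves_v\right|$ of the subtree rooted at $v$, for each internal vertex of the full binary tree. The statement to establish is that if $(V,E)$ is a full binary tree with the property that $\dg(v) = \max\{i \in I \mid i < \left|\leaves_v\right|\}$ for all internal $v$, then $\dg(v_\delta) \leq \dg(v)$ for all internal $v$. The key observation is purely combinatorial: the vertex labelling $\dg$ satisfies $\left|\leaves_{v_\alpha}\right| = \dg(v)$ and $\left|\leaves_{v_\delta}\right| = \left|\leaves_v\right| - \dg(v)$, so the hypothesis forces $\dg(v)$ to be strictly less than $\left|\leaves_v\right|$, hence $\left|\leaves_{v_\delta}\right| \geq 1$, and in particular $\left|\leaves_{v_\delta}\right| = \left|\leaves_v\right| - \dg(v) \leq \left|\leaves_v\right| - 1 < \left|\leaves_v\right|$.

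First I would fix an internal vertex $v$ and consider its child $v_\delta$. If $v_\delta$ is a leaf there is nothing to prove, since then $\dg(v_\delta) = 0 \leq \dg(v)$. Otherwise $v_\delta$ is itself an internal vertex, and by hypothesis $\dg(v_\delta) = \max\{i \in I \mid i < \left|\leaves_{v_\delta}\right|\}$. Now $\left|\leaves_{v_\delta}\right| = \left|\leaves_v\right| - \dg(v) \leq \left|\leaves_v\right|$, so the set $\{i \in I \mid i < \left|\leaves_{v_\delta}\right|\}$ is contained in the set $\{i \in I \mid i < \left|\leaves_v\right|\}$. Taking maxima of the nested sets gives $\dg(v_\delta) \leq \dg(v)$ directly. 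This does not even require induction — it is immediate from monotonicity of the "max over a shrinking set" operation.

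The only subtlety is to check that all the maxima in question are well-defined, i.e. that the sets over which we take maxima are nonempty and finite. Finiteness is automatic since $I \subseteq \N$ and the sets are bounded above. Nonemptiness for $\dg(v)$ holds by the standing assumption that $v$ is an internal vertex with the displayed labelling property (the definition presumes the max exists); likewise for $\dg(v_\delta)$ when $v_\delta$ is internal. I would note that the fact that $\left|\leaves_{v_\delta}\right| \geq 1$ (so that $v_\delta$ exists and the labelling makes sense on it) follows because $\dg(v) < \left|\leaves_v\right|$, which is guaranteed by $\{i \in I \mid i < \left|\leaves_v\right|\} \ni \dg(v)$.

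I expect there to be no real obstacle here; the statement is essentially a tautology once one unwinds the definitions of $\dg$ and $\leaves_v$. The main thing to get right in the write-up is the clean reduction: observe $\left|\leaves_{v_\delta}\right| \leq \left|\leaves_v\right|$, observe the inclusion of index sets, and conclude. Here is the proof I would write:

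\begin{proof} Let $I\subseteq\N$ and $(V,E)$ be a full binary tree such that $\dg(v)=\max\left\{i\in I\mid i<\left|\leaves_v\right|\right\}$ for all internal $v\in V$. Let $v\in V$ be an internal vertex. If $v_\delta$ is a leaf, then $\dg(v_\delta)=0\leq\dg(v)$. Suppose instead that $v_\delta$ is an internal vertex. As $\left|\leaves_{v_\delta}\right|=\left|\leaves_v\right|-\dg(v)$ and $\dg(v)\geq 0$, we have $\left|\leaves_{v_\delta}\right|\leq\left|\leaves_v\right|$. Consequently,
\begin{equation*}
	\left\{i\in I\mid i<\left|\leaves_{v_\delta}\right|\right\}
	\subseteq
	\left\{i\in I\mid i<\left|\leaves_v\right|\right\}.
\end{equation*}
Both sets are nonempty, since $v$ and $v_\delta$ are internal vertices, and both are finite, since they are bounded above. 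Taking maxima of the two sets therefore yields
\begin{equation*}
	\dg(v_\delta)
	=\max\left\{i\in I\mid i<\left|\leaves_{v_\delta}\right|\right\}
	\leq
	\max\left\{i\in I\mid i<\left|\leaves_v\right|\right\}
	=\dg(v).
\end{equation*}
Hence $\dg(v_\delta)\leq\dg(v)$ for all internal $v\in V$.
\end{proof}
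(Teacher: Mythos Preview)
Your proof is correct and takes essentially the same approach as the paper: both observe that $\dg(v_\delta)$ lies in (or below) the set $\{i\in I\mid i<\left|\leaves_v\right|\}$ and hence is bounded by its maximum $\dg(v)$. The paper's version handles the leaf and internal cases uniformly by noting $\dg(v_\delta)\in I\cup\{0\}$ and $\dg(v_\delta)<\left|\leaves_{v_\delta}\right|<\left|\leaves_v\right|$, but your case split achieves the same conclusion by the same monotonicity-of-max observation.
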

\begin{proof} Suppose that $I\subseteq\N$ and a full binary tree $(V,E)$ satisfy the conditions of the proposition. Then $\dg(v_\delta)<\left|\leaves_{v_\delta}\right|<\left|\leaves_v\right|$ and $\dg(v_\delta)\in I\cup\{0\}$ for all internal $v\in V$. Hence, $\dg(v_\delta)\leq\max\{i\in I\mid i<\left|\leaves_v\right|\}=\dg(v)$ for all internal $v\in V$.
\end{proof}

\newcommand{\maxtree}[2]{T^{{\scriptscriptstyle(}#1{\scriptscriptstyle)}}_{#2}}

For tuples of positive integers $(n_0,\dotsc,n_m)$ such that~\eqref{eqn:tower} holds, let $\maxtree{n_0,\dotsc,n_m}{n}$ denote the full binary tree with $n$ leaves and $\dg(v)=\max\{n_k\mid n_k<\left|\leaves_v\right|\}$ for all internal vertices $v$. If $n_1<n\leq n_m$, then the root vertex~$r$ of $\maxtree{n_0,\dotsc,n_m}{n}$ satisfies $\dg(r)\geq n_1>1$, establishing the existence of a tree with $\image(\dg)\nsubseteq\{0,1\}$ that satisfies the conditions of Theorem~\ref{thm:sufficient}. Moreover, we expect this tree to approximately minimise the algebraic complexity of the conversion algorithms of Section~\ref{sec:algorithms} over all trees that satisfy the conditions of the theorem.

\begin{example}\label{ex:tower} Suppose that $\F=\F_{2^{12}}$. Then there are eight tuples of positive integers $(n_0,\dotsc,n_m)$ such that~\eqref{eqn:tower} holds. For each such tuple, Figure~\ref{fig:relative-adds} displays the relative number of additions performed by the basis conversion algorithms of Section~\ref{sec:algorithms} for $\beta=(\beta_0,\dotsc,\beta_{11})$ given by the constructed of Theorem~\ref{thm:sufficient} (the choice of the bases for the extensions $\F_{2^{n_{k+1}}}/\F_{2^{n_k}}$ doesn't matter here), the reduction tree $\maxtree{n_0,\dotsc,n_m}{12}$, and polynomial length ($\ell$) ranging over $\{1,\dotsc,2^{12}\}$. The number of additions performed in each case is given as a fraction of the number performed for the tuple $(1,12)$, which is taken to be one if zero additions are performed for both tuples. The reduction tree that corresponds to the tuple $(1,12)$ has $\image(\dg)=\{0,1\}$. Thus, it represents the complexity obtained with the reduction strategy of existing algorithms. The additional parameters $c=\ell$ and $\flag=0$ are used for Algorithm~\ref{alg:L2X}, and $c=\ell$ is used for Algorithm~\ref{alg:X2L}. Figure~\ref{fig:relative-mults} similarly displays the relative number of multiplications performed by Algorithms~\ref{alg:X2M} and~\ref{alg:M2X}, under the assumption that $\beta_{v_\delta,0}$ is never equal to one in Line~\ref{X2M:scale} of Algorithm~\ref{alg:X2M} and Line~\ref{M2X:scale} of Algorithm~\ref{alg:M2X}. The daggered tuples that appear in the figure are discussed in the next section. 
\begin{figure}
	\includegraphics{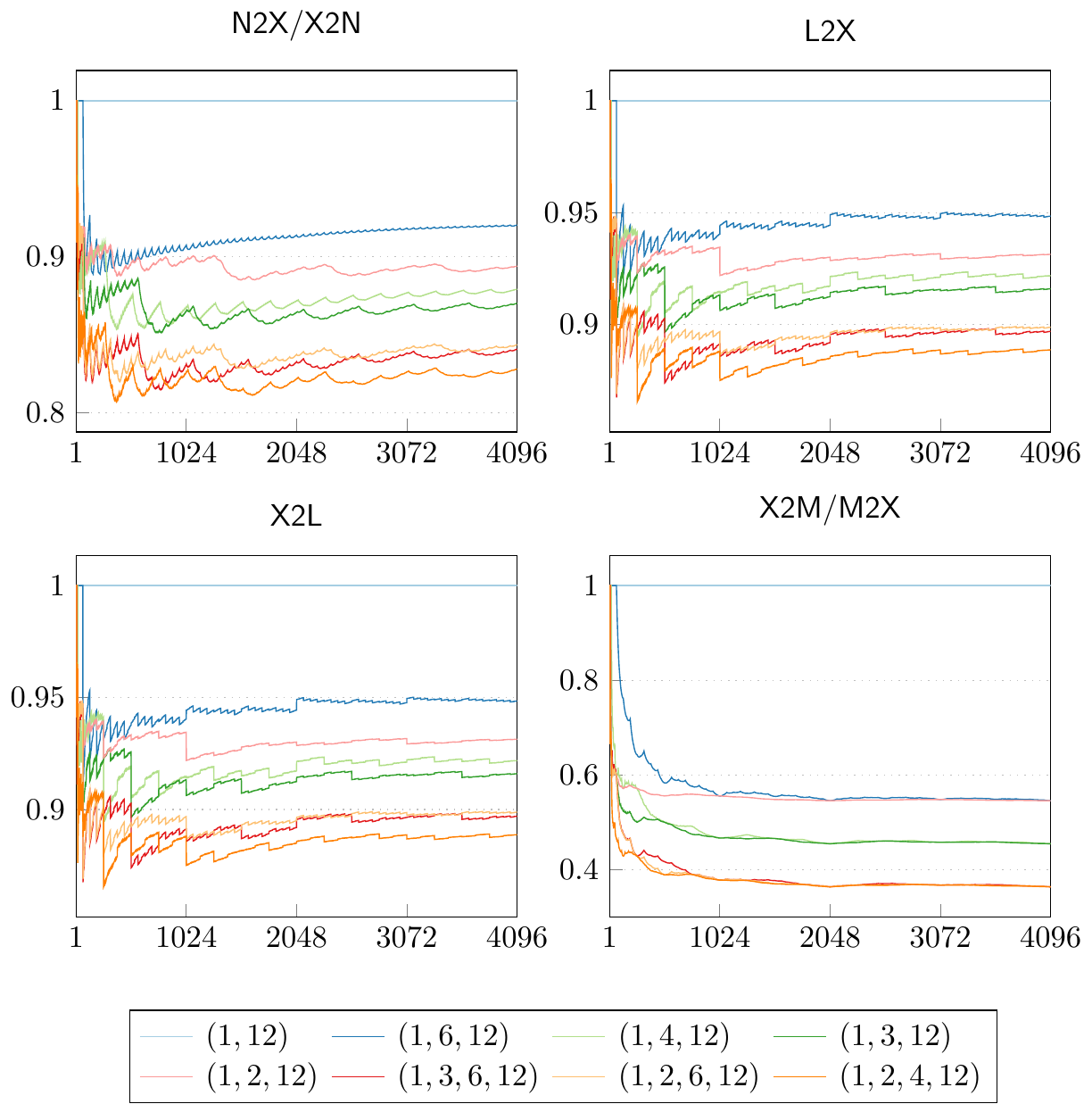}
	\caption{Relative number of additions performed by the algorithms of Section~\ref{sec:algorithms} (see Example~\ref{ex:tower}).}
	\label{fig:relative-adds}
\end{figure}
\begin{figure}
	\includegraphics{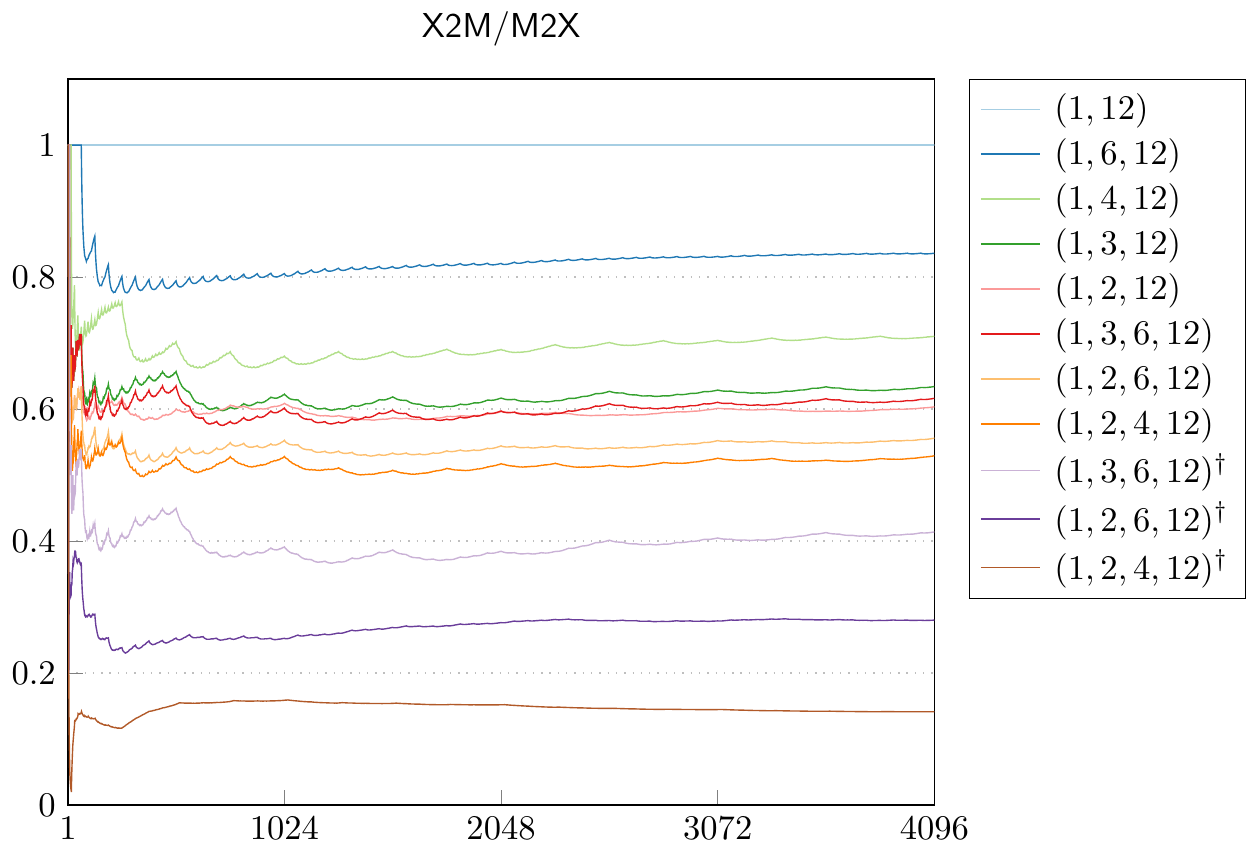}
	\caption{Relative number of multiplications performed by Algorithms~\ref{alg:X2M} and~\ref{alg:M2X} (see Example~\ref{ex:tower}).}
	\label{fig:relative-mults}
\end{figure}
\end{example}

Example~\ref{ex:tower} demonstrates that the construction of Theorem~\ref{thm:sufficient} and the choice of reduction trees it provides allows us to achieve a lower algebraic complexity if $[\F:\F_2]$ has even a single sufficiently small factor. The potential benefits are greater still when the degree of the field contains many small prime factors, echoing the benefits obtained by using roots of unity with smooth order in multiplicative FFTs. While a reduction in algebraic complexity is certainly desirable, it is not the only consideration in practice. For example, Harvey's ``cache-friendly'' variant~\cite{harvey2009} of the radix-2 truncated Fourier transform~\cite{hoeven2004} obtains better practical performance by optimising cache effects. This variant employs a reduction strategy that more rapidly reduces to problems of size that fit into cache, helping to reduce data exchanges with RAM. An analogous approach for the algorithms of Section~\ref{sec:algorithms} is to use reduction trees that balance the size of $\left|\leaves_{v_\alpha}\right|$ and $\left|\leaves_{v_\delta}\right|$ for internal vertices. If $[\F:\F_2]$ is smooth and the construction of Theorem~\ref{thm:sufficient} is applied with the number of subfields in the tower taken as large as possible, then the following proposition shows that it is possible to construct such trees while meeting requirements of the theorem by choosing $\dg(v)\in\{n_0,\dotsc,n_m\}$ to minimise $\max(n_k,\left|\leaves_v\right|-n_k)$ for all internal vertices. While there are trade-offs between optimising cache effects and reducing the operation count to be considered in practice, it appears that Theorem~\ref{thm:sufficient} offers us some freedom, especially when the degree of the field is smooth, to tune the algorithms of Section~\ref{sec:algorithms}. 

\begin{proposition} If $I\subseteq\N$ and $(V,E)$ is a full binary tree such that
\begin{equation*}
	\dg(v)\in\argmin_{i\in I}\max\left(i,\left|\leaves_v\right|-i\right)
\end{equation*}
for all internal $v\in V$, then $\dg(v_\delta)\leq\dg(v)$ for all internal $v\in V$.
\end{proposition}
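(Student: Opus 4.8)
The plan is to follow the template of the preceding proposition: fix an internal vertex $v$, set $n=\left|\leaves_v\right|$ and $d=\dg(v)$, and play off the two minimality conditions that the hypothesis imposes, one at $v$ itself and one at $v_\delta$. As in that proof, if $v_\delta$ is a leaf then $\dg(v_\delta)=0\leq d$ and there is nothing to show, so the substantive case is $v_\delta$ internal. Writing $d'=\dg(v_\delta)$, we then have $\left|\leaves_{v_\delta}\right|=n-d$, and $\dg(w)<\left|\leaves_w\right|$ for internal $w$ gives $1\leq d\leq n-1$ and $1\leq d'\leq n-d-1$, with $d,d'\in I$. The hypothesis supplies $\max(d,n-d)\leq\max(i,n-i)$ for every $i\in I$ (minimality at $v$) and $\max\left(d',(n-d)-d'\right)\leq\max\left(i,(n-d)-i\right)$ for every $i\in I$ (minimality at $v_\delta$).

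First I would argue by contradiction, assuming $d'>d$. Substituting the admissible value $i=d$ into the minimality condition at $v_\delta$ yields $\max(d',n-d-d')\leq\max(d,n-2d)$. The left-hand side is at least $d'>d$, so the right-hand side must strictly exceed $d$; this is possible only when $n-2d>d$, hence $n>3d$ and in particular $d<n/2$.

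Next I would substitute $i=d'$ into the minimality condition at $v$. Since $d<n/2$ we have $\max(d,n-d)=n-d$, so $n-d\leq\max(d',n-d')$. If $d'\leq n/2$ this reads $n-d\leq n-d'$, forcing $d'\leq d$, a contradiction; if $d'>n/2$ it reads $n-d\leq d'$, i.e. $d+d'\geq n$, contradicting $d+d'\leq d+(n-d-1)=n-1$. Either branch gives a contradiction, so $d'\leq d$, and as $v$ was an arbitrary internal vertex the proposition follows.

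The whole argument is a short finite case analysis once the two minimality inequalities are on the page; the only real idea is the first substitution ($i=d$ in the condition at $v_\delta$), which pins down $n>3d$ and thereby collapses the subsequent analysis at $v$. I do not expect a genuine obstacle beyond bookkeeping — keeping the inequality directions straight and correctly invoking $\dg(w)<\left|\leaves_w\right|$ for internal $w$ to obtain the bound $d'\leq n-d-1$ used in the last step.
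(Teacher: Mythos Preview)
Your proof is correct, but it takes a longer route than the paper's. The paper uses only the minimality condition at $v$, not the one at $v_\delta$: assuming $d'=\dg(v_\delta)>\dg(v)=d$ with $n=\left|\leaves_v\right|$, one has $d'<\left|\leaves_{v_\delta}\right|=n-d$ and $n-d'<n-d$, so
\[
	\max(d',n-d')<n-d\leq\max(d,n-d),
\]
which already contradicts the minimality at $v$ with the admissible value $i=d'\in I$. Your first substitution (plugging $i=d$ into the minimality at $v_\delta$ to extract $n>3d$) and the subsequent case split on $d'$ versus $n/2$ are therefore unnecessary; the single bound $\max(d',n-d')<n-d$ collapses the whole argument to one line. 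What your approach buys is nothing extra here, though it does illustrate that the two minimality hypotheses are mutually consistent in a stronger sense than the proposition requires.
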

\begin{proof} We prove the proposition by contradiction. Suppose that $I\subseteq\N$ and a full binary tree $(V,E)$ satisfy the conditions of the proposition. Furthermore, suppose there exists an internal vertex $v\in V$ such that $\dg(v_\delta)>\dg(v)$. Then $v_\delta$ is an internal vertex since $\dg(v_\delta)>1$. Thus, $\dg(v_\delta)\in I$ and
\begin{align*}
	\max\left(\dg(v_\delta),\left|\leaves_v\right|-\dg(v_\delta)\right)
	&<\max\left(\left|\leaves_{v_\delta}\right|,\left|\leaves_v\right|-\dg(v)\right)\\
	&=\max\left(\left|\leaves_v\right|-\left|\leaves_{v_\alpha}\right|,\left|\leaves_v\right|-\dg(v)\right)\\
	&=\max\left(\left|\leaves_v\right|-\dg(v),\left|\leaves_v\right|-\dg(v)\right)\\
	&\leq\max\left(\dg(v),\left|\leaves_v\right|-\dg(v)\right),
\end{align*}
which contradicts the minimality of $\max\left(\dg(v),\left|\leaves_v\right|-\dg(v)\right)$.	
\end{proof}

We now turn our attention to the proof of Theorem~\ref{thm:sufficient}, which we obtain as a consequence of the following more general result.

\begin{lemma}\label{lem:sufficient} Suppose there exists a tower of subfields $\F_2=\F_{2^{n_0}}\subset\dotsb\subset\F_{2^{n_m}}=\F$. Let $\beta=(\beta_0,\dotsc,\beta_{n-1})\in\F^n$ have entries that are linearly independent over $\F_2$, and~$(V,E)$ be a full binary tree with $n$ leaves and root vertex $r\in V$. Then $(V,E)$ is a reduction tree for $\beta$ if the following conditions are satisfied:
\begin{enumerate}
	\item\label{sufficient-image} $\image(\dg)\subseteq\{0,n_0,\dotsc,n_{m-1}\}$,
	\item\label{sufficient-delta} $\dg(v_\delta)\leq\dg(v)$ for all internal $v\in V$, and
	\item\label{sufficient-quotients} $\beta_i/\beta_{n_k\floor{i/n_k}}\in\F_{2^{n_k}}$ for $i\in\{0,\dotsc,n-1\}$ and $k\in\{0,\dotsc,m-1\}$ such that $n_k\leq\dg(r)$.
\end{enumerate}
\end{lemma}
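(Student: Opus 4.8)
The plan is to prove the lemma by induction on $n$. If $n=1$, the unique full binary tree with one leaf is vacuously a reduction tree for $\beta$, since Definition~\ref{def:reduction-tree} imposes no conditions in that case, so I would assume $n\geq 2$ and that the statement holds for every smaller dimension. Write $r$ for the root of $(V,E)$. As $n\geq 2$, the vertex $r$ is internal, so $\dg(r)=\left|\leaves_{r_\alpha}\right|\geq 1$; since $\dg(r)\in\image(\dg)\subseteq\{0,n_0,\dotsc,n_{m-1}\}$ and $n_0=1$, there is a (unique) index $j\in\{0,\dotsc,m-1\}$ with $\dg(r)=n_j$, and moreover $1\leq n_j<\left|\leaves_r\right|=n$. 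It then remains to verify the three requirements of Definition~\ref{def:reduction-tree}: that $\beta_i/\beta_0\in\F_{2^{\dg(r)}}$ for $i\in\{0,\dotsc,\dg(r)-1\}$, and that the subtrees of $(V,E)$ rooted on $r_\alpha$ and $r_\delta$ are reduction trees for $\alpha(\beta,\dg(r))$ and $\delta(\beta,\dg(r))$ respectively.

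The first requirement is immediate from condition~\eqref{sufficient-quotients} with $k=j$: for $i<n_j$ one has $\floor{i/n_j}=0$, so $\beta_i/\beta_0=\beta_i/\beta_{n_j\floor{i/n_j}}\in\F_{2^{n_j}}$. For the subtree rooted on $r_\alpha$, I would note that it has $\dg(r)=n_j$ leaves, that $\alpha(\beta,\dg(r))=(\beta_0,\dotsc,\beta_{n_j-1})$ inherits linear independence over $\F_2$ from $\beta$, and that its vertex labelling coincides with that of $(V,E)$; hence conditions~\eqref{sufficient-image} and~\eqref{sufficient-delta} pass to this subtree, while condition~\eqref{sufficient-quotients} for it follows from the same condition for $\beta$, because the indices involved lie in $\{0,\dotsc,n_j-1\}$ and the moduli $n_k$ occurring satisfy $n_k\leq\dg(r_\alpha)<n_j=\dg(r)$, so they fall within the range covered by the hypothesis for $\beta$. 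Applying the induction hypothesis, valid since $n_j<n$, then shows this subtree is a reduction tree for $\alpha(\beta,\dg(r))$.

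The substance of the argument is the subtree rooted on $r_\delta$, which has $n-n_j<n$ leaves. Because $\beta$ has linearly independent entries over $\F_2$ and $\beta_i/\beta_0\in\F_{2^{n_j}}$ for $i<n_j$, Lemma~\ref{lem:reduction} applies with $\dg=n_j$ and shows that $\delta=\delta(\beta,\dg(r))$ has linearly independent entries over $\F_2$; in particular every entry $\delta_s$ is nonzero. Writing $\psi(y)=y^{2^{n_j}}-y$, so that $\delta_s=\psi(\beta_{n_j+s}/\beta_0)$ for $s\in\{0,\dotsc,n-n_j-1\}$, the crucial observation is that $\psi$ is $\F_{2^{n_j}}$-linear: if $c\in\F_{2^{n_j}}$ then $c^{2^{n_j}}=c$ and hence $\psi(cy)=c\,\psi(y)$. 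Now fix $s$ and an index $k$ with $n_k\leq\dg(r_\delta)$. Condition~\eqref{sufficient-delta} at $r$ gives $n_k\leq\dg(r_\delta)\leq\dg(r)=n_j$, so $k\leq j$, $\F_{2^{n_k}}\subseteq\F_{2^{n_j}}$, and $n_k\mid n_j$. Putting $t=n_k\floor{s/n_k}$ and using $n_k\mid n_j$, one checks the index identity $n_j+t=n_k\floor{(n_j+s)/n_k}$; then condition~\eqref{sufficient-quotients} for $\beta$, applied with index $n_j+s\leq n-1$ and modulus $n_k\leq n_j=\dg(r)$, gives $c:=\beta_{n_j+s}/\beta_{n_j+t}\in\F_{2^{n_k}}\subseteq\F_{2^{n_j}}$. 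Since $\beta_{n_j+s}/\beta_0=c\,(\beta_{n_j+t}/\beta_0)$, we get $\delta_s=\psi(c\,(\beta_{n_j+t}/\beta_0))=c\,\psi(\beta_{n_j+t}/\beta_0)=c\,\delta_t$, and as $\delta_t\neq 0$ this yields $\delta_s/\delta_{n_k\floor{s/n_k}}=c\in\F_{2^{n_k}}$. Thus $\delta$ satisfies condition~\eqref{sufficient-quotients} relative to the subtree rooted on $r_\delta$; conditions~\eqref{sufficient-image} and~\eqref{sufficient-delta} pass to that subtree as before, and the induction hypothesis shows it is a reduction tree for $\delta(\beta,\dg(r))$, completing the induction.

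I expect the delicate step to be the $r_\delta$ case, which requires three ingredients to line up: Lemma~\ref{lem:reduction} to secure both linear independence and, crucially, nonvanishing of the entries of $\delta$; the $\F_{2^{n_j}}$-linearity of $y\mapsto y^{2^{n_j}}-y$ together with the subfield containment $\F_{2^{n_k}}\subseteq\F_{2^{n_j}}$ forced by condition~\eqref{sufficient-delta}; and the divisibility $n_k\mid n_j$, which is precisely what makes the index identity $n_j+n_k\floor{s/n_k}=n_k\floor{(n_j+s)/n_k}$ hold and is the only place the tower structure is used beyond the constraint on $\image(\dg)$. Everything else is routine inheritance of the hypotheses to subtrees. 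As a final remark, Theorem~\ref{thm:sufficient} is then deduced from the lemma: for the product basis $\beta_i=\prod_k\vartheta_{k,i_k}$ a short computation with the mixed-radix digits gives $\beta_i/\beta_{n_k\floor{i/n_k}}=\prod_{\ell<k}\vartheta_{\ell,i_\ell}/\vartheta_{\ell,0}$, which lies in $\F_{2^{n_k}}$ since each factor $\vartheta_{\ell,i_\ell},\vartheta_{\ell,0}$ with $\ell<k$ lies in $\F_{2^{n_{\ell+1}}}\subseteq\F_{2^{n_k}}$, so condition~\eqref{sufficient-quotients} holds unconditionally.
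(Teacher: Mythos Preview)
Your proof is correct and follows essentially the same inductive route as the paper's: both identify $\dg(r)=n_j$ from condition~\eqref{sufficient-image}, verify Definition~\ref{def:reduction-tree}\eqref{item:reduction-tree-quotients} directly from condition~\eqref{sufficient-quotients}, pass conditions~\eqref{sufficient-image} and~\eqref{sufficient-delta} to the two subtrees by inheritance, and for the $r_\delta$ branch use the divisibility $n_k\mid n_j$ together with the $\F_{2^{n_j}}$-linearity of $y\mapsto y^{2^{n_j}}-y$ to transport the quotient condition from $\beta$ to $\delta(\beta,\dg(r))$. Your explicit note that the entries of $\delta$ are nonzero (needed to form the quotients $\delta_s/\delta_t$) and your framing via the map $\psi$ are minor expository differences, not a different argument.
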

\begin{proof} We prove the lemma by induction on $n$. The lemma holds trivially if $n=1$, since it is sufficient for $(V,E)$ to have $n$ leaves in this case. Therefore, let $n\geq 2$ and suppose that the lemma is true for all smaller values of $n$. Suppose that $\beta=(\beta_0,\dotsc,\beta_{n-1})\in\F^n$ and a full binary tree $(V,E)$ satisfy the conditions of the lemma. Then the root vertex $r\in V$ of the tree is not a leaf since $\left|\leaves_r\right|=n\geq 2$. Thus, \eqref{sufficient-image} implies that $\dg(r)=n_\ell$ for some $\ell\in\{0,\dotsc,m-1\}$ such that $n_\ell<n$.

Let $\alpha(\beta,\dg(r))=(\alpha_0,\dotsc,\alpha_{n_\ell-1})$ and $\delta(\beta,\dg(r))=(\delta_0,\dotsc,\delta_{n-n_\ell-1})$, which have linearly independent entries over $\F_2$ by Lemma~\ref{lem:reduction}. Then, as $\dg(r_\alpha)<\dg(r)$, \eqref{sufficient-quotients} implies that $\alpha_i/\alpha_{n_k\floor{i/n_k}}=\beta_i/\beta_{n_k\floor{i/n_k}}\in\F_{2^{n_k}}$ for $i\in\{0,\dotsc,n_\ell-1\}$ and $k\in\{0,\dotsc,m-1\}$ such that $n_k\leq\dg(r_\alpha)$. Moreover, as $n_0,\dotsc,n_\ell$ divide $n_\ell$, we have
\begin{equation*}
	\frac{\beta_{n_\ell+i}}{\beta_0}
	=\frac{\beta_{n_\ell+i}}{\beta_{n_\ell+n_k\floor{i/n_k}}}
	\frac{\beta_{n_\ell+n_k\floor{i/n_k}}}{\beta_0}
	=\frac{\beta_{n_\ell+i}}{\beta_{n_k\floor{(n_\ell+i)/n_k}}}
	\frac{\beta_{n_\ell+n_k\floor{i/n_k}}}{\beta_0},
\end{equation*}
where $\beta_{n_\ell+i}/\beta_{n_k\floor{(n_\ell+i)/n_k}}\in\F_{2^{n_k}}\subseteq\F_{2^{n_\ell}}$, for $i\in\{0,\dotsc,n-n_\ell-1\}$ and $k\in\{0,\dotsc,\ell\}$. It follows that
\begin{align*}
	\delta_i
	&=\frac{\beta_{n_\ell+i}}{\beta_{n_k\floor{(n_\ell+i)/n_k}}}
	\left(\left(\frac{\beta_{n_\ell+n_k\floor{i/n_k}}}{\beta_0}\right)^{2^{n_\ell}}-\frac{\beta_{n_\ell+n_k\floor{i/n_k}}}{\beta_0}\right)\\
	&=\frac{\beta_{n_\ell+i}}{\beta_{n_k\floor{(n_\ell+i)/n_k}}}
	\delta_{n_k\floor{i/n_k}}
\end{align*}
for $i\in\{0,\dotsc,n-n_\ell-1\}$ and $k\in\{0,\dotsc,\ell\}$. As $\dg(r_\delta)\leq\dg(r)=n_\ell$ by \eqref{sufficient-delta}, it follows that $\delta_i/\delta_{n_k\floor{i/n_k}}\in\F_{2^{n_k}}$ for $i\in\{0,\dotsc,n-n_\ell-1\}$ and $k\in\{0,\dotsc,m-1\}$ such that $n_k\leq\dg(r_\delta)$. Conditions~\eqref{sufficient-image} and~\eqref{sufficient-delta} are satisfied by the subtrees of $(V,E)$ rooted on $r_\alpha$ and $r_\delta$ through inheritance. The subtree rooted on $r_\alpha$ has $\left|L_{r_\alpha}\right|=\dg(r)=n_\ell<n$ leaves, while the subtree rooted on $r_\delta$ has $\left|L_{r_\delta}\right|=\left|\leaves_r\right|-\left|L_{r_\alpha}\right|=n-\dg(r)=n-n_\ell<n$ leaves. Therefore, the induction hypothesis implies that the subtree rooted on $r_\alpha$ is a reduction tree for $\alpha(\beta,\dg(r))$, and the subtree rooted on $r_\delta$ is a reduction tree for $\delta(\beta,\dg(r))$. Finally, \eqref{sufficient-quotients} implies that
$\beta_i/\beta_0=\beta_i/\beta_{n_\ell\floor{i/n_\ell}}\in\F_{2^{\dg(r)}}$ for $i\in\{0,\dotsc,\dg(r)-1\}$. Hence, $(V,E)$ is a reduction tree for $\beta$.
\end{proof}

If $\beta=(\beta_0,\dotsc,\beta_{n-1})\in\F^n$ is a Cantor basis with $n\geq 4$, then properties~\eqref{cantor-li} and~\eqref{cantor-subfields} of Lemma~\ref{lem:cantor} imply that $\beta_3\in\F_{16}\setminus\F_4$. Thus, $\beta_3/\beta_2$ does not belong to~$\F_4$, since $\beta_3=\beta_2/\beta_3+1$. Consequently, Proposition~\ref{prop:cantor-trees} implies that the converse of Lemma~\ref{lem:sufficient} does not hold. We now complete the proof of Theorem~\ref{thm:sufficient} by establishing linear independence and showing that the vectors $(\beta_0,\dotsc,\beta_{n-1})$ for $n\in\{1,\dotsc,n_m\}$ always satisfy the third condition of Lemma~\ref{lem:sufficient}.

\begin{lemma} Suppose that $\beta_0,\dotsc,\beta_{n_m-1}$ are given by the construction of Theorem~\ref{thm:sufficient}. Then $\beta_0,\dotsc,\beta_{n_m-1}$ are linearly independent over $\F_2$, and $\beta_i/\beta_{n_k\floor{i/n_k}}\in\F_{2^{n_k}}$ for $i\in\{0,\dotsc,n_m-1\}$ and $k\in\{0,\dotsc,m-1\}$.
\end{lemma}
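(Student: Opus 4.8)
The plan is to translate everything into the mixed-radix digit representation of the index. Write $n_0=1$ (since $\F_{2^{n_0}}=\F_2$) and $d_k=n_{k+1}/n_k=[\F_{2^{n_{k+1}}}:\F_{2^{n_k}}]$ for $k\in\{0,\dotsc,m-1\}$, so that $n_{l+1}=d_ln_l$ and, telescoping, $\sum_{l=0}^{k-1}(d_l-1)n_l=\sum_{l=0}^{k-1}(n_{l+1}-n_l)=n_k-1$ for each $k$. From this identity one sees that every $i\in\{0,\dotsc,n_m-1\}$ has a unique representation $i=\sum_{k=0}^{m-1}i_kn_k$ with $0\le i_k<d_k$ (so $i_k=\floor{i/n_k}\bmod d_k$); hence $i\mapsto(i_0,\dotsc,i_{m-1})$ is a bijection from $\{0,\dotsc,n_m-1\}$ onto $\prod_{k=0}^{m-1}\{0,\dotsc,d_k-1\}$, and $\beta_i$ is well defined.

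For linear independence I would show by induction on $j\in\{0,\dotsc,m\}$ that $B_j=\{\prod_{k=0}^{j-1}\vartheta_{k,i_k}\mid 0\le i_k<d_k\}$ is an $\F_2$-basis of $\F_{2^{n_j}}$. The case $j=0$ is trivial, as $B_0=\{1\}$. For the step, $B_{j+1}$ is exactly the set of products $\vartheta_{j,i_j}b$ with $0\le i_j<d_j$ and $b\in B_j$; since $B_j$ is an $\F_2$-basis of $\F_{2^{n_j}}$ by the induction hypothesis and $\{\vartheta_{j,0},\dotsc,\vartheta_{j,d_j-1}\}$ is an $\F_{2^{n_j}}$-basis of $\F_{2^{n_{j+1}}}$ by assumption, the tower rule for field extensions gives that $B_{j+1}$ is an $\F_2$-basis of $\F_{2^{n_{j+1}}}$. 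Taking $j=m$ and using the bijection above, $\{\beta_0,\dotsc,\beta_{n_m-1}\}=B_m$ is an $\F_2$-basis of $\F$, in particular linearly independent over $\F_2$.

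For the quotient condition, fix $i=\sum_{l=0}^{m-1}i_ln_l$ and $k\in\{0,\dotsc,m-1\}$. Because $\sum_{l=0}^{k-1}i_ln_l\le n_k-1<n_k$, we get $n_k\floor{i/n_k}=\sum_{l=k}^{m-1}i_ln_l$; that is, $n_k\floor{i/n_k}$ has the same radix digits as $i$ in positions $l\ge k$ and zero digits in positions $l<k$. Therefore
\[
	\beta_i=\left(\prod_{l=0}^{k-1}\vartheta_{l,i_l}\right)\prod_{l=k}^{m-1}\vartheta_{l,i_l},
	\qquad
	\beta_{n_k\floor{i/n_k}}=\left(\prod_{l=0}^{k-1}\vartheta_{l,0}\right)\prod_{l=k}^{m-1}\vartheta_{l,i_l},
\]
both nonzero since basis elements are nonzero, whence
\[
	\frac{\beta_i}{\beta_{n_k\floor{i/n_k}}}=\prod_{l=0}^{k-1}\frac{\vartheta_{l,i_l}}{\vartheta_{l,0}}.
\]
For each $l<k$ one has $\vartheta_{l,i_l},\vartheta_{l,0}\in\F_{2^{n_{l+1}}}$, and $\F_{2^{n_{l+1}}}\subseteq\F_{2^{n_k}}$ since $n_{l+1}\mid n_k$; thus each factor $\vartheta_{l,i_l}/\vartheta_{l,0}$ lies in $\F_{2^{n_k}}$, and so does the product. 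This is exactly $\beta_i/\beta_{n_k\floor{i/n_k}}\in\F_{2^{n_k}}$, as required.

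The only step that needs care is the mixed-radix bookkeeping of the first paragraph — in particular the telescoping identity $\sum_{l<k}(d_l-1)n_l=n_k-1$, which is what makes the digit map a bijection and lets one compute $n_k\floor{i/n_k}$ as a digit truncation. Everything else reduces to the tower rule for bases together with the fact that the subfields $\F_{2^{n_0}}\subset\dotsb\subset\F_{2^{n_m}}$ form a chain.
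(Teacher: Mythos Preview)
Your proof is correct and follows essentially the same approach as the paper's. Both compute the quotient $\beta_i/\beta_{n_k\floor{i/n_k}}=\prod_{l<k}\vartheta_{l,i_l}/\vartheta_{l,0}$ via the mixed-radix digit decomposition, and both prove linear independence by an induction along the tower using the tower rule for bases; the only cosmetic difference is that the paper normalises by $\beta_0$ and phrases the inductive step through the identity $\beta_{in_k+j}/\beta_0=(\vartheta_{k,i}/\vartheta_{k,0})(\beta_j/\beta_0)$, whereas you work directly with the product sets $B_j$. Your explicit treatment of the mixed-radix bijection and the telescoping identity $\sum_{l<k}(d_l-1)n_l=n_k-1$ makes rigorous a point the paper leaves implicit.
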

\begin{proof} Suppose that $\beta_0,\dotsc,\beta_{n_m-1}$ are given by the construction of Theorem~\ref{thm:sufficient}. Let $i\in\{0,\dotsc,n_m-1\}$ and $i=\sum^{m-1}_{k=0}i_kn_k$ with $i_k\in\{0,\dotsc,n_{k+1}/n_k-1\}$ for $k\in\{0,\dotsc,m-1\}$. Then
\begin{equation*}
	\frac{\beta_i}{\beta_{n_k\floor{i/n_k}}}
	=\frac{
		\vartheta_{0,i_0}\dotsm\vartheta_{{k-1},i_{k-1}}
		\vartheta_{k,i_k}\dotsm\vartheta_{m-1,i_{m-1}}
	}{
		\vartheta_{0,0}\dotsm\vartheta_{k-1,0}
		\vartheta_{k,i_k}\dotsm\vartheta_{m-1,i_{m-1}}
	}
	=\frac{
		\vartheta_{0,i_0}\dotsm\vartheta_{k-1,i_{k-1}}
	}{
		\vartheta_{0,0}\dotsm\vartheta_{k-1,0}
	}
	\in\F_{2^{n_k}}
\end{equation*}
for $k\in\{0,\dotsc,m\}$, as required. Similarly,
\begin{equation}\label{eqn:tower-construction-quotients}
	\frac{\beta_{in_k+j}}{\beta_0}
	=
	\frac{\vartheta_{k,i}}{\vartheta_{k,0}}
	\frac{\beta_j}{\beta_0}
	\quad\text{and}\quad
	\frac{\beta_j}{\beta_0}
	=\frac{\beta_j}{\beta_{n_k\floor{j/n_k}}}
	\in\F_{2^{n_k}}
\end{equation}
for $i\in\{0,\dotsc,n_{k+1}/n_k-1\}$, $j\in\{0,\dotsc,n_k-1\}$ and $k\in\{0,\dotsc,m-1\}$. Now $\{\vartheta_{k,0}/\vartheta_{k,0},\dotsc,\vartheta_{k,n_{k+1}/n_k-1}/\vartheta_{k,0}\}$ is a basis of $\F_{2^{n_{k+1}}}/\F_{2^{n_k}}$ for $k\in\{0,\dotsc,m-1\}$. Therefore, if $\beta_0/\beta_0,\dotsc,\beta_{n_k-1}/\beta_0$ are linearly independent over $\F_2$ for some $k\in\{0,\dotsc,m-1\}$, then~\eqref{eqn:tower-construction-quotients} implies that $\beta_0/\beta_0,\dotsc,\beta_{n_{k+1}-1}/\beta_0$ are linearly independent over $\F_2$. As $n_0=1$, it follows that $\beta_0,\dotsc,\beta_{n_m-1}$ are linearly independent over $\F_2$.
\end{proof}

\subsection{Fewer multiplications for quadratic extensions}\label{sec:quadratic}

Theorem~\ref{thm:sufficient} does not place any restrictions on the choice of bases for the extensions $\F_{2^{n_{k+1}}}/\F_{2^{n_k}}$. In this section, we show that if some of these extension are quadratic, then it possible to choose their bases so that Algorithms~\ref{alg:X2M} and~\ref{alg:M2X} perform fewer multiplications.

\begin{theorem}\label{thm:ugly} Let $\beta_0,\dotsc,\beta_{n_m-1}\in\F$ be constructed as per Theorem~\ref{thm:sufficient}, $n\in\{1,\dotsc,n_m\}$,
and $V$ be the vertex set of the tree $\maxtree{n_0,\dotsc,n_m}{n}$. Recursively define vectors $\beta_v=(\beta_{v,0},\dotsc,\beta_{v,\left|\leaves_v\right|-1})$ for $v\in V$ as follows: if $v$ is the root of the tree, then $\beta_v=(\beta_0,\dotsc,\beta_{n-1})$; and if $v$ is an internal vertex, then $\beta_{v_\alpha}=\alpha(\beta_v,\dg(v))$ and $\beta_{v_\delta}=\delta(\beta_v,\dg(v))$. Suppose there exists $t\in\{0,\dotsc,m-1\}$ such that
\begin{equation}\label{eqn:ugly}
	\frac{n_{t+1}}{n_t}=2
	\quad\text{and}\quad
	\trace_{\F_{2^{n_{t+1}}}/\F_{2^{n_{t\vphantom{+1}}}}}
	\left(\frac{\vartheta_{t,1}}{\vartheta_{t,0}}\right)
	=1.
\end{equation}
Then $\beta_{v_\delta,0}=1$ for all $v\in V$ such that $\dg(v)=n_t$.
\end{theorem}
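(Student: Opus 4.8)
The plan is to prove the statement by induction on $n$, tracking how the quantity $\beta_{v_\delta,0}$ is produced at each application of the $\delta$ map along the reduction tree $\maxtree{n_0,\dotsc,n_m}{n}$. The key observation is that for an internal vertex $v$ with $\dg(v) = n_t$, the entry $\beta_{v_\delta,0}$ is, by definition of $\delta$, equal to $(\beta_{v,n_t}/\beta_{v,0})^{2^{n_t}} - \beta_{v,n_t}/\beta_{v,0}$. So the entire theorem reduces to showing that the quotient $q := \beta_{v,n_t}/\beta_{v,0}$ satisfies $q^{2^{n_t}} - q = 1$, i.e. that $q$ lies in $\F_{2^{n_{t+1}}}$ with $\trace_{\F_{2^{n_{t+1}}}/\F_{2^{n_t}}}(q) = 1$, since the relative trace map is precisely $x \mapsto x + x^{2^{n_t}}$ for a quadratic extension.

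The main work is then to understand the form of the vectors $\beta_v$ produced by the recursion. First I would establish, by induction on the depth, a structural description: for each $v \in V$, the vector $\beta_v$ is obtained from the ambient basis $(\beta_0,\dotsc,\beta_{n-1})$ by a sequence of $\delta$-operations indexed by the edges $\{w, w_\delta\}$ on the path from the root to $v$. Using the multiplicative product structure $\beta_i = \prod_{k=0}^{m-1}\vartheta_{k,i_k}$ from Theorem~\ref{thm:sufficient} together with the computation in the proof of the preceding lemma (equation~\eqref{eqn:tower-construction-quotients}, giving $\beta_{in_k+j}/\beta_0 = (\vartheta_{k,i}/\vartheta_{k,0})(\beta_j/\beta_0)$), I would show that each $\delta$-operation at level $n_k$ replaces the relevant quotients $\beta_{v,i}/\beta_{v,0}$ by the quotients of a shorter vector with essentially the same $\vartheta$-factor structure, just with an extra "twist" by the Frobenius-type map $x \mapsto x^{2^{n_k}} - x$ applied to a suffix. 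The crucial point is that since $\dg(v) = n_t$ and (by the defining property of $\maxtree{n_0,\dotsc,n_m}{n}$ and Proposition~\ref{prop:subfields}) all the $\dg$ values encountered on the way down are among $\{n_0,\dotsc,n_{m-1}\}$ with $n_t \mid \dg(w)$ whenever $n_t \le \dg(w)$, the index $n_t$ "survives" intact: the quotient $\beta_{v,n_t}/\beta_{v,0}$ equals $\vartheta_{t,1}/\vartheta_{t,0}$ times a product of factors $\vartheta_{k,\ast}/\vartheta_{k,0}$ with $k < t$, all of which lie in $\F_{2^{n_t}}$ and hence are fixed by $x \mapsto x^{2^{n_t}}$.

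Granting that structural claim, the conclusion follows quickly: write $q = \beta_{v,n_t}/\beta_{v,0} = (\vartheta_{t,1}/\vartheta_{t,0}) \cdot c$ where $c \in \F_{2^{n_t}}^\times$. Then $q^{2^{n_t}} - q = c\bigl((\vartheta_{t,1}/\vartheta_{t,0})^{2^{n_t}} - \vartheta_{t,1}/\vartheta_{t,0}\bigr) = c \cdot \trace_{\F_{2^{n_{t+1}}}/\F_{2^{n_t}}}(\vartheta_{t,1}/\vartheta_{t,0}) = c \cdot 1 = c$. But here I would need the additional fact that $c = 1$; examining the product more carefully, the factors $\vartheta_{k,\ast}/\vartheta_{k,0}$ that appear will all have index $0$ in the numerator — because position $n_t$ in the reduced vector $\beta_v$ corresponds to incrementing only the $\vartheta_t$-digit — so $c = \prod \vartheta_{k,0}/\vartheta_{k,0} = 1$, giving $\beta_{v_\delta,0} = q^{2^{n_t}} - q = 1$ exactly.

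The hard part will be pinning down the structural description of $\beta_v$ precisely enough — in particular, verifying that after an arbitrary sequence of $\delta$-operations at levels $n_{k_1} \ge n_{k_2} \ge \dotsb$ (the monotonicity $\dg(v_\delta) \le \dg(v)$ guaranteed by Proposition~\ref{prop:max-tree-sufficient} is essential here), the entry in position $n_t$ still has the clean product form and that the outermost twist is by exactly $x \mapsto x^{2^{n_t}} - x$ rather than some higher power. I expect this bookkeeping to mirror closely the divisibility arguments already carried out in the proof of Lemma~\ref{lem:sufficient}, where the relations $n_0 \mid n_1 \mid \dotsb$ are used to commute the Frobenius twists past the $\F_{2^{n_k}}$-rational factors; the main new ingredient beyond that lemma is simply tracking the $\vartheta_{t,0}$ versus $\vartheta_{t,1}$ distinction in the first coordinate of $\beta_{v_\delta}$, which is what records the nontrivial trace.
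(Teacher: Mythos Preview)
Your strategy matches the paper's: both establish a structural description of the vectors $\beta_v$ along the tree and then read off $\beta_{v_\delta,0}$ as a relative trace. The paper packages the structural claim as a separate lemma showing that for any internal $v$ with $\dg(v)=n_\ell$ one has $\beta_{v_\delta,i}=(\vartheta_{0,i_0}/\vartheta_{0,0})\dotsm(\vartheta_{\ell-1,i_{\ell-1}}/\vartheta_{\ell-1,0})\,\vartheta'_{\ell,i_\ell}$ for certain $\vartheta'_{\ell,j}\in\F_{2^{n_{\ell+1}}}$, and then specialises to $i=0$, $\ell=t$.

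There is, however, a real gap in your justification that $c=1$. You assert that ``position $n_t$ in the reduced vector $\beta_v$ corresponds to incrementing only the $\vartheta_t$-digit'', but this fails if some $\delta$-edge on the path from the root to $v$ sits at level \emph{exactly} $n_t$: after a $\delta$-operation at level $n_t$ the new position $n_t$ corresponds to the old position $2n_t$, and the $\vartheta_t$-factor in the product form is replaced by some $\vartheta''_{t,1}/\vartheta''_{t,0}$ over which you have no control. The monotonicity $\dg(w_\delta)\le\dg(w)$ you invoke only shows that every $\delta$-edge on the path is at a level $\ge n_t$; it does not exclude equality. This is exactly where the hypothesis $n_{t+1}/n_t=2$ must enter, and your sketch uses that hypothesis only for the identity $x^{2^{n_t}}+x=\trace_{\F_{2^{n_{t+1}}}/\F_{2^{n_t}}}(x)$. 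The missing argument is short but essential: if some ancestor $u$ of $v$ has $\dg(u)=n_t$ and the path passes through $u_\delta$, then the max-tree condition gives $\left|\leaves_u\right|\le n_{t+1}=2n_t$, so $\left|\leaves_{u_\delta}\right|\le n_t$, whence $\left|\leaves_v\right|\le n_t$, contradicting $\dg(v)=n_t$. The paper isolates precisely this point (its structural lemma only pins down $\vartheta'_{\ell,0}$ under the extra condition that no $u$ with $u_\delta=v$ has $\dg(u)=n_\ell$) and then verifies the condition via the computation above. Once you insert this step, your outline is complete and coincides with the paper's proof.
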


The definition of the vectors $\beta_v$ in Theorem~\ref{thm:ugly} matches that of Section~\ref{sec:algorithms}. Consequently, for $\beta=(\beta_0,\dotsc,\beta_{n-1})$ given by the construction of Theorem~\ref{thm:sufficient}, and the reduction tree $\maxtree{n_0,\dotsc,n_m}{n}$, Lines~\ref{X2M:scale}--\ref{X2M:scale-end} of Algorithm~\ref{alg:X2M} (similarly, Lines~\ref{M2X:scale}--\ref{M2X:scale-end} of Algorithm~\ref{alg:M2X}) perform no multiplications if the input vertex $v$ satisfies $\dg(v)=n_t$ for some $t$ such that~\eqref{eqn:ugly} holds. If $n_{t+1}/n_t=2$, then we may ensure that the condition on the trace in~\eqref{eqn:ugly} is satisfied by taking $\vartheta_{t,0}=1$ and $\vartheta_{t,1}\in\F_{2^{n_{t+1}}}$ with $\trace_{\F_{2^{n_{t+1}}}/\F_{2^{n_{t\vphantom{+1}}}}}(\vartheta_{t,1})=1$, which yields a basis since the trace of one is equal to zero. Therefore, it is possible to achieve a significant reduction in the number of multiplications performed by Algorithms~\ref{alg:X2M} and~\ref{alg:M2X} when the tower used in the construction contains several quadratic extensions. Returning to Example~\ref{ex:tower}, we see such an improvement for $\F=\F_{2^{12}}$ by looking at the relative number of multiplications performed for the daggered tuples in Figure~\ref{fig:relative-mults}. For these tuples, it is assumed that $\beta_{v_\delta,0}=1$ if and only if $\dg(v)=n_t$ for some $t$ such that $n_{t+1}/n_t=2$.


We now turn our attention to the proof of Theorem~\ref{thm:ugly}.

\begin{lemma}\label{lem:beta_v_delta} Assume the hypothesis and notation of Theorem~\ref{thm:ugly}. If $v\in V$ is an internal vertex and $\dg(v)=n_\ell$, then there exist elements $\vartheta_{\ell,0}',\vartheta_{\ell,1}',\dotsc\in\F_{2^{n_{\ell+1}}}$ that are linearly independent over $\F_{2^{n_\ell}}$, for which
\begin{equation}\label{eqn:beta_v_delta_i}
	\beta_{v_\delta,i}
	=
	\frac{\vartheta_{0,i_0}}
	{\vartheta_{0,0}}
	\dotsm
	\frac{\vartheta_{\ell-1,i_{\ell-1}}}
	{\vartheta_{\ell-1,0}}
	\vartheta_{\ell,i_\ell}'
	\quad\text{such that}\quad
	\sum^\ell_{k=0}i_kn_k=i,
\end{equation}
for $i\in\{0,\dotsc,\left|\leaves_{v_\delta}\right|-1\}$. Furthermore,
\begin{equation}\label{eqn:vartheta_ell_0}
	\vartheta_{\ell,0}'
	=
	\left(
		\frac{\vartheta_{\ell,1}}
		{\vartheta_{\ell,0}}
	\right)^{2^{n_\ell}}
	-
	\frac{\vartheta_{\ell,1}}
	{\vartheta_{\ell,0}}
\end{equation}
if there is no vertex $u\in V$ such that $u_\delta=v$ and $\dg(u)=n_\ell$.
\end{lemma}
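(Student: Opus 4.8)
The plan is to prove Lemma~\ref{lem:beta_v_delta} by induction on the depth of the vertex $v$ within the tree $\maxtree{n_0,\dotsc,n_m}{n}$, exploiting the product structure of the $\beta_i$ inherited from Theorem~\ref{thm:sufficient}. First I would record the base situation: if $v$ is an internal vertex whose parent $u$ (when it exists) does not satisfy $u_\delta = v$ with $\dg(u) = n_\ell$, then $\beta_v = (\beta_0,\dotsc,\beta_{n_v-1})$ up to the overall normalisation coming from whatever $\alpha$-reductions led to $v$; in fact, since $\maxtree{n_0,\dotsc,n_m}{n}$ always chooses $\dg$ of an internal vertex to be the largest $n_k$ strictly below $\left|\leaves_v\right|$, one checks that the vectors obtained by a chain of $\alpha$-steps from the root are exactly prefixes of $(\beta_0,\dotsc,\beta_{n-1})$, and those obtained after a single $\delta$-step are governed by the $\delta$ map of Lemma~\ref{lem:reduction}. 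So the key computation is to apply $\delta(\beta_v, n_\ell)$ to a vector whose entries have the product form $\beta_{v,i} = \frac{\vartheta_{0,i_0}}{\vartheta_{0,0}}\dotsm\frac{\vartheta_{\ell-1,i_{\ell-1}}}{\vartheta_{\ell-1,0}}\,\vartheta_{\ell,i_\ell}$ with $\sum i_k n_k = i$ (this is the shape one gets from \eqref{eqn:tower-construction-quotients} after normalising by $\beta_{v,0}$, which forces $\beta_{v,0}=1$ at such a vertex; I would note this normalisation carefully since $\delta$ divides through by $\beta_{v,0}$).

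Next I would carry out that computation. For $i \in \{0,\dotsc,\left|\leaves_{v_\delta}\right|-1\}$ write $n_\ell + i = \sum_{k} j_k n_k$; since $n_0,\dotsc,n_\ell$ all divide $n_\ell$, the index $n_\ell + i$ just increments the $\vartheta_\ell$-component, so $\beta_{v,n_\ell+i} = \frac{\vartheta_{0,i_0}}{\vartheta_{0,0}}\dotsm\frac{\vartheta_{\ell-1,i_{\ell-1}}}{\vartheta_{\ell-1,0}}\,\vartheta_{\ell,i_\ell+\,(\text{carry})}$, and more precisely $\beta_{v,n_\ell+i}/\beta_{v,0} = \frac{\vartheta_{0,i_0}}{\vartheta_{0,0}}\dotsm\frac{\vartheta_{\ell-1,i_{\ell-1}}}{\vartheta_{\ell-1,0}}\cdot\frac{\vartheta_{\ell,a}}{\vartheta_{\ell,0}}$ for the appropriate $a$. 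Then
\begin{equation*}
	\beta_{v_\delta,i}
	=
	\left(\frac{\beta_{v,n_\ell+i}}{\beta_{v,0}}\right)^{2^{n_\ell}}
	-
	\frac{\beta_{v,n_\ell+i}}{\beta_{v,0}},
\end{equation*}
and because $\frac{\vartheta_{0,i_0}}{\vartheta_{0,0}}\dotsm\frac{\vartheta_{\ell-1,i_{\ell-1}}}{\vartheta_{\ell-1,0}} \in \F_{2^{n_\ell}}$ it is fixed by the $2^{n_\ell}$-power Frobenius and factors out of the difference. This yields \eqref{eqn:beta_v_delta_i} with $\vartheta_{\ell,a}' := \left(\vartheta_{\ell,a}/\vartheta_{\ell,0}\right)^{2^{n_\ell}} - \vartheta_{\ell,a}/\vartheta_{\ell,0}$; the map $x \mapsto x^{2^{n_\ell}} - x$ is $\F_{2^{n_\ell}}$-linear with kernel $\F_{2^{n_\ell}}$, and the original $\vartheta_{\ell,a}/\vartheta_{\ell,0}$ span $\F_{2^{n_{\ell+1}}}$ over $\F_{2^{n_\ell}}$ with $\vartheta_{\ell,0}/\vartheta_{\ell,0}=1$ in the kernel, so the images are $\F_{2^{n_\ell}}$-linearly independent (there are $n_{\ell+1}/n_\ell - 1$ of them living in a space of that dimension, or one argues directly via the kernel). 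The case where $v$ \emph{does} arise as $u_\delta$ with $\dg(u)=n_\ell$ is handled by applying the same argument with $\vartheta_{\ell,\bullet}$ replaced by the $\vartheta_{\ell,\bullet}'$ furnished by the inductive hypothesis for $u$ — the product form is preserved, only the last factor is re-twisted — and for the ``furthermore'' clause one observes that when $v$ is \emph{not} of that form the last-coordinate index $i_\ell$ runs over $\{0,1,\dotsc\}$ starting from the genuine $\vartheta_\ell$, so $\vartheta_{\ell,0}' = \left(\vartheta_{\ell,1}/\vartheta_{\ell,0}\right)^{2^{n_\ell}} - \vartheta_{\ell,1}/\vartheta_{\ell,0}$, which is exactly \eqref{eqn:vartheta_ell_0} (the shift by $1$ in the index is because $\vartheta_{\ell,0}/\vartheta_{\ell,0}=1$ maps to $0$, i.e.\ index $0$ of $\beta_{v_\delta}$ picks up $\vartheta_{\ell,1}$).

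The main obstacle I expect is bookkeeping the index arithmetic: one must verify that for $i < \left|\leaves_{v_\delta}\right|$ the digits of $n_\ell + i$ in the mixed-radix expansion $\sum j_k n_k$ agree with those of $i$ in positions $0,\dotsc,\ell-1$ and differ only by a bounded shift in position $\ell$ (no carry propagates past position $\ell$ because $\left|\leaves_{v_\delta}\right| = n_v - n_\ell \le n_{\ell+1} - n_\ell$ by the $\maxtree{}{}$ construction and the hypothesis $\dg(v_\delta)\le\dg(v)$), and that the chain of $\alpha$/$\delta$ reductions from the root genuinely delivers the claimed product shape with $\beta_{v,0}=1$. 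Once the digit-arithmetic is pinned down the algebra is just the Frobenius-linearity observation above. I would therefore structure the write-up as: (i) a short lemma or remark fixing the mixed-radix index behaviour along $\maxtree{n_0,\dotsc,n_m}{n}$; (ii) the induction, with the $\delta$-computation as the inductive step; (iii) extraction of the ``furthermore'' clause. Theorem~\ref{thm:ugly} then follows immediately: at a vertex $v$ with $\dg(v)=n_t$ and \eqref{eqn:ugly} holding, $\beta_{v_\delta,0} = \vartheta_{t,0}' = \left(\vartheta_{t,1}/\vartheta_{t,0}\right)^{2^{n_t}} - \vartheta_{t,1}/\vartheta_{t,0}$ (if $v$ is not a $\delta$-child of depth $n_t$), and this is $1$ precisely because $\trace_{\F_{2^{n_{t+1}}}/\F_{2^{n_t}}}(\vartheta_{t,1}/\vartheta_{t,0}) = (\vartheta_{t,1}/\vartheta_{t,0}) + (\vartheta_{t,1}/\vartheta_{t,0})^{2^{n_t}} = 1$ in the quadratic case $n_{t+1}/n_t = 2$; the remaining vertices with $\dg(v)=n_t$ that \emph{are} $\delta$-children reduce to the previous ones by a further induction along the $\delta$-chain, during which \eqref{eqn:vartheta_ell_0} is stable.
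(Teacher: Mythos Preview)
Your approach—induction on depth with the Frobenius-factoring computation as the inductive step—is the same as the paper's, and the key computation you sketch is correct. The gap is in establishing the product form of $\beta_v$ itself before you apply $\delta$. You assert that when the parent $u$ does not satisfy $u_\delta=v$ with $\dg(u)=n_\ell$, one has $\beta_v=(\beta_0,\dotsc,\beta_{n_v-1})$ ``up to normalisation from $\alpha$-reductions''; but $\alpha$-steps introduce no normalisation at all (they take prefixes verbatim), and the path from the root to $v$ may contain $\delta$-steps at levels $n_k$ with $k>\ell$, so $\beta_v$ is not a prefix of the original $\beta$. What is true is that $\beta_v$ is a \emph{scalar multiple} of that prefix, but this is precisely what must be proved inductively. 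The paper resolves this by tracing back along the path $r=v_0,\dotsc,v_h=v$ to the \emph{last} $\delta$-edge, say $v_j=(v_{j-1})_\delta$ with $\dg(v_{j-1})=n_k$; the induction hypothesis applied to $v_{j-1}$ gives the product form for $\beta_{v_j}$, and the remaining $\alpha$-steps to $v$ just restrict to a prefix. The subcases $\ell<k$ and $\ell=k$ then correspond exactly to your two cases; in the former one uses $\left|\leaves_v\right|\leq n_{\ell+1}\leq n_k$ to force the $k$-digit to zero, recovering the genuine $\vartheta_{\ell,\bullet}$ in $\beta_{v,n_\ell+i}/\beta_{v,0}$ and hence~\eqref{eqn:vartheta_ell_0}. (Your side claim that $\beta_{v,0}=1$ is also false in general, though harmless since $\delta$ divides through.)

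Your closing remark on Theorem~\ref{thm:ugly} also misfires. When $n_{t+1}/n_t=2$ there is \emph{no} vertex $v$ with $\dg(v)=n_t$ arising as $u_\delta$ with $\dg(u)=n_t$: that would give $\left|\leaves_u\right|=n_t+\left|\leaves_v\right|>2n_t=n_{t+1}$, contradicting $\dg(u)=\max\{n_k:n_k<\left|\leaves_u\right|\}=n_t$. So the ``furthermore'' clause always applies and no $\delta$-chain induction is needed; and in any case your stability claim is wrong, since after a $\delta$-step at level $n_\ell$ the new constant is $(\vartheta'_{\ell,1}/\vartheta'_{\ell,0})^{2^{n_\ell}}-\vartheta'_{\ell,1}/\vartheta'_{\ell,0}$, not~\eqref{eqn:vartheta_ell_0}.
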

\begin{proof} Throughout the proof, if $i\in\{0,\dotsc,n_m-1\}$, then $i_0,\dotsc,i_{m-1}$ denote the coefficients of the expansion $i=\sum^{m-1}_{k=0}i_kn_k$ such that $i_k\in\{0,\dotsc,n_{k+1}/n_k-1\}$ for $k\in\{0,\dotsc,m-1\}$. We note in particular that $\{0,\dotsc,\left|\leaves_v\right|-1\}\subseteq\{0,\dotsc,n_m-1\}$ for $v\in V$, since $\left|\leaves_v\right|\leq n\leq n_m$. We begin by showing that the assertions of the lemma hold for a special subset of the internal vertices in the tree, before completing the proof by induction.

Let $v\in V$ be an internal vertex such that the (possibly trivial) path $r=v_0,\dotsc,v_h=v$ that connects $v$ to the root vertex $r\in V$ satisfies $v_i=(v_{i-1})_\alpha$ for $i\in\{1,\dotsc,h\}$. Then
\begin{equation*}
	\beta_{v,i}=\beta_{v_{h-1},i}=\dotsb=\beta_{v_0,i}=\beta_{r,i}=\prod^{m-1}_{k=0}\vartheta_{k,i_k}
	\quad
	\text{for $i\in\{0,\dotsc,\left|\leaves_v\right|-1\}$}.
\end{equation*}
As $v$ is an internal vertex, $\dg(v)=\max\{n_k\mid n_k<\left|\leaves_v\right|\}$ by the definition of the tree $\maxtree{n_0,\dotsc,n_m}{n}$. Thus, $\dg(v)=n_\ell$ for some $\ell\in\{0,\dotsc,m-1\}$. Moreover, $\left|\leaves_{v_\delta}\right|=\left|\leaves_v\right|-n_\ell<n_{\ell+1}$, since otherwise the maximality of $n_\ell$ is contradicted. Consequently, 
\begin{equation}\label{eqn:quotients}
	\frac{\beta_{v,n_\ell+i}}{\beta_{v,0}}
	=
	\frac{\vartheta_{0,i_0}}
	{\vartheta_{0,0}}
	\dotsm
	\frac{\vartheta_{\ell-1,i_{\ell-1}}}
	{\vartheta_{\ell-1,0}}
	\frac{\vartheta_{\ell,1+i_\ell}}{\vartheta_{\ell,0}}
	\quad
	\text{for $i\in\{0,\dotsc,\left|\leaves_{v_\delta}\right|-1\}$}.
\end{equation}
As the quotients $\vartheta_{k,i}/\vartheta_{k,0}\in\F_{2^{n_{k+1}}}\subseteq\F_{2^{n_\ell}}$ for $k\in\{0,\dotsc,\ell-1\}$, it follows that~\eqref{eqn:beta_v_delta_i} holds with
\begin{equation*}
	\vartheta_{\ell,i}'
	=
	\left(\frac{\vartheta_{\ell,i+1}}{\vartheta_{\ell,0}}\right)^{2^{n_\ell}}
	-\frac{\vartheta_{\ell,i+1}}{\vartheta_{\ell,0}}
	\quad\text{for $i\in\{0,\dotsc,n_{\ell+1}/n_\ell-2\}$}.
\end{equation*}
Consequently, \eqref{eqn:vartheta_ell_0} also holds. Finally, $\vartheta_{\ell,0}',\dotsc,\vartheta_{\ell,n_{\ell+1}/n_\ell-2}'$ inherit linear independence over $\F_{2^{n_\ell}}$ from $\vartheta_{\ell,0},\dotsc,\vartheta_{\ell,n_{\ell+1}/n_\ell-1}$, since
\begin{equation*}
	\sum^{n_{\ell+1}/n_\ell-2}_{i=0}
	\lambda_{i+1}
	\vartheta_{\ell,i}'=0
	\quad\text{if and only if}\quad
	\sum^{n_{\ell+1}/n_\ell-1}_{i=1}
	\lambda_i
	\frac{\vartheta_{\ell,i}}
	{\vartheta_{\ell,0}}
	\in\F_{2^{n_\ell}}
\end{equation*}
for $\lambda_1,\dotsc,\lambda_{n_{\ell+1}/n_\ell-1}\in\F_{2^{n_\ell}}$.

We now proceed by induction on the depth of $v$, i.e., on the length $h$ of the path $r=v_0,\dotsc,v_h=v$ that connects $v$ to the root vertex of the tree. If $v\in V$ is an internal vertex of depth zero, then $v$ is the root of the tree which is covered by the already proved case. Therefore, let $h$ be a positive integer, and suppose that the assertions of the lemma holds for all internal vertices with depth less than $h$. Let $v\in V$ be an internal vertex of depth $h$ (if no such vertex exists, then we are done) and $r=v_0,\dotsc,v_h=v$ be the path that connects $v$ to the root vertex $r\in V$. We may assume that $v_i=(v_{i-1})_\delta$ for some $i\in\{1,\dotsc,h\}$. Let $j$ the maximum of all such indices. Then $v_{j-1}$ is an internal vertex. Thus, $\dg(v_{j-1})=n_k$ for some $k\in\{0,\dotsc,m-1\}$. Consequently, the induction hypothesis and the choice of $j$ imply that there exists a basis $\{\vartheta_{k,0}'',\dotsc,\vartheta_{k,n_{k+1}/n_k-1}''\}$ of $\F_{2^{n_{k+1}}}/\F_{2^{n_k}}$ such that
\begin{equation}\label{eqn:beta_v_i}
	\beta_{v,i}
	=\beta_{v_{h-1},i}=\dotsb=\beta_{v_j,i}
	=
	\frac{\vartheta_{0,i_0}}
	{\vartheta_{0,0}}
	\dotsm
	\frac{\vartheta_{k-1,i_{k-1}}}
	{\vartheta_{k-1,0}}
	\vartheta_{k,i_k}''
	\quad\text{for $i\in\{0,\dotsc,\left|\leaves_v\right|-1\}$}.
\end{equation}

As $v$ is an internal vertex, $\dg(v)=n_\ell$ for some $\ell\in\{0,\dotsc,m-1\}$. Moreover, the definition of the tree $\maxtree{n_0,\dotsc,n_m}{n}$ implies that
\begin{equation*}
 n_\ell
 =\max\left\{n_t\mid n_t<\left|\leaves_v\right|\right\}
 \leq\max\left\{n_t\mid n_t<\left|\leaves_{v_{j-1}}\right|\right\}
 =n_k,
\end{equation*}
since $v$ is descended from $v_{j-1}$. If $\ell=k$, then~\eqref{eqn:beta_v_i} implies that~\eqref{eqn:beta_v_delta_i} holds with
\begin{equation*}
	\vartheta_{\ell,i}'
	=
	\left(
		\frac{\vartheta_{\ell,i+1}''}
		{\vartheta_{\ell,0}''}
	\right)^{2^{n_\ell}}
	-\frac{\vartheta_{\ell,i+1}''}
	{\vartheta_{\ell,0}''}
	\quad\text{for $i\in\{0,\dotsc,n_{\ell+1}/n_\ell-2\}$},
\end{equation*}
which inherit linear independence over $\F_{2^{n_\ell}}$ from $\vartheta_{k,0}'',\dotsc,\vartheta_{k,n_{k+1}/n_k-1}''$. Moreover, we must have $j=h$, since otherwise the maximality of $j$ implies that $v$ is descended from $(v_j)_\alpha$, which in-turn implies that $n_k=n_\ell<\left|\leaves_v\right|\leq\dg(v_j)\leq n_k$. It follows that $u=v_{h-1}$ satisfies $u_\delta=v$ and $\dg(u)=n_\ell$. Consequently, we are not required to show that~\eqref{eqn:vartheta_ell_0} holds in this case.

If $\ell<k$, then $\left|\leaves_{v_\delta}\right|=\left|\leaves_v\right|-n_\ell<n_{\ell+1}\leq n_k$, since $n_\ell=\max\{n_t\mid n_t<\left|\leaves_v\right|\}$. Thus, \eqref{eqn:beta_v_i} implies that \eqref{eqn:quotients} holds, which we have already shown to be sufficient for the two assertions of the lemma to hold. Hence, the lemma follows by induction.
\end{proof}

\begin{remark} Lemma~\ref{lem:beta_v_delta} implies that $\beta_{v_\delta,0}=\vartheta_{\ell,0}'$ lies in the subfield $\F_{2^{n_{\ell+1}}}$ regardless of the choice of bases used in the construction of Theorem~\ref{thm:sufficient}. Consequently, if $\beta_{v_\delta,0}$ cannot be forced to equal one, then it may still be possible to reduce the cost of the multiplications performed in Lines~\ref{X2M:scale}--\ref{X2M:scale-end} of Algorithm~\ref{alg:X2M} (similarly, Lines~\ref{M2X:scale}--\ref{M2X:scale-end} of Algorithm~\ref{alg:M2X}) by choosing the representation of the elements in $\F$ so that the cost of multiplication is reduced whenever one of the multiplicands belongs to $\F_{2^{n_{\ell+1}}}$. Such optimisations have previously been shown to be beneficial in practice, particularly for multiplications by elements of small subfields, by Bernstein and Chou~\cite{bernstein2014} and Chen et al.~\cite{chen2017}. These considerations also extend to the algorithms for conversion between the LCH and the Newton or Lagrange bases of Sections~\ref{sec:NX}--\ref{sec:X2L}. For these algorithms, if the tower used in the construction of the basis contains small subfields, then Lemma~\ref{lem:beta_v_delta} can be used to show that some of the precomputed elements $\PushDown_v(u,\sigma_{v,i})$ belong to small subfields. Consequently, if the initial shift parameter $\lambda$ also lies in a small subfield, as is the case when it is zero, then so too do some of the multiplicands in the base cases of the algorithms.
\end{remark}

\begin{proof}[Proof of Theorem~\ref{thm:ugly}] Suppose there exists $t\in\{0,\dotsc,m-1\}$ such that~\eqref{eqn:ugly} holds, and there exists a vertex $v\in V$ such that $\dg(v)=n_t$. If $v=u_\delta$ for some $u\in V$, then $\dg(u)\neq n_t$, since otherwise $n_{t+1}=2n_t<n_t+\left|\leaves_v\right|=n_t+(\left|\leaves_u\right|-n_t)=\left|\leaves_u\right|$, contradicting the maximality of $n_t$. Therefore, Lemma~\ref{lem:beta_v_delta} implies that
\begin{equation*}
	\beta_{v_\delta,0}
	=
	\frac{\vartheta_{0,0}}
	{\vartheta_{0,0}}
	\dotsm
	\frac{\vartheta_{t-1,0}}
	{\vartheta_{t-1,0}}
	\left(
		\left(
			\frac{\vartheta_{t,1}}
			{\vartheta_{t,0}}
		\right)^{2^{n_t}}
		-
		\frac{\vartheta_{t,1}}
		{\vartheta_{t,0}}
	\right)
	=
	\trace_{\F_{2^{n_{t+1}}}/\F_{2^{n_{t\vphantom{+1}}}}}
	\left(
		\frac{\vartheta_{t,1}}
		{\vartheta_{t,0}}
	\right)
	=1,
\end{equation*}
as required.
\end{proof}

\subsection{Generalised Cantor basis}\label{sec:cantor}

Gao and Mateer propose a generic method of constructing Cantor bases in the appendix of their paper~\cite{gao2010}. We generalise their construction to one that extends an arbitrarily chosen basis of $\F_{2^\basedeg}/\F_2$ to a basis of $\F_{2^{2^m\basedeg}}/\F_2$ that enjoys properties similar to those offered by Cantor bases. The original construction of Gao and Mateer then corresponds to the case $\basedeg=1$. By generalising their construction, we are able to take advantage of quadratic extensions in a different manner to the previous section in order to provide a greater selection of reduction trees.

Hereafter, we assume that $\F_{q^{2^m}}\subseteq\F$ with $m\in\N$ positive and $q=2^\basedeg$ for some positive $\basedeg\in\N$. We also fix a basis $\beta_0,\dotsc,\beta_{2^m\basedeg-1}$ of $\F_{q^{2^m}}/\F_2$ which is given by the following generalisation of the construction of Gao and Mateer: choose a basis $\{\vartheta_0,\dotsc,\vartheta_{\basedeg-1}\}$ of $\F_q/\F_2$, choose $\beta_{(2^m-1)\basedeg},\dotsc,\beta_{2^m\basedeg-1}\in\F_{q^{2^m}}$ such that 
\begin{equation*}
	\trace_{\F_{q^{2^m}}/\F_q}
	\left(\beta_{(2^m-1)\basedeg+i}\right)
	=\vartheta_i
	\quad\text{for $i\in\{0,\dotsc,\basedeg-1\}$},
\end{equation*}
and recursively define
\begin{equation*}
	\beta_i=\beta^q_{i+\basedeg}-\beta_{i+\basedeg}
	\quad\text{for $i\in\{0,\dotsc,(2^m-1)\basedeg-1\}$}.
\end{equation*}
For this construction, we provide generalisations of the properties of Cantor bases given in Lemma~\ref{lem:cantor}. The properties are then used to prove the following theorem, which provides a method of constructing reduction trees for the vectors $(\beta_0,\dotsc,\beta_{n-1})$ for $n\in\{1,\dotsc,2^m\basedeg\}$.

\begin{theorem}\label{thm:cantor-tree} Let $n\in\{1,\dotsc,2^m\basedeg\}$, $T_0=(V_0,E_0)$ be a full binary tree with~$\ceil{n/\basedeg}$ leaves such that $\image(\dg)\subseteq\{0,2^0,2^1,\dotsc,2^{\ceil{\log_2\ceil{n/\basedeg}}-1}\}$, and $T_i=(V_i,E_i)$ be a reduction tree for $(\vartheta_0,\dotsc,\vartheta_{\min(n-(i-1)\basedeg,\basedeg)-1})$, for $i\in\{1,\dotsc,\ceil{n/\basedeg}\}$. Let $u_1,\dotsc,u_{\ceil{n/\basedeg}}\in V_0$ be the leaves of $T_0$, ordered such that for all $i,j\in\{1,\dotsc,\ceil{n/\basedeg}\}$ with $i<j$, there exists an internal vertex $v\in V_0$ with $u_i\in\leaves_{v_\alpha}$ and~$u_j\in\leaves_{v_\delta}$. Construct a new tree $T=(V,E)$ by identifying the root vertex of $T_i$ with $u_i$ for $i\in\{1,\dotsc,\ceil{n/\basedeg}\}$, as shown in Figure~\ref{fig:tree-construction}. Then $T$ is a reduction tree for~$(\beta_0,\dotsc,\beta_{n-1})$.
\begin{figure}
	\begin{tikzpicture}
		\tikzset{vertex/.style={circle,fill=black,inner sep=0pt,minimum size=3pt,label=above:{\small#1}}}
		\def\x{3.5}
		\def\y{1}
		
		\path[draw, dashed, use as bounding box]
		($(-1,0.3)+(-\x,\y)$) 
		-- ($(1,0.3)+(\x,\y)$) node [pos=.96, label=below:{$T$}] {} 
		-- ($(1,-1.75)+(\x,-\y)$) 
		-- ($(-1,-1.75)+(-\x,-\y)$) 
		-- cycle;
		
		\draw (0,0) ellipse ({\x} and {\y});
		
		\def\ph{\vphantom{\ceil{n/\basedeg}}}
		\node [vertex = $u_{1\ph}$]                (u1) at ($(0,0)+(-140:{\x} and {\y})$) {};
		\node [vertex = $u_{2\ph}$]                (u2) at ($(0,0)+(-110:{\x} and {\y})$) {};
		\node [vertex = $u_{\ceil{n/\basedeg}-1}$] (u3) at ($(0,0)+( -70:{\x} and {\y})$) {};
		\node [vertex = $u_{\ceil{n/\basedeg}}$]   (u4) at ($(0,0)+( -40:{\x} and {\y})$) {};
		
		\draw [rotate around={-10:(0,0)}] ($(u1)+(0,-0.75)$) ellipse (0.65 and 0.75);
		\draw [rotate around={ -2:(0,0)}] ($(u2)+(0,-0.75)$) ellipse (0.65 and 0.75);
		\draw [rotate around={  2:(0,0)}] ($(u3)+(0,-0.75)$) ellipse (0.65 and 0.75);
		\draw [rotate around={ 10:(0,0)}] ($(u4)+(0,-0.6 )$) ellipse (0.5  and 0.6);
		
		\draw (u1) -- ++(  35: .2) edge [densely dotted] ++(  35:.16);
		\draw (u1) -- ++( -65: .2) edge [densely dotted] ++( -65:.16);
		\draw (u1) -- ++(-135: .2) edge [densely dotted] ++(-135:.16);
		
		\draw (u2) -- ++(  55:.15) edge [densely dotted] ++(  55:.16);
		\draw (u2) -- ++( -58: .2) edge [densely dotted] ++( -58:.16);
		\draw (u2) -- ++(-128: .2) edge [densely dotted] ++(-128:.16);
		
		\draw (u3) -- ++( 127:.15) edge [densely dotted] ++( 127:.12);
		\draw (u3) -- ++( -53: .2) edge [densely dotted] ++( -53:.16);
		\draw (u3) -- ++(-123: .2) edge [densely dotted] ++(-123:.16);
		
		\draw (u4) -- ++( 145: .2) edge [densely dotted] ++( 145:.16);
		\draw (u4) -- ++( -45: .2) edge [densely dotted] ++( -45:.16);
		\draw (u4) -- ++(-115: .2) edge [densely dotted] ++(-115:.16);
		
		\node () at               (0,-1.7) {$\cdots$};
		\node () at                  (0,0) {$T_0$};
		\node () at ($(u1)+(-0.15,-0.80)$) {\small$T_{1\ph}$};
		\node () at ($(u2)+(-0.04,-0.80)$) {\small$T_{2\ph}$};
		\node () at ($(u3)+( 0.04,-0.80)$) {\small$T_{\ceil{n/\basedeg}-1}$};
		\node () at ($(u4)+( 0.15,-0.65)$) {\small$T_{\ceil{n/\basedeg}}$};
	\end{tikzpicture}
	\caption{Construction of Theorem~\ref{thm:cantor-tree}.}
	\label{fig:tree-construction}
\end{figure}
\end{theorem}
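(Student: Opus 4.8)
The plan is to prove Theorem~\ref{thm:cantor-tree} by verifying that the constructed tree $T$ satisfies the three conditions of Definition~\ref{def:reduction-tree}, proceeding recursively along the structure of $T_0$. First I would establish the analogue of Lemma~\ref{lem:cantor} for the generalised construction: the elements $\beta_0,\dotsc,\beta_{2^m\basedeg-1}$ are linearly independent over $\F_2$; if $i\leq 2^k\basedeg$ then $\beta_0,\dotsc,\beta_{i-1}\in\F_{q^{2^k}}$; and $\beta_i^{q^{2^k}}-\beta_i=\beta_{i-2^k\basedeg}$ whenever $i\geq 2^k\basedeg$. These follow from the recursion $\beta_i=\beta_{i+\basedeg}^q-\beta_{i+\basedeg}$ together with the trace normalisation on the top block, exactly as in Gao and Mateer's appendix but with $\basedeg$ tracked through; the subfield statement uses that $\F_{q^{2^k}}$ is the kernel of $x\mapsto x^{q^{2^k}}-x$ restricted to $\F_{q^{2^m}}$, and that applying this additive map $j$ times sends $\beta_{(2^m-1)\basedeg+i}$ down to $\beta_{(2^m-1-j)\basedeg+i}$. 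I would state this as a lemma (the promised Lemma~\ref{lem:gen-cantor}) before the proof of the theorem.

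Next I would analyse the vectors $\beta_v$ attached to the internal vertices of the ``top part'' of $T$, i.e.\ those coming from $T_0$. The key claim is that for an internal vertex $v$ of $T_0$ with $\dg(v)=2^k\basedeg$ (note $\dg(v)$ counts leaves of $T$, and each leaf of $T_0$ contributes $\basedeg$ leaves to $T$ via the attached $T_i$, except possibly the last one which contributes $\min(n-(i-1)\basedeg,\basedeg)$), one has $\beta_v=(\beta_0,\dotsc,\beta_{n_v-1})$ where $n_v=\left|\leaves^T_v\right|$. This is proved by induction down $T_0$ using property~(3) of the generalised Cantor lemma: if $\dg(v)=2^k\basedeg$ then $\alpha(\beta_v,2^k\basedeg)=(\beta_0,\dotsc,\beta_{2^k\basedeg-1})$ and $\delta(\beta_v,2^k\basedeg)=\big((\beta_{2^k\basedeg}/\beta_0)^{q^{2^k}}-\beta_{2^k\basedeg}/\beta_0,\dotsc\big)=(\beta_0,\dotsc,\beta_{n_v-2^k\basedeg-1})$, using $\beta_0=\beta_{(2^m-1)\basedeg}^{q^{2^{m-1}}}\cdots$ — wait, here one needs $\beta_0$ to act as the right scaling; in fact in Gao--Mateer's normalisation $\beta_0$ need not be $1$, so I would instead track the quotient $\beta_{v,i}/\beta_{v,0}$ throughout, which is what Lemma~\ref{lem:reduction} and Definition~\ref{def:reduction-tree} actually require. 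The condition $\image(\dg)\subseteq\{0,2^0,\dotsc\}$ on $T_0$, together with $\dg(v_\delta)\leq\dg(v)$ (which holds automatically for such trees by the argument in Proposition~\ref{prop:cantor-trees}), ensures the reduction can be iterated, and at the leaves $u_i$ of $T_0$ the attached vector is $(\beta_{(i-1)\basedeg},\dotsc)$ whose quotient sequence by its leading term is, by property~(2), a basis of $\F_q/\F_2$ scaled appropriately — but the ordering hypothesis on $u_1,\dotsc,u_{\ceil{n/\basedeg}}$ is exactly what guarantees that the blocks $\beta_{(i-1)\basedeg},\dotsc,\beta_{i\basedeg-1}$ are peeled off in increasing order of $i$.

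Then I would handle the ``bottom part'': at each leaf $u_i$ of $T_0$, the vector $\beta_{u_i}$ has quotients $\beta_{(i-1)\basedeg+s}/\beta_{(i-1)\basedeg}$ for the relevant range of $s$; I need these to equal, up to the hypotheses of a reduction tree, the vector $(\vartheta_0,\dotsc,\vartheta_{\min(n-(i-1)\basedeg,\basedeg)-1})$ for which $T_i$ is assumed to be a reduction tree. This requires showing $\beta_{(i-1)\basedeg+s}/\beta_{(i-1)\basedeg}=\vartheta_s/\vartheta_0$ — more precisely that these quotients lie in $\F_q$ and agree with the $\vartheta_s$ up to the overall scaling and, crucially, that condition~(1) of Definition~\ref{def:reduction-tree} for $\F_{2^{\dg(\cdot)}}$ inside this block amounts to the already-assumed reduction-tree property of $T_i$. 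This is the delicate bookkeeping step: one must verify that after all the top-level $\alpha/\delta$ reductions have peeled off powers-of-$2$ blocks of size $2^k\basedeg$, the residual $\basedeg$-dimensional vector sitting at $u_i$ is (a scalar multiple of) $(\vartheta_0,\dotsc)$, so that grafting $T_i$ on produces a genuine reduction tree there. I expect this melding of the two constructions — confirming that $\delta(\cdot,\cdot)$ applied repeatedly in the $q=2^\basedeg$ regime leaves the $\vartheta$-block untouched and in the correct position — to be the main obstacle, and it will lean heavily on property~(3) of the generalised Cantor lemma and on the $q^{2^k}$-linearity of the maps $x\mapsto x^{q^{2^k}}-x$. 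Once the top-part claim and the leaf-identification are in hand, the theorem follows by assembling the recursive verification of Definition~\ref{def:reduction-tree} over all of $V$.
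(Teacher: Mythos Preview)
Your overall architecture---establish a generalised Cantor lemma, then verify Definition~\ref{def:reduction-tree} recursively down the tree---is right, and the paper does exactly this. But your picture of what the vectors $\beta_v$ look like as you descend $T_0$ is wrong, and this breaks the argument at the leaf-identification step.

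You claim that at the leaf $u_i$ of $T_0$ the attached vector is $(\beta_{(i-1)\basedeg},\dotsc)$, and that the ordering hypothesis ``peels off blocks $\beta_{(i-1)\basedeg},\dotsc,\beta_{i\basedeg-1}$ in increasing order of $i$''. This is not what happens. With $\dg(r)=2^k\basedeg$, properties~\eqref{gen-cantor-subfields} and~\eqref{gen-cantor-delta} of the generalised Cantor lemma give
\[
	\alpha\bigl((\beta_0,\dotsc,\beta_{n-1}),2^k\basedeg\bigr)=(\beta_0,\dotsc,\beta_{2^k\basedeg-1}),
	\qquad
	\delta\bigl((\beta_0,\dotsc,\beta_{n-1}),2^k\basedeg\bigr)=\tfrac{1}{\beta_0}(\beta_0,\dotsc,\beta_{n-2^k\basedeg-1}),
\]
so \emph{both} children receive a prefix of the \emph{same} sequence $(\beta_0,\beta_1,\dotsc)$, up to a scalar. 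Iterating, every leaf $u_i$ of $T_0$ receives a scalar multiple of $(\beta_0,\dotsc,\beta_{\left|\leaves^T_{u_i}\right|-1})$, which by property~\eqref{gen-cantor-theta} equals a scalar multiple of $(\vartheta_0,\dotsc,\vartheta_{\left|\leaves^T_{u_i}\right|-1})$. The missing ingredient is then a scalar-invariance lemma: if $(V,E)$ is a reduction tree for $\mu$ and $\omega\in\F$ is nonzero, then $(V,E)$ is a reduction tree for $\omega\mu$. This is immediate from $\alpha(\omega\mu,\dg)=\omega\,\alpha(\mu,\dg)$ and $\delta(\omega\mu,\dg)=\delta(\mu,\dg)$, and it is what lets the assumed reduction trees $T_i$ for $(\vartheta_0,\dotsc)$ serve at $u_i$.

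Relatedly, your remark that ``$\dg(v_\delta)\leq\dg(v)$ holds automatically for such trees'' is false (take a tree on four leaves with $\dg(r)=1$ and $\dg(r_\delta)=2$), and in any case this inequality is \emph{not needed} here, precisely because $\delta$ returns a prefix of the same $\beta$-sequence rather than a shifted block; contrast this with Lemma~\ref{lem:sufficient}, where $\delta$ does produce a genuinely different vector and the inequality is essential. The role of the ordering hypothesis on $u_1,\dotsc,u_{\ceil{n/\basedeg}}$ is not to assign blocks, but to force the possibly-short tree $T_{\ceil{n/\basedeg}}$ always to the $\delta$ side, so that every $v_\alpha$ with $v$ internal in $T_0$ has exactly $2^k\basedeg$ leaves in $T$ and hence $\dg(v)\in\{2^k\basedeg\}$ as required.
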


Theorem~\ref{thm:cantor-tree} provides greater freedom than Theorem~\ref{thm:sufficient} by not requiring the inequality $\dg(v_\delta)\leq\dg(v)$ to hold for $v\in V$ that are initially internal vertices in the tree $T_0$. Proposition~\ref{prop:trivial-reduction-tree} guarantees the existence of trees $T_1,\dotsc,T_{\ceil{n/t}}$ to use in the construction. We can of course provide a better selection for these trees if the methods of Sections~\ref{sec:tower} and~\ref{sec:quadratic} are used to construct the basis $\{\vartheta_0,\dotsc,\vartheta_{\basedeg-1}\}$.

The remainder of the section is dedicated to the proof of Theorem~\ref{thm:cantor-tree}.

\begin{lemma}\label{lem:gen-cantor} The following hold:
\begin{enumerate}
	\item\label{gen-cantor-sum} $\beta_i=\sum^j_{r=0}\binom{j}{r}\beta^{q^r}_{i+j\basedeg}$	for $i\in\{0,\dotsc,(2^m-j)\basedeg-1\}$ and $j\in\{0,\dotsc,2^m-1\}$,
	\item\label{gen-cantor-delta} $\beta_i=\beta^{q^{2^k}}_{i+2^k\basedeg}-\beta_{i+2^k\basedeg}$ for $i\in\{0,\dotsc,(2^m-2^k)\basedeg-1\}$ and $k\in\{0,\dotsc,m-1\}$,
	\item\label{gen-cantor-theta} $\beta_i=\vartheta_i$ for $i\in\{0,\dotsc,\basedeg-1\}$,
	\item\label{gen-cantor-subfields} $\beta_0,\dotsc,\beta_{2^k\basedeg-1}\in\F_{q^{2^k}}$ for $k\in\{0,\dotsc,m-1\}$, and
	\item\label{gen-cantor-li} $\beta_0,\dotsc,\beta_{2^m\basedeg-1}$ are linearly independent over $\F_2$.
\end{enumerate}
\end{lemma}
Our proof of Lemma~\ref{lem:gen-cantor} generalises arguments found in Cantor's paper~\cite{cantor1989} and the paper of Gao and Mateer~\cite{gao2010}.

\begin{proof} We prove~\eqref{gen-cantor-sum} by induction on $j$. It is clear that~\eqref{gen-cantor-sum} holds if $j=0$, regardless of the value of $i$. Therefore, suppose that~\eqref{gen-cantor-sum} holds for some $j\in\{0,\dotsc,2^m-2\}$ and each $i\in\{0,\dotsc,(2^m-j)\basedeg-1\}$. Then
\begin{equation*}
	\beta_{i+\basedeg}
	=\sum^j_{r=0}\binom{j}{r}\beta^{q^r}_{i+\basedeg+j\basedeg}
	=\sum^j_{r=0}\binom{j}{r}\beta^{q^r}_{i+(j+1)\basedeg}
\end{equation*}
for $i\in\{0,\dots,(2^m-j-1)\basedeg-1\}$. As $(2^m-j-1)\basedeg-1\leq (2^m-1)\basedeg-1$, it follows that
\begin{align*}
	\beta_i
	&=\beta^q_{i+\basedeg}-\beta_{i+\basedeg}\\
	&=\sum^j_{r=0}\binom{j}{r}\left(\beta^{q^{r+1}}_{i+(j+1)\basedeg}-\beta^{q^r}_{i+(j+1)\basedeg}\right)\\
	&=\beta_{i+(j+1)\basedeg}+\sum^{j+1}_{r=1}\left(\binom{j}{r-1}+\binom{j}{r}\right)\beta^{q^r}_{i+(j+1)\basedeg}\\
	&=\beta_{i+(j+1)\basedeg}+\sum^{j+1}_{r=1}\binom{j+1}{r}\beta^{q^r}_{i+(j+1)\basedeg}\\
	&=\sum^{j+1}_{r=0}\binom{j+1}{r}\beta^{q^r}_{i+(j+1)\basedeg}
\end{align*}
for $i\in\{0,\dotsc,(2^m-j-1)\basedeg-1\}$. Thus, property~\eqref{gen-cantor-sum} holds.

For $i,j\in\N$, Lucas' lemma~\cite[p.~230]{lucas1878} (see also~\cite{fine1947}) implies that $\tbinom{i}{j}\equiv 1\pmod{2}$ if and only if $[j]_k\leq[i]_k$ for $k\in\N$. Using the lemma, property~\eqref{gen-cantor-delta} follows from property~\eqref{gen-cantor-sum} by setting $j=2^k$. Similarly, Lucas' lemma and property~\eqref{gen-cantor-sum} with $j=(2^{m-k}-1)2^k$ implies that
\begin{equation*}
	\begin{aligned}
		\beta_{(2^k-1)\basedeg+i}
		&=
		\sum^{(2^{m-k}-1)2^k}_{r=0}
		\binom{(2^{m-k}-1)2^k}{r}
		\beta^{q^r}_{(2^m-1)\basedeg+i}\\
		&=\sum^{2^{m-k}-1}_{r=0}
		\beta^{q^{2^kr}}_{(2^m-1)\basedeg+i}\\
		&=\trace_{\F_{q^{2^m}}/\F_{q^{2^k}}}
		\left(\beta_{(2^m-1)\basedeg+i}\right)
	\end{aligned}
\end{equation*}
for $i\in\{0,\dots,\basedeg-1\}$ and $k\in\{0,\dotsc,m-1\}$. Setting $k=0$, property~\eqref{gen-cantor-theta} follows by the choice of $\beta_{(2^m-1)\basedeg},\dotsc,\beta_{2^m\basedeg-1}$. Moreover, the trace formula implies that $\beta_{(2^k-1)\basedeg},\dotsc,\beta_{2^k\basedeg-1}\in\F_{q^{2^k}}$ for $k\in\{0,\dotsc,m-1\}$, after-which the recursive definition of $\beta_0,\dotsc,\beta_{(2^m-1)\basedeg-1}$ implies that property~\eqref{gen-cantor-subfields} holds.

Property~\eqref{gen-cantor-theta} implies that $\beta_0,\dotsc,\beta_{\basedeg-1}$ are linearly independent over $\F_2$, and belong to the kernel of the $\F_2$-linear map $\varphi:\F\rightarrow\F$ given by $\omega\mapsto\omega^q-\omega$. Thus, for $i\in\{2,\dotsc,2^m\}$, any nontrivial $\F_2$-linear relation amongst $\beta_0,\dotsc,\beta_{i\basedeg-1}$, which necessarily involves at least one of $\beta_{(i-1)\basedeg},\dotsc,\beta_{i\basedeg-1}$, translates under $\varphi$ to a nontrivial relation amongst $\beta_0,\dotsc,\beta_{(i-1)\basedeg-1}$. It follows that property~\eqref{gen-cantor-li} holds by induction on~$i$. 
\end{proof}

Lemma~\ref{lem:gen-cantor} provides generalisations of the properties of Cantor bases given in Lemma~\ref{lem:cantor}. The following lemma may be viewed as a partial generalisation Proposition~\ref{prop:cantor-trees}.
              
\begin{lemma}\label{lem:cantor-tree} Let $(V,E)$ be a full binary tree with $n\leq 2^m\basedeg$ leaves that satisfies the following conditions:
\begin{enumerate}
	\item\label{cantor-tree-i} if $v\in V$ such that $\left|\leaves_v\right|>\basedeg$, then $\dg(v)\in\{2^k\basedeg\mid k<\ceil{\log_2\ceil{n/\basedeg}}\}$, and
	\item\label{cantor-tree-ii} if $v\in V$ such that $\left|\leaves_v\right|\leq\basedeg$, and $v$ is either the root of the tree or a child of a vertex $v'\in V$ with $\left|\leaves_{v'}\right|>\basedeg$, then the subtree of $(V,E)$ rooted on $v$ is a reduction tree for $(\beta_0,\dotsc,\beta_{\left|\leaves_v\right|-1})$.
\end{enumerate}
Then $(V,E)$ is a reduction tree for $(\beta_0,\dotsc,\beta_{n-1})$.
\end{lemma}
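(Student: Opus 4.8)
The plan is to prove Lemma~\ref{lem:cantor-tree} by strong induction on the number of leaves $n$, checking the three requirements of Definition~\ref{def:reduction-tree} at the root and feeding in the generalised Cantor identities of Lemma~\ref{lem:gen-cantor}. Before the induction I would record one scaling observation that will be needed at $r_\delta$ below: for any nonzero $c\in\F$ and any $\dg$ one has $\alpha(c\beta,\dg)=c\,\alpha(\beta,\dg)$ and, by a one-line computation, $\delta(c\beta,\dg)=\delta(\beta,\dg)$; since the quotients $\beta_i/\beta_0$ are unchanged by scaling and linear independence is preserved, a straightforward induction on the number of leaves using Definition~\ref{def:reduction-tree} shows that a reduction tree for $\beta$ is also a reduction tree for $c\beta$.

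The base case of the main induction is $n\leq\basedeg$: then the root $r$ has $\left|\leaves_r\right|=n\leq\basedeg$, and condition~\eqref{cantor-tree-ii} applied to $v=r$ states exactly that the tree is a reduction tree for $(\beta_0,\dotsc,\beta_{n-1})$. For the inductive step, assume $n>\basedeg$ and that the claim holds for all smaller trees satisfying the analogues of conditions~\eqref{cantor-tree-i} and~\eqref{cantor-tree-ii}. Since $\left|\leaves_r\right|=n>\basedeg$, condition~\eqref{cantor-tree-i} forces $\dg(r)=2^k\basedeg$ with $2^k<\ceil{n/\basedeg}\leq 2^m$, hence $k\leq m-1$ and $0<\dg(r)<n$. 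Requirement~(1) of Definition~\ref{def:reduction-tree}, namely $\beta_i/\beta_0\in\F_{2^{\dg(r)}}=\F_{q^{2^k}}$ for $i<2^k\basedeg$, follows from property~\eqref{gen-cantor-subfields} of Lemma~\ref{lem:gen-cantor} together with $\beta_0=\vartheta_0\in\F_q\setminus\{0\}\subseteq\F_{q^{2^k}}$ (using property~\eqref{gen-cantor-theta}).

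For requirement~(2), the subtree rooted on $r_\alpha$ has $\dg(r)=2^k\basedeg<n$ leaves, and I would check it again satisfies~\eqref{cantor-tree-i} and~\eqref{cantor-tree-ii}, now with $2^k\basedeg$ in place of $n$: every vertex $v$ of this subtree with $\left|\leaves_v\right|>\basedeg$ has $\dg(v)<\left|\leaves_v\right|\leq 2^k\basedeg$, so $\dg(v)=2^j\basedeg$ with $j<k=\ceil{\log_2\ceil{2^k\basedeg/\basedeg}}$; and any small boundary vertex of the subtree is, in the original tree, either the child $r_\alpha$ of the large vertex $r$ or a child of a large vertex, so condition~\eqref{cantor-tree-ii} of the original tree supplies precisely what~\eqref{cantor-tree-ii} of the subtree demands. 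The induction hypothesis then yields that this subtree is a reduction tree for $(\beta_0,\dotsc,\beta_{\dg(r)-1})=\alpha(\beta,\dg(r))$. For requirement~(3), since $2^{\dg(r)}=q^{2^k}$ and $\beta_0\in\F_{q^{2^k}}$ is fixed by the $q^{2^k}$-power Frobenius, the $i$th entry of $\delta(\beta,\dg(r))$ equals $\bigl(\beta_{2^k\basedeg+i}^{q^{2^k}}-\beta_{2^k\basedeg+i}\bigr)/\beta_0$, which by property~\eqref{gen-cantor-delta} of Lemma~\ref{lem:gen-cantor} equals $\beta_i/\beta_0$ for $i\in\{0,\dotsc,n-2^k\basedeg-1\}$; thus $\delta(\beta,\dg(r))=\beta_0^{-1}(\beta_0,\dotsc,\beta_{n-2^k\basedeg-1})$. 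The subtree rooted on $r_\delta$ has $n-2^k\basedeg<n$ leaves and satisfies~\eqref{cantor-tree-i} and~\eqref{cantor-tree-ii} for $n-2^k\basedeg$ in place of $n$ by the same reasoning as for $r_\alpha$ (bounding $\dg(v)<\left|\leaves_v\right|\leq n-2^k\basedeg$ and using condition~\eqref{cantor-tree-ii} of the original tree at its small boundary vertices), so the induction hypothesis makes it a reduction tree for $(\beta_0,\dotsc,\beta_{n-2^k\basedeg-1})$, and the scaling observation upgrades this to $\delta(\beta,\dg(r))$. This verifies all three requirements at $r$ and closes the induction.

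The algebraic core of the argument — the identity $\delta(\beta,\dg(r))_i=\beta_i/\beta_0$ — is immediate from Lemma~\ref{lem:gen-cantor}, so I expect the only genuine work to be the bookkeeping hidden in requirements~(2) and~(3): one must check that the index set $\{2^k\basedeg\mid k<\ceil{\log_2\ceil{n/\basedeg}}\}$ appearing in condition~\eqref{cantor-tree-i} contracts to the correct smaller set when passing to each of the two subtrees, and that the boundary vertices whose leaf-count drops from above $\basedeg$ to at most $\basedeg$ inherit condition~\eqref{cantor-tree-ii} in the right form. Neither point is deep, but both must be spelled out carefully, together with the straightforward verification of the scaling observation.
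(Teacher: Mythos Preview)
Your proof is correct and follows essentially the same approach as the paper's: induction on $n$, with the base case handled by condition~\eqref{cantor-tree-ii}, the inductive step using Lemma~\ref{lem:gen-cantor} to identify $\alpha(\beta,\dg(r))$ and $\delta(\beta,\dg(r))$ with (a scalar multiple of) an initial segment of the $\beta_i$, and the scaling observation to handle the $1/\beta_0$ factor on the $\delta$ side. The paper isolates your scaling observation as a separate lemma (Lemma~\ref{lem:scalar-invariance}) and compresses your detailed verification that the subtrees inherit conditions~\eqref{cantor-tree-i} and~\eqref{cantor-tree-ii} into the single phrase ``through inheritance'', but the substance is identical.
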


The following technical lemma is required for the proof of Lemma~\ref{lem:cantor-tree}.

\begin{lemma}\label{lem:scalar-invariance} Let $\mu\in\F^n$ have linearly independent entries over $\F_2$, $(V,E)$ be a reduction tree for $\mu$, and $\omega\in\F$ be nonzero. Then $(V,E)$ is a reduction tree for $\omega\mu$.
\end{lemma}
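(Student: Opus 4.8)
The plan is to argue by induction on $n$, the length of $\mu$, mirroring the recursive structure of Definition~\ref{def:reduction-tree}. The base case $n=1$ is immediate: a full binary tree that is a reduction tree for a $1$-dimensional vector is just the single-vertex tree, which depends only on the dimension, so it is equally a reduction tree for $\omega\mu$ since $\omega\mu$ also has one (nonzero, hence $\F_2$-linearly independent) entry. For the inductive step, write $\mu=(\mu_0,\dotsc,\mu_{n-1})$ with $n\geq 2$, let $r$ be the root of $(V,E)$, and set $\dg=\dg(r)$. Since $(V,E)$ is a reduction tree for $\mu$, we have $\mu_i/\mu_0\in\F_{2^\dg}$ for $i\in\{0,\dotsc,\dg-1\}$, the subtree on $r_\alpha$ is a reduction tree for $\alpha(\mu,\dg)=(\mu_0,\dotsc,\mu_{\dg-1})$, and the subtree on $r_\delta$ is a reduction tree for $\delta(\mu,\dg)$.

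The key observation is that the maps $\alpha$ and $\delta$ interact nicely with scalar multiplication. For $\alpha$ this is trivial: $\alpha(\omega\mu,\dg)=\omega\,\alpha(\mu,\dg)$, and $(\omega\mu_i)/(\omega\mu_0)=\mu_i/\mu_0\in\F_{2^\dg}$, so the first condition of Definition~\ref{def:reduction-tree} still holds for $\omega\mu$. For $\delta$, the $i$th entry of $\delta(\omega\mu,\dg)$ is
\begin{equation*}
	\left(\frac{\omega\mu_{\dg+i}}{\omega\mu_0}\right)^{2^\dg}
	-\frac{\omega\mu_{\dg+i}}{\omega\mu_0}
	=\left(\frac{\mu_{\dg+i}}{\mu_0}\right)^{2^\dg}
	-\frac{\mu_{\dg+i}}{\mu_0},
\end{equation*}
which is exactly the $i$th entry of $\delta(\mu,\dg)$; that is, $\delta(\omega\mu,\dg)=\delta(\mu,\dg)$, independent of $\omega$. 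Thus the subtree on $r_\alpha$ is a reduction tree for $\omega\,\alpha(\mu,\dg)=\alpha(\omega\mu,\dg)$ by the induction hypothesis applied with scalar $\omega$ and vector $\alpha(\mu,\dg)$ (of dimension $\dg<n$), while the subtree on $r_\delta$ is a reduction tree for $\delta(\mu,\dg)=\delta(\omega\mu,\dg)$ by hypothesis, with no induction needed. Finally, $\omega\mu$ has $\F_2$-linearly independent entries because $\omega\neq 0$. All three conditions of Definition~\ref{def:reduction-tree} are therefore met, so $(V,E)$ is a reduction tree for $\omega\mu$, completing the induction.

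I do not anticipate a serious obstacle here; the only point requiring any care is the verification that $\delta$ is invariant under scaling of its argument, which is the computation displayed above and follows because the $\omega$ cancels inside the quotient $\mu_{\dg+i}/\mu_0$ before the $2^\dg$-power map is applied. Everything else is bookkeeping against the recursive clauses of the definition, with the induction carried on the strictly smaller dimension $\dg$ appearing in the $\alpha$-branch.
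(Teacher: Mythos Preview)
Your proposal is correct and follows essentially the same approach as the paper: induction on $n$, checking that the quotient condition at the root is scale-invariant, that $\alpha(\omega\mu,\dg)=\omega\,\alpha(\mu,\dg)$ so the induction hypothesis handles the $r_\alpha$-subtree, and that $\delta(\omega\mu,\dg)=\delta(\mu,\dg)$ so the $r_\delta$-subtree needs no further work. Your explicit remark that the $\delta$-branch requires no induction is a nice clarification, but otherwise the argument matches the paper's proof line for line.
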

\begin{proof} We prove the lemma by induction on $n$. The lemma holds trivially for $n=1$. Therefore, let $n\geq 2$ and suppose that the lemma holds for all smaller values of $n$. Let $\mu=(\mu_0,\dotsc,\mu_{n-1})\in\F^n$ have linearly independent entries over~$\F_2$, $(V,E)$ be a reduction tree for $\mu$, $r\in V$ be the root vertex of the tree, and $\omega\in\F$ be nonzero. Then $\mu_i/\mu_0=\omega\mu_i/(\omega\mu_0)\in\F_{2^{\dg(r)}}$ for $i\in\{0,\dotsc,\dg(r)-1\}$, the induction hypothesis implies that the subtree rooted on $r_\alpha$ is a reduction tree for $\omega\alpha(\mu,\dg(r))=\alpha(\omega\mu,\dg(r))$, and the subtree rooted on $r_\delta$ is a reduction tree for $\delta(\mu,\dg(r))=\delta(\omega\mu,\dg(r))$. Therefore, $(V,E)$ is a reduction tree for $\omega\mu$. Hence, the lemma follows by induction.
\end{proof}

\begin{proof}[Proof of Lemma~\ref{lem:cantor-tree}] We prove the lemma by induction of $n$. Condition~\eqref{cantor-tree-ii} implies that the lemma holds trivially if $n\leq\basedeg$. Therefore, let $n\in\{\basedeg+1,\dotsc,2^m\basedeg\}$ and suppose that the lemma is true for all smaller values of $n$. Let $(V,E)$ be a full binary tree with $n$ leaves that satisfies conditions~\eqref{cantor-tree-i} and~\eqref{cantor-tree-ii} of the lemma. Let $\beta=(\beta_0,\dotsc,\beta_{n-1})$ and $r\in V$ be the root vertex of the tree. Then $\left|\leaves_r\right|=n>\basedeg$. Thus, \eqref{cantor-tree-i} implies that $\dg(r)=2^k\basedeg$ for some $k<\ceil{\log_2\ceil{n/\basedeg}}\leq m$. Therefore, property~\eqref{gen-cantor-subfields} of Lemma~\ref{lem:gen-cantor} implies that $\beta_i/\beta_0\in\F_{2^{\dg(r)}}$ for $i\in\{0,\dotsc,\dg(r)-1\}$. Moreover, properties~\eqref{gen-cantor-delta} and~\eqref{gen-cantor-subfields} of the lemma imply that
\begin{equation}\label{eqn:gen-cantor-alpha-delta}
	\alpha\left(\beta,\dg(r)\right)
	=\left(\beta_0,\dotsc,\beta_{\dg(r)-1}\right)
	\quad\text{and}\quad
	\delta\left(\beta,\dg(r)\right)
	=\frac{1}{\beta_0}\left(
		\beta_0,
		\dotsc,
		\beta_{n-\dg(r)-1}
	\right).
\end{equation}
The subtrees of $(V,E)$ rooted on $r_\alpha$ and $r_\delta$ satisfy the conditions of the lemma through inheritance. Thus, the induction hypothesis and~\eqref{eqn:gen-cantor-alpha-delta} imply that the subtree rooted on $r_\alpha$ is a reduction tree for $\alpha(\beta,\dg(r))$. Similarly, the induction hypothesis, Lemma~\ref{lem:scalar-invariance} and~\eqref{eqn:gen-cantor-alpha-delta} imply that the subtree root on $r_\delta$ is a reduction tree for $\delta(\beta,\dg(r))$. Therefore, $(V,E)$ is a reduction tree for $\beta$. Hence, the lemma follows by induction.
\end{proof}


We now complete the proof of Theorem~\ref{thm:cantor-tree} by showing that its construction produces binary trees that satisfy the conditions of Lemma~\ref{lem:cantor-tree}.

\begin{proof}[Proof of Theorem~\ref{thm:cantor-tree}] For $i\in\{1,\dotsc,\ceil{n/\basedeg}\}$, $(V_i,E_i)$ is a full binary tree with $\min(n-(i-1)\basedeg,\basedeg)$ leaves. Therefore, it is clear that $(V,E)$ is a full binary tree with
\begin{equation*}
	\sum^{\ceil{n/\basedeg}}_{i=1}
	\min\left(n-(i-1)\basedeg,\basedeg\right)
	=n
\end{equation*}
leaves. We show that $(V,E)$ satisfies the conditions of Lemma~\ref{lem:cantor-tree}.

Suppose there exists a vertex $v\in V$ such that $\left|\leaves_v\right|>\basedeg$. Then $v$ is not descended from or equal to $u_i$ for $i\in\{1,\dotsc,\ceil{n/\basedeg}\}$, since $(V_i,E_i)$ has at most $\basedeg$ leaves. By the choice of $(V_0,E_0)$, it follows that $2^k$ of the vertices $u_i$ are descended from~$v_\alpha$ for some $k<\ceil{\log_2\ceil{n/\basedeg}}$. Let $i_1,\dotsc,i_{2^k}$ be the indices of these vertices. Then $i_1,\dotsc,i_{2^k}<\ceil{n/\basedeg}$, since the ordering of the vertices $u_i$ implies that $u_{\ceil{n/\basedeg}}$ must be equal to $v_\delta$ or one of its descendants. It follows that the subtrees of $(V,E)$ rooted on $u_{i_1},\dotsc,u_{i_{2^k}}$ each have $\basedeg$ leaves. Thus, $\dg(v)=2^k\basedeg$ for some $k<\ceil{\log_2\ceil{n/\basedeg}}$. Therefore, $(V,E)$ satisfies condition~\eqref{cantor-tree-i} of Lemma~\ref{lem:cantor-tree}.

Suppose there exists a vertex $v\in V$ such that $\left|\leaves_v\right|\leq\basedeg$, and $v$ is either the root of the tree or the child of a vertex $v'\in V$ with $\left|\leaves_{v'}\right|>\basedeg$. Then $v$ is descended from or equal to $u_i$ for some $i\in\{1,\dotsc,\ceil{n/\basedeg}\}$, since the subtrees rooted on $u_1,\dotsc,u_{\ceil{n/t}-1}$ each have $\basedeg$ leaves, while the subtree rooted on $u_{\ceil{n/t}}$ has at least one leaf. If $v$ is the root of $(V,E)$, then $(V,E)=(V_i,E_i)$. If $v$ is the child of a vertex $v'\in V$ such that $\left|\leaves_{v'}\right|>\basedeg$, and thus $\left|\leaves_{v'}\right|>\left|\leaves_{u_i}\right|$, then $u_i$ is a descendant of $v'$. In either case, $v$ is equal to $u_i$. Thus, the choice of $(V_i,E_i)$ and property~\eqref{gen-cantor-theta} of Lemma~\ref{lem:gen-cantor} imply that the subtree of $(V,E)$ rooted on $v$ is a reduction tree for $(\beta_0,\dotsc,\beta_{\left|\leaves_v\right|-1})$. Hence, $(V,E)$ satisfies condition~\eqref{cantor-tree-ii} of Lemma~\ref{lem:cantor-tree}.
\end{proof}

\bibliographystyle{amsplain}
\providecommand{\bysame}{\leavevmode\hbox to3em{\hrulefill}\thinspace}
\providecommand{\MR}{\relax\ifhmode\unskip\space\fi MR }
\providecommand{\MRhref}[2]{%
  \href{http://www.ams.org/mathscinet-getitem?mr=#1}{#2}
}
\providecommand{\href}[2]{#2}

\end{document}